%% file: paper.tex
\documentclass{jair-arxiv}

\usepackage{ebproof} 
\usepackage{stmaryrd} 
\usepackage{thmtools,thm-restate} 
\usepackage{enumitem} 

\newtheorem{theorem}{Theorem}[section]

\theoremstyle{remark}
\newtheorem{remark}[theorem]{Remark} 

\input{macros}

\setcopyright{cc}
\copyrightyear{2026}

\begin{document}

\title[A Sequent Calculus for General Inductive Definitions]{A Sequent Calculus for General Inductive Definitions}
\titlenote{This version is a draft from April 2026 and will be revised.}

\author{Robbe Van den Eede}
\authornote{Corresponding Author.}
\email{robbe.vandeneede@kuleuven.be}
\orcid{0000-0002-2579-9053}
\affiliation{%
  \institution{KU Leuven, Department of Computer Science}
  \city{B-3001, Leuven}
  \country{Belgium}
}
\affiliation{%
\institution{Vrije Universiteit Brussel, Department of Computer Science}
\city{B-1050, Brussel}
\country{Belgium}
}

\author{Marc Denecker}
\orcid{0000-0002-0422-7339}
\email{marc.denecker@kuleuven.be}
\affiliation{%
	\institution{KU Leuven, Department of Computer Science}
	\city{B-3001, Leuven}
	\country{Belgium}
}

\renewcommand{\shortauthors}{Van den Eede and Denecker}

\begin{abstract}
{\bf Background:} 
    Inductive definitions are an important form of knowledge.
    The logic \foid is an extension of classical first-order logic \fo with a language construct to express general non-monotone inductive definitions, which was introduced as the outcome of a study of such definitions.
    Most existing proof systems for inductive definitions impose syntactic constraints on their definitions, thereby excluding many useful and natural definitions.
    
    {\bf Objectives:}
    We aim to obtain a sequent calculus for \foid that supports general non-monotone definitions. 
    
    {\bf Methods:}
    We extend an existing sequent calculus \lkid by Brotherston and Simpson, founded on the principle of mathematical induction, to a sequent calculus \scfoid for \foid.
    The main challenge in this extension is the accommodation of non-monotone inductive definitions.
    To overcome this challenge, we draw inspiration from the stable semantics, which is a commonly used semantics in logic programming that is closely related to the well-founded semantics behind \foid.
    We corroborate \scfoid by establishing several proof-theoretical properties and through demonstration on various examples.
    
    {\bf Results:}
    \scfoid covers the non-monotone definitions of \foid.
    It is sound, both w.r.t.\ the well-founded and the stable semantics.
    Therefore, \scfoid is also suitable to prove theorems about (fragments of) logic programs under the well-founded and the stable semantics.
    Due to G\"odel's first incompleteness theorem, no logic capable of capturing the natural numbers is complete.
    Thus, like \lkid, \scfoid cannot be complete, neither w.r.t.~the well-founded nor the stable semantics.
    However, \scfoid is complete for a propositional fragment of \foid w.r.t.\ the stable semantics, and for a first-order fragment of \foid w.r.t. a weaker Henkin semantics.
    These results provide guarantees on the strength of \scfoid within the scope of what is theoretically feasible.
    Cut-elimination does not hold in general for \scfoid, but it holds for a fragment of \foid consisting of positive definitions.
    This means that theorems in the given fragment can be proven without lemmas.
    Nevertheless, we prove a weakened version of cut-elimination for a fragment of \foid consisting of stratified definitions. 
    Furthermore, we establish a correspondence between \scfoid-proofs and \scfo-proofs from theories in which  each definition is replaced by a natural \fo induction scheme.  
    We illustrate the versatility of \scfoid by proving various statements about non-monotone definitions.
    Incidentally, \scfoid allows for proofs of non-totality of definitions, offering a proof-theoretical analysis of paradoxes in ill-formed non-monotone definitions.
    
    {\bf Conclusions:}
    \scfoid is a theoretically substantiated sequent calculus for \foid, enabling formal proofs of theorems involving general inductive definitions.
\end{abstract}


\maketitle

\input{introduction}
\input{foid}

\input{lfoid}
\input{correspondence}
\input{alternative-semantics}

\input{soundness}
\input{completeness}
\input{cut-elimination}

\input{conclusion}

\begin{acks}
	This work was supported by Fonds Wetenschappelijk Onderzoek Vlaanderen (FWO Flanders) under the projects
	G0B2221N,
	G070521N,
	and 11A2R26N.
	
	We thank Bart Bogaerts and Maurice Bruynooghe their insightful comments on this work.
\end{acks}

\input{paper.bbl}

\appendix

\input{proof-induction-scheme-wf}
\input{proofs-section-scfoid}
\input{proofs-section-alternative-semantics}
\input{proofs-section-cut-elimination}

\end{document}

%% file: macros.tex
\newcommand{\vsp}{\vspace{2mm}}

\newcommand{\N}{\mathbb{N}}
\newcommand{\true}{\mathbf{t}}
\newcommand{\false}{\mathbf{f}}
\newcommand{\unknown}{\mathbf{u}}
\newcommand{\mci}{\mathcal{I}}
\newcommand{\mcj}{\mathcal{J}}
\newcommand{\mco}{\mathcal{O}}
\newcommand{\mcw}{\mathcal{W}}
\newcommand{\mcs}{\mathcal{S}}
\newcommand{\mch}{\mathcal{H}}

\newcommand{\mfa}{\mathfrak{A}}
\newcommand{\mfb}{\mathfrak{B}}

\newcommand{\arity}[1]{\mathsf{ar}(#1)}

\newcommand{\normalization}[1]{#1^{\mathrm{N}}}
\newcommand{\voc}[1]{\mathsf{voc}(#1)}
\newcommand{\dom}[1]{\mathsf{dom}(#1)}
\newcommand{\Pow}[1]{\mathcal{P}(#1)}
\newcommand{\defi}[1]{\mathrm{def}(#1)}
\newcommand{\pars}[1]{\mathrm{pars}(#1)}
\newcommand{\sym}[1]{\mathrm{sym}(#1)}

\newcommand{\cons}{\mathcal{C}}

\newcommand{\seqcal}{$\mathcal{SC}$\xspace}
\newcommand{\logic}{\mathcal{L}\xspace}

\newcommand{\fo}{{\normalfont \textrm{FO}}\xspace}
\newcommand{\foid}{{\normalfont \textrm{FO(ID)}}\xspace}

\newcommand{\lk}{{\normalfont \textrm{LK}}\xspace}
\newcommand{\lkid}{{\normalfont \textrm{LKID}}\xspace}

\newcommand{\lfoid}{{\normalfont \textrm{LFO(ID)}}\xspace}
 
\newcommand{\scfo}{{\normalfont\textrm{SC}$_{\fo}$}\xspace}
\newcommand{\scfoid}{{\normalfont\textrm{SC}$_{\foid}$}\xspace}
\newcommand{\scfoidmd}{{\normalfont\textrm{SC}$_{\foid}^{\subseteq\textrm{MD}}$}\xspace}

\newcommand{\scfoiddefl}{{\normalfont\textrm{SC}$_{\foid}^{\textrm{(def L'')}}$}\xspace}

\newcommand{\matimps}[1]{\normalfont\textrm{MI}(#1)}
\newcommand{\indscheme}[1]{\normalfont\textrm{IS}(#1)}
\newcommand{\indax}[1]{\normalfont\textrm{Ind}(#1)}
\newcommand{\foapp}[1]{\normalfont\textrm{FO}(#1)}

\newcommand{\vocab}{\Sigma}
\newcommand{\formula}{\varphi}
\newcommand{\formulatwo}{\psi}
\newcommand{\formulathree}{\chi}
\newcommand{\defn}{\Phi}
\newcommand{\defns}{\Phi_1, \dots, \Phi_n}
\newcommand{\defrul}{\forall \bar{x}: P(\bar{t}) \rul \formula}

\newcommand{\leqt}{\leq_{\mathrm{t}}}
\newcommand{\lesst}{<_{\mathrm{t}}}
\newcommand{\leqp}{\leq_{\mathrm{p}}}

\newcommand{\modelswf}{\models_{\mathrm{wf}}}
\newcommand{\modelsst}{\models_{\mathrm{st}}}
\newcommand{\modelsh}{\models_{\mathrm{H}}}

\newcommand{\structure}{\mci}
\newcommand{\structuretwo}{\mcj}
\newcommand{\tvs}{\mfa}
\newcommand{\tvstwo}{\mfb}
\newcommand{\structurespace}{\mcs}
\newcommand{\henkinclass}{\mch}
\newcommand{\henkinstructure}{(\structure, \henkinclass)}
\newcommand{\henkincountermodel}{(\structure_\omega, \henkinclass_\omega)}
\newcommand{\context}{\mco}
\newcommand{\wellfoundedinduction}{\langle \mfa_i \rangle_{0 \leq i \leq \beta}}

\newcommand{\seq}[2]{#1 \vdash #2}
\newcommand{\sequent}{\seq{\Gamma}{\Delta}}
\newcommand{\limitsequent}{\Gamma_\omega, \Phi_1, \dots, \Phi_n \vdash \Delta_\omega}
\newcommand{\regularsequent}{\Gamma, \Phi_1, \dots, \Phi_n \vdash \Delta}
\newcommand{\compl}[1]{#1^{\mathrm{c}}}
\newcommand{\equivs}{\Omega}
\newcommand{\positiverewriting}{\Gamma, \equivs, \compl{\defn_1}, \dots, \compl{\defn_n} \vdash \Delta}

\newcommand{\ddep}[1]{\normalfont\textrm{DDep}_{#1}}
\newcommand{\dep}[1]{\normalfont\textrm{Dep}_{#1}}
\newcommand{\md}[2]{\normalfont\textrm{MD}_{#1}(#2)}

\newcommand{\sched}{(e_i)_i}
\renewcommand{\terms}{\normalfont\textrm{Terms}(\vocab)}
\newcommand{\eqrel}{\sim}
\newcommand{\quotset}{\terms / \eqrel}

\DeclareMathOperator{\free}{free}
\DeclareMathOperator{\lfp}{lfp}

\DeclareMathOperator{\ST}{ST}

\newcommand{\markfull}{}

\newcommand{\defin}[1]{\left\{ 
	\begin{array}{l} #1 \end{array} 
	\right\}}
\newcommand{\defintight}[1]{\left\{ 
	\hspace{-3pt} 
	\begin{array}{l} #1 \end{array} 
	\hspace{-3pt} 
	\right\}}
\newcommand{\rul}{\leftarrow}

%% file: introduction.tex
\section{Introduction} \label{sec:introduction}

\subsection{General Inductive Definitions}

Inductive definitions are an important form of knowledge, as they specify a range of useful notions in mathematics and computer science.
A prototypical example is the definition of the natural numbers via the following rules:
\begin{itemize}
	\item $0$ is a natural number.
	\item If $n$ is a natural number, then so is its successor $s(n)$.
\end{itemize}
An inductive definition defines a set by specifying how to construct it. 
The \emph{construction process}, often called the induction process, is naturally conceived as a process of iterated rule application, starting from the empty set, and terminating when the set is saturated under application of the rules. 

The natural number definition is a \emph{monotone} definition, meaning that the addition of objects to the defined set depends only on the presence of other objects, not on their absence. 
Some definitions are non-monotone. 
The \emph{satisfaction relation} $\models$ in propositional logic is a binary relation between structures $\structure$ (seen as sets of propositional symbols) and propositional formulas $\formula$ (written in terms of $\land$ and $\lnot$). 
It is defined by the following rules:
\begin{itemize}
	\item $\structure \models p$ if $p \in \structure$.
	\item $\structure \models \formula \land \formulatwo$ if $\structure \models \formula$ and $\structure \models \formulatwo$.
	\item $\structure \models \neg \formula$ if not $\structure \models \formula$.
\end{itemize}
The third rule is non-monotone, since it derives  pairs $(\structure, \lnot \formula)$ for the defined relation from the absence of pairs $(\structure, \formula)$ in it. 
Non-monotone rules complicate the process of iterated rule application. 
For example, every instance of the third rule is trivially applicable at the start of the construction process, when the set under construction is still empty and hence, ``not $\structure \models \formula$'' is trivially satisfied.
Applying the rule at this point would be {\em unsafe}, however, as the construction process may later derive $\structure \models \formula$. 
As argued in \cite{KR/DeneckerV14,ai/BogaertsVD18}, application of a non-monotone rule is to be delayed until the truth of its condition is {\em safely} established and cannot be falsified anymore. 
The role of the well-founded induction order associated to this type of definition (here the subformula order on formulas $\formula$) is to ensure safety; applying rules along this order, i.e., deriving smaller elements  before larger ones, ensures that each rule application is safe. 

\foid is an extension of classical first-order logic with a language construct to represent a wide range of inductive definitions \cite{tocl/DeneckerT08}.
More concretely, definitions are represented in \foid as sets of \emph{definitional rules}, which are expressions of the form $\forall \bar{x}: P(\bar{t}) \rul \formula$.
Here $P(\bar{t})$ is an atom called the \emph{head} of the rule, and $\formula$ a first-order formula called the \emph{body} of the rule.
For instance, the natural number definition from above can be formally represented in \foid as 
\begin{equation*}
	\defin{
		\mathit{Nat}(\mathit{zero}) \rul \top \\
		\forall n: \mathit{Nat}(\mathit{succ}(n)) \rul \mathit{Nat}(n)
	} .
\end{equation*}

The logic \foid has a somewhat convoluted history, with roots outside mathematical logic.
Its origins lie in the semantic troubles surrounding the logic programming language Prolog in the 1970s. 
Prolog programs consist of \emph{general program clauses}.
These are expressions of the form \texttt{Head} $\colon$\!\!- \texttt{Body}, in which \texttt{Head} consists of an atom, and \texttt{Body} of any (finite) number of literals.
For some time, the logic programming community struggled to provide a coherent logical declarative interpretation for Prolog programs.
Denecker et al.~proposed interpreting Prolog programs as non-monotone inductive definitions \cite{tocl/DeneckerBM01}. 
Denecker and Vennekens showed that the \emph{well-founded semantics}, a formal semantics developed for logic programs with negation in the body \cite{GelderRS91}, correctly computes the defined relation of non-monotone inductive definitions via three-valued evaluation, without requiring an induction order to ensure the safe application of rules \cite{KR/DeneckerV14,Denecker98}.
Later, Denecker and Ternovska proposed extending classical first-order logic with a language construct to express non-monotone inductive definitions 
\cite{Denecker:CL2000,tocl/DeneckerT08}. 
This resulted in the logic \foid, which serves as an expressive knowledge representation language\footnote{
	In the field of Knowledge Representation, first-order logic is commonly used as a modeling language.
	However, first-order logic has a rather limited expressivity, as it cannot represent certain elementary concepts such the set of natural numbers and the transitive closure of a graph.
	Adding inductive definitions to first-order logic helps overcome these representational weaknesses.
} 
and a formal scientific study of general inductive definitions.

Other logics of inductive definitions impose syntactic constraints on their definitions such as \emph{positivity} \cite{ajm/Post43,Spector61,Moschovakis74,Aczel77} and \emph{stratification} \cite{Kreisel63,Feferman70,MartinLoef71,BuchholzFPS81}. 
Positivity forbids negation in bodies of definitional rules, thus only admitting monotone definitions.
Stratification does permit negation in rule bodies, but only such that the definition can be decomposed into a hierarchy of positive definitions, each depending positively or negatively on concepts defined strictly lower in the hierarchy. 
Although less restrictive than positivity, stratification still excludes many natural and useful definitions.
By not imposing these constraints, \foid offers a general account of inductive definitions.
In particular, \foid covers monotone definitions, definitions over a well-founded order, and \emph{iterated inductive definitions} \cite{Kreisel63,Feferman70,MartinLoef71,BuchholzFPS81}.

By discarding the stratification restriction, \foid admits \emph{non-total} definitions, for which the construction process terminates in a state where at least one rule is applicable, but no rule is safely applicable. 
A simple example is $\{P \rul \neg P\}$, which defines the propositional symbol $P$ as its own negation. 
In the initial state, where $P$ is false, the single rule of this definition is applicable but not safely, since its application falsifies the condition of the rule.  
The well-founded model assigns {\em undefined} to all atoms that are derivable but not safely derivable. 

Non-total definitions are related to self-referential paradoxes.
For instance, $\{P \rul \neg P\}$ can be seen as a formalization of the \emph{liar paradox}: ``this sentence is false'', where $P$ denotes the truth value of the sentence.
Alternatively, consider a version of the \emph{barber paradox} in which a man named Jim is defined to shave himself, and a barber named Bob is defined to shave all men who do not shave themselves.
This can be formalized with the following definition in \foid:
\[
	\defin{
		\mathit{Shaves}(\mathit{Jim}, \mathit{Jim}) \rul \top\\
		\forall m: \mathit{Shaves}(\mathit{Bob}, m) \rul \lnot \mathit{Shaves}(m, m)
	}
\]
This definition is non-total, as it contains the paradoxical instance $\mathit{Shaves}(\mathit{Bob}, \mathit{Bob}) \rul \lnot \mathit{Shaves}(\mathit{Bob}, \mathit{Bob})$.
Paradoxical ``definitions'' are anomalies, to be avoided in practice. 
But they are interesting anomalies, especially from the perspective of \foid as formal scientific study of inductive definitions.
Paradoxical definitions can be stated in natural language, and the logic \foid allows us to study such definitions, and to distinguish them from sensible, i.e., total definitions.  
Nevertheless, when desired, one can always impose stratification on definitions in \foid to rule out non-total definitions, or other more general conditions as investigated in \cite{tocl/DeneckerT08}. 

\subsection{A Sequent Calculus for General Inductive Definitions}

Deductive inference is an important and useful form of reasoning that occupies a central position in logic.
In this paper, we investigate deductive inference for general inductive definitions by introducing a sequent calculus \scfoid for the logic \foid.
A \emph{sequent calculus} is a style of proof system due to Gentzen \cite{Gentzen35}, which is well-known for its theoretical elegance and its goal-directed approach to theorem proving.
A proof system for \foid enables formally proving theorems about general inductive definitions.
Proof systems have applications in \emph{formal verification}, in which crucial properties of systems are logically derived from their implementation using proof assistants.
A more recent application of proof systems is in \emph{proof logging} \cite{Heule21unsatisfiability,acm/BarbosaBCDKLNNOPRTZ23proofcertificaties,jair/BogaertsGMN23}, in which combinatorial search algorithms are made \emph{certifying} \cite{csr/McConnellMNS11} by generating, alongside the output to a problem instance, a proof of its correctness. 
This proof is then independently verified by a proof checker. 

Our sequent calculus \scfoid is based on the \emph{principle of mathematical induction}, which is a technique to prove theorems about inductively defined sets using induction hypotheses.
\scfoid extends the sequent calculus \lkid by Brotherston and Simpson \cite{jlc/BrotherstonS11}, which formalizes the principle of mathematical induction for positive (hence, monotone) definitions.
The main challenge in this extension lies in handling the non-monotonicity of definitions in \foid.
Our solution is surprisingly simple: in the induction rule, we replace only the positive occurrences of defined predicates by an induction hypothesis.
This solution has a theoretical basis in the \emph{stable (model) semantics} from logic programming \cite{iclp/GelfondL88}.
While the stable semantics is no longer about inductive definitions, it can be seen as a good approximation of the well-founded semantics, which coincides with the well-founded semantics on every total (hence, on every `sensible') definition.

\subsection{Related Work}

Multiple proof systems have been developed for inductive definitions.
Martin-L\"of introduced an intuitionistic natural deduction system (another style of proof system due to Gentzen \cite{Gentzen35}) for iterated inductive definitions \cite{MartinLoef71}. 
McDowell and Miller \cite{tcs/McDowellM00} as well as Momigliano and Tiu \cite{tpp/MomiglianoT04} provided sequent calculus counterparts to this system.
Brotherston and Simpson developed classical sequent calculi for a more restricted class of inductive definitions, formalizing the principles of mathematical induction and infinite descent \cite{jlc/BrotherstonS11}. 
Finally, Hou Ping et al.\ devised a classical sequent calculus for the propositional fragment of \foid \cite{lpnmr/HouWD07}, and Hou Ping and Denecker devised a sequent calculus for an extension of \foid with least fixpoint constructions \cite{lpnmr/HouD09}.

The sequent calculi by Brotherston and Simpson are restricted to positive definitions, and the proof systems by Martin-L\"of, McDowell and Miller, and Momigliano and Tiu to stratified definitions.
The sequent calculus by Hou Ping et al.\ \cite{lpnmr/HouWD07} supports non-stratified definitions, but only propositional ones.
The sequent calculus by Hou Ping and Denecker \cite{lpnmr/HouD09}, however, supports first-order non-stratified definitions.
Our sequent calculus \scfoid differs from their calculus, which we here call HD, in the following ways:
\begin{itemize}
	\item HD is based more directly on the well-founded semantics, as it uses the notion of \emph{unfounded set} \cite{GelderRS91}.
	\scfoid, on the other hand, is based on the principle of mathematical induction, extending the sequent calculus \lkid by Brotherston and Simpson \cite{jlc/BrotherstonS11}.
	\item HD supports an extension of \foid with \emph{least fixpoint expressions} \cite{tcs/Compton94stratified}. 
	Adding least fixpoint expressions to the language of \foid enlarges the class of expressible sets.
	However, one inference rule for these least fixpoint expressions is \emph{infinitary}, meaning that it has infinitely many premises.
	\scfoid only has finitary inference rules, 
	but it does not support least fixpoint expressions.
	\item Contrary to HD, \scfoid is sound w.r.t. both~the well-founded and the stable semantics.
	Therefore, \scfoid can additionally be used to prove theorems about (fragments of) logic programs under either semantics.\footnote{
		\scfoid can not only capture entire logic programs, but also \emph{fragments} of logic programs, leaving the remainder of the program open.
		For example, given a logic program that contains facts about an edge relation $E/2$ and a definition of the transitive closure $T/2$ of $E/2$, \scfoid can capture the definition of $T/2$ independently of the facts about $E/2$.
	}
	The fact that \scfoid is sound w.r.t.~the two semantics demarcates its deductive strength, as it can only prove statements that are valid under both semantics.
	When used to reason about inductive definitions, this constraint is rather minor, since the two semantics correspond on all total definitions.
\end{itemize}

\subsection{Contributions}

In previous work, we introduced the sequent calculus \scfoid and proved its soundness w.r.t.~the well-founded semantics \cite{lpnmr/VandenEedeVD24}.
This paper extends our previous work by adding:
\begin{itemize}
	\item a more detailed discussion on the intuitions behind \scfoid and additional examples;
	\item an explicit link between \scfoid and the stable semantics;
	\item several proof-theoretical results:
	\begin{itemize}
		\item a correspondence between theorems in \scfoid and theorems in a sequent calculus \scfo for classical first-order logic \fo, in which definitions are replaced by an (infinite) approximating \fo-schema;
		\item soundness w.r.t.~the stable semantics;
		\item completeness w.r.t.~the stable semantics for a propositional fragment of \foid and completeness w.r.t.~the weaker \emph{Henkin semantics} for a first-order fragment of \foid;
		\item cut-elimination for a fragment of \foid consisting of positive definitions, and a weakened version of cut-elimination for a fragment consisting of stratified definitions. 
		As shown by multiple counterexamples, cut-elimination does not hold in general for \foid.
	\end{itemize}
\end{itemize}

\subsection{Overview}

The paper is structured as follows.
In Section \ref{sec:foid}, we introduce the syntax and the well-founded semantics for \foid.
In Section \ref{sec:scfoid}, we present the sequent calculus \scfoid and prove a few propositions about the inference rules for definitions.
In Section \ref{sec:correspondence}, we establish a correspondence between theorems of \scfoid and theorems of \scfo.
In Section \ref{sec:alternative-semantics}, we define the stable semantics and the Henkin semantics for \foid.
In Sections \ref{sec:soundness}, \ref{sec:completeness} and \ref{sec:cut-elimination}, we prove the aforementioned theoretical results on soundness, completeness and cut-elimination, respectively.
Finally, we provide a summary and possible directions for future research in Section \ref{sec:conclusion}.
Technical proofs of several lemmas and propositions are included in the appendix.

%% file: foid.tex
\section{\foid} \label{sec:foid}

Large parts of this section appear in earlier papers \cite{KR/DeneckerV14,tocl/DeneckerT08,tocl/DeneckerBM01}.

\subsection{Syntax} \label{sec:syntax}

The syntax of \foid extends the syntax of first-order logic (\fo), which is defined as usual. 
A \emph{vocabulary} is a set of non-logical symbols, which are subdivided into predicate symbols and function symbols. 
We write $\arity{\sigma}$ to refer to the arity of a predicate or function symbol $\sigma$, and $\sigma/n$ to indicate that $\sigma$ has arity $n$.
A \emph{propositional symbol} is a $0$-ary predicate symbol, and a \emph{object symbol} is a $0$-ary function symbol.
\emph{Terms} are built from function symbols.
\emph{Atoms} are expressions of the form $P(\bar{t})$, $t=s$, $\top$ or $\bot$, where $P/n$ is a predicate symbol, $\bar{t}$ an $n$-tuple of terms, $t$ and $s$ terms, and $\top$ and $\bot$ Boolean constants representing truth and falsity, respectively.
(\fo-)\emph{formulas} are built from atoms with propositional connectives $\lnot, \land, \lor, \Rightarrow, \Leftrightarrow$ and quantifiers $\forall, \exists$. 
If all non-logical symbols in a formula $\formula$ occur in a vocabulary $\vocab$, we call $\formula$ a formula \emph{over} $\vocab$.
We write $t \neq s$ for $\lnot (t=s)$.
We write $\bar{t}$ for a tuple of terms $(t_1, \dots, t_n)$.
Given another tuple of terms $\bar{s} = (s_1, \dots, s_n)$, we write $\bar{t} = \bar{s}$ for $t_1 = s_1 \land \dots \land t_n = s_n$.
Given a tuple of object symbols $\bar{x} = (x_1, \dots, x_n)$, we write $\forall \bar{x}: \formula$ for $\forall x_1: \forall x_2: \dots \forall x_n: \formula$ and similarly for $\exists \bar{x}: \formula$.
Given terms $t$, $s$ and an object symbol $s$, we denote by $t[s/x]$ the term obtained from $t$ by replacing all occurrences of $x$ in $t$ by $s$. 
An occurrence of a object symbol $x$ in a formula $\formula$ is said to be \emph{bound} if it is in the scope of a quantified subformula $\exists x: \formulatwo$ or $\forall x: \formulatwo$ of $\formula$, and is said to be \emph{free} otherwise.
Note that we do not explicitly distinguish between variables and constants.
Such distinction can be made implicitly by seeing free occurrences of objects symbols in a formula as constants and bound occurrences as variables.
The set of freely occurring object symbols in a formula $\formula$ is denoted by $\free(\formula)$. 
For a set of formulas $\Gamma$, we write $\free(\Gamma)$ for $\cup_{\formula \in \Gamma} \free(\formula)$. 
Given a term $t$, an object symbol $x$ and a formula $\formula$, we denote by $\formula[t/x]$ the formula obtained from $\formula$ by replacing all free occurrences of $x$ in $\formula$ by $t$. 
The substitution is safe if $t$ contains no object symbol $y$ that gets bound in $\formula[t/x]$ due to the existence of a free occurrence of $x$ in the scope of a quantification $\forall y$ or $\exists y$ within $\formula$. 
We will silently assume that such quantifications in $\formula$ are renamed prior to the substitution, so that substitution is always safe. 
Given a set of formulas $\Gamma$, we write $\Gamma[t/x]$ for the set $\cup_{\formula \in \Gamma} \formula[t/x]$.
An occurrence of a subformula $\formulatwo$ in a formula $\formula$ is said to be \emph{positive} if it occurs in the scope of an even number of negation symbols $\lnot$ (after unwinding all biconditionals $\Leftrightarrow$ and material implications $\Rightarrow$ in terms of $\land$, $\lor$ and $\lnot$), and is said to be \emph{negative} otherwise.
We use the symbol $\doteq$ for syntactic equality.

\begin{definition} \label{def:definitions}
	A \emph{definitional rule} (or \emph{rule} for short) is an expression of the form 
	\begin{equation*}
		\forall \bar{x} : P(\bar{t}) \rul \formula
	\end{equation*}
	where $\bar{x}$ is a tuple of object symbols, $P$ a predicate symbol, $\bar{t}$ an $\arity{P}$-tuple of terms, and $\formula$ an \fo-formula. 
	The atom $P(\bar{t})$ is called the \emph{head} and the formula $\formula$ the \emph{body} of the definitional rule. 
	An object symbol $y$ is said to \emph{occur freely} in a definitional rule $\defrul$ if it occurs freely in the head $P(\bar{t})$ or the body $\formula$ of the rule, and is not among $\bar{x}$.
	An (\foid-)\emph{definition} is a finite set $\defn$ of definitional rules such that for any definitional rule $\defrul$ in $\defn$, no object symbol in $\bar{x}$ occurs freely in any (other) definitional rule in $\defn$.
	A predicate symbol $P$ that occurs in the head of a definitional rule of a definition $\defn$ is said to be a \emph{defined predicate} of $\defn$.
	A \emph{defined atom} is an atom $P(\bar{t})$ where $P$ is a defined predicate of $\defn$.
	All other non-logical symbols occurring in $\defn$ are called \emph{parameters} of $\defn$. 
	The set of defined predicates of $\defn$ is denoted by $\defi{\defn}$, the set of parameters of $\defn$ by $\pars{\defn}$, and their union by $\sym{\defn}$. 
	We say that $\defn$ defines its defined predicates in terms of its parameters. 
	An object symbol $y$ is said to \emph{occur freely} in $\defn$ if it occurs freely in a definitional rule of $\defn$.
	We denote the set of all freely occurring object symbols in $\defn$ by $\free(\defn)$.
\end{definition}

Inductive definitions define a concept by specifying how to construct its values from the values of its parameter concepts. 
Intuitively, the construction proceeds by iteratively applying the definitional rules.\footnote{
	Martin-L\"of refers to definitional rules as \emph{productions}, emphasizing the constructive nature of inductive definitions.
} 
Concretely, an instance of a definitional rule $\defrul$ becomes applicable during the construction process when its body is derived to (safely) hold in the relation being constructed. 
Applying the rule amounts to extending the value of $P$ with the instance of $t$.  
The connective $\rul$ is referred to as the \emph{definitional implication}.
We will see that the presence of a definitional rule in a definition logically entails the corresponding material implication. 
Yet, the two connectives $\rul$ and $\Rightarrow$ should not be confused.  
An important difference is that definitional rules are not \emph{truth-conditional}, meaning that they are not assigned truth values in structures. 
Definitions, on the other hand, \emph{are} truth-conditional in \foid, as will be clarified in Section \ref{sec:well-founded-semantics}.
Informally, a definition $\defn$ is satisfied in a structure $\structure$ if $\structure$ interprets the defined predicates of $\defn$ as specified by the rules of $\defn$. 

\emph{\foid-formulas} are built from atoms and definitions with propositional connectives and quantifiers.
Hence, definitions are an explicit part of the syntax of \foid, and they can occur in complex formulas.

\begin{example} \label{ex:def-reachable}
	Let there be a universe of nodes, let $G$ be a graph on them and let $s$ be a particular node.  
	A node $t$ is said to be \emph{reachable} (from $s$, through $G$) if there exists a path from $s$ to $t$ along the edges of $G$.
	We can define the set of reachable nodes from $s$ with an \foid-definition of a predicate $\mathit{Reachable}/1$, in terms of a predicate $\mathit{Edge}/2$, representing the edges of $G$, and a constant $s$:
	\[
	\defin{
		\mathit{Reachable}(s) \rul \top \\
		\forall x,y: \mathit{Reachable}(y) \rul \mathit{Reachable}(x) \land \mathit{Edge}(x,y)
	}
	\]
	Denote this definition by $\defn$.
	The \foid-formula $\exists s: \defn \land \forall t: \mathit{Reachable}(t)$ expresses that $G$ has a node with a path to every other node in $G$, and $\forall s: \defn \land \forall t: \mathit{Reachable}(t)$ expresses that every node in $G$ has a path to every other node in $G$, i.e., that $G$ is \emph{strongly connected}.
\end{example}

Below is an example of a definition by \emph{simultaneous} or \emph{mutual induction}.  

\begin{example} \label{ex:def-even-odd}
	The sets of even and odd numbers can be mutually defined by an \foid-definition of a predicate $\mathit{Even}/2$ and a predicate $\mathit{Odd}/2$, in terms of the successor function $\mathit{succ}/1$ and the constant $\mathit{zero}$:
	\[
	\defin{
	\mathit{Even}(\mathit{zero}) \rul \top \\
	\forall n: \mathit{Even}(\mathit{succ}(n)) \rul \mathit{Odd}(n) \\
	\forall n: \mathit{Odd}(\mathit{succ}(n)) \rul \mathit{Even}(n) 
	}
	\]
\end{example}

Several logics \cite{Kreisel63,Feferman70,MartinLoef71,BuchholzFPS81,tcs/McDowellM00,tpp/MomiglianoT04} support \emph{stratified} inductive definitions, which are more general than monotone definitions.

\begin{definition} \label{def:stratified}
	Let $\defn$ be a definition.
	A \emph{stratification} of $\defn$ is a function $\ell: \defi{\defn} \to \N$ such that for every definitional rule $\forall \bar{x} : P(\bar{t}) \rul \formula$ of $\defn$:
	\begin{enumerate}[label=(\roman*)]
		\item $\ell(Q) \leq \ell(P)$ for every $Q \in \defi{\defn}$ that occurs in $\formula$;
		\item $\ell(Q) < \ell(P)$ for every $Q \in \defi{\defn}$ that occurs negatively in $\formula$.
	\end{enumerate}
	If a stratification of $\defn$ exists, we say that $\defn$ is \emph{stratified}.
\end{definition}

A stratification $\ell$ of a definition $\defn$ induces a hierarchy of monotone subdefinitions.

A stratified definition can be seen as one that is splittable in a hierarchy of monotone subdefinitions.

\begin{example}
	The following definition is stratified:
	\[
	\defin{
		\mathit{Reachable}(s) \rul \top \\
		\forall x,y: \mathit{Reachable}(y) \rul \mathit{Reachable}(x) \land \mathit{Edge}(x,y) \\
		\forall x: \mathit{Unreachable}(x) \rul \lnot \mathit{Reachable}(x)
	}
	\]
	Indeed, $\ell(\mathit{Reachable}) = 0$ and $\ell(\mathit{Unreachable})=1$ specify a stratification for this definition.
\end{example}

A stratified definition can be seen as one that can be split up in a hierarchy of monotone subdefinitions.
For the definition in the previous example, for instance, this hierarchy consists of the monotone subdefinitions 
\[
\defin{
	\mathit{Reachable}(s) \rul \top \\
	\forall x,y: \mathit{Reachable}(y) \rul \mathit{Reachable}(x) \land \mathit{Edge}(x,y)
}
\]
and 
\[
\defin{
	\forall x: \mathit{Unreachable}(x) \rul \lnot \mathit{Reachable}(x)
} .
\]

Even though stratified definitions are more general than monotone definitions, there are many examples of inductive definitions that are not stratified:

\begin{example} \label{ex:def_even}
	While commonly defined via a monotone definition, the set of even natural numbers can also be defined via a non-monotone definition by structural induction.
	In \foid, it would be expressed as follows:
	\[ 
	\defin{
		\mathit{Even}(\mathit{zero}) \rul \top \\
		\forall n : \mathit{Even}(\mathit{succ}(n)) \rul \mathit{Nat}(n) \land \lnot \mathit{Even}(n)
	}
	\]
	This definition defines the predicate $\mathit{Even}/1$ in terms of the constant $\mathit{zero}$, the function $\mathit{succ}/1$ and the natural number predicate $\mathit{Nat}/1$.
	There exists no stratification $\ell$ of this definition, as the second definitional rule would impose $\ell(\mathit{Even}) < \ell(\mathit{Even})$.
\end{example}

\begin{example} \label{ex:sat}
	The satisfaction relation $\models$ for propositional logic can be defined by an \foid-definition of a predicate $\mathit{Sat}/2$, in terms of a predicate $\mathit{Member}/2$ and constructor functions $\mathit{and}/2$ and $\mathit{not}/1$:
	\[ 
	\defin{
		\forall i,p: \mathit{Sat}(i,p) \rul \mathit{Member}(p,i)\\
		\forall i,f,g: \mathit{Sat}(i, and(f,g)) \rul \mathit{Sat}(i,f) \land \mathit{Sat}(i,g)\\
		\forall i,f: \mathit{Sat}(i,not(f)) \rul \lnot \mathit{Sat}(i,f)
	}\]
	The variable $i$ represents a structure, $p$ a propositional symbol, and $f$ and $g$ propositional formulas. 
	The term $and(f,g)$ represents the conjunction of $f$ and $g$, and $not(f)$ the negation of $f$. 
	The formula $\mathit{Member}(p,i)$ expresses that $p$ is an element of $i$, when viewing $i$ as a set of propositional symbols.
	By a similar reasoning as in Example \ref{ex:def_even}, this definition is not stratified.
\end{example}

\begin{example} \label{ex:distance}
	The distance from a node $n$ to a node $m$ in a graph is the length of the shortest path from $n$ to $m$ if such path exists, and is undefined otherwise. 
	This notion can be defined by an \foid-definition of a predicate $\mathit{Distance}/3$, containing all triples $(x,y,n)$ for which $x$ and $y$ are nodes in the graph with distance $n$ from $x$ to $y$, in terms of the constant $\mathit{zero}$, function $\mathit{succ}/1$, predicate $\mathit{Edge}/2$, and order relation $\leq \! \! /2$ (used with infix notation):
	\begin{equation*}
	\defintight{ 
		\forall x: \mathit{Distance}(x,x,\mathit{zero}) \rul \top \\
		\forall x,y,n: \mathit{Distance}(x,y,succ(n)) \rul (\exists z : \mathit{Distance}(x,z,n) \land \mathit{Edge}(z,y)) \land \lnot (\exists m : m \leq n \land \mathit{Distance}(x,y,m)) 
	}
	\end{equation*}
	Since the defined predicate $\mathit{Distance}$ occurs negatively in the body of the second rule, this definition is neither monotone nor stratified.
\end{example}

\begin{example} \label{ex:temporal_graph}
	Consider a graph in which edges can be activated at each time point. 
	Once activated, an edge remains active until a path of active edges emerges from a given node $a$ to a given node $b$. 
	If this happens, all edges become inactive again. 
	The state of the active edges can be modeled by an \foid-definition of a predicate $\mathit{Active}/3$, together with an auxiliary predicate $\mathit{ActivePath}/3$, in terms of the function $\mathit{succ}/1$ and predicate $\mathit{Activate}/3$:
	\[
	\defin{ 
		\forall n,m,t: \mathit{Active}(n,m,\mathit{succ}(t)) \rul \mathit{Activate}(n,m,t) \land \lnot \mathit{ActivePath}(a,b,t)\\
		\forall n,m,t: \mathit{Active}(n,m,\mathit{succ}(t)) \rul \mathit{Active}(n,m,t) \land \lnot \mathit{ActivePath}(a,b,t)\\
		\forall n,m,t: \mathit{ActivePath}(n,m,t) \rul \mathit{Active}(n,m,t)\\
		\forall n,m,p,t: \mathit{ActivePath}(n,p,t) \rul \mathit{ActivePath}(n,m,t) \land \mathit{ActivePath}(m,p,t)
	}
	\]
	The atom $\mathit{Active}(n,m,t)$ says that the edge from $n$ to $m$ is active at time point $t$, $\mathit{Activate}(n,m,t)$ says that this edge is activated at time point $t$, and $\mathit{ActivePath}(n,m,t)$ says that there is a path of active edges from $n$ to $m$ at time point $t$.
	Time is modeled by the natural numbers.
	The last two rules define the transitive closure of all active edges at a given time point.
	This definition does not admit a stratification $\ell$, as the third and the first rule would impose $\ell(\mathit{Active}) \leq \ell(\mathit{ActivePath}) < \ell(\mathit{Active})$.
	
	Through variations, this abstract example can be converted into real-world examples, such as the creation of electric circuits that burn through once a short circuit has been formed, or the construction of towers that collapse after a threshold height/weight has been exceeded.
\end{example}

\begin{example} \label{ex:access_control}
	Consider a system with a multiple users and a file.
	A user has access to the file if they are the owner, or if they are granted access by a user with access while not blocked by a user with access. 
	The delegation of access can be modeled by an \foid-definition of a predicate $\mathit{Access}/1$ in terms of a constant $\mathit{owner}$ and predicates $\mathit{Grants}/2$ and $\mathit{Blocks}/2$:
	\[ 
	\defin{
		\mathit{Access}(\mathit{owner}) \rul \top \\
		\forall u: \mathit{Access}(u) \rul (\exists v : \mathit{Access}(v) \land \mathit{Grants}(v,u)) \land \lnot (\exists v : \mathit{Access}(v) \land \mathit{Blocks}(v,u))
	}
	\]
	This definition is not stratified, as the defined predicate $\mathit{Access}$ occurs negatively in the body of the second rule.
\end{example}

In general, a definition $\defn$ contains multiple rules $\forall \bar{x}: P(\bar{t}) \leftarrow \formula$ per predicate $P$. 
These rules can be merged into a single rule $\forall \bar{y}: P(\bar{y}) \leftarrow \formula_{P,\defn}(\bar{y})$ without altering the (informal) meaning of the definition. 

\begin{definition} \label{def:merged-body}
	A definition $\defn$ is said to be \emph{normal} if, for every defined predicate $P \in \defi{\defn}$, there is precisely one definitional rule in $\defn$ with $P$ in its head, which is of the form $\forall \bar{y} : P(\bar{y}) \rul \formula$.
	The \emph{normalization} $\normalization{\defn}$ of a definition $\defn$ is the definition consisting of one rule $\forall \bar{y} : P(\bar{y}) \rul \formula_{P,\defn}(\bar{y})$ per defined predicate $P$ of $\defn$, where $\formula_{P,\defn}(\bar{y})$ is the of all formulas $\exists \bar{x} : \bar{y} = \bar{t} \land \formula$ such that $\defrul$ is a rule in $\defn$.\footnote{
		Since the tuples $\bar{y}$ are not unique, \emph{the} normalization of $\defn$ is technically ill-defined.
		However, since the choice of $\bar{y}$ is semantically insignificant, we still speak of \emph{the} normalization by abuse of terminology.
	}
	Here $\bar{y}$ does not include object symbols quantified in front of a rule in $\defn$ defining $P$.  
	We refer to $\formula_{P,\defn}(\bar{y})$ as the \emph{merged body} of $P$ in $\defn$ w.r.t.~$\bar{y}$.
	We will often write $\formula_{P}(\bar{y})$ for $\formula_{P,\defn}(\bar{y})$ if $\defn$ is clear from the context.
	Given an $\arity{P}$-tuple of terms $\bar{t}$, we write $\formula_{P,\defn}(\bar{t})$ for $\formula_{P,\defn}(\bar{y})[\bar{t}/\bar{y}]$, which we call the \emph{merged body} of $P(\bar{t})$.
\end{definition}

\begin{example}
	Let $\defn$ be the definition of even numbers from Example \ref{ex:def_even}. 
	The merged body $\formula_\mathit{Even, \defn}(y)$ of $\mathit{Even}(y)$ w.r.t.\ $\defn$ is $y=\mathit{zero} \lor \exists x: y = \mathit{succ}(x) \land Nat(x) \land \lnot \mathit{Even}(x)$.\footnote{
		Technically speaking, $\formula_\mathit{Even, \defn}(y)$ is equal to $(y=\mathit{zero} \land \top) \lor \exists x: y = \mathit{succ}(x) \land \lnot \mathit{Even}(x)$. However, we sometimes remove $\top$ for simplicity.
	}
	Hence, the normalization $\normalization{\defn}$ of $\defn$ is
	\begin{equation*}
		\defin{\forall y: \mathit{Even}(y) \rul y=\mathit{zero} \lor \exists x : y = \mathit{succ}(x) \land \mathit{Nat}(x) \land \lnot \mathit{Even}(x)}.
	\end{equation*}
\end{example}

\subsection{Well-founded Semantics} \label{sec:well-founded-semantics}

For monotone inductive definitions, the standard construction process starts from the assumption that all defined facts are false, and then gradually revises some of these assumptions through iterated rule application.
Once saturation is reached, i.e., once there are no more applicable rules, it concludes that all defined atoms that have not been derived to be true must be false.
At each point of the construction process, partial information about the defined atoms is available: atoms that are true at an intermediate stage have reached their defined value, but atoms that are false at an intermediate stage not necessarily; these may still be derived true at a later stage.

The construction process for monotone definitions does not work for non-monotone definitions, since non-monotone rules can only be applied safely when it is certain that the negative subformulas $\neg \formulatwo$ in their bodies are true, i.e., that $\formulatwo$ will not be derived later in the construction process.
Based on the well-founded semantics for logic programs \cite{GelderRS91}, Denecker and Vennekens introduced a new formalization of the construction process \cite{KR/DeneckerV14}.
By re-formalizing the construction process as a sequence of three-valued structures of increasing precision, it was made explicit at each stage of the process whether a defined fact was derived to be true ($\true$), derived to be false ($\false$), or not yet derived, i.e., still unknown ($\unknown$).
As in the monotone case, a fact is derived to be true if the body of one of its rules is true in the current three-valued structure.
Deriving an unknown fact to be (certainly) false is more subtle.
It uses the notion of \emph{unfounded set} \cite{GelderRS91}, which is a set of unknown atoms such that, if all of these atoms are set to false, then the bodies of all the rules deriving these facts become false as well.
Intuitively, the atoms in an unfounded set can only make each other true, and therefore, they can be safely derived to be false.
We we will formalize the aforementioned notions in the remainder of this section.

(Two-valued) structures $\structure$ are defined as usual, consisting of a set $D$ called the \emph{domain} of $\structure$, also denoted by $\dom{\defn}$, and a mapping from non-logical symbols $\sigma$ to appropriate values $\sigma^{\structure}$ in $D$, called the \emph{interpretation} of $\sigma$ in $\structure$. 
In particular, $\sigma^{\structure} \subseteq D^n$ if $\sigma$ is an $n$-ary predicate symbol, and $\sigma^{\structure} : D^n \to D$ if $\sigma$ is an $n$-ary function symbol.
By abuse of notation, we will view the interpretation of a propositional symbol as a Boolean constant, $\false$ or $\true$ (standing for `false' and `true', respectively),  
and the interpretation of an object symbol as an element of $D$.
The set of interpreted symbols of $\structure$ is called the \emph{vocabulary} $\voc{\structure}$ of $\structure$.
If $\voc{\structure} = \vocab$, we cal $\structure$ a \emph{$\vocab$-structure}.
We say that a structure $\structure$ \emph{interprets} an expression $\epsilon$ (a term or a formula) if all freely occurring non-logical symbols in $\epsilon$ are interpreted in $\structure$. 
In this case, we call $\epsilon$ an expression \emph{over} $\voc{\structure}$.
Given a tuple of terms $\bar{t} = (t_1, \dots, t_n)$ interpreted by a structure $\structure$, we write $\bar{t}^{\structure}$ for $(t_1^{\structure}, \dots, t_n^{\structure})$.

We say that a structure $\structure$  \emph{expands} a structure $\mcj$ if $\voc{\mcj} \subseteq \voc{\structure}$, $\dom{\structure} = \dom{\mcj}$ and $\sigma^{\structure}$ = $\sigma^{\mcj}$ for all $\sigma \in \voc{\mcj}$. 
Given a structure $\structure$ with domain $D$, a tuple of non-logical symbols $\bar{\sigma}$, and a tuple of corresponding values $\bar{\alpha}$ in $D$, we denote by $\structure[\bar{\sigma} : \bar{\alpha}]$ the structure obtained from $\structure$ by interpreting $\bar{\sigma}$ by $\bar{\alpha}$. 
If $\Sigma$ is a subset of the vocabulary of $\structure$, then $\structure|_{\Sigma}$ denotes the structure obtained from $\structure$ by restricting its interpretation mapping to $\Sigma$.

Denecker and Vennekens formalize the construction processes behind inductive definitions as \emph{well-founded inductions}, which are (possibly transfinite) sequences of increasingly precise \emph{three-valued structures} \cite{KR/DeneckerV14}.
These structures have three different \textit{truth values}: $\false$, $\unknown$ and $\true$, where $\unknown$ stands for `unknown'.
We consider two orders on the set of truth values: the \emph{truth order} $\leqt$ and the \emph{precision order} $\leqp$, 
defined by $\false \leqt \unknown \leqt \true$, and $\unknown \leqp \false$, $\unknown \leqp \true$ respectively. 
We also write $\lnot \false \coloneq \true$, $\lnot \unknown \coloneq \unknown$ and $\lnot \true \coloneq \false$.
A three-valued structure $\tvs$ is defined similarly as a two-valued structure, except that $n$-ary predicate symbols are interpreted as functions from $D^n$ to $\{ \false, \unknown, \true\}$, where $D$ is the domain of $\tvs$. 
We extend the truth and precision order pointwisely to functions with codomain $\{ \false, \unknown, \true\}$, i.e., given $f,g : X \to \{ \false, \unknown, \true\}$, we let $f \leqt g$ iff $f(x) \leqt g(x)$ for all $x \in X$, and similarly for $\leqp$.
Furthermore, we extend these orders to three-valued structures $\tvs, \tvstwo$, by letting $\tvs \leqt \tvstwo$ iff $\tvs$ and $\tvstwo$ have the same domain and vocabulary, the same interpretation for non-predicate symbols, and $P^{\tvs} \leqt P^{\tvstwo}$ for every predicate symbol $P$, and similarly for $\leqp$.
Note that a two-valued structure can be seen as a special case of a three-valued structure, by identifying a subset $S$ of $D^n$ with its characteristic function, sending every element of $S$ to $\true$ and every non-element to $\false$. 
In this way, two-valued structures correspond to maximally precise three-valued structures. 
In the sequel, we will frequently abuse this correspondence, for instance by writing $P^\structure(\bar{a}) = \true$ instead of $\bar{a} \in P^\structure$ for two-valued structures $\structure$.

Given an \fo-formula $\formula$ and a three-valued structure $\tvs$, we define the \emph{truth value} or \emph{truth assignment} $\formula^{\tvs}$ of $\formula$ in $\tvs$ by structural induction on $\formula$, via the following rules:\footnote{
	The defined truth assignment is known as Kleene's truth assignment.
	In \cite{KR/DeneckerV14}, Denecker and Vennekens also allow other truth assignments.
	The only conditions that such an assignment must satisfy are $\leqp$-monotonicity (i.e., $\tvs \leqp \tvstwo$ implies $\formula^{\tvs} \leqp \formula^{\tvstwo}$) and restriction to the standard truth assignment on two-valued structures.
	Another sensible truth assignment, for instance, is \emph{supervaluation}, which defines $\formula^{\tvs}$ as $\mathsf{glb}_{\leqp} \{ \formula^{\structure} \mid \structure \text{ is two-valued and } \tvs \leqp \structure\}$ \cite{jp/vanFraassen66}.
	Supervaluation is more precise than Kleene's truth assignment.
	For instance, for $P^{\tvs} = \unknown$, Kleene's truth assignment of $P \lor \lnot P$ is $\unknown$, while the supervaluation of $P \lor \lnot P$ is $\true$.
	Different truth assignments lead to different notions of \emph{well-founded model}.
	Kripke's truth assignments leads to the \emph{standard} well-founded model, while supervaluation leads to the \emph{ultimate} well-founded model \cite{DeneckerMT04}. 
	We restrict to Kleene's truth assignment for simplicity, but since our proof of soundness w.r.t.\ the well-founded semantics does not rely on the specifics of Kleene's truth assignment, \scfoid is also sound w.r.t.\ other notions of the well-founded semantics corresponding to different truth assignments.
}
\begin{itemize}
	\item $(t=s)^{\tvs} = \true$ if $t^{\tvs} = s^{\tvs}$ and $(t=s)^{\tvs} = \false$ if $t^{\tvs} \neq s^{\tvs}$; 
	\item $P(\bar{t})^{\tvs} = P^{\structure}\!(\bar{t}^{\tvs})$; 
	\item $(\lnot \formula)^{\tvs} = \lnot \formula^{\tvs}$;
	\item $(\formula \land \formulatwo)^{\tvs} = \min_{\leqt}\{\formula^{\tvs}, \formulatwo^{\tvs}\}$;
	\item $(\forall x: \formula)^{\tvs} = \min_{\leqt} \{ \formula^{\tvs[x:a]} \mid a \in \dom{\tvs} \}$.
\end{itemize}
The rules for $\lor$, $\Rightarrow$, $\Leftrightarrow$ and $\exists$ can de derived from the above rules through reformulation in terms of $\lnot$, $\land$ and $\forall$.
The value $t^\tvs$ of a term $t$ in a three-valued structure $\tvs$ is defined as for two-valued structures, via the rule $f(t_1, \dots, t_n)^{\tvs} = f^{\tvs}(t_1^{\tvs}, \dots, t_n^{\tvs})$.
We extend the satisfaction relation $\models$ for \fo to three-valued structures by letting $\tvs \models \formula$ iff $\formula^{\tvs} = \true$. 
It is straightforward to verify that this truth assignment is $\leqp$-monotone, i.e., $\tvs \leqp \tvstwo$ implies $\formula^{\tvs} \leqp \formula^{\tvstwo}$; 
and that if $\tvs$ is two-valued, then $\formula^{\tvs}$ takes the standard truth value of $\formula$ in $\tvs$.

Given a definition $\defn$, the interpretation of the defined predicates of $\defn$ generally depends on the interpretation of the parameters of $\defn$.
A structure with vocabulary $\pars{\defn}$ is called a $\defn$-\emph{context}.
For the remainder of the section, we fix a definition $\defn$ and a $\defn$-context $\context$, and come to the notion of \emph{well-founded model} of $\defn$ in $\context$.

Given a tuple of non-logical symbols $\bar{\sigma}$ (possibly) occurring in a formula $\formula$, together with a tuple of corresponding values $\bar{\alpha}$ in $D$, we write $\formula(\bar{\alpha}/\bar{\sigma})^{\tvs}$, or shortly $\formula(\bar{\alpha})^{\tvs}$, to refer to $\formula^{\tvs[\bar{\sigma} : \bar{\alpha}]}$. 
A \emph{domain atom} is a pair $(P, \bar{a})$, where $P$ is a predicate symbol and $\bar{a} \in D^{\arity{P}}$. 
With an abuse of notation, we will often write $P(\bar{a})$ instead of $(P, \bar{a})$. 
Furthermore, we write $P(\bar{a})^{\tvs}$ for $P^{\tvs}\!(\bar{a})$, and $\formula_{P(\bar{a})}^{\tvs}$ for $\formula_{P}(\bar{y})^{\tvs[\bar{y} : \bar{a}]}$, where $\bar{y}$ is an $\arity{P}$-tuple of object symbols not occurring freely in the body $\formula$ of any definitional rule of $\defn$ of the form $\defrul$.
Given a set of predicate symbols $\Pi$, we denote by $At_D^\Pi$ the set of domain atoms $(P, \bar{a})$ for which $P \in \Pi$.

\begin{definition} \label{def:refinement}
	Let $\tvs$ be a three-valued $\sym{\defn}$-structure with domain $D$. 
	A three-valued $\sym{\defn}$-structure $\tvs'$ is said to be a \emph{$\defn$-refinement} of $\tvs$ if there exists a non-empty set $U \subseteq At_{D}^{\defi{\defn}}$ such that $A^{\tvs} = \unknown$ for all $A \in U$, and either
	\begin{itemize}
		\item $\tvs' = \tvs[U : \true]$ and for all $A \in U$, $\formula_A^{\tvs} = \true$, or
		\item $\tvs' = \tvs[U : \false]$ and for all $A \in U$, $\formula_A^{\tvs'} = \false$.
	\end{itemize}
	Here $\tvs[U : \true]$ refers to the structure identical to $\tvs$, except that its interpretation of $P$ maps $\bar{a}$ to $\true$ for all $(P,\bar{a}) \in U$, and similarly for $\tvs[U : \false]$.
\end{definition}

Note the asymmetry between deriving truth and falsity of defined atoms: the truth of defined atoms can only be derived if the corresponding bodies are true in $\tvs$, while the falsity of defined atoms can be derived if this implies the falsity of the corresponding bodies in $\tvs'$. 
The second kind of refinement is based on the notion of \emph{unfounded set} in the well-founded semantics for logic programs \cite{GelderRS91}.

\begin{definition}
	A \emph{well-founded induction} of $\defn$ in $\context$ is a sequence $\langle \tvs_i \rangle_{0 \leq i \leq \beta}$ of three-valued $\sym{\defn}$-structures extending $\context$, such that:
	\begin{itemize}
		\item $P^{\tvs_0}(\bar{a}) = \unknown$ for all $P \in \defi{\defn}$ and all $\bar{a} \in \dom{\tvs_0}^{\arity{P}}$;
		\item $\tvs_{i+1}$ is a $\defn$-refinement of $\tvs_i$ for every ordinal $0 \leq i < \beta$; and 
		\item $\tvs_\lambda$ is the $\leqp$-limit of $\langle \tvs_i \rangle_{i < \lambda}$ for every limit ordinal $0 < \lambda \leq \beta$. 
	\end{itemize}
	A well-founded induction $\langle \tvs_i \rangle_{0 \leq i \leq \beta}$ is called \emph{terminal} if its \emph{limit} $\tvs_\beta$ has no $\defn$-refinement. 
	The \emph{well-founded model} of $\defn$ in $\context$ is the limit of any terminal well-founded induction of $\defn$ in $\context$. 
	If this well-founded model is two-valued, we say that $\defn$ is \emph{total} in $\context$. 
	If $\defn$ is total in any $\defn$-context, we say that $\defn$ is \emph{total}.
\end{definition}

\begin{remark} \label{rem:props-wf-inductions}
	Note that well-founded inductions are $\leqp$-increasing, which guarantees that the $\leqp$-limits are well-defined. 
	Since the truth function is $\leqp$-monotone, well-founded inductions $\langle \tvs_i \rangle_{0 \leq i \leq \beta}$ have the property that once $\formula^{\tvs_i} \in \{\false, \true\}$ for some $i$, then $\formula^{\tvs_i} = \formula^{\tvs_j}$ for all $j \geq i$.
\end{remark}

The following result by Denecker and Vennekens \cite{lpnmr/DeneckerV07} guarantees that the notion of well-founded model of a definition in a context is well-defined.

\begin{theorem}
	Any definition has a terminal well-founded induction in any context. 
	Furthermore, all terminal well-founded inductions of a given definition in a given context have the same limit.
\end{theorem}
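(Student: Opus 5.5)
Existence and uniqueness are two separate arguments, and the plan is to take them in that order.

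\emph{Existence.} Build a well-founded induction $\langle \tvs_i \rangle_i$ of $\defn$ in $\context$ by transfinite recursion:
\begin{itemize}
\item Let $\tvs_0$ expand $\context$ with every defined domain atom set to $\unknown$.
\item At a successor stage, if $\tvs_i$ has a $\defn$-refinement, pick one as $\tvs_{i+1}$; otherwise stop.
\item At a limit $\lambda$, take the $\leqp$-limit. This exists because the sequence is $\leqp$-increasing: a refinement only turns $\unknown$ atoms into $\true$ or $\false$, so each domain atom changes value at most once and the pointwise limit is well defined.
\end{itemize}
Each refinement step turns a nonempty set $U$ of $\unknown$ atoms into two-valued ones. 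The number of proper steps is therefore bounded by $|At_D^{\defi{\defn}}|^+$, so the recursion must halt at some ordinal $\beta$. Halting there means $\tvs_\beta$ has no refinement, so the induction is terminal.

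\emph{Uniqueness.} I would show that the limit of every terminal induction equals one canonical object. The natural candidate is the well-founded fixpoint $W$ of the approximator on pairs $(L,U)$, built from the three-valued immediate consequence operator. Concretely, I would prove two claims.
\begin{itemize}
\item \textbf{Claim (a).} Every stage $\tvs_i$ of any well-founded induction satisfies $\tvs_i \leqp W$. This is by transfinite induction on $i$; the base case and limit cases are immediate.
  \begin{itemize}
  \item For a $\true$-refinement with $\formula_A^{\tvs_i}=\true$: monotonicity of the Kleene valuation gives $\formula_A^{W}=\true$, so $A$ is true in $W$ because $W$ is a fixpoint.
  \item For a $\false$-refinement by an unfounded set $U$: show that $U$, together with the atoms already false in $W$, is unfounded relative to $W$. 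This needs $\leqp$-monotonicity plus the property that $W$'s false atoms form the greatest unfounded set with respect to $W$. Then every $A \in U$ is false in $W$.
  \end{itemize}
\item \textbf{Claim (b).} If $\tvs$ is a terminal limit and $\tvs \leqp W$, then $\tvs = W$. Suppose some atom is $\unknown$ in $\tvs$ but two-valued in $W$. Take an earliest such atom along the stagewise construction of $W$ (its lower/upper iteration). It is either true in $W$ at a stage by a body that is already true in $\tvs$, which would give a $\true$-refinement of $\tvs$; or all such atoms falsified at that stage form an unfounded set with respect to $\tvs$, which would give a $\false$-refinement. Either contradicts terminality.
\end{itemize}

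The main obstacle is the unfounded-set bookkeeping in the $\false$ cases of both claims. A refinement requires $\formula_A^{\tvs'}=\false$ evaluated in the updated structure, and this must be reconciled with how $W$'s false part is computed, namely as the greatest unfounded set (equivalently, via the upper bound of the stable operator). If matching $W$'s construction proves awkward, I would instead argue directly. Given two terminal limits $\tvs, \tvstwo$, show by induction along the induction producing $\tvs$ that each of its steps is consistent with $\tvstwo$, so $\tvs \leqp \tvstwo$. The $\false$ case then uses the fact that the union of an unfounded set with $\tvstwo$'s false atoms remains unfounded; otherwise that union would be a refinement of $\tvstwo$. Symmetry then gives $\tvs = \tvstwo$.
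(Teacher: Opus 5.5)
The paper gives no proof of this theorem. It imports it from Denecker and Vennekens \cite{lpnmr/DeneckerV07}, where it is obtained by showing that every terminal well-founded induction converges to the well-founded fixpoint of the approximator. Your main route is that same strategy, and it goes through.

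The existence half is self-contained: successor steps make pairwise disjoint nonempty sets of atoms two-valued, so they inject into $At_D^{\defi{\defn}}$ and the recursion must halt. Claims (a) and (b) are sound once you take three facts from approximation fixpoint theory:
\begin{enumerate}
\item $W=(W^l,W^u)$ is a fixpoint of the Kleene immediate consequence operator;
\item $W^u=\lfp(\cons(\cdot,W^l))$, which is your ``greatest unfounded set'';
\item $W$ is the limit of the iteration $(X,Y)\mapsto(\lfp(\cons(\cdot,Y)),\lfp(\cons(\cdot,X)))$.
\end{enumerate}
For instance, the $\false$ case of (a) runs as follows. The condition $\formula_A^{\tvs_{i+1}}=\false$ says that $\formula_A$ fails when positive occurrences are read in $\tvs_i^u\setminus U$ and negative ones in $\tvs_i^l$. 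Since $W^u\subseteq\tvs_i^u$ and $\tvs_i^l\subseteq W^l$, it also fails for $(W^u\setminus U,W^l)$. Hence $W^u\setminus U$ is a pre-fixpoint of $\cons(\cdot,W^l)$, and so $U\cap W^u=\emptyset$. So, like the paper, you rely on outside theory, only at the level of properties of $W$ rather than the theorem itself.

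Your fallback, however, has a genuine gap in its $\false$ case. You need every atom of the unfounded set $U$ to be false in the other terminal limit $\tvstwo$. But these atoms are only known to be unknown in $\tvs_i\leqp\tvstwo$, so some may be \emph{true} in $\tvstwo$.
\begin{itemize}
\item Terminality of $\tvstwo$ cannot exclude this, because refinements only act on unknown atoms.
\item Once such an atom exists, $\tvs_{i+1}\not\leqp\tvstwo[U':\false]$, where $U'$ is the part of $U$ unknown in $\tvstwo$. So $\formula_A^{\tvs_{i+1}}=\false$ does not transfer, and no refinement of $\tvstwo$ results.
\item Terminality alone is genuinely insufficient. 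Take $\defn=\{P\rul P\}$: the structure making $P$ true has no $\defn$-refinement, yet your argument would ``show'' that the well-founded model, with $P$ false, lies $\leqp$-below it.
\end{itemize}

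The missing step must use the induction $\langle\tvstwo_j\rangle_j$ that produced $\tvstwo$. Let $j+1$ be the first stage at which some $A\in U$ becomes true; this is a successor stage, since limits create no new truths. Then $\formula_A^{\tvstwo_j}=\true$. On the other hand, $\tvstwo_j^l\subseteq\tvstwo^u\setminus U$ and $\tvstwo^l\subseteq\tvstwo_j^u$. Also $\formula_A$ fails for $(\tvstwo^u\setminus U,\tvstwo^l)$, transferred from $\tvs_{i+1}$ via $\tvs_i\leqp\tvstwo$ exactly as in the $W$ computation above. The (anti)monotonicity of pair evaluation used in Appendix~\ref{app:induction-scheme-wf} then gives $\formula_A^{\tvstwo_j}\neq\true$, a contradiction.

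After that, the atoms of $U$ unknown in $\tvstwo$ form a genuine $\false$-refinement of $\tvstwo$ unless there are none. The $\true$ case also needs the small remark that an atom false in $\tvstwo$ has a false body, by Remark~\ref{rem:props-wf-inductions}. With these repairs the fallback becomes a fully self-contained proof that avoids approximation fixpoint theory altogether. That is more elementary than both your main route and the paper's citation.
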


Non-totality of $\defn$ in $\context$ means that the well-founded induction of $\defn$ in $\context$ leave at least one defined atom unknown. 
This usually indicates a kind of flaw in the definition (or in the interpretation of the parameters).

\begin{example}
	Let $\defn$ be the even number definition from Example \ref{ex:def_even}, and let $\context$ be the $\defn$-context with $ \dom{\defn} = \N$, and with the standard interpretation for $\mathit{zero}$ and $\mathit{succ}$. 
	Recall that $\formula_{\mathit{Even}}(y) \doteq (y = \mathit{zero}) \lor \exists x: y = s(x) \land \lnot \mathit{Even}(x)$.
	As can be checked, the assignments 
	\begin{align*}
		E^{\tvs_0}(n) &= \unknown \, \text{  for } n \in \N\\
		E^{\tvs_1}(0) &= \true, \quad E^{\tvs_1}(n) = \unknown  \, \text{  for  } n \in \N \setminus \{0\}\\
		E^{\tvs_2}(0) &= \true, \quad E^{\tvs_2}(1) = \false, \quad E^{\tvs_2}(n) = \unknown \, \text{  for } n \in \N \setminus \{0,1\}\\
		& \; \; \vdots \\
		E^{\tvs_\omega}(n) &= \true \, \text{  for } n \in 2\N ,\quad E^{\tvs_\omega}(n) = \false \, \text{  for } n \in 2\N+1
	\end{align*}
	specify a terminal well-founded induction $\langle \tvs_i \rangle_{0 \leq i \leq \omega}$ of $\defn$ in $\context$ (in fact, the only one). 
	Here $\omega$ denotes the first infinite ordinal. Since the well-founded model $\tvs_\omega$ is two-valued, $\defn$ is total in $\context$. 
	
	The definition $\defn$ is not total in general. 
	Indeed, let $\context'$ be the $\defn$-context with $ \dom{\defn} = \{0,1\}$, and with $\mathit{zero}^{\context'} = 0$, $\mathit{succ}^{\context'}\!(0) = 1$ and $\mathit{succ}^{\context'}\!(1) = 1$.
	As can be checked, the assignments 
	\begin{align*}
		E^{\tvs_0}(0) &= \unknown, \quad E^{\tvs_0}(1) = \unknown\\
		E^{\tvs_1}(0) &= \true, \quad E^{\tvs_1}(1) = \unknown
	\end{align*}
	specify a terminal well-founded induction $\langle \tvs_i \rangle_{0 \leq i \leq \omega}$ of $\defn$ in $\context'$ (in fact, the only one). 
	Indeed, $\tvs_1$ has no $\defn$-refinements, since $\formula_{\mathit{Even}(1)}^{\tvs_0} = \formula_{\mathit{Even}(1)}^{\tvs_0[\mathit{Even}(1):\false]} = \false$.
	Thus, the well-founded model of $\defn$ in $\context'$ is three-valued, and therefore, $\defn$ is non-total in $\context'$.
\end{example}

The \emph{well-founded satisfaction relation} $\modelswf$ is a binary relation between two-valued structures $\structure$ and \foid-formulas $\formula$ of which all non-logical symbols are interpreted by $\structure$, defined by the following rules:
\begin{itemize}
	\item $\structure \modelswf t=s$ if $t^{\structure} = s^{\structure}$;
	\item $\structure \modelswf P(\bar{t})$ if $\bar{t} \in P^{\structure}$;
	\item $\structure \modelswf \neg \formula$ if not $\structure \modelswf \formula$;
	\item $\structure \modelswf \formula \land \formulatwo$ if $\structure \modelswf \formula$ and $\structure \modelswf \formulatwo$;
	\item $\structure \modelswf \forall x: \formula$ if $\structure[x:a] \modelswf \formula$ for all $a \in \dom{\structure}$;
	\item $\structure \modelswf \defn$ if $\structure|_{\sym{\defn}}$ is the well-founded model of $\defn$ in $\structure|_{\pars{\defn}}$.
\end{itemize}
The rules for $\lor$, $\Rightarrow$, $\Leftrightarrow$ and $\exists$ can de derived from the above rules through reformulation in terms of $\lnot$, $\land$ and $\forall$.
An \foid-formula $\formula$ is said to be \emph{valid} or \emph{a model (under the well-founded semantics)} if $\structure \modelswf \formula$ for all structures $\structure$ interpreting the non-logical symbols of $\formula$.
Two \foid-formulas $\formula$ and $\formulatwo$ are said to be \emph{equivalent (under the well-founded semantics)} if $\formula \Leftrightarrow \formulatwo$ is valid under the well-founded semantics.

\begin{remark} \label{rem:equivalence-normalization-wf}
	Under the well-founded semantics, every \foid-definition $\defn$ is equivalent to its normalization $\normalization{\defn}$, i.e., $\structure \modelswf \defn$ iff $\structure \modelswf \normalization{\defn}$ for all structures $\structure$.
	This is because the well-founded model of a definition in a context is not defined in terms of the bodies of the individual rules, only in terms of the merged bodies $\formula_P(\bar{y})$.
\end{remark}

A distinctive feature of \foid is that it provides a declarative account of inductive definitions while acknowledging their constructive nature.
\foid differs from the convention \cite{ajm/Post43,Spector61,Moschovakis74,Aczel77,Kreisel63,Feferman70,MartinLoef71,BuchholzFPS81,jlc/BrotherstonS11,tcs/McDowellM00,tpp/MomiglianoT04} in that it incorporates inductive definitions in its syntax and gives them a truth-conditional treatment.
We claim that the truth-conditional account of inductive definitions in \foid is natural, as it corresponds to the truth-conditional account of non-inductive definitions in \fo by means of equivalences.
For instance, the concept of sibling can be defined in \fo by the formula $\forall x,y: \mathit{SiblingOf}(x,y) \Leftrightarrow \exists z: \mathit{ParentOf}(z,x) \land \mathit{ParentOf}(z,y)$. 
In \foid, this definition can equally be represented as the definition $\{\forall x,y: \mathit{SiblingOf}(x,y) \rul \exists z: \mathit{ParentOf}(z,x) \land \mathit{ParentOf}(z,y)\}$.
In fact, any \fo-formula of the form $\forall \bar{x}: P(\bar{x}) \Leftrightarrow \formula$ where $\formula$ does not contain $P$ is semantically equivalent to the \foid-definition $\{\forall \bar{x}: P(\bar{x}) \rul \formula\}$, as can easily be checked.
An advantage of the explicit treatment of definitions in \foid, is that it allows for the expression of statements about definitions \emph{themselves}, instead of only about defined predicates.
For instance, we can express the equivalence of two inductive definitions $\defn_1$ and $\defn_2$ in \foid as $\defn_1 \Leftrightarrow \defn_2$.
The negation $\neg \defn$ of a definition $\defn$ is sensible as well, as it relates to non-totality.
More concretely, if $\lnot \defn$ is a valid, then $\defn$ has no two-valued models. 
Hence, its well-founded model in any context is strictly three-valued.
In particular, this implies that $\defn$ is non-total.
More generally, given a formula $\formula$ constraining the parameters of $\defn$, the \foid-formula $\formula \Rightarrow \defn$ is valid iff the well-founded model of $\defn$ in any $\defn$-context satisfying $\formula$ is two-valued.

We conclude this section by providing two properties of the well-founded semantics that will lay the basis of the sequent calculus \scfoid for \foid{}.

\begin{proposition} \label{prop:completion-wf}
	Let $\defn$ be a definition, $P$ a defined predicate of $\defn$, and $\bar{y}$ an $\arity{P}$-tuple of object symbols not occurring freely in the body $\formula$ of any definitional rule of $\defn$ of the form $\defrul$.
	Let $\structure$ a (two-valued) \fo-structure such that $\structure \modelswf \defn$. 
	Then $\structure \modelswf \forall \bar{y} : P(\bar{y}) \Leftrightarrow \formula_P(\bar{y})$.
\end{proposition}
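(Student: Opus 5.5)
Since $\structure \modelswf \defn$, the restriction $\structure|_{\sym{\defn}}$ is the well-founded model of $\defn$ in the context $\context = \structure|_{\pars{\defn}}$, and this well-founded model is two-valued. I will fix a terminal well-founded induction $\wellfoundedinduction$ of $\defn$ in $\context$, which exists by the theorem of Denecker and Vennekens, and write $\tvs_\beta = \structure|_{\sym{\defn}}$ for its limit. Because truth of \fo-formulas depends only on the symbols they contain, it suffices to show that for every $\bar{a} \in D^{\arity{P}}$ we have $P(\bar{a})^{\tvs_\beta} = \formula_{P(\bar{a})}^{\tvs_\beta}$. Here $\tvs_\beta$ is two-valued, so both sides take values in $\{\false,\true\}$. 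I will then lift this pointwise statement through the clauses of $\modelswf$ for $\forall$ and $\Leftrightarrow$. Note that $\modelswf$ coincides with standard \fo satisfaction on \fo-formulas such as $P(\bar{y}) \Leftrightarrow \formula_P(\bar{y})$.

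The argument splits into two directions. First suppose $P(\bar{a})^{\tvs_\beta} = \true$. Since $P(\bar{a})^{\tvs_0} = \unknown$, and since each refinement changes only unknown atoms while limits are $\leqp$-limits, there is a successor stage $i+1 \le \beta$ at which $P(\bar{a})$ first becomes $\true$. That stage cannot be a limit, because a limit is reached pointwise from earlier stages, so some earlier stage already had the value $\true$. At that successor step the refinement must be of the first kind, with $(P,\bar{a}) \in U$, so $\formula_{P(\bar{a})}^{\tvs_i} = \true$. By Remark~\ref{rem:props-wf-inductions}, $\formula_{P(\bar{a})}^{\tvs_\beta} = \true$.

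Next suppose $P(\bar{a})^{\tvs_\beta} = \false$. Similarly, there is a step $i \to i+1$ that is a refinement of the second kind with $(P,\bar{a}) \in U$, so $\formula_{P(\bar{a})}^{\tvs_{i+1}} = \false$. Again by $\leqp$-monotonicity (Remark~\ref{rem:props-wf-inductions}), $\formula_{P(\bar{a})}^{\tvs_\beta} = \false$. In both cases the two values agree, so $\tvs_\beta[\bar{y}:\bar{a}] \models P(\bar{y}) \Leftrightarrow \formula_P(\bar{y})$ for all $\bar{a}$. Hence $\structure \modelswf \forall \bar{y}: P(\bar{y}) \Leftrightarrow \formula_P(\bar{y})$.

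The only delicate point is the claim that the value of $P(\bar{a})$ is fixed at a successor step and never at a limit stage. I would prove this by noting that the sequence is $\leqp$-increasing. At a limit $\lambda$, the value of $P(\bar{a})$ in $\tvs_\lambda$ is $\true$ or $\false$ only if it already had that value at some earlier $\tvs_j$ with $j<\lambda$. Taking the least ordinal at which the value is non-$\unknown$, that ordinal is therefore a successor $i+1$ with $P(\bar{a})^{\tvs_i} = \unknown$. The atom's change must then come from the refinement set $U$ of that step, and the kind of refinement matches the value the atom acquires. The choice of $\bar{y}$ in the statement only ensures that $\formula_P(\bar{y})$ is the merged body as in Definition~\ref{def:merged-body}, so the substitution $\bar{y} \mapsto \bar{a}$ computes $\formula_{P(\bar{a})}$.
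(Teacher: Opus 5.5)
Your proof is correct and follows essentially the paper's argument. The case $P(\bar{a})^{\tvs_\beta}=\true$ is identical: truth is derived at a successor stage where the merged body is already $\true$, and this persists by Remark~\ref{rem:props-wf-inductions}. Your case $P(\bar{a})^{\tvs_\beta}=\false$ is the contrapositive of the paper's converse direction, which argues that a body true in $\tvs_\beta$ is never $\false$ along the induction, so $P(\bar{a})$ is never falsified and must be $\true$. Your explicit argument that atoms acquire their values only at successor stages fills in a detail the paper leaves implicit.
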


\begin{proof}
	Let $\wellfoundedinduction$ be a terminal well-founded induction of $\defn$ in $\context \coloneq \structure|_{\pars{\defn}}$.
	Since $\structure \modelswf \defn$, $\structure$ expands the well-founded model $\tvs_\beta$ of $\defn$ in $\context$.
	Take $\bar{a} \in \dom{\structure}^{\arity{P}}$, and let $A$ denote the domain atom $P(\bar{a})$.
	Suppose that $\structure[\bar{y} : \bar{a}] \modelswf P(\bar{y})$.
	Then the truth of $\formula_A$ must have been derived in $\wellfoundedinduction$ at a point $i+1$.
	By Definition \ref{def:refinement}, this is only possible if $\formula_A^{\tvs_i} = \true$.
	By Remark \ref{rem:props-wf-inductions}, $\formula_A^{\tvs_\beta} = \true$, and hence, $\structure[\bar{y} : \bar{a}] \modelswf \formula_P(\bar{y})$.
	
	Suppose conversely that $\structure[\bar{y} : \bar{a}] \modelswf \formula_P(\bar{y})$.
	Then $\formula_A^{\tvs_\beta} = \true$.
	By Remark \ref{rem:props-wf-inductions}, $\formula_A^{\tvs_i} \neq \false$ for all $0 \leq i \leq \beta$.
	Hence, by Definition \ref{def:refinement}, $A^{\tvs_i} \neq \false$ for all $0 \leq i \leq \beta$.
	By terminality of $\wellfoundedinduction$, $\formula_A^{\tvs_\beta} = \true$.
	Thus, $\structure[\bar{y} : \bar{a}] \modelswf P(\bar{y})$.
	This shows that $\structure \modelswf \forall \bar{y} : P(\bar{y}) \Leftrightarrow \formula_P(\bar{y})$.
\end{proof}

For monotone definitions, the defined predicates $P$ of $\defn$ can be characterized as the smallest relations that respect the definitional rules of $\defn$, i.e., that satisfy the material implications $\forall \bar{x} :\formula \Rightarrow P(\bar{t})$ corresponding to the definitional rules $\forall \bar{x}: P(\bar{t}) \leftarrow \formula$ of $\defn$.
Formally, any model of a monotone definition $\defn$ satisfies the following second-order axiom:
\begin{equation*} 
	\forall G_{\defi{\defn}}: \Big(  \bigwedge \forall \bar{y} : \formulatwo[G_{\defi{\defn}} / \defi{\defn}] \Rightarrow G_Q(\bar{s}) \Big) \Rightarrow \Big( \bigwedge \forall \bar{x} : P(\bar{x}) \Rightarrow G_P(\bar{x}) \Big)
\end{equation*}
Here $G_{\defi{\defn}}$ is a tuple of second-order variables, containing an $\arity{P}$-ary relation variable for every defined predicate $P$ of $\defn$.
The first conjunction ranges over all definitional rules $\forall \bar{y} : G_Q(\bar{s}) \rul \formulatwo$ in $\defn$, and the second conjunction over all defined predicates $P$ of $\defn$.
We denote by $\formulatwo[G_{\defi{\defn}}/ \defi{\defn}]$ the formula obtained from $\formulatwo$ by replacing every occurrence of a defined predicate $R$ of $\defn$ with $G_R$.
The above formula is known as the \emph{induction axiom}. 
In words, it says that every tuple of relations $G_{\defi{\defn}}$ that respect the definitional rules of $\defn$ must contain the defined predicates $\defi{\defn}$.
In general, the induction axiom does not hold for non-monotone definitions.
One counterexample is given by the non-monotone definition of $\mathit{Even}$ from Example \ref{ex:def_even}.
For this definition $\defn$, the axiom takes the form
\begin{equation*} 
	\forall G: \Big( G(\mathit{zero}) \land \forall n: \mathit{Nat}(n) \land \neg G(n) \Rightarrow G(\mathit{succ}(n)) \Big) \Rightarrow \Big(  \forall n: \mathit{Even}(n) \Rightarrow G(n) \Big).
\end{equation*}
This formula is not satisfied by the well-founded model $\structure$ of $\defn$ with the standard interpretation for $\mathit{zero}$, $\mathit{succ}$ and $\mathit{Nat}$ (and, hence, for $\mathit{Even}$), as can be seen by letting $G$ be the set $ \{0\} \cup (2 \N +1)$.
Indeed, this set respects the definitional rules of $\defn$, but does not contain the set of even numbers $2 \N$.
Nevertheless, we can extend the induction axiom to non-monotone definitions by only replacing positive occurrences of defined predicates $R$ by $G_R$ in the bodies $\psi$.
Generalizing further, we can replace $\defi{\defn}$ by any subset $\Pi$ of $\defi{\defn}$.
We also refer to this extended version as the induction axiom.

\begin{definition} \label{def:induction-axiom}
	Let $\defn$ be a definition and $\Pi$ a subset of $\defi{\defn}$.
	The \emph{induction axiom} $\indax{\defn, \Pi}$ for $\defn$ and $\Pi$ is the second-order axiom 
	\begin{equation*} 
		\forall G_{\Pi}: \Big(  \bigwedge_{\Pi} \forall \bar{y} : \formulatwo[G_{\Pi} / \Pi^+] \Rightarrow G_Q(\bar{s}) \Big) \Rightarrow \Big( \bigwedge_{P\in\Pi} \forall \bar{x} : P(\bar{x}) \Rightarrow G_P(\bar{x}) \Big) .
	\end{equation*}
	Here $G_{\Pi}$ is a tuple of second-order variables, containing an $\arity{P}$-ary relation variable for every predicate $P \in \Pi$.\footnote{
		Since the second-order variables $G_{\Pi}$ are not fixed, \emph{the} induction axiom for $\defn$ and $\Pi$ is strictly speaking ill-defined.
		However, since the exact choice of variables is semantically irrelevant, we speak of \emph{the} induction axiom by abuse of terminology.
	}
	The first conjunction ranges over all definitional rules $\forall \bar{y} : G_Q(\bar{s}) \rul \formulatwo$ in $\defn$ with $Q \in \Pi$. 
	We denote by $\formulatwo[G_{\defi{\defn}}/ \defi{\defn}^+]$ the formula obtained from $\formulatwo$ by replacing every positive occurrence of a defined predicate $R$ of $\defn$ with $G_R$.
\end{definition}

\begin{example}
	The induction axiom $\indax{\defn, \Pi}$ for the even number definition $\defn$ from Example \ref{ex:def_even} and $\Pi = \{ \mathit{Even} \}$ is of the form
	\begin{equation*}
		\forall G: \Big( G(\mathit{zero}) \land \forall n: \mathit{Nat}(n) \land \neg \mathit{Even}(n) \Rightarrow G(\mathit{succ}(n)) \Big) \Rightarrow \Big(  \forall n: \mathit{Even}(n) \Rightarrow G(n) \Big).
	\end{equation*}
\end{example}

\begin{restatable}{proposition}{inductionschemewf}\label{prop:induction-scheme-wf}
	Let $\defn$ be a definition and $\Pi$ a subset of $\defi{\defn}$.
	Let $\structure$ be an \fo-structure such that $\structure \modelswf \defn$. 
	Then $\structure \models \indax{\defn, \Pi}$. 
\end{restatable}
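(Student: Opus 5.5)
We prove the statement by transfinite induction along a terminal well-founded induction, using the fact that positive occurrences of defined predicates can be replaced by $G$ without harming truth along the construction.

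Fix relations $G_\Pi$ (interpretations of the second-order variables) such that $\structure[G_\Pi]$ satisfies every material implication $\forall \bar{y}: \formulatwo[G_\Pi/\Pi^+] \Rightarrow G_Q(\bar{s})$ for the rules of $\defn$ with head predicate $Q \in \Pi$. By Remark \ref{rem:equivalence-normalization-wf}, it suffices to reason with merged bodies. Since the disjuncts $\exists \bar{x}: \bar{y}=\bar{t}\land\formulatwo$ of $\formula_Q(\bar{y})$ preserve positivity of occurrences, the hypothesis is equivalent to $\structure[G_\Pi] \models \forall \bar{y}: \formula_Q(\bar{y})[G_\Pi/\Pi^+] \Rightarrow G_Q(\bar{y})$ for each $Q\in\Pi$.

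Let $\wellfoundedinduction$ be a terminal well-founded induction of $\defn$ in $\structure|_{\pars{\defn}}$; since $\structure \modelswf \defn$, its limit $\tvs_\beta$ coincides with $\structure|_{\sym{\defn}}$, so every defined atom eventually gets a two-valued value. We show by induction on $i \le \beta$ the invariant: for every $P \in \Pi$ and $\bar{a}$, if $P(\bar{a})^{\tvs_i} = \true$ then $\bar{a} \in G_P$. At $i=0$ and at limit stages the invariant is immediate, since the true atoms at a limit are those that are true at an earlier stage. Refinements to $\false$ add no true atoms, so they preserve the invariant. The crucial case is a refinement to $\true$: some unknown atom $A = Q(\bar{a})$ with $Q \in \Pi$ is made true because $\formula_A^{\tvs_i} = \true$.

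To handle this case we prove an auxiliary monotonicity lemma. Let $\structure'$ be the two-valued structure obtained from $\structure$ by interpreting each $G_R$ as given and keeping $\structure$'s interpretation of all defined predicates. Then for any \fo-formula $\chi$ over $\sym{\defn}$: if $\chi^{\tvs_i}=\true$ then $\structure' \models \chi[G_\Pi/\Pi^+]$, and if $\chi^{\tvs_i}=\false$ then $\structure' \models \lnot\chi[G_\Pi/\Pi^+]$, where the replacement is taken relative to polarity inside the whole body. We prove this by structural induction on $\chi$, tracking polarity: negation flips both the polarity and the roles of true and false, while $\land$ and $\forall$ are routine.

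The atomic cases of the lemma are as follows.
\begin{itemize}
\item A positive occurrence of $R(\bar{t})$ with $R\in\Pi$ that is $\true$ in $\tvs_i$ lies in $G_R$ by the induction hypothesis (the invariant at stage $i$).
\item A negative occurrence of $R(\bar{t})$ is kept as $R(\bar{t})$. Because $\tvs_i \leqp \tvs_\beta$ by Remark \ref{rem:props-wf-inductions} and $\structure$ expands $\tvs_\beta$, the value $\true$ or $\false$ at stage $i$ agrees with $\structure$.
\item Defined predicates outside $\Pi$ and parameters are untouched, so the same agreement argument applies.
\end{itemize}

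We must be careful in the polarity bookkeeping. The lemma must be stated as two claims, one for the positive polarity context and one for the negative, because a positive-context subformula that is false in $\tvs_i$ does not translate to falsity under $G$. We therefore only ever need truth in positive context and falsity in negative context. This is exactly what the induction on $\chi$ uses once we push the polarity through $\lnot$. This step is the main obstacle, and I would formulate it as: for $\chi$ occurring with polarity $p$, if $\chi^{\tvs_i}$ is the value that makes the enclosing body true ($\true$ for $p$ positive, $\false$ for $p$ negative), then the translated formula has that value in $\structure'$.

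Applying the lemma to $\formula_Q(\bar{y})$ gives $\structure'[\bar{y}:\bar{a}] \models \formula_Q(\bar{y})[G_\Pi/\Pi^+]$, hence $\bar{a}\in G_Q$, so the invariant is preserved. At stage $\beta$, every $\bar{a}\in P^{\structure}$ has $P(\bar{a})^{\tvs_\beta}=\true$, so $P^{\structure} \subseteq G_P$ for all $P\in\Pi$. This is the consequent of $\indax{\defn,\Pi}$, and since $G_\Pi$ was arbitrary, $\structure \models \indax{\defn,\Pi}$.
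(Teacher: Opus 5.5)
Your proof is correct and follows essentially the same route as the paper. The paper argues by contradiction from the first stage $i$ at which an atom of $Q^{\structure}\setminus A_Q$ becomes true, which is the contrapositive of your stage-wise invariant. It obtains your polarity-indexed auxiliary lemma differently: it uses a pair-evaluation $(\structure,\structuretwo)\models\formula$, which evaluates positive occurrences in the first structure and negative ones in the second. It then combines a decomposition property ($\formula^{\tvs}=\true$ implies truth in $(\tvs^l,\tvs^u)$) with (anti)monotonicity applied to $\tvs_i^l \leqt \structure[\Pi:A_\Pi]$ and $\structure \leqt \tvs_i^u$. Your single structural induction fuses those two general lemmas into one direct argument, and your corrected two-polarity formulation is exactly what that fusion requires.
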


\begin{proof}
	See Appendix \ref{app:induction-scheme-wf}.
\end{proof}

%% file: lfoid.tex
\section{A Sequent Calculus for \foid} \label{sec:scfoid}

To enable formal proofs for general inductive definitions, we introduce the \emph{sequent calculus} \scfoid for \foid.
A sequent calculus \seqcal for a logic $\logic$ consists of a number of inference rules, formulated in terms of \emph{sequents}.
An ($\logic$-)\emph{sequent} is an expression of the form $\sequent$ where $\Gamma$ and $\Delta$ are sets of $\logic$-formulas.\footnote{
	In other versions, $\Gamma$ and $\Delta$ are often seen as multisets or lists.
}
Unless mentioned otherwise, we assume $\Gamma$ and $\Delta$ to be finite.
By convention, we usually discard the curly brackets $\{$ $\}$ for the sets $\Gamma$ and $\Delta$ in a sequent $\sequent$.
Informally, a sequent $\sequent$ says that if all formulas in $\Gamma$ are true, then some formula in $\Delta$ is true.
Hence, a sequent can be seen as a formula $\bigwedge \Gamma \Rightarrow \bigvee \Delta$ (if $\Gamma$ and $\Delta$ are finite).
Formally, we extend the satisfaction relation $\models_{\logic}$ of $\logic$ to sequents $\sequent$ by letting $\structure \models_{\logic} \sequent$ if $\structure \models_{\logic} \formula$ for a formula $\formula \in \Delta$ or $\structure \not\models_{\logic} \formula$ for a formula $\formula \in \Gamma$. 
An (\seqcal)-\emph{derivation} of a sequent $\sequent$ from sequents $\seq{\Gamma_1}{\Delta_1}, \dots, \seq{\Gamma_n}{\Delta_n}$ is a finite rooted labeled directed tree where the nodes are labeled with $\logic$-sequents, the root has label $\sequent$, the leafs have labels $\seq{\Gamma_1}{\Delta_1}, \dots, \seq{\Gamma_n}{\Delta_n}$, and the edges \emph{respect} the inference rules of \seqcal.
By the latter, we mean that for every node $n$ in the tree, there is an instantiation of an inference rule in \seqcal such that the label of $n$ is the conclusion and the labels of its children are the premises of this instantiation.
An (\seqcal)-\emph{proof} of a sequent $\sequent$ is an \seqcal-derivation of $\sequent$ from sequents $\seq{\Gamma_1}{\Delta_1}, \dots, \seq{\Gamma_n}{\Delta_n}$ that occur as the conclusions of instantiations of inference rules in \seqcal without premises.
The \emph{height} of an (\seqcal-)proof $T$ is the largest distance of a path from the root to a leaf in $T$.
We say that a sequent $\sequent$ is (\seqcal)-\emph{provable} or an (\seqcal)-\emph{theorem} if there exists an \seqcal-proof of $\sequent$.
By convention, we write \seqcal-derivations in the typical sequent calculus style with horizontal bars representing applications of inference rules, which we label with the name of the rule:
\[
\begin{prooftree}
	\hypo{\seq{\Gamma_1}{\Delta_1}}
	\hypo{\dots}
	\hypo{\seq{\Gamma_n}{\Delta_n}}
	\infer3[(rule)]{\sequent}
\end{prooftree}
\]
The sequents $\Gamma_1 \vdash \Delta_1, \dots, \Gamma_n \vdash \Delta_n$ above the bar are the premises of the rule, and the sequent $\sequent$ below the bar is the conclusion.

\subsection{A Sequent Calculus for \fo}

Brotherston and Simpson's sequent calculus \lkid for first-order logic with positive inductive definitions extends Gentzen's original sequent calculus \lk for classical first-order logic \cite{jlc/BrotherstonS11}.
Since Gentzen did not consider equality, Brotherston and Simpson complemented \lk with two rules for equality.
Figure \ref{fig:scfo} contains the inference rules for \fo as presented by Brotherston and Simpson.
Some rules are only applicable under certain conditions, which are written next to them.
The rule (subst) is only applicable if no quantified object symbol in $\Gamma \cup \Delta$ occurs in $t$, by which we mean that if $\Gamma$ or $\Delta$ contains a formula of the form $\forall y: \formula$ or $\exists y: \formula$, then $y$ does not occur in $t$.
We denote the sequent calculus consisting of these inference rules by \scfo.
Every \fo-connective comes with both a \emph{left} and \emph{right introduction rule} in \scfo, which introduce the connective in a formula on the left and the right side of the sequent in the conclusion, respectively. 
We will sometimes apply the rules (subst), ($\forall$L), ($\forall$R), ($\exists$L), ($\exists$R), ($=$L) and ($=$R) directly to tuples of terms $\bar{t}$.
The structural rules (wk) and (subst) are \emph{admissible} in \scfo \cite{jlc/BrotherstonS11}, meaning that their removal does not affect the set of theorems.
In other words, every sequent that is \scfo-provable with (wk) and (subst) is also provable without (wk) and (subst).
More substantial is the admissibility of the cut rule (cut) in \scfo.
This was originally shown by Gentzen for \lk,\footnote{
	Gentzen's proof of cut-elimination for \lk extends to \scfo.
} a result known as his \emph{Hauptsatz} or \emph{cut-elimination theorem} \cite{Gentzen35}.
The cut rule formalizes the introduction of a lemma in a proof.
Note that the \emph{cut formula} $\formula$ in its premises, which represents the lemma, does not occur in the conclusion.
This is an obstacle for proof search, as $\formula$ can be any \fo-formula.
However, the cut-elimination theorem says this rule can always be avoided.\footnote{
	A noteworthy side remark is that eliminating cuts can result in \emph{substantially} larger proofs \cite{Boolos1984}.
}

\begin{figure}[t]
	\setlength{\tabcolsep}{42pt}
	\renewcommand{\arraystretch}{2}
	\centering
	
	\noindent \textbf{Structural rules}
	
	\vsp
	
		\hspace{3mm}
		$
		\begin{prooftree}[center=false]
			\hypo{ }
			\infer1[$\Gamma \cap \Delta \neq \emptyset$, or $\bot \in \Gamma$ or $\top \in \Delta$\, (ax)]{\Gamma \vdash \Delta}
		\end{prooftree}
		$ \hfill $
		\begin{prooftree}[center=false]
			\hypo{\Gamma' \vdash \Delta'}
			\infer1[$\Gamma' \subseteq \Gamma$, $\Delta' \subseteq \Delta$ (wk)]{\Gamma \vdash \Delta}
		\end{prooftree}
		$
		\hspace{3mm}
		
		\vsp
		
		\hspace{3mm}
		$
		\begin{prooftree}[center=false]
			\hypo{\Gamma \vdash \Delta}
			\infer1[
			\begin{minipage}{4cm}
				no quantified object symbol in $\Gamma \cup \Delta$ occurs in $t$  
			\end{minipage}
			\, (subst)]{\Gamma[t / x] \vdash \Delta[t/x]}
		\end{prooftree}
		$ \hfill $
		\begin{prooftree}[center=false]
			\hypo{\Gamma \vdash \formula, \Delta}
			\hypo{\Gamma, \formula \vdash \Delta}
			\infer2[(cut)]{\Gamma \vdash \Delta}
		\end{prooftree}
		$
		\hspace{3mm}
	
	\vsp
	
	\noindent \textbf{Logical rules}
	
	\vsp
		
		\hspace{3mm}
		$
		\begin{prooftree}[center=false]
			\hypo{\Gamma \vdash \formula, \Delta}
			\infer1[($\lnot$L)]{\Gamma, \lnot \formula \vdash \Delta}
		\end{prooftree}
		$ \hfill $ 
		\begin{prooftree}[center=false]
			\hypo{\Gamma, \formula \vdash \Delta}
			\infer1[($\lnot$R)]{\Gamma \vdash \lnot \formula, \Delta}
		\end{prooftree}
		$
		\hspace{3mm}
		
		\vsp
		
		\hspace{3mm}
		$
		\begin{prooftree}[center=false]
			\hypo{\Gamma, \formula \vdash \Delta}
			\hypo{\Gamma, \formulatwo \vdash \Delta}
			\infer2[($\lor$L)]{\Gamma, \formula \lor \formulatwo \vdash \Delta}
		\end{prooftree}
		$ \hfill $ 
		\begin{prooftree}[center=false]
			\hypo{\Gamma \vdash \formula, \formulatwo, \Delta}
			\infer1[($\lor$R)]{\Gamma \vdash \formula \lor \formulatwo, \Delta}
		\end{prooftree}
		$
		\hspace{3mm}
		
		\vsp
		
		\hspace{3mm}
		$
		\begin{prooftree}[center=false]
			\hypo{\Gamma, \formula, \formulatwo \vdash \Delta}
			\infer1[($\land$L)]{\Gamma, \formula \land \formulatwo \vdash \Delta}
		\end{prooftree}
		$ \hfill $ 
		\begin{prooftree}[center=false]
			\hypo{\Gamma \vdash \formula, \Delta}
			\hypo{\Gamma \vdash \formulatwo, \Delta}
			\infer2[($\land$R)]{\Gamma \vdash \formula \land \formulatwo, \Delta}
		\end{prooftree}
		$
		\hspace{3mm}
		
		\vsp 
		
		\hspace{3mm}
		$
		\begin{prooftree}[center=false]
			\hypo{\Gamma \vdash \formula, \Delta}
			\hypo{\Gamma, \formulatwo \vdash \Delta}
			\infer2[($\Rightarrow$L)]{\Gamma, \formula \Rightarrow \formulatwo \vdash \Delta}
		\end{prooftree}
		$ \hfill $ 
		\begin{prooftree}[center=false]
			\hypo{\Gamma, \formula \vdash \formulatwo, \Delta}
			\infer1[($\Rightarrow$R)]{\Gamma \vdash \formula \Rightarrow \formulatwo, \Delta}
		\end{prooftree}
		$
		\hspace{3mm}
		
		\vsp 
		
		\hspace{3mm}
		$
		\begin{prooftree}[center=false]
			\hypo{\Gamma \vdash \formula, \formulatwo, \Delta}
			\hypo{\Gamma, \formula, \formulatwo \vdash \Delta}
			\infer2[($\Leftrightarrow$L)]{\Gamma, \formula \Leftrightarrow \formulatwo \vdash \Delta}
		\end{prooftree}
		$ \hfill $ 
		\begin{prooftree}[center=false]
			\hypo{\Gamma, \formula \vdash \formulatwo, \Delta}
			\hypo{\Gamma, \formulatwo \vdash \formula, \Delta}
			\infer2[($\Leftrightarrow$R)]{\Gamma \vdash \formula \Leftrightarrow \formulatwo, \Delta}
		\end{prooftree}
		$
		\hspace{3mm}
		
		\vsp 
		
		\hspace{3mm}
		$
		\begin{prooftree}[center=false]
			\hypo{\Gamma, \formula[t/x] \vdash \Delta}
			\infer1[($\forall$L)]{\Gamma, \forall x : \formula \vdash \Delta}
		\end{prooftree}
		$ \hfill $ 
		\begin{prooftree}[center=false]
			\hypo{\Gamma \vdash \formula, \Delta}
			\infer1[$x \not\in \free(\Gamma \cup \Delta)$\, ($\forall$R)]{\Gamma \vdash \forall x: \formula, \Delta}
		\end{prooftree}
		$
		\hspace{3mm}
		
		\vsp 
		
		\hspace{3mm}
		$
		\begin{prooftree}[center=false]
			\hypo{\Gamma, \formula \vdash \Delta}
			\infer1[$x \not\in \free(\Gamma \cup \Delta)$\, ($\exists$L)]{\Gamma, \exists x:  \formula \vdash \Delta}
		\end{prooftree}
		$ \hfill $ 
		\begin{prooftree}[center=false]
			\hypo{\Gamma \vdash \formula[t/x], \Delta}
			\infer1[($\exists$R)]{\Gamma \vdash \exists x: \formula, \Delta}
		\end{prooftree}
		$
		\hspace{3mm}
		
		\vsp
		
		\hspace{3mm}
		$
		\begin{prooftree}[center=false]
			\hypo{\Gamma[s/x, t/y] \vdash \Delta[s/x, t/y]}
			\infer1[($=$L)]{\Gamma[t/x, s/y], t=s \vdash \Delta[t/x, s/y]}
		\end{prooftree}
		$ \hfill $ 
		\begin{prooftree}[center=false]
			\hypo{\phantom{\Gamma \vdash \Delta}}
			\infer1[($=$R)]{\Gamma \vdash t=t, \Delta}
		\end{prooftree}
		$
		\hspace{3mm}

	\caption{Inference rules of \scfo}
	\label{fig:scfo}
\end{figure}

\subsection{Extension to \foid} 

We extend \scfo to a sequent calculus \scfoid that accommodates \foid-definitions, by providing a left and a right introduction rule for defined atoms. 
The \textbf{right introduction rule for defined atoms} is
\begin{equation*} 
	\begin{prooftree}
		\hypo{\Gamma, \defn \vdash \formula[\bar{s}/\bar{x}], \Delta}
		\infer1[(def R)]{\Gamma, \defn \vdash P(\bar{t}[\bar{s}/\bar{x}]), \Delta}
	\end{prooftree}
\end{equation*}
Here $\defn$ is a definition with a definitional rule $\forall \bar{x}: P(\bar{t}) \rul \formula$, and $\bar{s}$ is a tuple of object symbols with the same length as $\bar{x}$.
Intuitively, (def R) allows us to derive heads of definitional rules from bodies.

The \textbf{left introduction rule for defined atoms} involves the selection of a subset $\Pi$ of $\defi{\defn}$ containing a defined predicate $P$ of $\defn$, as well as an \fo-formula $F_Q$ for every $Q \in \Pi$, called the \emph{induction hypothesis} associated with $Q$.
For every $Q \in \Pi$, we also pick an $\arity{Q}$-tuple $\bar{z}$ of distinct object symbols that may occur freely in $F_Q$.
We will often denote an induction hypothesis $F_Q$ by $F_Q[\bar{z}]$ to indicate that $\bar{z}$ is the chosen tuple of object symbols for $Q$.
Given an $\arity{Q}$-tuple of terms $\bar{s}$, we use $F_Q[\bar{s}]$ as shorthand notation for $F_Q[\bar{s}/\bar{z}]$.

\[
\begin{prooftree}
	\hypo{\text{minor premises}}
	\hypo{\Gamma, \defn, F_P[\bar{v}] \vdash \Delta}
	\infer2[(def L)]{\Gamma, \defn, P(\bar{v}) \vdash \Delta}
\end{prooftree}
\]
The \emph{minor premises} consist of all sequents of the form 
\[\Gamma, \defn, \formulatwo[F_{\Pi}/\Pi^+] \vdash F_Q[\bar{s}], \Delta \]
where $\forall \bar{y}: Q(\bar{s}) \rul \formulatwo$ is a definitional rule of $\defn$ with $Q \in \Pi$. 
Here $\formulatwo[F_{\Pi}/\Pi^+]$ denotes the formula obtained from $\formulatwo$ by replacing all positive occurrences of atoms $R(\bar{u})$ in $\formulatwo$ with $R \in \Pi$ by $F_R[\bar{u}]$.
This inference rule requires that for all definitional rules $\forall \bar{y}: Q(\bar{s}) \rul \formulatwo$ in $\defn$, the object symbols in $\bar{y}$ do not occur freely in $\Gamma$, $\Delta$ or $\defn$. 
The premise $\Gamma, \defn, F_P[\bar{v}] \vdash \Delta$ is called the \emph{major premise} of the rule.\footnote{
	The terms \emph{minor premise} and \emph{major premise} date back to Aristotle's work on syllogisms \cite{Smith89aristotle}.
}

We also call this inference rule the \emph{induction rule}, as it is based on the \emph{principle of mathematical induction}, in which the elements of inductively defined sets are shown to satisfy a certain property, by finding a suitable induction hypothesis that entails the property, and that is preserved under the definitional rules.
The latter is formalized by the minor premises, and the former by the major premise.

\begin{example}
	Let $\defn$ be the natural number definition from Section \ref{sec:introduction}.
	Suppose we want to prove that a generic natural number $x$ satisfies a property expressed by an \fo-formula $\formula$. 
	Then we can do so with the left introduction rule for defined atoms, picking $\Gamma = \emptyset$ and $\Delta = \{ \formula \}$:\footnote{
		Strictly speaking, the first premise should be $\defn, \top \vdash F_{\mathit{Nat}}(\mathit{zero}), \formula$.
	}
	\[
	\begin{prooftree}
		\hypo{\defn \vdash F_{\mathit{Nat}}[\mathit{zero}], \formula}
		\hypo{\defn, F_{\mathit{Nat}}(n) \vdash F_{\mathit{Nat}}[\mathit{succ}(n)], \formula}
		\hypo{\defn, F_{\mathit{Nat}}[x] \vdash \formula}
		\infer3[(def L)]{\defn, \mathit{Nat}(x) \vdash \formula}
	\end{prooftree}
	\]
	Here $\Pi = \{\mathit{Nat}\}$, since $\mathit{Nat}$ is the only defined predicate of $\defn$.
	We need to find a suitable induction hypothesis $F_{\mathit{Nat}}(x)$.
	To satisfy the major premise, this induction hypothesis must entail $\formula$ in all models of $\defn$.
	To satisfy the minor premises, it must respect the definitional rules of $\defn$ in all models of $\defn$, in the sense that it must hold for $\mathit{zero}$, and it must hold for $\mathit{succ}(n)$ if it holds for $n$ (or $\formula$ must hold, but that is precisely what we want to prove).
	In general, the set $\Gamma$ allows adding any number of assumptions, and $\Delta$ allows replacing $\formula$ by any disjunction of formulas.
	In particular, we can prove that an object is \emph{not} a natural number by picking $\Delta = \emptyset$.
\end{example}

Induction hypotheses in (def L) can be any \fo-formula. 
However, only certain choices are suitable to derive the premises of the rule.
Finding such induction hypotheses generally requires some creativity.

In the minor premises of (def L), only the positive occurrences of defined predicates are replaced by an induction hypothesis.
Replacing negative occurrences as well would result in an unsound rule:

\begin{example}
	Let (naive def L) be the alternative to (def L) that does not discriminate between positive and negative occurrences of defined predicates, i.e., the minor premises are of the form $\Gamma, \defn, \formulatwo[F_{\Pi}/\Pi] \vdash F_Q[\bar{s}], \Delta$ instead of $\Gamma, \defn, \formulatwo[F_{\Pi}/\Pi^+] \vdash F_Q[\bar{s}], \Delta$.
	Let $\defn_{E}$ be the definition of $\mathit{Even}$ from Example \ref{ex:def_even} and let $\defn_{O}$ be the definition 
	\[
	\{\forall n : \mathit{Odd}(\mathit{succ}(n)) \rul \mathit{Nat}(n) \land \neg \mathit{Odd}(n)\} .
	\]
	Using (naive def L), we can derive the sequent $\defn_{E}, \defn_{O}, \mathit{Odd}(x) \vdash \mathit{Even}(x)$, which is clearly invalid:
	\[
		\begin{prooftree}
			\infer0[(ax)]{\defn_{E}, \defn_{O}, \mathit{Nat}(n) \land \lnot \mathit{Even}(n) \vdash \mathit{Nat}(n) \land \lnot \mathit{Even}(n), \mathit{Even}(x)}
			\infer1[(def R)]{\defn_{E}, \defn_{O}, \mathit{Nat}(n) \land \lnot \mathit{Even}(n) \vdash \mathit{Even}(\mathit{succ}(n)), \mathit{Even}(x)}
			\infer0[(ax)]{\defn_{E}, \defn_{O}, \mathit{Even}(x) \vdash \mathit{Even}(x)}
			\infer2[(naive def L)]{\defn_{E}, \defn_{O}, \mathit{Odd}(x) \vdash \mathit{Even}(x)}
		\end{prooftree}
	\]
	Here we applied (naive def L) to $\defn_O$ and $\mathit{Odd}(x)$, with $F_{\mathit{Odd}}[x] \doteq \mathit{Even}(x)$ (and $\Pi = \{\mathit{Odd}\}$).
\end{example}

Replacing negative occurrences of defined predicates by induction hypotheses generally yields too strong assumptions.
Intuitively, this can be understood by viewing an induction hypothesis as describing an upper bound on a defined set: it holds for all element of the set, but possibly also for other objects.
Hence, the negation of an induction hypothesis does not necessarily hold for all non-elements of the defined set.\footnote{
	This intuition suggests that, instead of not replacing negative occurrences of defined predicates, we could replace them with formulas describing lower bounds on these predicates.
	We come back to this idea in the discussion of future work in Section \ref{sec:conclusion}.
}

The sequent calculus \scfoid differs from \lkid in the following aspects:
\begin{itemize}
	\item \lkid does not (need to) differentiate between positive and negative occurrences of defined predicates, as Brotherston and Simpson only consider positive definitions.
	\item Sequents in \lkid do not contain definitions, as definitions are not seen as formulas in the logic of Brotherston and Simpson.
	Instead, definitions are specified separately in their logic, and \lkid-provable sequents are valid only in models of these definitions.
	\item In \lkid, the set $\Pi$ in (def L) is fixed as the set of all defined predicates of $\defn$ that are \emph{mutually dependent} with $P$, as will be explained in Section \ref{sec:pi}.
\end{itemize}

The following proposition establishes a deductive equivalence between definitions and their normalizations. 

\begin{restatable}{proposition}{normalizationequiderivable} \label{prop:normalization-equiderivable}
	Let $\Gamma$ and $\Delta$ be sets of \foid-formulas and let $\defn$ be a definition. 
	Let $\normalization{\defn}$ be the normalization of\/ $\defn$.
	Then $\Gamma, \defn \vdash \Delta$ is \scfoid-provable iff\/ $\Gamma, \normalization{\defn} \vdash \Delta$ is \scfoid-provable.
\end{restatable}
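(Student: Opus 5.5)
The plan is to prove each direction by induction on the height of a given \scfoid-proof, replacing the definition uniformly and simulating the inference rules that mention it.

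For the direction from $\Gamma, \defn \vdash \Delta$ to $\Gamma, \normalization{\defn} \vdash \Delta$, take a proof of $\Gamma, \defn \vdash \Delta$. Replace every occurrence of $\defn$ by $\normalization{\defn}$ throughout, including occurrences inside other formulas and in $\Gamma$ and $\Delta$. Bound variables are first renamed apart, so that the tuples $\bar{y}$ of the merged bodies and the $\bar{x}$ quantified in the rules of $\defn$ are fresh. Under this replacement, every inference other than (def R) and (def L) applied to $\defn$ remains an instance of the same rule.

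\emph{(def R).} Suppose the step uses a rule $\forall \bar{x}: P(\bar{t}) \rul \formula$ of $\defn$, with premise $\Gamma', \normalization{\defn} \vdash \formula[\bar{s}/\bar{x}], \Delta'$. We apply (def R) for $\normalization{\defn}$ and then need $\Gamma', \normalization{\defn} \vdash \formula_{P}(\bar{t}[\bar{s}/\bar{x}]), \Delta'$. The merged body is a disjunction with a disjunct $\exists \bar{x}: \bar{t}[\bar{s}/\bar{x}] = \bar{t} \land \formula$. We go up with ($\lor$R) and then ($\exists$R) with witnesses $\bar{s}$. Then ($\land$R) splits into an equation $\bar{t}[\bar{s}/\bar{x}] = \bar{t}[\bar{s}/\bar{x}]$, closed by ($=$R), and the given premise. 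Weakening adds the remaining disjuncts.

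\emph{(def L).} We use (def L) for $\normalization{\defn}$ with the same $\Pi$ and the same induction hypotheses $F_Q$. Its major premise is unchanged. For each $Q \in \Pi$ it has a single minor premise $\Gamma', \normalization{\defn}, \formula_Q(\bar{y})[F_\Pi/\Pi^+] \vdash F_Q[\bar{y}], \Delta'$. Positive occurrences of defined atoms in the merged body are exactly the positive occurrences inside the individual bodies $\formulatwo$. Hence this formula is the disjunction of the formulas $\exists \bar{x}: \bar{y} = \bar{s} \land \formulatwo[F_\Pi/\Pi^+]$. We then apply ($\lor$L), then ($\exists$L), which is allowed by the side condition of (def L) on $\bar{x}$, then ($\land$L). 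Finally ($=$L) substitutes $\bar{s}$ for the fresh $\bar{y}$, which yields exactly the transformed minor premises of the original (def L).

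For the converse direction we replace $\normalization{\defn}$ by $\defn$ and now use (cut) freely.

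\emph{(def R) for $\normalization{\defn}$.} The premise is $\Gamma', \defn \vdash \formula_P(\bar{u}), \Delta'$. We cut it against $\Gamma', \defn, \formula_P(\bar{u}) \vdash P(\bar{u}), \Delta'$. The latter is derived by ($\lor$L), and in each branch by ($\exists$L), ($\land$L) and ($=$L), which reduces it to $\Gamma', \defn, \formulatwo \vdash P(\bar{s}), \Delta'$. This closes by (def R) for the corresponding rule of $\defn$ followed by (ax).

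\emph{(def L) for $\normalization{\defn}$.} We use (def L) for $\defn$ with the same $\Pi$ and the same $F_Q$. A required minor premise $\Gamma', \defn, \formulatwo[F_\Pi/\Pi^+] \vdash F_Q[\bar{s}], \Delta'$ is obtained as follows. Apply (subst) $\bar{y} := \bar{s}$ to the available minor premise; since $\bar{y}$ is fresh, $\Gamma'$ and $\Delta'$ are untouched, and bound variables are renamed so that (subst) applies. This gives $\Gamma', \defn, \formula_Q(\bar{s})[F_\Pi/\Pi^+] \vdash F_Q[\bar{s}], \Delta'$. Then cut it with $\Gamma', \defn, \formulatwo[F_\Pi/\Pi^+] \vdash \formula_Q(\bar{s})[F_\Pi/\Pi^+], \Delta'$. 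That sequent is proved by ($\lor$R), then ($\exists$R) with witnesses $\bar{x}$ themselves, then ($\land$R), closed by ($=$R) and (ax).

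The main obstacle I expect is bookkeeping rather than any conceptual difficulty. Three things need care. First, the eigenvariable and freshness conditions of (def L), ($\exists$L) and (subst) must survive the uniform replacement of $\defn$ by $\normalization{\defn}$; this is handled by renaming the variables of the merged bodies and of the rules apart from everything in the proof. Second, one must check that $\formula_Q(\bar{y})[F_\Pi/\Pi^+]$ distributes over the disjuncts as claimed, i.e.\ that replacing positive occurrences commutes with forming the merged body. Third, the shape of ($=$L) as stated must really permit the substitution of $\bar{s}$ for $\bar{y}$ in the first direction.
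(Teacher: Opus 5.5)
Your proposal is essentially the paper's own proof. It uses induction on proof height; passing to $\normalization{\defn}$ it simulates (def R) by ($=$R), ($\land$R), ($\exists$R), ($\lor$R) and (wk), and (def L) by ($\lor$L), ($\exists$L), ($\land$L) and ($=$L); in the converse direction it cuts against $\Gamma', \defn, \formula_P(\bar u) \vdash P(\bar u), \Delta'$ and, after (subst), against $\Gamma', \defn, \formulatwo[F_\Pi/\Pi^+] \vdash \formula_Q(\bar s)[F_\Pi/\Pi^+], \Delta'$.

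One point should be made explicit: because you also rewrite $\defn$ inside $\Gamma$ and $\Delta$, you prove the equivalence for the rewritten end-sequent. That coincides with the stated proposition only when $\defn$ (resp.\ $\normalization{\defn}$) does not occur in $\Gamma \cup \Delta$. This restriction is tacit in the paper's ``the base cases are trivial'', and it appears to be genuinely needed. For $\defn = \{P \rul O,\, P \rul \lnot O\}$, the axiom $\defn \vdash \defn$ seems to have no provable counterpart $\normalization{\defn} \vdash \defn$: declaring this term-free $\defn$ false in every structure appears to keep all rules of \scfoid sound, yet refutes that sequent.
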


\begin{proof}
	See Appendix \ref{app:proofs-section-scfoid}.
\end{proof}

\subsection{Alternative Inference Rules} \label{sec:alternative-inference-rules}

Judging from (def R), one may have expected the following left introduction rule for defined atoms:
\begin{equation*} 
	\begin{prooftree}
		\hypo{\Gamma, \defn, \formula_P(\bar{v}) \vdash \Delta}
		\infer1[(def L')]{\Gamma, \defn, P(\bar{v}) \vdash \Delta}
	\end{prooftree}
\end{equation*}
While sound, this rule is rather weak. 
This is because $\formula_P(\bar{v})$ generally contains defined predicates, which need to be introduced by the same rule, often resulting in a circular reasoning:

\begin{example}
	Let $\defn$ be the following definition of a predicate $T/2$ in terms of a predicate $E/2$:
	\[
	\defin{
		\forall x,y: T(x,y) \rul E(x,y)\\
		\forall x,y,z: T(x,z) \rul E(x,y) \land T(y,z)
	}
	\]
	It defines $T$ as the transitive closure of $E$.
	Suppose we want to prove a sequent of the form $\Gamma, \defn, T(a,b) \vdash \Delta$ with (def L').
	Then we generally need to complete the following derivation:
	\begin{equation*}
		\begin{prooftree}
			\hypo{\Gamma, \defn, E(a,b) \vdash \Delta}
			
			\hypo{\Gamma, \defn, E(a,y), T(y,b) \vdash \Delta}
			\infer1[($\land$R)]{\Gamma, \defn, E(a,y) \land T(y,b) \vdash \Delta}
			\infer1[($\exists$R)]{\Gamma, \defn, \exists y: E(a,y) \land T(y,b) \vdash \Delta}
			
			\infer2[($\lor$L)]{\Gamma, \defn, E(a,b) \lor \exists y: E(a,y) \land T(y,b) \vdash \Delta}
			\infer1[(def L')]{\Gamma, \defn, T(a,b) \vdash \Delta}
		\end{prooftree}
	\end{equation*}
	Here we assumed that $y \notin \free(\Gamma \cup \Delta)$.
	Hence, to prove the sequent $\Gamma, \defn, T(a,b) \vdash \Delta$, we would generally need to prove the sequent $\Gamma, \defn, E(a,y), T(y,b) \vdash \Delta$, which again has the defined predicate $T$ on its left side.
\end{example}

\begin{proposition}
	The inference rule \emph{(def L')} is admissible in \scfoid.
\end{proposition}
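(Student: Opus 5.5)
We simulate (def L') by a single application of (def L) with a cleverly chosen induction hypothesis, followed by routine \scfo reasoning and (def R). By Proposition~\ref{prop:normalization-equiderivable} we may assume that $\defn$ is normal: if $\Gamma,\defn,\formula_P(\bar v)\vdash\Delta$ is provable, the corresponding sequent with $\normalization{\defn}$ is provable, and the merged bodies of $\defn$ and $\normalization{\defn}$ coincide. So if we derive $\Gamma,\normalization{\defn},P(\bar v)\vdash\Delta$, transferring back gives the conclusion for $\defn$. Henceforth each $Q\in\defi{\defn}$ has exactly one rule $\forall\bar y:Q(\bar y)\rul\formula_Q(\bar y)$.

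Apply (def L) to $P(\bar v)$ with $\Pi=\{P\}$ and induction hypothesis $F_P[\bar z]\doteq\formula_P(\bar z)$. The major premise is then exactly $\Gamma,\defn,\formula_P(\bar v)\vdash\Delta$, the given premise of (def L'). The only minor premise is
\[
\Gamma,\defn,\formula_P(\bar y)[F_P/P^+]\vdash \formula_P(\bar y),\Delta ,
\]
where $\bar y$ must be fresh, which we arrange by renaming. After weakening away $\Gamma$ and $\Delta$, it suffices to show $\defn,\formulatwo[F_P/P^+]\vdash\formulatwo$ for every subformula $\formulatwo$ of $\formula_P(\bar y)$, with the polarity of the subformula tracked. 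We prove this by structural induction on $\formulatwo$, together with the dual statement $\defn,\formulatwo\vdash\formulatwo[F_P/P^-]$ for negative positions, which is needed when passing through $\lnot$. Because only positive occurrences are replaced, the negative occurrences stay untouched, so the dual statement is trivial there (both sides are identical up to the positively replaced parts, which are handled by the mutual induction).

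\textbf{Base cases.} An atom not involving a replaced $P$ is closed by (ax). A positively replaced atom $P(\bar u)$ becomes $\formula_P(\bar u)$ on the left, and we need $\defn,\formula_P(\bar u)\vdash P(\bar u)$. This follows from (def R) applied with $\bar s=\bar u$, followed by (ax).

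\textbf{Connective cases.} These follow by the usual left and right rules, using the induction hypothesis in the same polarity for $\land,\lor,\forall,\exists$ and the flipped polarity for $\lnot$ (and for the antecedent of $\Rightarrow$). For $\Leftrightarrow$, both polarities occur. We therefore first unwind $\Leftrightarrow$ and $\Rightarrow$ as in the paper's definition of positivity; alternatively, we note that occurrences inside $\Leftrightarrow$ are neither purely positive nor purely negative and are thus not replaced, in which case (ax) applies directly.

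The main obstacle is making this polarity bookkeeping precise, in particular that $[F_P/P^+]$ commutes correctly with negation and that bound variables and the freshness conditions of ($\forall$R) and ($\exists$L) are respected. We settle this by proving the generalized lemma ``for all formulas $\formulatwo$ in which $P$ is replaced at positive positions, $\defn,\formulatwo^{+}\vdash\formulatwo$, and for negative-context replacement, $\defn,\formulatwo\vdash\formulatwo^{-}$'' by simultaneous induction. Once this lemma holds, (def L) yields $\Gamma,\defn,P(\bar v)\vdash\Delta$, so (def L') is admissible. No cut is needed, so the argument also preserves cut-freeness.
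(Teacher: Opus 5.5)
Your proof is correct, but it takes a different route from the paper's. The paper does not apply (def L) directly. It weakens the premise to $\Gamma,\defn,P(\bar v),\formula_P(\bar v)\vdash\Delta$ and cuts on $\formula_P(\bar v)$ against $\Gamma,\defn,P(\bar v)\vdash\formula_P(\bar v),\Delta$. It obtains the provability of that sequent semantically: the regular sequent $\defn,P(\bar v)\vdash\formula_P(\bar v)$ is Henkin-valid by Proposition~\ref{prop:completion-henkin}, hence provable by Theorem~\ref{thm:henkin-completeness}, and is then weakened. (The printed proof swaps the two sides of this sequent and cites Theorem~\ref{thm:soundness} where completeness is meant.) You instead build the derivation explicitly. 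Choosing $F_P\doteq\formula_P$ makes the major premise identical to the given premise. The single minor premise is then the monotonicity fact $\formula_P[\formula_P/P^+]\vdash\formula_P$, i.e.\ a schematic form of Lemma~\ref{lem:safe-replacement-of-positive-and-negative-occurrences} with base case $\formula_P(\bar u)\vdash P(\bar u)$ from (def R). This is the syntactic counterpart of the fact that a least pre-fixpoint is a fixpoint. Your argument is self-contained and needs no forward reference to the correspondence theorem or completeness. The paper's argument is shorter, but it is non-constructive and relies on the heaviest results of Sections~\ref{sec:correspondence}--\ref{sec:completeness}.

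\textbf{Caveat.} Your closing remark that the argument preserves cut-freeness does not hold as written. Transferring $\Gamma,\normalization{\defn},P(\bar v)\vdash\Delta$ back to $\defn$ via Proposition~\ref{prop:normalization-equiderivable} uses (cut). The paper notes after Example~\ref{ex:counterexample-cut-elimination-P-if_O-or-not-O} that this proposition fails for cut-free provability. If you want the cut-free claim, skip normalization:
\begin{itemize}
\item Apply (def L) to $\defn$ itself with $F_P\doteq\formula_{P,\defn}$.
\item For each rule $\forall\bar x:P(\bar t)\rul\formula$, close the minor premise by selecting the matching disjunct of $\formula_P(\bar t)$ with (wk), ($\lor$R), ($\exists$R), ($\land$R), ($=$R), and then apply your lemma.
\item Obtain the base case $\formula_P(\bar u)\vdash P(\bar u)$ with ($\lor$L), ($\exists$L), ($\land$L), ($=$L), (def R), as in the appendix proof of Proposition~\ref{prop:normalization-equiderivable}; that step is cut-free.
\end{itemize}
Bound-variable renaming must then be read as $\alpha$-equivalence, not obtained through Lemma~\ref{lem:safe-replacement-bound-vars}, whose proof uses cut.
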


\begin{proof}
	We show that whenever the premise $\Gamma, \defn, \formula_P(\bar{v}) \vdash \Delta$ of an instantiation of (def R) is \scfoid-provable, then so is the conclusion $\Gamma, \defn, P(\bar{v}) \vdash \Delta$.
	Suppose that $\Gamma, \defn, \formula_P(\bar{v}) \vdash \Delta$ is \scfoid-provable.
	Then, by (wk), so is 
	\begin{equation} \label{eq:admissibility-def-L'-1}
		\Gamma, \defn, P(\bar{v}), \formula_P(\bar{v}) \vdash \Delta.
	\end{equation}
	Furthermore, the sequent
	\begin{equation} \label{eq:admissibility-def-L'-2}
		\Gamma, \defn, \formula_P(\bar{v}) \vdash P(\bar{v}), \Delta
	\end{equation}
	is \scfoid-provable.
	This follows from results of subsequent sections:
	by Proposition \ref{prop:completion-henkin}, the sequent (\ref{eq:admissibility-def-L'-2}) is valid under the \emph{Henkin semantics}, and by Theorem \ref{thm:soundness}, this entails that it is \scfoid-provable.
	Applying (cut) to (\ref{eq:admissibility-def-L'-1}) and (\ref{eq:admissibility-def-L'-2}), we find that $\Gamma, \defn, P(\bar{v}) \vdash \Delta$ is \scfoid-provable.
\end{proof}

Another possible left introduction rule for defined atoms would be the following:
\begin{equation*} 
	\begin{prooftree}
		\hypo{\text{minor premises}}
		\infer1[(def L'')]{\Gamma, \defn, P(\bar{v}) \vdash F_P[\bar{v}], \Delta}
	\end{prooftree}
\end{equation*}
Here the minor premises are as in (def L).
Following \cite{jlc/BrotherstonS11}, we opt for (def L) over (def L''), as (def L) enables cut-elimination (for a fragment of \foid in our case).
In fact, (def L) can be seen as a combination of (def L') and a cut on the induction hypothesis, as illustrated in the proof of Proposition \ref{prop:admissibility-def-L''}.
In terms of provability (with cut), the rules (def L) and (def L'') are interchangeable in \scfoid, as made precise by the following two propositions:

\begin{proposition} \label{prop:admissibility-def-L''}
	The inference rule \emph{(def L'')} is admissible in \scfoid.
\end{proposition}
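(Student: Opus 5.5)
To show that (def L'') is admissible, we assume that all minor premises of an instance of (def L'') are \scfoid-provable, and build an \scfoid-proof of its conclusion $\Gamma, \defn, P(\bar{v}) \vdash F_P[\bar{v}], \Delta$. The idea is to apply (def L) with exactly the same choice of $\Pi$, induction hypotheses $F_Q[\bar{z}]$ for $Q \in \Pi$, and defined atom $P(\bar{v})$. We instantiate (def L) with the set of hypotheses $\Gamma$ and with $\{F_P[\bar{v}]\} \cup \Delta$ in place of $\Delta$.

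The premises of this instance are as follows.
\begin{itemize}
\item The major premise is $\Gamma, \defn, F_P[\bar{v}] \vdash F_P[\bar{v}], \Delta$. It is an instance of (ax), since $F_P[\bar{v}]$ occurs on both sides.
\item Each minor premise has the form $\Gamma, \defn, \formulatwo[F_{\Pi}/\Pi^+] \vdash F_Q[\bar{s}], F_P[\bar{v}], \Delta$. It follows by (wk) from the corresponding minor premise $\Gamma, \defn, \formulatwo[F_{\Pi}/\Pi^+] \vdash F_Q[\bar{s}], \Delta$ of the (def L'') instance, which is provable by assumption.
\end{itemize}
Combining these with (def L) yields the conclusion $\Gamma, \defn, P(\bar{v}) \vdash F_P[\bar{v}], \Delta$.

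The only point that needs care is the side condition of (def L). For every definitional rule $\forall \bar{y}: Q(\bar{s}) \rul \formulatwo$ in $\defn$, the object symbols in $\bar{y}$ must not occur freely in $\Gamma$, in $\defn$, or in the enlarged right-hand side, which now also contains $F_P[\bar{v}]$. (def L'') has the corresponding condition for $\Gamma$, $\Delta$ and $\defn$, but $F_P[\bar{v}]$ is new on the right-hand side.

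If some variable in $\bar{y}$ occurs freely in $F_P[\bar{v}]$, I would first rename the quantified variables $\bar{y}$ of the rules of $\defn$ to fresh symbols. This yields an alpha-equivalent definition, and the rename is carried out consistently in the minor premises by applying (subst) to the fresh names, which is permissible because the $\bar{y}$ are eigenvariable-like in those premises. Alternatively, one can observe that the minor premises of (def L'') are already stated with $\bar{y}$ fresh. Since the conclusion of (def L'') already contains $F_P[\bar{v}]$, the natural reading is that the side condition of (def L'') also covers it, and then no renaming is needed.

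This bookkeeping about free variables is the only potential obstacle; the logical content of the proof is the single (ax) for the major premise together with weakening of the minor premises.
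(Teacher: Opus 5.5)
Your proof is correct and is exactly the paper's argument: weaken each minor premise by $F_P[\bar{v}]$, close the major premise $\Gamma, \defn, F_P[\bar{v}] \vdash F_P[\bar{v}], \Delta$ by (ax), and apply (def L) with $F_P[\bar{v}], \Delta$ as the succedent. Your remark on the eigenvariable side condition covers a point the paper passes over silently. Reading the side condition of (def L'') as also covering $F_P[\bar{v}]$ is the right resolution, because under the weaker reading (def L'') would not even be sound: take $\defn = \{\forall y: P(y) \rul \top\}$ and $F_P[z] \doteq (z = y)$; the only minor premise $\defn, \top \vdash y = y$ is provable, but $\defn, P(v) \vdash v = y$ is not valid.
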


\begin{proof}
	We show that whenever the premises of an instantiation of (def L'') are \scfoid-provable, then so is the conclusion $\Gamma, \defn, P(\bar{v}) \vdash F_P[\bar{v}], \Delta$.
	Suppose that the premises are \scfoid-provable.
	These are of the form $\Gamma, \defn, \formulatwo[F_{\Pi}/\Pi^+] \vdash F_Q[\bar{s}], \Delta$, where $\forall \bar{y}: Q(\bar{s}) \rul \formulatwo$ is a definitional rule of $\defn$ with $Q \in \Pi$.
	By (wk), the sequents 
	\begin{equation} \label{eq:admissibility-def-L''-1}
		\Gamma, \defn, \formulatwo[F_{\Pi}/\Pi^+] \vdash F_Q[\bar{s}], F_P[\bar{v}], \Delta
	\end{equation}
	are \scfoid-provable for every definitional rule $\forall \bar{y}: Q(\bar{s}) \rul \formulatwo$ of $\defn$ with $Q \in \Pi$.
	By (ax), so is 
	\begin{equation} \label{eq:admissibility-def-L''-2}
		\Gamma, \defn, F_P[\bar{v}] \vdash F_P[\bar{v}], \Delta .
	\end{equation}
	Applying (def L) to (\ref{eq:admissibility-def-L''-1}) and (\ref{eq:admissibility-def-L''-2}), we find that $\Gamma, \defn, P(\bar{v}) \vdash F_P[\bar{v}] \Delta$ is \scfoid-provable.
\end{proof}

\begin{proposition} \label{prop:admissibility-def-L-from-L''}
	Let \scfoiddefl be the sequent calculus obtained from \scfoid by replacing \emph{(def L)} with \emph{(def L'')}.
	The inference rule \emph{(def L)} is admissible in \scfoiddefl.
\end{proposition}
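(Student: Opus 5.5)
We show that whenever all premises of an instantiation of (def L) are provable in \scfoiddefl, so is its conclusion $\Gamma, \defn, P(\bar{v}) \vdash \Delta$. The approach mirrors the remark that (def L) is a combination of an induction step with a cut on the induction hypothesis: (def L'') produces the induction hypothesis on the right, and (cut) then discharges it against the major premise. Note that (cut) is among the structural rules of \scfo, so it remains available in \scfoiddefl.

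Concretely, fix an instantiation of (def L) with set $\Pi \ni P$, induction hypotheses $F_Q[\bar{z}]$ for $Q \in \Pi$, minor premises $\Gamma, \defn, \formulatwo[F_{\Pi}/\Pi^+] \vdash F_Q[\bar{s}], \Delta$, and major premise $\Gamma, \defn, F_P[\bar{v}] \vdash \Delta$, all assumed provable in \scfoiddefl. The side condition on the variables $\bar{y}$ of the rules is identical for (def L) and (def L''), since both rules refer to the same $\Gamma$, $\Delta$ and $\defn$. Applying (def L'') with the same $\Pi$ and the same hypotheses to exactly these minor premises therefore yields a proof of $\Gamma, \defn, P(\bar{v}) \vdash F_P[\bar{v}], \Delta$.

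Next, we weaken the major premise by (wk) to $\Gamma, \defn, P(\bar{v}), F_P[\bar{v}] \vdash \Delta$. We then apply (cut) with cut formula $F_P[\bar{v}]$, taking as left context $\Gamma \cup \{\defn, P(\bar{v})\}$, to obtain $\Gamma, \defn, P(\bar{v}) \vdash \Delta$, as required.

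There is no substantial obstacle. The only points to check are bookkeeping ones: the minor premises of (def L'') coincide literally with those of (def L) (this holds by the definition of (def L'')), and (wk) and (cut) are present in \scfoiddefl. The latter holds because only (def L) was replaced.
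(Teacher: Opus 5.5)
Your proof is correct and is exactly the paper's argument: apply (def L'') to the minor premises to obtain $\Gamma, \defn, P(\bar{v}) \vdash F_P[\bar{v}], \Delta$, weaken the major premise with $P(\bar{v})$, and cut on $F_P[\bar{v}]$. Your check that (def L'') inherits the same minor premises and side condition on $\bar{y}$ is a small point the paper leaves implicit.
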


\begin{proof}
	The following \scfoiddefl-derivation shows that whenever the premises of an instantiation of (def L) are \scfoiddefl-provable, then so is the conclusion: 
	\begin{equation*}
		\begin{prooftree}
			\hypo{\text{minor premises}}
			\infer1[(def L'')]{\Gamma, \defn, P(\bar{v}) \vdash F_P[\bar{v}], \Delta}
			
			\hypo{\Gamma, \defn, F_P[\bar{v}] \vdash \Delta}
			\infer1[(wk)]{\Gamma, \defn, P(\bar{v}), F_P[\bar{v}] \vdash \Delta}
			
			\infer2[(cut)]{\Gamma, \defn, P(\bar{v}) \vdash \Delta}
		\end{prooftree}
	\end{equation*}
\end{proof}

\subsection{Examples} \label{sec:examples}

\begin{example}
	Let $\defn$ be the definition of $\mathit{Even}$ from Example \ref{ex:def_even}, let $\Gamma$ denote the Peano axioms $\forall x, y: \mathit{succ}(x) = \mathit{succ}(y) \Rightarrow x=y$ and $\forall x: \mathit{zero} \neq \mathit{succ}(x)$, and let $\mathit{one} \doteq \mathit{succ}(\mathit{zero})$. 
	We can formally prove that one is not even, given $\Gamma$ and $\defn$:
	\[ 
	\begin{prooftree}
		\hypo{\text{minor premises}}
		\hypo{\Gamma, \defn, F_{\mathit{Even}}[\mathit{one}] \vdash}
		\infer2[(def L)]{\Gamma, \defn, \mathit{Even}(\mathit{one}) \vdash}
		\infer1[($\lnot$R)]{\Gamma, \defn \vdash \lnot \mathit{Even}(\mathit{one})}
	\end{prooftree}
	\]
	The minor premises take the form $\Gamma, \defn \vdash F_{\mathit{Even}}[\mathit{zero}]$ and $\Gamma, \defn, \lnot \mathit{Even}(n) \vdash F_{\mathit{Even}}[\mathit{succ}(n)]$. 
	To complete this proof, we need to choose an induction hypothesis $F_{\mathit{Even}}$ that holds for every even number, but not for $1$. 
	One option is $F_{\mathit{Even}}[z] \doteq \lnot \exists v: z=\mathit{succ}(v) \land \mathit{Even}(v)$, which says that $z$ is not the successor of an even number. 
	The premises can then be proven as follows: 
	\[
	\begin{prooftree}
		\hypo{}
		\infer1[(ax)]{\Gamma, \defn, \mathit{zero}=\mathit{succ}(v), \mathit{Even}(v) \vdash \mathit{zero}=\mathit{succ}(v)}
		\infer1[($\lnot$L)]{\Gamma, \defn, \mathit{zero} \neq \mathit{succ}(v), \mathit{zero}=\mathit{succ}(v), \mathit{Even}(v) \vdash}
		\infer1[($\forall$L)]{\Gamma, \defn, \mathit{zero}=\mathit{succ}(v), \mathit{Even}(v) \vdash}
		\infer1[($\land$L)]{\Gamma, \defn, \mathit{zero}=\mathit{succ}(v) \land \mathit{Even}(v) \vdash}
		\infer1[($\exists$L)]{\Gamma, \defn, \exists v : \mathit{zero}=\mathit{succ}(v) \land \mathit{Even}(v) \vdash}
		\infer1[($\lnot$R)]{\Gamma, \defn \vdash F_\mathit{Even}[\mathit{zero}]}
	\end{prooftree}
	\]
	\[
	\begin{prooftree}[separation=3pt]
		\hypo{}
		\infer1[\hspace{-3pt}(ax)]
		{\mathit{succ}(n)=\mathit{succ}(v), \mathit{Even}(v) \vdash \mathit{succ}(n)=\mathit{succ}(v), \mathit{Even}(n)}
		
		\hypo{}
		\infer1[(ax)]{\mathit{succ}(n)=\mathit{succ}(v), \mathit{Even}(v) \vdash \mathit{Even}(v)}
		\infer1[\hspace{-3pt}($=$L)]{\mathit{succ}(n)=\mathit{succ}(v), \mathit{Even}(v), n=v \vdash \mathit{Even}(n)}
		
		\infer2[\hspace{-3pt}($\Rightarrow$L)]{\mathit{succ}(n)=\mathit{succ}(v), \mathit{Even}(v), \mathit{succ}(n)=\mathit{succ}(v) \Rightarrow n=v \vdash \mathit{Even}(n)}
		\infer1[($\forall$L)]{\mathit{succ}(n)=\mathit{succ}(v), \mathit{Even}(v), \forall x, y: \mathit{succ}(x)=\mathit{succ}(y) \Rightarrow x=y \vdash \mathit{Even}(n)}
		\infer1[(wk)]{\Gamma, \defn, \mathit{succ}(n)=\mathit{succ}(v), \mathit{Even}(v) \vdash \mathit{Even}(n)}
		\infer1[($\land$L)]{\Gamma, \defn, \mathit{succ}(n)=\mathit{succ}(v) \land \mathit{Even}(v) \vdash \mathit{Even}(n)}
		\infer1[($\exists$L)]{\Gamma, \defn, \exists v : \mathit{succ}(n)=\mathit{succ}(v) \land \mathit{Even}(v) \vdash \mathit{Even}(n)}
		\infer1[($\lnot$L)]{\Gamma, \defn, \lnot \mathit{Even}(n), \exists v :\mathit{succ}(n)=\mathit{succ}(v) \land \mathit{Even}(v) \vdash}
		\infer1[($\lnot$R)]{\Gamma, \defn, \lnot \mathit{Even}(n) \vdash F_\mathit{Even}[\mathit{succ}(n)]}
	\end{prooftree}
	\]
	\[ 
	\begin{prooftree}
		\hypo{}
		\infer1[($=$R)]{\Gamma, \defn \vdash \mathit{succ}(\mathit{zero})=\mathit{succ}(\mathit{zero})}
		
		\hypo{}
		\infer1[(ax)]{\Gamma, \defn \vdash \top}
		\infer1[(def R)]{\Gamma, \defn \vdash \mathit{Even}(\mathit{zero})}

		\infer2[($\land$R)]{\Gamma, \defn \vdash \mathit{one}=\mathit{succ}(\mathit{zero}) \land \mathit{Even}(\mathit{zero})}
		\infer1[($\exists$R)]{\Gamma, \defn \vdash \exists v: \mathit{one}=\mathit{succ}(v) \land \mathit{Even}(v)}
		\infer1[($\lnot$L)]{\Gamma, \defn, F_\mathit{Even}[\mathit{one}] \vdash}
	\end{prooftree}
	\]	
\end{example}

\begin{example} \label{ex:proof_sat}
	Let $\defn$ be the definition of $\mathit{Sat}$ from Example \ref{ex:sat}. 
	Given $\defn$ and some properties of the construction function $\mathit{not}$, we can formally prove that a structure $\mcj$ containing a propositional symbol $Q$ does not satisfy $\lnot Q$.
	Let $\Gamma$ denote the set of formulas $\mathit{Member}(q,j)$; $\forall f,g : \mathit{not}(f) = \mathit{not}(g) \Rightarrow f=g$; $\forall f,g,h : \mathit{and}(f,g) \neq \mathit{not}(h)$; and $\forall p,i,f : \mathit{Member}(p,i) \Rightarrow p \neq \mathit{not}(f)$. 
	Then we can prove the sequent $\Gamma, \defn \vdash \lnot \mathit{Sat}(j,\mathit{not}(q))$ as follows:
	\[ 
	\begin{prooftree}
		\hypo{\text{minor premises}}
		\hypo{\Gamma, \defn, F_{\mathit{Sat}}[j,\mathit{not}(q)] \vdash}
		\infer2[(def L)]{\Gamma, \defn, Sat(j,\mathit{not}(q)) \vdash}
		\infer1[($\lnot$R)]{\Gamma, \defn \vdash \lnot \mathit{Sat}(j,\mathit{not}(q))}
	\end{prooftree}
	\]	
	The minor premises take the following form: 
	\[{\Gamma, \defn, \mathit{Member}(p,i)\vdash } { F_{\mathit{Sat}}[i,p]}\]
	\[{\Gamma, \defn, F_{\mathit{Sat}}(i,f) \land F_{Sat}[i,g] \vdash F_{\mathit{Sat}}[i,\mathit{and}(f,g)]}\]
	\[\Gamma, \defn, \lnot \mathit{Sat}(i,f) \vdash F_{\mathit{Sat}}[i,\mathit{not}(f)]\]
	The induction hypothesis $F_{\mathit{Sat}}$ needs to hold for every pair $(i,f)$ such that $i$ satisfies $f$, but not for $(j,\mathit{not}(q))$. 
	We can pick $F_{\mathit{Sat}}[i,f] \doteq \forall h : f=\mathit{not}(h) \Rightarrow \lnot \mathit{Sat}(i,h)$, which says that if $f$ is the negation of a formula $h$, then $h$ is not satisfied by $i$. 
	The major premise is proven as follows:
	\[
	\begin{prooftree}
		\hypo{ }
		\infer1[($=$R)]{\Gamma, \defn \vdash \mathit{not}(q)=\mathit{not}(q)}
		
		\hypo{ }
		\infer1[(ax)]{\Gamma, \defn \vdash \mathit{Member}(q,j)}
		\infer1[(def R)]{\Gamma, \defn \vdash \mathit{Sat}(j,q)}
		\infer1[($\lnot$L)]{\Gamma, \defn, \lnot \mathit{Sat}(j,q) \vdash}
		
		\infer2[($\Rightarrow$L)]{\Gamma, \defn, \mathit{not}(q)=\mathit{not}(q) \Rightarrow \lnot \mathit{Sat}(j,q) \vdash}
		\infer1[($\forall$L)]{\Gamma, \defn, F_{\mathit{Sat}}[j,\mathit{not}(q)] \vdash}
	\end{prooftree}
	\]
	The proofs of the minor premises are straightforward, given the formulas in $\Gamma$.
\end{example}

\begin{example} \label{ex:temporal-graph-proof}
	Let $\defn$ be the definition of $\mathit{Active}$ and $\mathit{ActivePath}$ from Example \ref{ex:temporal_graph}. 
	We can formally prove that if no edge of the form $(a,m)$ is ever activated, then all active edges remain active. 
	We do so by proving the sequent $\Gamma, \defn \vdash \forall n,m,t: \mathit{Active}(n,m,t) \Rightarrow \mathit{Active}(n,m,\mathit{succ}(t))$, where $\Gamma = \{\lnot \exists m,t: \mathit{Activate}(a,m,t) \}$:
	\[ 
	\begin{prooftree}
		\hypo{}
		\infer1[(ax)]{\Gamma, \defn, \mathit{Active}(n,m,t) \vdash \mathit{Active}(n,m,t)}
		
		\hypo{\Gamma, \defn, \mathit{Active}(n,m,t), \mathit{ActivePath}(a,b,t) \vdash}
		\infer1[($\lnot$R)]{\Gamma, \defn, \mathit{Active}(n,m,t) \vdash \lnot \mathit{ActivePath}(a,b,t)}
		
		\infer2[($\land$R)]{\Gamma, \defn, \mathit{Active}(n,m,t) \vdash \mathit{Active}(n,m,t) \land \lnot \mathit{ActivePath}(a,b,t)}
		\infer1[(def R)]{\Gamma, \defn, \mathit{Active}(n,m,t) \vdash \mathit{Active}(n,m,\mathit{succ}(t))}
		\infer1[($\Rightarrow$R)]{\Gamma, \defn \vdash \mathit{Active}(n,m,t) \Rightarrow \mathit{Active}(n,m,\mathit{succ}(t))}
		\infer1[($\forall$R)]{\Gamma, \defn \vdash \forall n,m,t: \mathit{Active}(n,m,t) \Rightarrow \mathit{Active}(n,m,\mathit{succ}(t))}
	\end{prooftree}
	\]
	We can derive the sequent $\Gamma, \defn, \mathit{Active}(n,m,t), \mathit{ActivePath}(a,b,t) \vdash$ with (def L), picking $\Pi = \{\mathit{ActivePath}\}$ and $F_{\mathit{ActivePath}}[n,m,t] \doteq \exists p: \mathit{Active}(n,p,t)$. 
	This rule application involves two minor premises:
	\[\Gamma, \defn, \mathit{Active}(n,m,t) \vdash \exists p: \mathit{Active}(n,p,t)\]
	\[\Gamma, \defn, \exists p: \mathit{Active}(n,p,t) \land \exists p: \mathit{Active}(m,p,t) \vdash \exists p: \mathit{Active}(n,p,t)\]
	both of which are proven trivially, as well as a major premise:
	\[\Gamma, \defn, \exists p:\mathit{Active}(a,p,t) \vdash\]
	We can prove this premise with ($\exists$L) and (def L), picking ${\Pi = \{\mathit{Active}\}}$ and ${F_{\mathit{Active}}[n,m,t] \doteq \exists s: \mathit{Activate}(n,m,s)}$. 
	This rule application involves two minor premises:
	\[\Gamma, \defn, \mathit{Activate}(n,m,t) \land \lnot \mathit{ActivePath}(a,b,t) \vdash \exists s: \mathit{Activate}(n,m,s)\]
	\[\Gamma, \defn, (\exists s: \mathit{Activate}(n,m,s)) \land \lnot \mathit{ActivePath}(a,b,t) \vdash \exists s: \mathit{Activate}(n,m,s)\]
	and one major premise: 
	\[\Gamma, \defn, \exists s: \mathit{Activate}(a,p,s) \vdash\]
	all of which are proven trivially.
\end{example}

\begin{example} \label{ex:proof_distance}
	Let $\defn$ be the definition of $\mathit{Distance}$ from Example \ref{ex:distance}. 
	We can formally prove that if a graph has edges $(a,b)$ and $(b,c)$, but no edge $(a,c)$, then the distance from $a$ to $c$ in this graph is two. 
	This is done by proving the sequent $\Gamma, \defn \vdash \mathit{Distance}(a,c,\mathit{two})$, where $\Gamma$ consists of the assumptions on $\mathit{Edge}$: $\mathit{Edge}(a,b)$, $\mathit{Edge}(b,c)$, $\lnot \mathit{Edge}(a,c)$; unique names axioms for $a$, $b$ and $c$: $a\neq b$, $b \neq c$, $a \neq c$,; and facts about $\mathit{succ}$ and $\leq$ (e.g., $\mathit{two} = \mathit{succ}(\mathit{one})$, $\mathit{one} = \mathit{succ}(\mathit{zero})$ and $\forall n : n \leq \mathit{zero} \Rightarrow n=\mathit{zero}$):
	\[ 
	\begin{prooftree}[separation=3pt]
		\hypo{\Gamma, \defn \vdash  \mathit{Distance}(a,b,\mathit{succ}(\mathit{zero}))}
		\infer1[\hspace{-3pt}($=$L)]{\Gamma, \defn \vdash  \mathit{Distance}(a,b,\mathit{one})}
		
		\hypo{}
		\infer1[(ax)]{\Gamma, \defn \vdash  \mathit{Edge}(b,c)}
		
		\infer2[($\land$R)]{\Gamma, \defn \vdash  \mathit{Distance}(a,b,\mathit{one}) \land  \mathit{Edge}(b,c)}
		\infer1[($\exists$R)]{\Gamma, \defn \vdash \exists z :  \mathit{Distance}(a,z,\mathit{one}) \land  \mathit{Edge}(z,c)}
		
		\hypo{\Gamma, \defn, m \leq \mathit{one}, \mathit{Distance}(a,c,m) \vdash}
		\infer1[($\land$L)]{\Gamma, \defn, m \leq \mathit{one} \land \mathit{Distance}(a,c,m) \vdash}
		\infer1[($\exists$L)]{\Gamma, \defn, \exists m: m \leq \mathit{one} \land \mathit{Distance}(a,c,m) \vdash}
		\infer1[($\lnot$R)]{\Gamma, \defn \vdash \lnot \exists m: m \leq \mathit{one} \land \mathit{Distance}(a,c,m)}
		
		\infer2[($\land$R)]{
			\Gamma, \defn \vdash (\exists z : \mathit{Distance}(a,z,\mathit{one}) \land  \mathit{Edge}(z,c)) \land \lnot (\exists m: m \leq \mathit{one} \land  \mathit{Distance}(a,c,m))
		}
		\infer1[(def R)]{\Gamma, \defn \vdash \mathit{Distance}(a,c,\mathit{succ}(\mathit{one}))}
		\infer1[($=$R)]{\Gamma, \defn \vdash \mathit{Distance}(a,c,\mathit{two})}
	\end{prooftree}
	\]
	We can derive the sequent $\Gamma, \defn \vdash  \mathit{Distance}(a,b,\mathit{succ}(\mathit{zero}))$ by (def R) from $\Gamma, \defn \vdash (\exists z : \mathit{Distance}(a,z,\mathit{zero}) \land \mathit{Edge}(z,b)) \land \lnot \exists m : m \leq \mathit{zero} \land \mathit{Distance}(a,b,m)$.
	The latter sequent can be derived by ($\land$R) from $\Gamma, \defn \vdash \exists z : \mathit{Distance}(a,z,\mathit{zero}) \land \mathit{Edge}(z,b)$ and $\Gamma, \defn \vdash \lnot \exists m : m \leq \mathit{zero} \land \mathit{Distance}(a,b,m)$, which are proven as follows:
	\[
	\begin{prooftree}
		\hypo{ }
		\infer1[(ax)]{\Gamma, \defn \vdash \top}
		\infer1[(def R)]{\Gamma, \defn \vdash \mathit{Distance}(a,a,\mathit{zero})}
		
		\hypo{ }
		\infer1[(ax)]{\Gamma, \defn \vdash \mathit{Edge}(a,b)}

		\infer2[($\land$R)]{\Gamma, \defn \vdash \mathit{Distance}(a,a,\mathit{zero}) \land \mathit{Edge}(a,b)}
		\infer1[($\exists$R)]{\Gamma, \defn \vdash \exists z : \mathit{Distance}(a,z,\mathit{zero}) \land \mathit{Edge}(z,b)}
	\end{prooftree}
	\]
	\[
	\begin{prooftree}[separation=10pt]
		\hypo
		{\Gamma, \defn, \mathit{Distance}(a,b,\mathit{zero}) \vdash}
		\infer1[($=$L)]{\Gamma, \defn, \mathit{Distance}(a,b,m), m=\mathit{zero} \vdash}
		\infer1[(wk)]{\Gamma, \defn, m \leq \mathit{zero}, \mathit{Distance}(a,b,m), m=\mathit{zero} \vdash}
		
		\hypo{ }
		\infer1[(ax)]{m \leq \mathit{zero} \vdash m \leq \mathit{zero}}
		\infer1[(wk)]{\Gamma, \defn, m \leq \mathit{zero}, \mathit{Distance}(a,b,m), m \leq \mathit{zero} \vdash m \leq \mathit{zero}}
		
		\infer2[($\Rightarrow$L)]{\Gamma, \defn, m \leq \mathit{zero}, \mathit{Distance}(a,b,m), m \leq \mathit{zero} \Rightarrow m = \mathit{zero} \vdash}
		\infer1[($\forall$L)]{\Gamma, \defn, m \leq \mathit{zero}, \mathit{Distance}(a,b,m) \vdash}
		\infer1[($\land$L)]{\Gamma, \defn, m \leq \mathit{zero} \land \mathit{Distance}(a,b,m) \vdash}
		\infer1[($\exists$L)]{\Gamma, \defn, \exists m : m \leq \mathit{zero} \land \mathit{Distance}(a,b,m) \vdash}
		\infer1[($\lnot$R)]{\Gamma, \defn \vdash \lnot \exists m : m \leq \mathit{zero} \land \mathit{Distance}(a,b,m)}
	\end{prooftree}
	\]
	In the application of ($\forall$L), we used the formula $\forall n : n \leq \mathit{zero} \Rightarrow n=\mathit{zero}$ in $\Gamma$. 
	The sequent $\Gamma, \defn, \mathit{Distance}(a,b,\mathit{zero}) \vdash$ can be proven via (def L), picking $F_{\mathit{Distance}}[x,y,n] \doteq (n=\mathit{zero} \Rightarrow x=y)$. 
	Similarly, the sequent $\Gamma, \defn, m \leq \mathit{one}, \mathit{Distance}(a,c,m) \vdash$ in the first derivation can be proven via (def L), picking $F_{\mathit{Distance}}[x,y,n] \doteq (n=\mathit{zero} \Rightarrow x=y) \land (n=\mathit{one} \Rightarrow \mathit{Edge}(x,y))$.
\end{example}

\begin{example} \label{ex:proof-reachable}
	Let $\defn$ be the definition of $\mathit{Reachable}$ from Example Example \ref{ex:def-reachable}.
	By $\defn[s,R]$, we denote the definition obtained from $\defn$ by replacing $\mathit{Reachable}$ with $R$, and by $\defn[t,Q]$, we denote the definition obtained from $\defn$ by replacing $\mathit{Reachable}$ with $Q$ and $s$ with $t$.
	The defined predicate $Q$ of $\defn[t,Q]$ represents the set of all nodes reachable from the node $t$.
	We can prove that for all nodes $s$, $t$ and $r$, if $t$ is reachable from $s$, and $r$ is reachable from $t$, then $r$ is reachable from $s$ as well, by proving the sequent
	$
	\vdash \forall s,t,r : \defn[s,R] \land R(t) \land \defn[t,Q] \land Q(r) \Rightarrow R(r) 
	$:\footnote{
		This sequent can be read as: ``for all nodes $s$, $t$ and $r$, if $R$ is the set of reachable nodes from $s$, $t$ belongs to $R$, $Q$ is the set of reachable nodes from $t$, and $r$ belongs to $G$, then $r$ belongs to $R$''.
	}
	\[ 
	\begin{prooftree}
		\hypo{\defn[s,R], R(t), \defn[t,Q], Q(r) \vdash R(r)}
		\infer1[($\land$L)]{\defn[s,R] \land R(t) \land \defn[t,Q] \land Q(r) \vdash R(r)}
		\infer1[($\Rightarrow$R)]{\vdash \defn[s,R] \land R(t) \land \defn[t,Q] \land Q(r) \Rightarrow R(r)}
		\infer1[($\forall$R)]{\vdash \forall s,t,r : \defn[s,R] \land R(t) \land \defn[t,Q] \land Q(r) \Rightarrow R(r)}
	\end{prooftree}
	\]
	(Here we merged three applications of ($\land$L) and of ($\forall$R) in one.)
	To prove the sequent $\defn[s,R], R(t), \defn[t,Q], Q(r) \vdash R(r)$, we apply (def L) w.r.t.\ $\defn[t,Q]$, picking $F_{Q}[x] \doteq R(x)$.
	This yields the following premises:
	\begin{equation*}
		\defn[s,R], R(t), \defn[t,Q] \vdash R(t)
	\end{equation*}
	\begin{equation*}
		\defn[s,R], R(t), \defn[t,Q], R(x) \land E(x,y) \vdash R(y)
	\end{equation*}
	\begin{equation*}
		\defn[s,R], R(t), \defn[t,Q], R(r) \vdash R(r)
	\end{equation*}
	Each of these sequents is trivially proven with (ax) and (def R).
	\hfill\markfull
\end{example}

As illustrated by the following example, \scfoid enables proving non-totality of definitions.

\begin{example} \label{ex:proof_non-total_liar}
	Let $\defn$ be the definition 
	$\{P \rul \lnot P\}$. 
	We can formally prove that $\defn$ is non-total by deriving its negation.
	Indeed, if $\lnot \defn$ is valid, then $\defn$ has no two-valued models under the well-founded semantics.
	Hence, its well-founded model (in any trivial $\defn$-context) must be strictly three-valued.
	The main idea behind to proof is to show that $\defn$ implies both the truth and the falsity of $P$:
	\[ 
	\begin{prooftree}[separation=5pt]
		\hypo{}
		\infer1[(ax)]{\defn, \lnot P \vdash \lnot P}
		\infer1[(def R)]{\defn, \lnot P \vdash P}
		\infer1[($\lnot$R)]{\defn \vdash P, \lnot \lnot P}
		\infer1[($*$)]{\defn \vdash P}
		
		\hypo{}
		\infer1[(ax)]{\defn, \lnot P \vdash \lnot P}
		
		\hypo{}
		\infer1[(ax)]{\defn, \lnot P \vdash \lnot P}
		\infer1[(def R)]{\defn, \lnot P \vdash P}
		\infer1[($\lnot$R)]{\defn \vdash P, \lnot \lnot P}
		\infer1[($*$)]{\defn \vdash P}
		\infer1[($\lnot$L)]{\defn, \lnot P \vdash}

		\infer2[(def L)]{\defn, P \vdash}

		\infer2[(cut)]{\defn \vdash}
		\infer1[($\lnot$R)]{\vdash \lnot \defn}
	\end{prooftree}
	\]
	We picked $F_P \doteq \lnot P$ in the application of (def L). 
	The transition labeled by ($*$) can be completed as follows:	
	\[
	\begin{prooftree}[separation=5pt]
		\hypo{\defn \vdash P, \lnot \lnot P}
		\infer1[($\lor$R)]{\defn \vdash P \lor \lnot \lnot P}

		\hypo{}
		\infer1[(ax)]{\defn, P \vdash P}
		
		\hypo{}
		\infer1[(ax)]{\defn, P \vdash P}
		\infer1[($\lnot$R)]{\defn \vdash P, \lnot P}
		\infer1[($\lnot$L)]{\defn, \lnot \lnot P \vdash P}
		
		\infer2[($\lor$L)]{\defn, P \lor \lnot \lnot P \vdash P}
		
		\infer2[(cut)]{\defn \vdash P}
	\end{prooftree}
	\]
	\hfill\markfull
\end{example}

\begin{example} \label{ex:proof_access}
	Let $\defn$ be the definition of $\mathit{Access}$ from Example \ref{ex:access_control}. 
	We can formally prove that if the owner grants access to another user $p$ who blocks themself, and if $p$ is not blocked by another user, then $\defn$ is non-total. 
	We do so by proving the sequent $\Gamma \vdash \lnot \defn$, where $\Gamma$ consists of the formulas $\mathit{Grants}(\mathit{owner},p)$; $\mathit{Blocks}(p,p)$; $\mathit{owner} \neq p$; and $\forall x : \mathit{Blocks}(x,p) \Rightarrow x=p$:
	\[ 
	\begin{prooftree}
		\hypo{\Gamma, \defn, \lnot \mathit{Access}(p) \vdash \mathit{Access}(p)}
		\infer1[($\lnot$R)]{\Gamma, \defn \vdash \mathit{Access}(p), \lnot \lnot \mathit{Access}(p)}
		\infer1[($*$)]{\Gamma, \defn \vdash \mathit{Access}(p)}
		
		\hypo{\Gamma, \defn, \mathit{Access}(p) \vdash}
		
		\infer2[(cut)]{\Gamma, \defn \vdash}
		\infer1[($\lnot$R)]{\Gamma \vdash \lnot \defn}
	\end{prooftree}
	\]
	The transition ($*$) can be completed analogously to the homonymous transition in Example \ref{ex:proof_non-total_liar}. 
	The sequent $\Gamma, \defn, \lnot \mathit{Access}(p) \vdash \mathit{Access}(p)$ can be derived via (def R) and ($\land$R) from $\Gamma, \defn, \lnot \mathit{Access}(p) \vdash \exists v: \mathit{Access}(v) \land \mathit{Grants}(v,p)$ and $\Gamma, \defn, \lnot \mathit{Access}(p) \vdash \neg \exists v: \mathit{Access}(v) \land \mathit{Blocks}(v,p)$.
	The former sequent can be proven with ($\exists$R), substituting $v$ by $\mathit{owner}$.
	The latter sequent can be proven with ($\neg$L), ($\neg$R) and ($\exists$L) from $\Gamma, \defn, \mathit{Access}(v) \land \mathit{Blocks}(v,p) \vdash \mathit{Access}(p)$, which can be proven with ($=$L), using the formula $\forall x : \mathit{Blocks}(x,p) \Rightarrow x=p$ in $\Gamma$.
	
	The sequent $\Gamma, \defn, \mathit{Access}(p) \vdash$ in the above derivation can be proven with (def L), using the induction hypothesis $F_{\mathit{Access}}[x] \doteq (x=\mathit{owner}) \lor \lnot \exists v : \mathit{Access}(v) \land \mathit{Blocks}(v,x)$. 
	This yields the premises
	\begin{equation*}
		\Gamma, \defn \vdash (\mathit{owner}=\mathit{owner}) \lor \lnot \exists v : \mathit{Access}(v) \land \mathit{Blocks}(v,\mathit{owner})
	\end{equation*}
	\begin{equation*}
		\Gamma, \defn, (\exists v : F_\mathit{Access}[v] \land \mathit{Grants}(v,u)) \land \lnot (\exists v : \mathit{Access}(v) \land \mathit{Blocks}(v,u)) \vdash F_{\mathit{Access}}[u]
	\end{equation*}
	\begin{equation*}
		\Gamma, \defn, (p=\mathit{owner}) \lor \lnot \exists v : \mathit{Access}(v) \land \mathit{Blocks}(v,p) \vdash
	\end{equation*}
	The first premise can easily be proven via ($\lor$R) and ($=$R), and the second premise via ($\land$L) and (ax).
	Using the formulas $\mathit{owner} \neq p$ and $\mathit{Blocks}(p,p)$ in $\Gamma$, the third premise can be derived from $\Gamma, \defn \vdash \mathit{Access}(p)$, which we already know to be provable.
\end{example}

\subsection{The Set \texorpdfstring{$\Pi$}{Pi}} \label{sec:pi}

Proofs by induction generally involve multiple defined predicates.
In particular, this is the case for mutually defined predicates, such as $\mathit{Even}$ and $\mathit{Odd}$ in the definition from Example \ref{ex:def-even-odd}.
Proving that every even number is a natural number, for instance, is done by simultaneously proving that every odd number is a natural number.
Formally, this is reflected by the set $\Pi$ in (def L), which would be $\{\mathit{Even}, \mathit{Odd}\}$ in this example.
In general, to derive a sequent $\Gamma, \defn, P(\bar{t}) \vdash \Delta$ with (def L) in \scfoid, the set $\Pi$ can be any subset of $\defi{\defn}$ containing $P$.\footnote{
	Here we mean that (def L) is sound (w.r.t.~the well-founded, stable and Henkin semantics) for any choice of $\Pi$, in the sense that validity of the premises implies validity of the conclusion.
	However, one generally needs to choose $\Pi$ wisely, as not all choices yield provable premises.
}
In \lkid, the set $\Pi$ is fixed to be the set consisting of all defined predicates that are \emph{mutually dependent with} $P$ \cite{jlc/BrotherstonS11}. 

\begin{definition}
	Let $\defn$ be a definition and $P, Q \in \defi{\defn}$.
	We say that $P$ \emph{depends directly on} $Q$ \emph{(w.r.t.~$\defn$)} if $Q$ occurs in the body $\formula$ of a definitional rule $\forall \bar{x}: P(\bar{t}) \rul \formula$ of $\defn$.
	The \emph{direct dependency relation} $\ddep{\defn}$ \emph{(w.r.t.~$\defn$)} contains all pairs of defined predicates $(R,S)$ for which $R$ depends directly on $S$.
	The \emph{(indirect) dependency relation} $\dep{\defn}$ \emph{(w.r.t.~$\defn$)} is the reflexive-transitive closure of the direct dependency relation $\ddep{\defn}$ w.r.t.~$\defn$.
	We say that $P$ \emph{depends (indirectly) on} $Q$ \emph{(w.r.t.~$\defn$)} if $(P, Q) \in \dep{\defn}$.
	We say that $P$ and $Q$ are \emph{mutually dependent (w.r.t.~$\defn$)} if $(P, Q) \in \dep{\defn}$ and $(Q, P) \in \dep{\defn}$.
	We denote by $\md{\defn}{P}$ the set of all defined predicates of $\defn$ that are mutually dependent with $P$ w.r.t.~$\defn$.
	\hfill\markfull
\end{definition}

As demonstrated by Example \ref{ex:temporal-graph-proof}, it is not always necessary to include all mutually dependent predicates in $\Pi$.
In this example, we proved a theorem about the definition $\defn$ of $\mathit{Active}$ and $\mathit{ActivePath}$ from Example \ref{ex:temporal_graph}.
Even though $\mathit{Active}$ and $\mathit{ActivePath}$ are mutually dependent w.r.t.~$\defn$, the proof from Example \ref{ex:temporal-graph-proof} applied (def L) once with $\Pi = \{\mathit{Active}\}$ and once with $\Pi = \{\mathit{ActivePath}\}$.
Hence, the generality of $\Pi$ in \scfoid allows for smaller proofs.
Proposition \ref{prop:LFOIDMD} guarantees that we can always restrict $\Pi$ to a subset of $\md{\defn}{P}$.

\begin{lemma} \label{lem:safe-replacement-bound-vars}
	Let $\Gamma$ and $\Delta$ be sets of \foid-formulas, $\formula$ an $\foid$-formula, $x$ an object symbol and $y$ an object symbol that does not occur freely in $\formula$.
	Then $\Gamma \vdash \forall x: \formula, \Delta$ is \scfoid-provable iff $\Gamma \vdash \forall y: \formula[y/x], \Delta$ is \scfoid-provable, and $\Gamma, \exists x: \formula \vdash \Delta$ is \scfoid-provable iff $\Gamma, \exists y: \formula[y/x] \vdash \Delta$ is \scfoid-provable.
\end{lemma}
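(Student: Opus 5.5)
We prove the statement for the universal quantifier on the right; the existential case on the left is entirely symmetric. The argument uses the (cut) rule together with the quantifier rules. It does not rely on any invariance of proofs under renaming of bound variables.

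\emph{Step 1: mutual derivability.} We show that $\forall x: \formula \vdash \forall y: \formula[y/x]$ is \scfoid-provable, and conversely. Choose a fresh object symbol $z$ that occurs nowhere in $\formula$ and differs from $x$ and $y$. By (ax), the sequent $\formula[z/x] \vdash \formula[z/x]$ is provable. Since $y \notin \free(\formula)$, we have $\formula[y/x][z/y] \doteq \formula[z/x]$, where substitution is safe by the paper's renaming convention. Hence ($\forall$L) with term $z$ gives $\forall x: \formula \vdash \formula[y/x][z/y]$. Because $z$ is not free in $\forall x: \formula$, ($\forall$R) then yields $\forall x: \formula \vdash \forall y: \formula[y/x]$. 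The converse direction uses the same derivation with the roles of $x$ and $y$ swapped, noting that $\formula[y/x][z/y] \doteq \formula[z/x]$.

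\emph{Step 2: transfer via cut.} Suppose $\Gamma \vdash \forall x: \formula, \Delta$ is provable. Weakening the Step 1 sequent gives $\Gamma, \forall x: \formula \vdash \forall y: \formula[y/x], \Delta$. Weakening the assumed sequent gives $\Gamma \vdash \forall x: \formula, \forall y: \formula[y/x], \Delta$. Applying (cut) on $\forall x: \formula$ yields $\Gamma \vdash \forall y: \formula[y/x], \Delta$. The other direction uses the converse sequent from Step 1 in the same way.

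\emph{Step 3: the existential case.} We first derive $\exists y: \formula[y/x] \vdash \exists x: \formula$ from $\formula[z/x] \vdash \formula[z/x]$. Applying ($\exists$R) with term $z$ gives $\formula[z/x] \vdash \exists x: \formula$. Applying ($\exists$L) with eigenvariable $z$, which is fresh, gives $\exists y: \formula[y/x] \vdash \exists x: \formula$. The reverse sequent is obtained analogously. We then cut in the left context exactly as in Step 2.

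The only delicate point is the side condition of ($\forall$R) and ($\exists$L): the eigenvariable must not be free in the sequent. This is why we go through a fresh $z$ instead of $x$ or $y$ directly, since $x$ may well occur free in $\Gamma$ or $\Delta$. Everything else is bookkeeping about the substitution identity $\formula[y/x][z/y] \doteq \formula[z/x]$, which holds because $y \notin \free(\formula)$.
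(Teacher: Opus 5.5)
Your overall plan matches the paper's: derive the isolated sequent $\forall x: \varphi \vdash \forall y: \varphi[y/x]$ (and its converse) from (ax) via ($\forall$L) and ($\forall$R), then weaken and cut on $\forall x:\varphi$. Step 2 is fine. The problem is the detour through a fresh $z$ in Steps 1 and 3. In \scfoid, as given in Figure~\ref{fig:scfo}, the rules ($\forall$R) and ($\exists$L) carry no substitution. The variable bound in the conclusion is exactly the variable occurring free in the premise: $\Gamma \vdash \varphi, \Delta$ yields $\Gamma \vdash \forall x: \varphi, \Delta$.

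Applied to $\forall x:\varphi \vdash \varphi[z/x]$, ($\forall$R) therefore gives $\forall x: \varphi \vdash \forall z: \varphi[z/x]$, not $\forall x: \varphi \vdash \forall y: \varphi[y/x]$. Likewise your ($\exists$L) step gives $\exists z: \varphi[z/x] \vdash \exists x: \varphi$. Formulas are compared up to syntactic identity $\doteq$, not $\alpha$-equivalence. So turning $\forall z:\varphi[z/x]$ into $\forall y:\varphi[y/x]$ is itself an instance of the lemma you are proving, and the derivation as written is circular. It tacitly uses a substitution-style ($\forall$R), ``from $\Gamma \vdash \psi[z/y],\Delta$ with $z$ fresh infer $\Gamma \vdash \forall y:\psi,\Delta$'', which is not a rule of the calculus. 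This contradicts your remark that the argument does not rely on renaming bound variables.

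The repair is immediate, and your stated reason for needing $z$ does not apply. The quantifier rules are used in the isolated sequent, whose only other formula is $\forall x: \varphi$; $\Gamma$ and $\Delta$ only enter later through (wk). So take $y$ itself as the eigenvariable:
\begin{itemize}
\item From $\varphi[y/x] \vdash \varphi[y/x]$, ($\forall$L) with term $y$ gives $\forall x:\varphi \vdash \varphi[y/x]$.
\item ($\forall$R) binding $y$ is then legal, because $y \notin \free(\forall x:\varphi)$ by hypothesis.
\item Dually, ($\exists$R) with term $y$ gives $\varphi[y/x] \vdash \exists x: \varphi$, and ($\exists$L) binding $y$ gives $\exists y:\varphi[y/x] \vdash \exists x:\varphi$.
\end{itemize}
The converse sequents follow by applying the same derivations to $\varphi[y/x]$ with the roles of $x$ and $y$ swapped. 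This uses $x \notin \free(\varphi[y/x])$ and $\varphi[y/x][x/y] \doteq \varphi$. With this change your proof coincides with the paper's.
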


\begin{proof}
	If $\Gamma \vdash \forall x: \formula, \Delta$ is \scfoid-provable, then so is $\Gamma \vdash \forall y: \formula[y/x], \Delta$, as can be seen from the following \scfoid-derivation:
	\[
	\begin{prooftree}
		\hypo{\Gamma \vdash \forall x: \formula, \Delta}
		\infer1[(wk)]{\Gamma \vdash \forall x: \formula, \forall y: \formula[y/x], \Delta}
		\hypo{\formula[y/x] \vdash \formula[y/x]}
		\infer1[($\forall$L)]{\forall x: \formula \vdash \formula[y/x]}
		\infer1[($\forall$R)]{\forall x: \formula \vdash \forall y: \formula[y/x]}
		\infer1[(wk)]{\Gamma, \forall x: \formula \vdash \forall y: \formula[y/x], \Delta}
		\infer2[(cut)]{\Gamma \vdash \forall y: \formula[y/x], \Delta}
	\end{prooftree}
	\]
	In the application of ($\forall$R), we used that $y$ does not occur freely in $\formula$.
	The other direction follows by symmetry, as $\formula[y/x][x/y] \doteq \formula$.
	The proof of the second part of the statement is similar. 
\end{proof}

\begin{restatable}{lemma}{safereplacementofpositiveandnegativeoccurrences} \label{lem:safe-replacement-of-positive-and-negative-occurrences}
	Let $\formula$, $\formulatwo$ and $\formulathree$ be \fo-formulas and $\Gamma$ and $\Delta$ sets of \foid-formulas. 
	Let $\formulathree[\formulatwo \sslash \formula^+]$ denote a formula obtained from $\formulathree$ by substituting positive occurrences of $\formula$ (not necessarily all or any) by $\formulatwo$, and let $\formulathree[\formulatwo \sslash \formula^-]$ denote a formula obtained from $\formulathree$ by substituting negative occurrences of $\formula$ (again, not necessarily all or any) by $\formulatwo$.
	If $\Gamma, \formula \vdash \formulatwo, \Delta$ is \scfoid-provable, then so are $\Gamma, \formulathree \vdash \formulathree[\formulatwo \sslash \formula^+], \Delta$ and $\Gamma, \formulathree[\formulatwo \sslash \formula^-] \vdash \formulathree, \Delta$.
	\hfill\markfull
\end{restatable}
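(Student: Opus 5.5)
We prove both claims simultaneously by structural induction on the \fo-formula $\formulathree$, keeping $\Gamma$ and $\Delta$ arbitrary so that the induction hypothesis can be applied with modified contexts. The statement to prove for every $\formulathree$ is: for all sets $\Gamma,\Delta$ with $\Gamma,\formula\vdash\formulatwo,\Delta$ provable, and for every choice of replaced positive (resp.\ negative) occurrences, $\Gamma,\formulathree\vdash\formulathree[\formulatwo\sslash\formula^+],\Delta$ (resp.\ $\Gamma,\formulathree[\formulatwo\sslash\formula^-]\vdash\formulathree,\Delta$) is provable. Treating both polarities together is essential, since negation swaps them.

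\textbf{Base cases.} If the replaced occurrence is $\formulathree$ itself (only possible for the positive claim, as $\formulathree$ occurs positively in itself), the sequent is exactly $\Gamma,\formula\vdash\formulatwo,\Delta$. If nothing is replaced at the top (in particular for atoms that are not replaced), the sequent $\Gamma,\formulathree\vdash\formulathree,\Delta$ is an instance of (ax).

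\textbf{Inductive steps.} For $\formulathree\doteq\lnot\formulathree_1$, positive occurrences of $\formula$ in $\formulathree$ are negative occurrences in $\formulathree_1$. By the induction hypothesis for the negative claim, $\Gamma,\formulathree_1[\formulatwo\sslash\formula^-]\vdash\formulathree_1,\Delta$ is provable. Applying ($\lnot$L) and then ($\lnot$R) yields $\Gamma,\lnot\formulathree_1\vdash\lnot(\formulathree_1[\ldots]),\Delta$. The negative claim for $\lnot\formulathree_1$ is symmetric.

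For binary connectives $\land,\lor$, we apply the induction hypothesis to each conjunct or disjunct, weaken with (wk), and combine via ($\land$L)/($\land$R) or ($\lor$L)/($\lor$R). Implications and biconditionals are handled by unwinding them according to the paper's definition of polarity. For $\Rightarrow$, the antecedent flips polarity and the corresponding rules do the work. For $\Leftrightarrow$, every occurrence inside is both positive and negative after unwinding, so a replacement inside is only "positive" or "negative" in a consistent sense. I expect to treat the notion carefully here: the positivity convention implies that subformulas of a biconditional occur both positively and negatively. A replacement inside a biconditional is then typically unsound, so the intended reading must be that the chosen occurrences are positive in the unwound form. The simplest fix is to first replace $\Leftrightarrow$ and $\Rightarrow$ by their unwindings, using provable equivalences and (cut), and then work only with $\lnot,\land,\lor,\forall,\exists$.

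\textbf{Quantifiers: the main obstacle.} For $\formulathree\doteq\forall x:\formulathree_1$ we want $\Gamma,\forall x:\formulathree_1\vdash\forall x:\formulathree_1',\Delta$. First we rename $x$ to a fresh $y$ not free in $\Gamma,\Delta,\formula,\formulatwo$, using Lemma~\ref{lem:safe-replacement-bound-vars} on the right and its analogue via (cut) on the left. Then we apply the induction hypothesis to $\formulathree_1[y/x]$, with the same $\Gamma,\Delta$. This works because $\formula$ and $\formulatwo$ have no free $y$, so the substitution commutes with the replacement. The result is $\Gamma,\formulathree_1[y/x]\vdash\formulathree_1'[y/x],\Delta$, from which ($\forall$L) with $t=y$ followed by ($\forall$R) (legal by freshness) gives the claim; $\exists$ is dual.

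The delicate point is that occurrences of $\formula$ inside $\formulathree_1$ may contain $x$ free, in which case they are not instances of the fixed formula $\formula$ after renaming. We avoid this by noting that such occurrences are literally $\formula$, so $x$ is free in $\formula$, and we choose the renaming conventions so that captured variables are left alone. Concretely, we restrict the induction to occurrences whose free variables are not bound in $\formulathree$, which is the intended reading of safe substitution. For the applications in (def L), $\formula$ will be atoms $R(\bar u)$ and the variable bookkeeping matches the paper's convention.
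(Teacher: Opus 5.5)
Your architecture matches the paper's. Both proofs use a simultaneous induction on $\formulathree$ for both polarities, close leaves by (ax) or by the hypothesis, swap polarity with ($\lnot$L)/($\lnot$R), and combine (wk) with the left/right rules for the binary connectives. Your base case for a replaced occurrence that is $\formulathree$ itself is, if anything, more careful than the paper's atom-only case.

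The genuine gap is the quantifier step. You first rename $x$ to a fresh $y$, so an occurrence of $\formula$ containing $x$ becomes $\formula[y/x]$ and is no longer an occurrence of $\formula$. You therefore restrict the lemma to occurrences ``whose free variables are not bound in $\formulathree$''. That restriction excludes exactly the situation in which the lemma is used. In the proof of Proposition~\ref{prop:LFOIDMD} one replaces defined atoms $R(\bar u)$ inside rule bodies, and these routinely contain bound variables, e.g.\ $\mathit{Access}(v)$ under $\exists v$ in Example~\ref{ex:access_control}. Your closing remark that the bookkeeping ``matches the paper's convention'' does not cover this case.

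The missing idea is to keep $x$ itself as the eigenvariable, as the paper does:
apply the induction hypothesis to $\formulathree_1$ with $x$ free and the same $\Gamma,\Delta$, giving $\Gamma,\formulathree_1\vdash\formulathree_1[\formulatwo\sslash\formula^+],\Delta$; then use ($\forall$L) with $t$ taken to be $x$, followed by ($\forall$R) (dually ($\exists$R) and ($\exists$L)).
This handles occurrences containing $x$ directly. What it requires is $x\notin\free(\Gamma\cup\Delta)$, and no renaming can remove that requirement. Take $\Gamma=\{R(x)\Rightarrow S(x)\}$, $\formula\doteq R(x)$, $\formulatwo\doteq S(x)$ and $\formulathree\doteq\forall x: R(x)$. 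The hypothesis is provable, but $\Gamma,\forall x: R(x)\vdash\forall x: S(x)$ is invalid.

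So the side condition the statement needs is on $\Gamma\cup\Delta$, not on the occurrence: no variable bound in $\formulathree$ above a replaced occurrence may be free in $\Gamma\cup\Delta$. The paper's appeal to Lemma~\ref{lem:safe-replacement-bound-vars} glosses over this too. In its application, though, the condition can be met by renaming the bound variables of the rule body beforehand.

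Separately, your unwinding-plus-(cut) ``fix'' for $\Leftrightarrow$ proves nothing about $\formulathree$ itself, because unwinding splits one occurrence into two copies of opposite polarity. The only coherent reading is that occurrences under $\Leftrightarrow$ are neither positive nor negative. Under any other reading the lemma fails, e.g.\ with $\formula\doteq\bot$, $\formulatwo\doteq\top$ and $\formulathree\doteq\bot\Leftrightarrow B$. Hence nothing inside a biconditional is replaced, and (ax) closes that case.
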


\begin{proof}
	See Appendix \ref{app:proofs-section-scfoid}.
\end{proof}

\begin{proposition} \label{prop:LFOIDMD}
	Let \scfoidmd be the sequent calculus obtained from \scfoid by restricting $\Pi$ in \emph{(def L)} to subsets of $\md{\defn}{P}$.
	A sequent is \scfoid-provable iff it is \scfoidmd-provable.
	\hfill\markfull
\end{proposition}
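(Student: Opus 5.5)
The direction from \scfoidmd to \scfoid is immediate: every \scfoidmd-derivation is already an \scfoid-derivation. For the converse, I would transform a given \scfoid-proof into an \scfoidmd-proof by induction on the height of the proof. All rules except (def L) are translated unchanged. It therefore suffices to prove the following claim: whenever all premises of a (def L)-instance with an arbitrary set $\Pi \ni P$ are \scfoidmd-provable, so is its conclusion $\Gamma, \defn, P(\bar{v}) \vdash \Delta$. Because (cut), (wk) and (subst) are available in \scfoidmd, I would in fact prove the stronger claim that the (def L'')-conclusion $\Gamma, \defn, P(\bar{v}) \vdash F_P[\bar{v}], \Delta$ is \scfoidmd-provable from the minor premises. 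The full (def L) conclusion then follows exactly as in the proof of Proposition \ref{prop:admissibility-def-L-from-L''}, by one cut against the major premise. I would prove this stronger claim by an inner induction on $|\Pi|$.

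\textbf{Step 1 (restricting to predicates $P$ depends on).} Let $\Pi_P = \{Q \in \Pi \mid (P,Q) \in \dep{\defn}\}$. For $Q \in \Pi_P$, every defined predicate $R$ occurring in a body of a $Q$-rule satisfies $(P,R) \in \dep{\defn}$. Hence $\formulatwo[F_{\Pi_P}/\Pi_P^+] \doteq \formulatwo[F_\Pi/\Pi^+]$ for every such rule. The minor premises for $\Pi_P$ are therefore a subset of those for $\Pi$, and we may assume $\Pi = \Pi_P$.

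\textbf{Step 2 (reducing to $\md{\defn}{P}$).} Let $\Pi' = \Pi \cap \md{\defn}{P}$. For $R \in \Pi \setminus \Pi'$ we have $(P,R) \in \dep{\defn}$ but $(R,P) \notin \dep{\defn}$. Set $\Pi_R = \{S \in \Pi \mid (R,S) \in \dep{\defn}\}$. Then $P \notin \Pi_R$, so $|\Pi_R| < |\Pi|$. By the argument of Step 1, the minor premises for $R$ with set $\Pi_R$ are among the original (provable) minor premises. The inner induction hypothesis then yields \scfoidmd-proofs of $\Gamma, \defn, R(\bar{w}) \vdash F_R[\bar{w}], \Delta$ for a fresh tuple $\bar{w}$, and hence, by (subst), for arbitrary argument tuples $\bar{u}$.

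Next, for each $Q \in \Pi'$ and each rule $\forall \bar{y}: Q(\bar{s}) \rul \formulatwo$, I would derive the $\Pi'$-minor premise $\Gamma, \defn, \formulatwo[F_{\Pi'}/\Pi'^+] \vdash F_Q[\bar{s}], \Delta$ from the original one. The two formulas differ only in that positive occurrences of $R(\bar{u})$ with $R \in \Pi\setminus\Pi'$ are kept instead of replaced by $F_R[\bar{u}]$. Applying Lemma \ref{lem:safe-replacement-of-positive-and-negative-occurrences} once per such atom, with the sequents just obtained, gives $\Gamma, \defn, \formulatwo[F_{\Pi'}/\Pi'^+] \vdash \formulatwo[F_\Pi/\Pi^+], \Delta$. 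A cut with the original minor premise then yields the desired sequent. Since $\Pi' \subseteq \md{\defn}{P}$ and $P \in \Pi'$, applying (def L'') (admissible via Proposition \ref{prop:admissibility-def-L''}, whose proof uses (def L) with the same $\Pi'$) gives the claim.

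\textbf{Main obstacle.} I expect the delicate point to be the bookkeeping with object symbols. The atoms $R(\bar{u})$ inside $\formulatwo$ may contain the rule variables $\bar{y}$ or bound variables. Lemma \ref{lem:safe-replacement-of-positive-and-negative-occurrences} needs the auxiliary sequent for that exact atom, and the eigenvariable conditions of (def L) must still hold after weakening. I would handle this by proving the auxiliary sequents for a fresh tuple $\bar{w}$, then instantiating with (subst), renaming bound variables with Lemma \ref{lem:safe-replacement-bound-vars} where needed. One must also check that the cuts, weakenings and substitutions introduced in this step are themselves rules of \scfoidmd, which holds because they are structural rules shared by both calculi.
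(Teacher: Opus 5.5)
Your proof is correct and follows essentially the paper's argument. Both pass to the (def L'') form and restrict $\Pi$ to $\Pi\cap\md{\defn}{P}$. Both discharge the predicates of $\Pi$ lying strictly below $P$ in the dependency order via Lemma~\ref{lem:safe-replacement-of-positive-and-negative-occurrences}, and both finish with a cut against the original minor premise. The only difference is bookkeeping: you induct on $|\Pi|$ by passing to the sub-instance $\Pi_R$, whose minor premises are among the original ones, whereas the paper keeps $\Pi$ and the minor premises fixed and inducts along the strict dependency preorder $\prec$.
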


\begin{proof}
	The ``if'' part is trivial, as every \scfoidmd-proof is also an \scfoid-proof.
	For the ``only if'' part, we take an \scfoid-provable sequent and show by induction on the height of an \scfoid-proof $T$ of the sequent that it is \scfoidmd-provable as well.
	By Propositions \ref{prop:admissibility-def-L''} and \ref{prop:admissibility-def-L-from-L''}, we can replace the rule (def L) by (def L'') in \scfoid and \scfoidmd.\footnote{
		The proofs of Propositions \ref{prop:admissibility-def-L''} and \ref{prop:admissibility-def-L-from-L''} extend to \scfoidmd.
	}
	The base cases are trivial, and so are most cases in the inductive step.
	The only non-trivial case is where the sequent is derived by (def L''):
	\begin{equation*} 
		\begin{prooftree}
			\hypo{\text{minor premises}}
			\infer1[(def L'')]{\Gamma, \defn, P(\bar{v}) \vdash F_P[\bar{v}], \Delta}
		\end{prooftree}
	\end{equation*}
	The minor premises 
	\begin{equation} \label{eq:proof-md-minor-premise}
		\Gamma, \defn, \psi[F_{\Pi} / \Pi^+] \vdash F_Q[\bar{s}], \Delta
	\end{equation}
	correspond to the definitional rules $\forall \bar{y}: Q(\bar{s}) \rul \psi$ of $\defn$ with $Q \in \Pi$.
	Since they admit a smaller \scfoid-proofs than $T$, they are \scfoidmd-provable by induction hypothesis.
	We show that $\Gamma, \defn, P(\bar{v}) \vdash F_P[\bar{v}], \Delta$ is \scfoidmd-provable by induction over the \emph{dependency order} $\prec$ on $\defi{\defn}$, which is defined by $Q \prec R$ iff $(R,Q) \in \dep{\defn}$ and $(Q,R) \not\in \dep{\defn}$ for all $Q, R \in \defi{\defn}$.\footnote{
		Technically speaking, this relation is a \emph{strict pre-order}, as it is irreflexive, anti-symmetric and transitive.
	}
	This means that we prove the statement for $P$ under the assumption that it holds for all $Q \in \defi{\defn}$ such that $Q \prec P$.
	Let $\Pi_{P}$ denote the set $\Pi \cap \md{\defn}{P}$.
	We claim that 
	\begin{equation} \label{eq:sequent-phi_Q[Pi_P]-implies-phi_Q[Pi]}
		\Gamma, \defn, \psi[F_{\Pi_{P}} / \Pi_{P}^+] \vdash \psi[F_{\Pi} / \Pi^+], \Delta
	\end{equation}
	is \scfoidmd-provable for any definitional rule $\forall \bar{y}: Q(\bar{s}) \rul \psi$ of $\defn$ with $Q \in \Pi_P$.
	To show this claim, we start by observing that $\psi[F_{\Pi} / \Pi^+] \doteq \psi[F_{\Pi_P \cup \Pi_{\prec P}} / (\Pi_P \cup \Pi_{\prec P})^+]$.
	Indeed, since $Q$ and $P$ are mutually dependent w.r.t.~$\defn$, $\psi$ can only contain predicates $R$ for which $(P, R) \in \dep{\defn}$.
	Now note that $\psi[F_{\Pi_P \cup \Pi_{\prec P}} / (\Pi_P \cup \Pi_{\prec P})^+]$ can be obtained from $\psi[F_{\Pi_{P}} / \Pi_{P}^+]$ by substituting positive occurrences of predicates $R$ in $\Pi_{\prec P}$ by $F_R$ (more precisely, those occurrences that were not introduced as a result of the substitution $F_{\Pi_{P}} / \Pi_{P}^+$).
	By induction hypothesis, $\Gamma, \defn, R(\bar{v}) \vdash F_R[\bar{v}], \Delta$ is \scfoidmd-provable for all $R \in \Pi_{\prec P}$.
	By Lemma \ref{lem:safe-replacement-of-positive-and-negative-occurrences}, (\ref{eq:sequent-phi_Q[Pi_P]-implies-phi_Q[Pi]}) is indeed \scfoidmd-provable.
	Applying (wk) and (cut) to (\ref{eq:proof-md-minor-premise}) and (\ref{eq:sequent-phi_Q[Pi_P]-implies-phi_Q[Pi]}), we find that $\Gamma, \defn, \psi[F_{\Pi_{P}} / \Pi_{P}^+](\bar{y}) \vdash F_Q[\bar{s}], \Delta$ is \scfoidmd-provable.
	Since this holds for any definitional rule $\forall \bar{y}: Q(\bar{s}) \rul \psi$ of $\defn$ with $Q \in \Pi_P$, and since $\Pi_P \subseteq \md{\Phi}{P}$, we can apply (def L) in \scfoidmd to conclude that $\Gamma, \defn, P(\bar{v}) \vdash F_P[\bar{v}], \Delta$ is \scfoidmd-provable.
\end{proof}

%% file: correspondence.tex
\section{Correspondence to \scfo} \label{sec:correspondence}

The introduction rules for defined atoms in \scfoid are based on two first-order properties of \foid-definitions: the right introduction rule (def R) uses the fact that bodies imply heads, and the left introduction rule (def L) uses (a generalization of) the induction principle.
In this section, we make these underlying first-order principles explicit by establishing a correspondence between \scfoid-theorems and \foid-theorems.
In the \foid-theorems, \foid-definitions are replaced by their \emph{first-order approximations}:

Given a definition $\defn$, we denote by $\matimps{\defn}$ the set of material implications $\forall \bar{x} : \formula \Rightarrow P(\bar{t})$ corresponding to the definitional rules $\forall \bar{x} : P(\bar{t}) \rul \formula$ of $\defn$.
By $\indscheme{\defn}$, we denote the \emph{induction scheme} corresponding to $\defn$.
This is the (countably infinite) set of \fo-formulas
\begin{equation*}
	\bigwedge_{\Pi} (\forall \bar{y} : \formulatwo[F_\Pi / \Pi^+] \Rightarrow F_Q[\bar{s}])
	\Rightarrow (\forall \bar{x} :  P(\bar{x}) \Rightarrow F_P[\bar{x}] ) 
\end{equation*}
where $P$ is a defined predicate of $\defn$, $\Pi$ a subset of $\defi{\defn}$ containing $P$, and $F_Q[\bar{y}]$ is an \fo-formula for every $Q\in \Pi$.
The conjunction $\bigwedge_\Pi (\forall \bar{y} : \formulatwo[F_\Pi / \Pi^+] \Rightarrow F_Q[\bar{s}])$ ranges over all definitional rules $\forall \bar{y} : Q(\bar{s}) \rul \formulatwo$ in $\defn$ with $Q \in \Pi$.
The \emph{first-order approximation} $\foapp{\defn}$ of $\defn$ is the set $\matimps{\defn} \cup \indscheme{\defn}$.

As a consequence of Propositions \ref{prop:completion-wf} and \ref{prop:induction-scheme-wf},
any definition $\defn$ logically entails its first-order approximation $\foapp{\defn}$ under the well-founded semantics.
Conversely, $\foapp{\defn}$ does not always logically entail $\defn$.
This is the case even for the positive definition of natural numbers $\defn_{\mathit{Nat}}$ from Section \ref{sec:introduction}.
Indeed, non-standard models of Peano arithmetic satisfy $\foapp{\defn_{\mathit{Nat}}}$, but not $\defn_{\mathit{Nat}}$.

\begin{lemma} \label{lem:proof-FO-principles-from-def}
	Let $\defn$ be a definition and $\formulathree \in \foapp{\defn}$.
	Then $\defn \vdash \formulathree$ is \scfoid-provable.
\end{lemma}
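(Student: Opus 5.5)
We distinguish two cases according to whether $\formulathree \in \matimps{\defn}$ or $\formulathree \in \indscheme{\defn}$, and in each case build a direct \scfoid-derivation ending in $\defn \vdash \formulathree$.

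\emph{Material implications.} Suppose $\formulathree \doteq \forall \bar{x} : \formula \Rightarrow P(\bar{t})$ for a rule $\forall \bar{x} : P(\bar{t}) \rul \formula$ of $\defn$. By Lemma \ref{lem:safe-replacement-bound-vars} we may rename $\bar{x}$ to fresh object symbols $\bar{s}$ not occurring freely in $\defn$. Applying ($\forall$R) (legitimate since $\bar{s} \notin \free(\defn)$) and ($\Rightarrow$R) reduces the goal to $\defn, \formula[\bar{s}/\bar{x}] \vdash P(\bar{t}[\bar{s}/\bar{x}])$. One application of (def R) then gives $\defn, \formula[\bar{s}/\bar{x}] \vdash \formula[\bar{s}/\bar{x}]$, which is an instance of (ax).

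\emph{Induction scheme.} Suppose $\formulathree$ is the scheme instance for $P$, $\Pi$ and hypotheses $F_Q$, and write $H$ for its antecedent $\bigwedge_{\Pi} (\forall \bar{y} : \formulatwo[F_\Pi / \Pi^+] \Rightarrow F_Q[\bar{s}])$. First apply ($\Rightarrow$R). Then rename $\bar{x}$ to fresh $\bar{v}$ (Lemma \ref{lem:safe-replacement-bound-vars}) and apply ($\forall$R) and ($\Rightarrow$R). This leaves the goal $\defn, H, P(\bar{v}) \vdash F_P[\bar{v}]$. We apply (def L) with the same $\Pi$ and the same hypotheses $F_Q$, taking $\Gamma = \{H\}$ and $\Delta = \emptyset$.
\begin{itemize}
\item The major premise is $\defn, H, F_P[\bar{v}] \vdash F_P[\bar{v}]$, which holds by (ax).
\item For each rule $\forall \bar{y}: Q(\bar{s}) \rul \formulatwo$ with $Q \in \Pi$, the minor premise is $H, \defn, \formulatwo[F_\Pi/\Pi^+] \vdash F_Q[\bar{s}]$. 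It follows by repeated ($\land$L) to extract the matching conjunct of $H$, then ($\forall$L) instantiating $\bar{y}$ by itself, then ($\Rightarrow$L). The two resulting branches, $\formulatwo[F_\Pi/\Pi^+] \vdash \formulatwo[F_\Pi/\Pi^+], F_Q[\bar{s}]$ and $F_Q[\bar{s}] \vdash F_Q[\bar{s}]$ (with the side formulas present), are both axioms.
\end{itemize}

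\emph{Main obstacle.} The difficulty is the side condition of (def L): the variables $\bar{y}$ of the rules must not occur freely in $\Gamma$, $\Delta$ or $\defn$. They are not free in $\defn$, by the definition of a definition. They are also not free in $H$, since $H$ binds them. However, the conclusion formula $F_P[\bar{v}]$ is not in $\Delta$; it must be obtained via the admissible rule (def L'') of Proposition \ref{prop:admissibility-def-L''}, or we must check that the $F_Q$ do not clash with $\bar{y}$. To be safe, we would first rename the rule variables $\bar{y}$, both inside $H$ and in $\defn$, to fresh symbols. Renaming in $H$ is handled by Lemma \ref{lem:safe-replacement-bound-vars}. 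Renaming in $\defn$ is semantically harmless, and we expect that it can be justified by a normalization-style argument as in Proposition \ref{prop:normalization-equiderivable}. Finally, the free variables of the $F_Q$ other than $\bar{z}$ are parameters, and we must ensure they are distinct from $\bar{v}$; choosing $\bar{v}$ fresh guarantees this.
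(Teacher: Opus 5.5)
Your proof is the paper's proof: (def R) beneath ($\forall$R) and ($\Rightarrow$R) for the material implications, and for a scheme instance one application of (def L) with the instance's own $\Pi$ and $F_Q$, the major premise closed by (ax) and each minor premise by ($\land$L), ($\forall$L), ($\Rightarrow$L) and (ax). The one correction is that this (def L) step has succedent $\Delta = \{F_P[\bar v]\}$, not $\emptyset$; that is why the major premise is an axiom, and the minor premises just carry $F_P[\bar v]$ as an idle side formula. Consequently you need neither (def L''), whose admissibility proof is exactly this step, nor any renaming inside $\defn$: the side condition only requires the rule variables $\bar y$ to avoid $\bar v$ and the free variables of the $F_Q$ other than their $\bar z$, which your fresh choice of $\bar v$ and the freshness convention the paper tacitly adopts (and which soundness of (def L) needs anyway) guarantee.
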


\begin{proof}
	\textbf{Case}:
	$\formulathree \in \matimps{\defn}$. 
	Then $\formulathree$ is of the form $\forall \bar{x} : \formula \Rightarrow P(\bar{t})$ for a definitional rule $\forall \bar{x} : P(\bar{t}) \rul \formula$ of $\defn$.
	We can derive $\defn \vdash \formulathree$ as follows:
	\begin{equation*}
		\begin{prooftree}
			\hypo{}
			\infer1[(ax)]{\defn, \formula \vdash \formula}
			\infer1[(def R)]{\defn, \formula \vdash P(\bar{t})}
			\infer1[($\Rightarrow$R)]{\defn \vdash \formula \Rightarrow P(\bar{t})}
			\infer1[($\forall$R)]{\defn \vdash \forall \bar{x} : \formula \Rightarrow P(\bar{t})}
		\end{prooftree}
	\end{equation*}
	
	\textbf{Case:}
	$\formulathree \in \indscheme{\defn}$. 
	Then $\formulathree$ is of the form $\bigwedge_\Pi (\forall \bar{y} : \formulatwo[F_\pi / \Pi^+] \Rightarrow F_Q[\bar{s}]) \Rightarrow (\forall \bar{x} :  P(\bar{x}) \Rightarrow F_P[\bar{x}] )$
	for a defined predicate $P$ of $\defn$, a subset $\Pi$ of $\defi{\defn}$ containing $P$, and \fo-formulas $F_Q[\bar{y}]$ for all $Q \in \Pi$.
	We can derive $\defn \vdash \formulathree$ as follows:
	\begin{equation*}
		\begin{prooftree}
			\hypo{\text{minor premises}}
			\hypo{}
			\infer1[(ax)]{\defn, \bigwedge_\Pi (\forall \bar{y} : \formulatwo[F_\pi / \Pi^+] \Rightarrow F_Q[\bar{s}]), F_P[\bar{x}] \vdash F_P[\bar{x}]}
			\infer2[(def L)]{\defn, \bigwedge_\Pi (\forall \bar{y} : \formulatwo[F_\pi / \Pi^+] \Rightarrow F_Q[\bar{s}]), P(\bar{x}) \vdash F_P[\bar{x}]}
			\infer1[($\Rightarrow$R)]{\defn, \bigwedge_\Pi (\forall \bar{y} : \formulatwo[F_\pi / \Pi^+] \Rightarrow F_Q[\bar{s}]) \vdash P(\bar{x}) \Rightarrow F_P[\bar{x}]}
			\infer1[($\forall$R)]{\defn, \bigwedge_\Pi (\forall \bar{y} : \formulatwo[F_\pi / \Pi^+] \Rightarrow F_Q[\bar{s}]) \vdash \forall \bar{x}: P(\bar{x}) \Rightarrow F_P[\bar{x}]}
			\infer1[($\Rightarrow$R)]{\defn \vdash \bigwedge_\Pi (\forall \bar{y} : \formulatwo[F_\pi / \Pi^+] \Rightarrow F_Q[\bar{s}]) \Rightarrow (\forall \bar{x}: P(\bar{x}) \Rightarrow F_P[\bar{x}])}
		\end{prooftree}
	\end{equation*}
	In the application of ($\forall$R), we assumed that the object symbols $\bar{x}$ do not occur freely in $\defn$ or $\bigwedge_\Pi (\forall \bar{y} : \formulatwo[F_\pi / \Pi^+] \Rightarrow F_Q[\bar{s}])$, which is justified by Lemma \ref{lem:safe-replacement-bound-vars}.
	The minor premises $\defn, \bigwedge_\Pi (\forall \bar{y} : \formulatwo[F_\pi / \Pi^+] \Rightarrow F_Q[\bar{s}]), \formulatwo[F_\Pi/\Pi^+] \vdash F_Q[\bar{s}]$ correspond to the definitional rules $\forall \bar{y} : Q(\bar{s}) \rul \formulatwo$ in $\defn$ with $Q \in \Pi$.
	By repeated application of ($\land$L) and (wk), it suffices to derive $\defn, \forall \bar{y} : \formulatwo[F_\Pi/\Pi^+] \Rightarrow F_Q[\bar{s}], \formulatwo[F_\Pi/\Pi^+] \vdash F_Q[\bar{s}]$, which can be done as follows:
	\begin{equation*}
		\begin{prooftree}
			\hypo{}
			\infer1[(ax)]{\defn, \formulatwo[F_\Pi/\Pi^+] \vdash \formulatwo[F_\Pi/\Pi^+], F_Q[\bar{s}] }
			\hypo{}
			\infer1[(ax)]{\defn, F_Q[\bar{s}], \formulatwo[F_\Pi/\Pi^+] \vdash F_Q[\bar{s}] }
			\infer2[($\Rightarrow$L)]{\defn, \formulatwo[F_\Pi/\Pi^+], \formulatwo[F_\Pi/\Pi^+] \Rightarrow F_Q[\bar{s}] \vdash F_Q[\bar{s}] }
			\infer1[($\forall$L)]{\defn, \forall \bar{y} : \formulatwo[F_\Pi/\Pi^+] \Rightarrow F_Q[\bar{s}], \formulatwo[F_\Pi/\Pi^+] \vdash F_Q[\bar{s}] }
		\end{prooftree}
	\end{equation*}
\end{proof}

As with most of the subsequent proof-theoretical results, the correspondence theorem restricts to \emph{regular sequents}.
These are sequents of the form $\regularsequent$, where $\defns$ are definitions and $\Gamma$ and $\Delta$ contain only \fo-formulas.
Even though non-regular sequents are generally sensible and useful (as illustrated in Examples \ref{ex:proof-reachable}, \ref{ex:proof_non-total_liar} and \ref{ex:proof_access}), many statements about inductive definitions are regular, as they are often of the form ``given these definitions, the following property holds''.
The correspondence theorem says that the \scfoid-provability of a regular sequent $\regularsequent$ comes down to the \scfo-provability of the sequent $\Gamma, \foapp{\defn_1}, \dots, \foapp{\defn_n} \vdash \Delta$, obtained from $\regularsequent$ by replacing every definition $\defn_i$ by its first-order approximation $\foapp{\defn_i}$.
Note that $\Gamma, \foapp{\defn_1}, \dots, \foapp{\defn_n} \vdash \Delta$ is a infinite sequent.
As for finite sequents, an infinite sequent $\sequent$ is said to be \seqcal-provable if there exists a (finite) \seqcal-proof of $\sequent$, for any sequent calculus \seqcal.

\begin{theorem}[Correspondence] \label{thm:correspondence}
	Let $\regularsequent$ be a regular sequent.
	Then $\regularsequent$ is \scfoid-provable iff\/ $\Gamma, \foapp{\defn_1}, \dots, \foapp{\defn_n} \vdash \Delta$ is \scfo-provable.
\end{theorem}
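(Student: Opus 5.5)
Both directions go through a translation between proofs. The ``if'' direction uses cut and Lemma \ref{lem:proof-FO-principles-from-def}. The ``only if'' direction is an induction on the height of an \scfoid-proof, which needs a slightly stronger invariant.

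\emph{``If'' direction.} Suppose $T$ is an \scfo-proof of $\Gamma, \foapp{\defn_1}, \dots, \foapp{\defn_n} \vdash \Delta$. Since $T$ is finite, only finitely many formulas $\formulathree_1, \dots, \formulathree_k$ of $\bigcup_i \foapp{\defn_i}$ occur in it. So $\Gamma, \formulathree_1, \dots, \formulathree_k \vdash \Delta$ is \scfo-provable, and hence \scfoid-provable, because \scfoid contains all rules of \scfo. Adding the definitions by (wk) gives $\Gamma, \defns, \formulathree_1, \dots, \formulathree_k \vdash \Delta$. For each $\formulathree_j \in \foapp{\defn_i}$, Lemma \ref{lem:proof-FO-principles-from-def} provides an \scfoid-proof of $\defn_i \vdash \formulathree_j$, and (wk) turns it into a proof of $\Gamma, \defns, \formulathree_1, \dots, \formulathree_{j-1} \vdash \formulathree_j, \Delta$. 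We then remove $\formulathree_k, \dots, \formulathree_1$ one at a time with $k$ applications of (cut), which yields $\regularsequent$.

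\emph{``Only if'' direction.} We cannot simply erase the definitions from an \scfoid-proof: the intermediate sequents may contain arbitrary \foid-formulas (for instance cut formulas built from definitions, or definitions under connectives), so a naive erasure does not type-check in \scfo. My first try is to assume that we may work with proofs in which every sequent is regular with the same definitions $\defns$ on the left. The (subst) rule also needs care, since the definitions may have free object symbols. The property I would prove by induction on proof height is this: if $\Gamma', \defns \vdash \Delta'$ has an \scfoid-proof all of whose sequents are regular with exactly these definitions, then $\Gamma', \foapp{\defn_1}, \dots, \foapp{\defn_n} \vdash \Delta'$ is \scfo-provable. The cases run as follows.
\begin{itemize}
\item Logical, structural and (cut) rules on \fo-formulas: these translate verbatim, with the approximations carried along as side formulas.
\item (def R): from $\formula[\bar{s}/\bar{x}]$ we must reach $P(\bar{t}[\bar{s}/\bar{x}])$. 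We cut with the material implication $\forall \bar{x} : \formula \Rightarrow P(\bar{t}) \in \matimps{\defn}$, instantiated by ($\forall$L) and ($\Rightarrow$L).
\item (def L): the translated minor premises give, via ($\Rightarrow$R) and ($\forall$R) (the eigenvariable condition on $\bar{y}$ is exactly the side condition of (def L)), proofs of $\forall \bar{y}: \formulatwo[F_\Pi/\Pi^+] \Rightarrow F_Q[\bar{s}]$. Combining them with ($\land$R) and ($\Rightarrow$L) against the corresponding member of $\indscheme{\defn}$, and then instantiating with ($\forall$L), yields $P(\bar{v}) \Rightarrow F_P[\bar{v}]$. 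A final cut with the translated major premise finishes this case.
\end{itemize}

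To justify the restriction to proofs with only regular sequents, I expect to rely on a model-theoretic route rather than a syntactic normalisation. This is the main obstacle, and it likely depends on the later soundness and completeness results for the Henkin semantics. The argument I would try: if the \scfo-sequent with the approximations is not provable, then by \fo-completeness and compactness there is a classical model of $\Gamma \cup \bigcup_i \foapp{\defn_i}$ falsifying $\Delta$. I would turn this model into a Henkin structure whose class $\henkinclass$ consists of the \fo-definable relations, so that it satisfies each $\defn_i$ in the Henkin sense. Henkin soundness of \scfoid (Section \ref{sec:soundness}) then shows that $\regularsequent$ is not \scfoid-provable. If that machinery is not available at this point in the paper, I would fall back on the syntactic induction above, treating definitions that occur inside cut formulas by pushing such cuts upward. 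I expect this syntactic route to be the delicate part.
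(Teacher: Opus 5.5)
Your proposal is correct. The ``if'' direction is the paper's argument: finitely many approximations, (wk), Lemma~\ref{lem:proof-FO-principles-from-def}, then repeated (cut). There is one small slip. The end-sequent of $T$ contains \emph{all} of $\bigcup_i \foapp{\defn_i}$, so what is true is that only finitely many of them are active in $T$. The others must be discarded by a short induction on height, as in the paper's footnote.

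For ``only if'', the paper uses exactly the syntactic induction you sketch first. (def R) is handled by ($\forall$L) and ($\Rightarrow$L) against the material implication already in the antecedent, so no cut is needed. (def L) is handled by assembling the translated minor premises into the antecedent of the matching member of $\indscheme{\defn}$ and applying ($\Rightarrow$L). All other cases are declared trivial. Your objection is well taken: this tacitly assumes every sequent in the proof is regular. That fails once (cut) is applied to a formula containing a definition, and the paper does not treat that case.

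However, your model-theoretic argument does not ``justify the restriction''. It is on its own a complete proof of ``only if''. \fo-completeness plus compactness give a model $\structure$ of $\Gamma \cup \bigcup_i \foapp{\defn_i}$ falsifying $\Delta$. Paired with the class $\henkinclass$ of \fo-definable sets, $\structure$ is a Henkin model of every $\defn_i$ simultaneously, by Lemma~\ref{lem:satisfaction-fo-app-def-fo-def-class}. Theorem~\ref{thm:soundness} then excludes an \scfoid-proof of $\regularsequent$.

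So drop the syntactic induction, and in particular the fallback of ``pushing cuts upward''. That fallback has no support in \scfoid, since cut-elimination fails there (Section~\ref{sec:cut-elimination}). The semantic machinery is legitimately available: neither the soundness theorem nor Lemma~\ref{lem:satisfaction-fo-app-def-fo-def-class} depends on the correspondence theorem. Just do not cite Theorem~\ref{thm:henkin-completeness}, whose proof uses the correspondence theorem.

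On the trade-off: the paper's route is an explicit, effective translation of proofs that needs neither the Henkin semantics nor \fo-completeness. As written, though, it only covers proofs whose cut formulas are \fo-formulas. Your route is non-constructive and rests on later sections, but it covers arbitrary \scfoid-proofs. It also shows that ``only if'' is simply Henkin soundness composed with the first half of the Henkin completeness argument.
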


\begin{proof}
	Suppose that $\regularsequent$ is \scfoid-provable.
	Then it admits an \scfoid-proof $T$.
	We show that $\Gamma, \foapp{\defn_1}, \dots, \foapp{\defn_n} \vdash \Delta$ is \scfo-provable by induction on the height of $T$.
	
	In the base case, $T$ derives $\defn_1, \dots, \defn_n, \Gamma \vdash \Delta$ by (ax) or (=R). 
	This means that $\Gamma \cap \Delta \neq \emptyset$ or $\top \in \Delta$ or $\bot \in \Gamma$, or if $\Delta$ contains an atom of the form $t=t$. 
	In either case, $\foapp{\defn_1}, \dots, \foapp{\defn_n}, \Gamma \vdash \Delta$ can be proven in \scfo by the same inference rule.
	For the inductive step, most cases are trivial.
	The only non-trivial cases are where $\defn_1, \dots, \defn_n, \Gamma \vdash \Delta$ is derived by (def R) or by (def L): 
	
	\textbf{Case}: $\defn_1, \dots, \defn_n, \Gamma \vdash \Delta$ is derived by (def R).
	Then the final step in $T$ is of the form
	\begin{equation*}
		\begin{prooftree}
			\hypo{\Gamma, \defns \vdash \formula[\bar{s}/\bar{x}], \Delta'}
			\infer1[(def R)]{\Gamma, \defns \vdash P(\bar{t}[\bar{s}/\bar{x}]), \Delta'}
		\end{prooftree}
	\end{equation*}
	where $\Delta' \coloneq \Delta \setminus \{P(\bar{t}[\bar{s}/\bar{x}])\}$.
	By induction hypothesis, $\Gamma, \foapp{\defn_1}, \dots, \foapp{\defn_n} \vdash \formula[\bar{s}/\bar{x}], \Delta'$ is \scfo-provable.
	By the following \scfo-derivation, $\Gamma, \foapp{\defn_1}, \dots, \foapp{\defn_n} \vdash \Delta$ is \scfo-provable as well:
	\begin{equation*}
		\begin{prooftree}
			\hypo{\Gamma, \foapp{\defn_1}, \dots, \foapp{\defn_n} \vdash \formula[\bar{s}/\bar{x}], \Delta'}
			\infer1[(wk)]{\Gamma, \foapp{\defn_1}, \dots, \foapp{\defn_n} \vdash \formula[\bar{s}/\bar{x}], P(\bar{t}[\bar{s}/\bar{x}]), \Delta'}
			\hypo{}
			\infer1[(ax)]{\Gamma, \foapp{\defn_1}, \dots, \foapp{\defn_n}, P(\bar{t}[\bar{s}/\bar{x}]) \vdash P(\bar{t}[\bar{s}/\bar{x}]), \Delta'}
			\infer2[($\Rightarrow$L)]{\Gamma, \foapp{\defn_1}, \dots, \foapp{\defn_n}, \formula[\bar{s}/\bar{x}] \Rightarrow P(\bar{t}[\bar{s}/\bar{x}]) \vdash P(\bar{t}[\bar{s}/\bar{x}]), \Delta'}
			\infer1[($\forall$L)]{\Gamma, \foapp{\defn_1}, \dots, \foapp{\defn_n} \vdash P(\bar{t}[\bar{s}/\bar{x}]), \Delta'}
		\end{prooftree}
	\end{equation*}
	
	\textbf{Case}: $\defn_1, \dots, \defn_n, \Gamma \vdash \Delta$ is derived by (def L).
	Then the final step in $T$ is of the form
	\begin{equation*}
		\begin{prooftree}
			\hypo{\text{minor premises}}
			\hypo{\Gamma', \defn_1, \dots, \defn_n, F_P[\bar{v}] \vdash \Delta}
			\infer2[(def L)]{\Gamma', \defn_1, \dots, \defn_n, P(\bar{v}) \vdash \Delta}
		\end{prooftree}
	\end{equation*}
	where $\Gamma' \coloneq \Gamma \setminus \{P(\bar{v})\}$.
	The minor premises $\Gamma', \defn_1, \dots, \defn_n, \formulatwo[F_\Pi / \Pi^+] \vdash F_Q[\bar{s}], \Delta$ correspond to the definitional rules $\forall \bar{y} : Q(\bar{s}) \rul \formulatwo$ in $\defn$ with $Q \in \Pi$.
	By the restriction on (def L), the object symbols in $\bar{y}$ do not occur freely in $\Gamma'$, $\defns$ or $\Delta$.
	Consequently, they do not occur freely in $\foapp{\defn_1}, \dots, \foapp{\defn_n}$.
	By induction hypothesis, 
	\begin{equation} \label{eq:fo-app-minor-premise}
		\Gamma', \foapp{\defn_1}, \dots, \foapp{\defn_n}, \formulatwo[F_\Pi / \Pi^+] \vdash F_Q[\bar{s}], \Delta
	\end{equation}
	is \scfo-provable for any definitional rules $\forall \bar{y} : Q(\bar{s}) \rul \formulatwo$ in $\defn$ with $Q \in \Pi$, and so is
	\begin{equation} \label{eq:fo-app-major-premise}
		\Gamma', \foapp{\defn_1}, \dots, \foapp{\defn_n}, F_P[\bar{v}] \vdash \Delta .
	\end{equation}
	From (\ref{eq:fo-app-minor-premise}), we can derive the sequent $\Gamma', \foapp{\defn_1}, \dots, \foapp{\defn_n} \vdash \forall \bar{y} : \formulatwo[F_\Pi / \Pi^+] \Rightarrow F_Q[\bar{s}], \Delta$ as follows:
	\begin{equation*}
		\begin{prooftree}
			\hypo{\Gamma', \foapp{\defn_1}, \dots, \foapp{\defn_n}, \formulatwo[F_\Pi / \Pi^+] \vdash F_Q[\bar{s}], \Delta}
			\infer1[($\Rightarrow$R)]{\Gamma', \foapp{\defn_1}, \dots, \foapp{\defn_n} \vdash \formulatwo[F_\Pi / \Pi^+] \Rightarrow F_Q[\bar{s}], \Delta}
			\infer1[($\forall$R)]{\Gamma', \foapp{\defn_1}, \dots, \foapp{\defn_n} \vdash \forall \bar{y} : \formulatwo[F_\Pi / \Pi^+] \Rightarrow F_Q[\bar{s}], \Delta}
		\end{prooftree}
	\end{equation*}
	In the application of ($\forall$R), we used that the object symbols in $\bar{y}$ do not occur freely in $\Gamma'$, $\foapp{\defn_1}, \dots, \foapp{\defn_n}$ or $\Delta$.
	By repeatedly applying ($\land$L) and (wk), we can derive 
	\begin{equation} \label{eq:antecedent-indscheme}
		\Gamma', \foapp{\defn_1}, \dots, \foapp{\defn_n}, P(\bar{v}) \vdash \bigwedge_\Pi (\forall \bar{y} : \formulatwo[F_\Pi / \Pi^+] \Rightarrow F_Q[\bar{s}]), \Delta .
	\end{equation}
	From (\ref{eq:fo-app-major-premise}), we can derive the sequent 
	\begin{equation} \label{eq:consequent-indscheme}
		\Gamma', \foapp{\defn_1}, \dots, \foapp{\defn_n}, P(\bar{v}), \forall \bar{x} :  P(\bar{x}) \Rightarrow F_P[\bar{x}] \vdash \Delta 
	\end{equation}
	as follows:
	\begin{equation*}
		\begin{prooftree}
			\hypo{}
			\infer1[(ax)]{\Gamma', \foapp{\defn_1}, \dots, \foapp{\defn_n}, P(\bar{v}) \vdash P(\bar{v}), \Delta}
			\hypo{\Gamma', \foapp{\defn_1}, \dots, \foapp{\defn_n}, F_P[\bar{v}] \vdash \Delta}
			\infer1[(wk)]{\Gamma', \foapp{\defn_1}, \dots, \foapp{\defn_n}, P(\bar{v}), F_P[\bar{v}] \vdash \Delta}
			\infer2[($\Rightarrow$L)]{\Gamma', \foapp{\defn_1}, \dots, \foapp{\defn_n}, P(\bar{v}), P(\bar{v}) \Rightarrow F_P[\bar{v}] \vdash \Delta}
			\infer1[($\forall$L)]{\Gamma', \foapp{\defn_1}, \dots, \foapp{\defn_n}, P(\bar{v}), \forall \bar{x} :  P(\bar{x}) \Rightarrow F_P[\bar{x}] \vdash \Delta}
		\end{prooftree}
	\end{equation*}
	Finally, we can derive $\Gamma', \foapp{\defn_1}, \dots, \foapp{\defn_n}, P(\bar{v}) \vdash \Delta$ by applying ($\Rightarrow$L) to (\ref{eq:antecedent-indscheme}) and (\ref{eq:consequent-indscheme}).
	Indeed, this is because $\bigwedge_\Pi (\forall \bar{y} : \formulatwo[F_\Pi / \Pi^+] \Rightarrow F_Q[\bar{s}]) \Rightarrow (\forall \bar{x} :  P(\bar{x}) \Rightarrow F_P[\bar{x}])$ is an element of $\foapp{\defn_i}$ for the definition $\defn_i$ defining $P$.
	
	Suppose conversely that $\foapp{\defn_1}, \dots, \foapp{\defn_n}, \Gamma \vdash \Delta$ is provable in \scfo.
	Since proofs are finite, there must be some finite subset $X$ of $\foapp{\defn_1} \cup \dots \cup \foapp{\defn_n}$ such that $\Gamma, X \vdash \Delta$ is \scfo-provable.\footnote{
		This can be shown by induction on the height of a proof of $\foapp{\defn_1}, \dots, \foapp{\defn_n}, \Gamma \vdash \Delta$.
	}
	Since every inference rule of \scfo is also an inference rule of \scfoid, it follows that $\Gamma, X \vdash \Delta$ is also \scfo-provable.
	Applying (wk), we find that $\Gamma, \defn_1, \dots, \defn_n, X \vdash \Delta$ is provable in \scfoid.
	Pick some $\formulathree \in X$.
	By Lemma \ref{lem:proof-FO-principles-from-def} and (wk), $\Gamma, \defn_1, \dots, \defn_n, X \setminus \{\formulathree\} \vdash \formulathree, \Delta$ is provable in \scfoid.
	By (cut), we find that $\Gamma, \defn_1, \dots, \defn_n, X \setminus \{\formulathree\} \vdash \Delta$ is also \scfoid-provable:
	\begin{equation*}
		\begin{prooftree}
			\hypo{\Gamma, \defn_1, \dots, \defn_n, X \setminus \{\formulathree\} \vdash \formulathree, \Delta}
			\hypo{\Gamma, \defn_1, \dots, \defn_n, X \vdash \Delta}
			\infer2[(cut)]{\Gamma, \defn_1, \dots, \defn_n, X \setminus \{\formulathree\} \vdash \Delta}
		\end{prooftree}
	\end{equation*}
	Repeating this argument a finite number of times, we obtain that $\Gamma, \defn_1, \dots, \defn_n \vdash \Delta$ is \scfoid-provable.
\end{proof}

The correspondence theorem characterizes the deductive strength of a definition $\defn$ in \scfoid by its first-order approximation $\foapp{\defn} = \matimps{\defn} \cup \indscheme{\defn}$.
Note that the induction scheme $\indscheme{\defn}$ corresponding to a definition $\defn$ is generally weaker than the induction axioms $\indax{\defn, \Pi}$ from Definition \ref{def:induction-axiom} for all $\Pi \subseteq \defi{\defn}$.
This is because the second-order variables in the induction axioms range over \emph{all} relations over the domain, while the induction scheme restricts to the relations defined by \fo-formulas.
For an infinite domain $D$, there are uncountably many relations over $D$, but only countably many \fo-definable ones.

%% file: alternative-semantics.tex
\section{Alternative Semantics} \label{sec:alternative-semantics}

\subsection{Stable Semantics} \label{sec:stable-semantics}

The induction rule of \scfoid handles the non-monotonicity of \foid-definitions in a surprisingly simple way: 
in the minor premises, only positive occurrences of defined predicates are replaced by an induction hypothesis, whereas negative occurrences remain untouched.
While this may appear as an arbitrary and perhaps naive approach, it emerges naturally from the \emph{stable (model) semantics} for logic programming \cite{iclp/GelfondL88}.
This alternative to the well-founded semantics underlies the paradigm of \emph{answer set programming (ASP)} \cite{marek99stable}.
Based on the presentation in \cite{tocl/DeneckerT08}, we will define the stable semantics for \foid,\footnote{
	Even though Denecker and Ternovska do not explicitly define the notion of stable model, they define the notion of \emph{stable operator}, the fixpoints of which are called stable models.
} 
allowing us to capture (fragments of) logic programs under the stable semantics in \foid.
By its non-constructive nature, the stable semantics is not a semantics of inductive definitions.
Furthermore, an \foid-definition (or program) $\defn$ generally has multiple stable models in a $\defn$-context, whereas a definition ought to provide a single interpretation of its defined predicates given an interpretation of its parameters.
Nevertheless, the stable semantics coincides with the well-founded semantics on all `sensible' definitions $\defn$ and $\defn$-contexts $\context$.
More concretely, if the well-founded model $\structure$ of a definition $\defn$ in a $\defn$-context $\context$ is two-valued, then $\defn$ has a unique stable model in $\context$, which is precisely $\structure$.

Henceforth, we fix an \foid-definition $\defn$ and a $\defn$-context $\context$.
The \emph{structure space} $\structurespace$ be the set of all $\sym{\defn}$-structures expanding $\context$.
For $\structure, \structuretwo \in \structurespace$, we write $\structure \sqsubseteq \structuretwo$ if $P^{\structure} \subseteq P^{\structuretwo}$ for all $P \in \defi{\defn}$.
As can be easily checked, the relation $\sqsubseteq$ defines a \emph{complete lattice order} on $\structurespace$.
This means that any subset $A$ of $\structurespace$ has a (unique) least upper bound and greatest lower bound, which we denote by $\sqcup A$ and $\sqcap A$ respectively.
Let $\defi{\defn}'$ be a set that contains, for each $P \in \defi{\defn}$, a predicate symbol $P'$ with the same arity as $P$ that does not occur in $\defn$.
Given $\structure, \structuretwo \in \structurespace$, we define $\structure_{\structuretwo}$ as the expansion of $\structure$ with vocabulary $\sym{\defn} \cup \defi{\defn}'$ that interprets every $P'$ as $P^{\structuretwo}$ for every $P \in \defi{\defn}$. 
In short, $\structure_{\structuretwo} = \structure[\defi{\defn}' : \defi{\defn}^{\structuretwo}]$.
Given an \fo-formula $\formula$, we denote by $\formula'$ the formula obtained from $\formula$ by replacing all negative occurrences of defined predicates $P$ of $\defn$ by $P'$.
We define the operator $\cons : \structurespace \times \structurespace \to \structurespace$ by $P^{\cons(\structure, \structuretwo)} \coloneq \{ \bar{a} \mid \structure_{\structuretwo}[\bar{y} : \bar{a}] \models \formula_P(\bar{y})' \}$ for all $P \in \defi{\defn}$, where $\bar{y}$ is an $\arity{P}$-tuple of object symbols not occurring freely in the body $\formula$ of any definitional rule of $\defn$ of the form $\defrul$.\footnote{
	We only need to define the interpretation of defined predicates, as the interpretation of parameters is fixed by $\context$.
}
Given $\structuretwo \in \structurespace$, we denote by $\cons(\cdot, \structuretwo)$ the unary operator $\structurespace \to \structurespace: \structure \mapsto \cons(\structure, \structuretwo)$.
It is straightforward to verify that $\cons(\cdot, \structuretwo)$ is \emph{monotone}, i.e., $\cons(\structure_1, \structuretwo) \sqsubseteq \cons(\structure_2, \structuretwo)$ if $\structure_1 \sqsubseteq \structure_2$ for all $\structure_1, \structure_2 \in \structurespace$.
By the Knaster-Tarski theorem \cite{Tarski}, this implies that $\cons(\cdot, \structuretwo)$ has a (unique) \emph{least fixpoint} $\lfp(\cons(\cdot, \structuretwo))$, i.e., an $\structure \in \structurespace$ such that $\structure = \cons(\structure, \structuretwo)$, and $\structure \sqsubseteq \structure'$ for every $\structure' \in \structurespace$ such that $\structure '= \cons(\structure', \structuretwo)$.
Furthermore, this least fixpoint is also the least pre-fixpoint of $\cons(\cdot, \structuretwo)$, i.e., the least $\structure \in \structurespace$ such that $\cons(\structure, \structuretwo) \sqsubseteq \structure$.
We define the \emph{stable operator} $\ST : \structurespace \to \structurespace$ by
\begin{equation*}
	\ST(\structure) \coloneq \lfp(\cons(\cdot, \structure)) .
\end{equation*}
The fixpoints of $\ST$ are called the \emph{stable models} of $\defn$ in $\context$.
The \emph{stable satisfaction relation} $\modelsst$ is defined analogously to the well-founded satisfaction relation, the only difference being the rule:
\begin{itemize}
	\item $\structure \modelsst \defn$ if $\structure|_{\sym{\defn}}$ is a stable model of $\defn$ in $\structure|_{\pars{\defn}}$.
\end{itemize}

\begin{remark}
	As stable models are not defined in terms of individual definitional rules, but only in terms of merged bodies $\formula_P(\bar{y})$, any definition is equivalent to its normalization under the stable semantics.
\end{remark}

To make the link between \scfoid and the stable semantics more explicit, let us unfold what it means for a structure $\structure \in \structurespace$ to be a stable model of $\defn$ in $\context$.
First, $\structure$ needs to be a fixpoint of $\cons(\cdot, \structure)$.
This means that $P^{\structure} = P^{\cons(\structure, \structure)} = \{ \bar{a} \mid \structure[\bar{y} : \bar{a}] \models \formula_P(\bar{y}) \}$ for every $P \in \defi{\defn}$.
In other words, $\structure$ needs to satisfy $\forall \bar{y} : P(\bar{y}) \Leftrightarrow \formula_P(\bar{y})$ for every $P \in \defi{\defn}$.
Second, $\structure$ needs to be the \emph{least} fixpoint, or equivalently, pre-fixpoint, of $\cons(\cdot, \structure)$.
This means that for every structure $\structuretwo \in \structurespace$, if $\cons(\structuretwo, \structure) \sqsubseteq \structuretwo$, then $\structure \sqsubseteq \structuretwo$.
The latter means that $P^{\structure} \subseteq P^{\structuretwo}$ for every $P \in \defi{\defn}$.
The former means that $\{ \bar{a} \mid \structuretwo_\structure[\bar{y} : \bar{a}] \models \formula_P(\bar{y})' \} = P^{\cons(\structuretwo, \structure)} \subseteq P^{\structuretwo}$ for every $P \in \defi{\defn}$.
In $\structuretwo_\structure[\bar{y} : \bar{b}] \models \formula'_Q$, the positive occurrences of defined predicates in $\formula'_Q$ are interpreted by $\structuretwo$, and the negative occurrences by $\structure$.
Since $\structuretwo$ can interpret the defined predicates $P$ of $\defn$ as any $\arity{P}$-relation over $\dom{\structure}$, we can restate the fact that $\structure$ is the least pre-fixpoint of $\cons(\cdot, \structure)$ by saying that 
\begin{equation} \label{eq:ind-axiom-stable}
	\structure \models \forall G_{\defi{\defn}}: \Big(  \bigwedge_{Q \in \defi{\defn}} \forall \bar{y} : \formula_Q(\bar{y})[G_{\defi{\defn}} / \defi{\defn}^+] \Rightarrow G_Q(\bar{y}) \Big) \Rightarrow \Big( \bigwedge_{P\in\defi{\defn}} \forall \bar{x} : P(\bar{x}) \Rightarrow G_P(\bar{x}) \Big) 
\end{equation}
From Definition \ref{def:merged-body}, it is not difficult to see that the second-order formula in (\ref{eq:ind-axiom-stable}) is equivalent to $\indax{\defn, \defi{\defn}}$ from Definition \ref{def:induction-axiom}.
Thus, the stable models of $\defn$ in $\context$ are characterized precisely by $\indax{\defn, \defi{\defn}}$ and the equivalences $\forall \bar{y} : P(\bar{y}) \Leftrightarrow \formula_P(\bar{y})$.

By the reasoning in the previous paragraph, analogous versions of Propositions \ref{prop:completion-wf} and \ref{prop:induction-scheme-wf} hold for the stable semantics.
However, we have only argued that an analogous version of \ref{prop:induction-scheme-wf} holds for $\Pi = \defi{\defn}$.
The result for general $\Pi$ follows from the fact that $\structure \modelsst \defn$ implies $\structure \modelsst \defn'$ if $\defn'$ is a \emph{subdefinition} of $\defn$.
This means that $\defn'$ is a subset of $\defn$ such that, if $\defn'$ contains a definitional rule of $\defn$ with $P$ in the head, it contains \emph{all} definitional rules of $\defn$ with $P$ in the head.
We will prove this more generally for the \emph{Henkin semantics} in Section \ref{sec:henkin-semantics}.

The well-founded model of $\defn$ in $\context$, as defined in Section \ref{sec:well-founded-semantics}, can also be defined in terms of the stable operator $\ST$ \cite{lpnmr/DeneckerV07,tocl/DeneckerT08}.
This is done by viewing three-valued structures $\structure$ as pairs $(\structure_l, \structure_u)$ of two-valued structures with $\structure_l \leqt \structure_u$.
Here $\formula^{\structure} = \true$ iff $\formula^{\structure_l} = \formula^{\structure_u} = \true$, $\formula^{\structure} = \false$ iff $\formula^{\structure_l} = \formula^{\structure_u} = \false$, and $\formula^{\structure} = \unknown$ iff $\formula^{\structure_l} = \false$ and $\formula^{\structure_u} = \true$ for any formula $\formula$ interpreted by $\structure$.
The well-founded model of $\defn$ in $\context$ corresponds to the \emph{maximal oscillating pair} of $\ST$.
This is the pair $(\structure, \structuretwo)$ consisting of the $\leqt$-minimal $\structure \in \structurespace$ and the $\leqt$-maximal $\structuretwo \in \structurespace$ such that $\ST(\structure) = \structuretwo$ and $\ST(\structuretwo) = \structure$.
Equivalently, $\structure$ is the $\leqt$-minimal and $\structuretwo$ the $\leqt$-maximal fixpoint of $\ST^2 \coloneq \ST \circ \ST$.
Translating a result by Van Gelder et al.~\cite{GelderRS91} to the logic \foid, Denecker and Ternovska showed that if the well-founded model $\structure$ of $\defn$ in $\context$ is two-valued, there is precisely one stable model of $\defn$ in $\context$, which is $\structure$ \cite{tocl/DeneckerT08}.
Hence, the well-founded and the stable semantics agree whenever $\defn$ is total in $\context$.

\subsection{Henkin Semantics} \label{sec:henkin-semantics}

We generalize the stable semantics to the \emph{Henkin semantics} for \foid.
This semantics is based on the homonymous semantics for the inductive definition logic of Brotherston and Simpson \cite{jlc/BrotherstonS11}, which is in turn inspired by Henkin's semantics for second-order and higher-order logic \cite{jsl/Henkin50}.
In the Henkin semantics for \foid, a structure $\structure$ is supplemented with a \emph{Henkin class} $\henkinclass$ consisting of subsets of $\dom{\structure}^k$ for all $k \in \N$.
The \emph{Henkin models} of a definition $\defn$ are then defined by restricting the structure space $\structurespace$ to those $\sym{\defn}$-structures that interpret all defined predicates of $\defn$ as relations in $\henkinclass$.
The sets in a Henkin class must be sufficiently rich.
Given a set $A$, we denote by $\Pow{A}$ the power set of $A$. 

\begin{definition} \label{def:henkin-class}
	Let $\structure$ be an \fo-structure with domain $D$.
	A \emph{Henkin class} $\henkinclass$ for $\structure$ is a set $\{ H_k \mid k \in \N \}$ such that $H_k \subseteq \Pow{D^k}$ for all $k \in \N$, and
	\begin{enumerate}[label=(\roman*)]
		\item $\{(a,a) \mid a \in D\} \in H_2$; \label{it:henkin-class-equality}
		\item $Q^{\structure} \in H_{\arity{Q}}$ for every predicate symbol $Q$ in $\voc{\structure}$; \label{it:henkin-class-predicate}
		\item if $R \in H_{k+1}$ and $a \in D$, then $\{(a_1, \dots, a_k) \in D^k \mid (a_1, \dots, a_k, a) \in R\} \in H_k$; \label{it:henkin-class-projection}
		\item if $R \in H_k$, $t_1, \dots, t_k$ are terms, $x_1, \dots, x_n, y_1, \dots, y_m$ are distinct object symbols and $b_1, \dots, b_m \in D$, then
		$\{ \bar{a} \in D^n \mid \bar{t}^{\structure[\bar{x}: \bar{a}, \bar{y}: \bar{b}]} \in R\} \in H_n$; \label{it:henkin-class-terms}
		\item if $R \in H_k$, then $D^k \setminus R \in H_k$; \label{it:henkin-class-complement}
		\item if $R \in H_k$ and $S \in H_k$, then $R \cap S \in H_k$; \label{it:henkin-class-intersection}
		\item if $R \in H_{k+1}$, then $\{(a_1, \dots, a_k) \in D^k \mid (a_1, \dots, a_k, a) \in R \text{ for all } a \in D \} \in H_k$. \label{it:henkin-class-forall}
	\end{enumerate}
	A \emph{Henkin structure} is a pair $(\structure, \henkinclass)$ consisting of an \fo-structure $\structure$ and a Henkin class $\henkinclass$ for $\structure$.
\end{definition}

Henkin classes are sufficiently rich to interpret \fo-formulas.
This result will be used in establishing the soundness of \lfoid w.r.t.~the Henkin semantics.

\begin{restatable}{proposition}{henkinclassinterpretsallfoformulas} \label{prop:henkin-class-interprets-all-fo-formulas}
	Let $\structure$ be an \fo-structure with domain $D$ and $\{ H_k \mid k \in \N \}$ a Henkin class for $\structure$.
	Let $\formula$ be an \fo-formula over $\voc{\structure}$, $x_1, \dots, x_k, y_1, \dots, y_n$ distinct variables and $b_1, \dots, b_n \in D$.
	Then
	\begin{equation*}
		\{ \bar{a} \in D^k \mid \structure[\bar{x}: \bar{a}, \, \bar{y}: \bar{b}] \models \formula \} \in H_k .
	\end{equation*}
\end{restatable}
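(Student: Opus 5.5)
We prove the statement by structural induction on $\formula$, simultaneously for all choices of $k$, of the distinct variables $\bar{x}, \bar{y}$ and of the parameters $\bar{b}$. Write $S_\formula(\bar{x};\bar{y},\bar{b}) \coloneq \{ \bar{a} \in D^k \mid \structure[\bar{x}: \bar{a}, \bar{y}: \bar{b}] \models \formula \}$. We may assume without loss of generality that $\formula$ is built from atoms with $\lnot$, $\land$ and $\forall$ only. The other connectives and $\exists$ are semantically definable from these, and $S_\formula$ depends only on the truth conditions of $\formula$. We may also assume that every object symbol occurring freely in $\formula$ is among $\bar{x}, \bar{y}$; any other free symbols are interpreted by $\structure$ itself and can be treated as fixed.

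\textbf{Atoms.} The set for $\top$ is $D^k$ and the set for $\bot$ is $\emptyset$. We obtain both from item \ref{it:henkin-class-terms} applied to the equality relation of item \ref{it:henkin-class-equality}, using the terms $x_1, x_1$ (which gives $D^k$ for $k \geq 1$). Complementation via item \ref{it:henkin-class-complement} then gives $\emptyset$. The case $k=0$ is handled similarly through projection.

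For $t = s$, take $R = \{(a,a)\mid a\in D\} \in H_2$ from item \ref{it:henkin-class-equality}. Applying item \ref{it:henkin-class-terms} with the terms $t, s$ and the variables $\bar{x}$, $\bar{y}$ with values $\bar{b}$ yields exactly $S_{t=s}$.

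For $Q(\bar{t})$, take $R = Q^{\structure} \in H_{\arity{Q}}$ by item \ref{it:henkin-class-predicate} and apply item \ref{it:henkin-class-terms} with the terms $\bar{t}$.

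\textbf{Propositional connectives.} We have $S_{\lnot \formula} = D^k \setminus S_\formula$, which lies in $H_k$ by item \ref{it:henkin-class-complement}. We have $S_{\formula \land \formulatwo} = S_\formula \cap S_\formulatwo$, which lies in $H_k$ by item \ref{it:henkin-class-intersection}. In both cases we use the induction hypothesis for the same $\bar{x}, \bar{y}, \bar{b}$.

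\textbf{Quantifier.} This is the only case needing care. Let $\formula \doteq \forall z: \formulatwo$. By renaming the bound variable (which does not change the truth conditions), we may assume $z$ is distinct from all of $\bar{x}, \bar{y}$. Apply the induction hypothesis to $\formulatwo$ with the variable tuple $(x_1,\dots,x_k,z)$ and the same $\bar{y}, \bar{b}$. This gives $R = S_\formulatwo(\bar{x}z;\bar{y},\bar{b}) \in H_{k+1}$. Item \ref{it:henkin-class-forall} then yields $\{\bar{a} \mid (\bar{a},a)\in R \text{ for all } a\in D\} \in H_k$, and this set is exactly $S_{\forall z:\formulatwo}$.

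The point of carrying general $\bar{x}, \bar{y}, \bar{b}$ through the induction is that the hypothesis is invoked with a \emph{different} arity and variable list. This is the step I expect to require the most bookkeeping. It requires that the induction hypothesis range over all variable lists, which is why the statement is formulated with arbitrary distinct variables and parameters. Item \ref{it:henkin-class-projection} is not strictly needed under this formulation, but can be used alternatively: one could instead fix parameters one at a time by projecting. I would use it to handle the degenerate case $k=0$ and the case where $z$ coincides with some $y_j$, should we prefer to avoid renaming.
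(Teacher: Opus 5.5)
Your proof is correct and follows the paper's argument essentially step for step. Both use structural induction quantified over all variable lists and parameters, apply items (i), (ii) and (iv) to the atoms, items (v) and (vi) to $\lnot$ and $\land$, and handle $\forall$ with a fresh bound variable, the induction hypothesis at arity $k+1$, and item (vii), reducing the remaining connectives by equivalence. Your explicit treatment of $\top$, $\bot$ and the case $k=0$ is a small addition that the paper's proof leaves out.
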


\begin{proof}
	See Appendix \ref{app:proofs-section-alternative-semantics}.
\end{proof}

Conversely, one can obtain Henkin classes by collecting \fo-\emph{definable sets}.

\begin{restatable}{proposition}{fodefinablesetsformhenkinclass} \label{prop:fo-definable-sets-form-henkin-class}
	Let $\structure$ be an \fo-structure with domain $D$.
	For every $k \in \N$, let $H_k$ be the set of all sets of the $\{ \bar{a} \in D^k \mid \structure[\bar{x}:\bar{a}, \, \bar{y}:\bar{b}] \models \formula \}$
	such that $\formula$ is an \fo-formula over $\voc{\structure}$, $\bar{x}$ and $\bar{y}$ are tuples of distinct object symbols (not sharing any object symbol) and $\bar{b}$ a tuple of elements of $D$ with the same length as $\bar{y}$.
	Then $\{ H_k \mid k \in \N \}$ is a Henkin class for $\structure$, which we call the \emph{Henkin class of \fo-definable sets for $\structure$}.
\end{restatable}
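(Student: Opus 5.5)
The plan is to verify the seven closure conditions \ref{it:henkin-class-equality}--\ref{it:henkin-class-forall} of Definition \ref{def:henkin-class} one by one. Each set in $H_k$ has the form $\{ \bar{a} \in D^k \mid \structure[\bar{x}:\bar{a}, \bar{y}:\bar{b}] \models \formula \}$, which we abbreviate as $[\formula]_{\bar{x};\bar{y}:\bar{b}}$. For each required set, I will exhibit a formula, a tuple of free variables and a tuple of parameters that define it. A preliminary observation is needed throughout: by renaming bound variables and choosing fresh object symbols, we may always assume that $\bar{x}$ and $\bar{y}$ are disjoint from any finite set of symbols we want to avoid. Renaming free variables together with the corresponding assignment also leaves the defined set unchanged, by the standard renaming lemma of \fo. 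Likewise, extra parameters $\bar{y}$ not occurring freely in $\formula$ do not affect the set.

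Conditions \ref{it:henkin-class-equality} and \ref{it:henkin-class-predicate} are immediate. For \ref{it:henkin-class-equality}, take $\formula \doteq x_1 = x_2$ with no parameters. For \ref{it:henkin-class-predicate}, take $\formula \doteq Q(x_1,\dots,x_k)$. Conditions \ref{it:henkin-class-complement} and \ref{it:henkin-class-intersection} come from $\lnot$ and $\land$. For intersection, the two defining formulas may use different variable tuples and parameters. I will first rename both to the same $\bar{x}$, then rename the parameters so that their tuples $\bar{y}, \bar{y}'$ are disjoint, and finally use $\formula \land \formula'$ with the combined parameter tuple $\bar{y}\bar{y}'$ and values $\bar{b}\bar{b}'$. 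For \ref{it:henkin-class-forall}, given $R = [\formula]_{x_1\dots x_{k+1};\bar{y}:\bar{b}}$, I will take $\forall x_{k+1}: \formula$ with variables $x_1,\dots,x_k$. For \ref{it:henkin-class-projection}, fixing the last coordinate to $a$ amounts to moving $x_{k+1}$ from the variable tuple into the parameter tuple with value $a$. The same formula then works, since the $x$'s and $y$'s remain distinct.

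The main obstacle is \ref{it:henkin-class-terms}, the substitution of terms, because of variable capture and clashes between the symbols of the new terms and the variables of the defining formula. Let $R = [\formula]_{\bar{z};\bar{w}:\bar{c}}$ with $\bar{z} = (z_1,\dots,z_k)$. We are given terms $\bar{t}$, distinct $\bar{x}, \bar{y}$ and values $\bar{b}$, and want $\{\bar{a} \mid \bar{t}^{\structure[\bar{x}:\bar{a},\bar{y}:\bar{b}]} \in R\}$. First, rename $\bar{z}$ and $\bar{w}$, together with the bound variables of $\formula$, to fresh symbols not occurring in $\bar{t}, \bar{x}, \bar{y}$. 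Then use $\formulatwo \doteq \formula[\bar{t}/\bar{z}]$ (simultaneous, safe substitution) with variables $\bar{x}$ and parameters $\bar{y}\bar{w}$, valued $\bar{b}\bar{c}$.

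To show that $\formulatwo$ defines the right set, I will invoke the substitution lemma: $\structure' \models \formula[\bar{t}/\bar{z}]$ iff $\structure'[\bar{z}:\bar{t}^{\structure'}] \models \formula$, where $\structure' = \structure[\bar{x}:\bar{a},\bar{y}:\bar{b},\bar{w}:\bar{c}]$. Since $\bar{w}$ is fresh, $\bar{t}^{\structure'} = \bar{t}^{\structure[\bar{x}:\bar{a},\bar{y}:\bar{b}]}$. Since the only free symbols of $\formula$ outside $\voc{\structure}$ are $\bar{z},\bar{w}$, the extra interpretations of $\bar{x},\bar{y}$ are irrelevant.

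One subtlety remains. The paper does not separate variables from constants, so $\bar{x}$ or $\bar{y}$ may be object symbols interpreted in $\structure$, and $\bar{t}$ may contain constants of $\voc{\structure}$. Overriding such symbols by assignment is harmless, because the freshness of the renamed $\bar{z},\bar{w}$ and of the bound variables guarantees that $\formula$ itself does not mention $\bar{x}$ or $\bar{y}$. I will state this freshness convention once, at the start, and then carry out the remaining checks routinely.
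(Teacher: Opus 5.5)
Your proposal is correct and is essentially the paper's proof. Both verify conditions \ref{it:henkin-class-equality}--\ref{it:henkin-class-forall} one by one by exhibiting a defining formula: $\lnot$, $\land$ (after making the parameter tuples disjoint) and $\forall x_{k+1}$ handle \ref{it:henkin-class-complement}--\ref{it:henkin-class-forall}, and $\formula[\bar t/\bar z]$ with renamed parameters plus the substitution lemma handles \ref{it:henkin-class-terms}. Your treatment of \ref{it:henkin-class-projection}, moving $x_{k+1}$ into the parameter tuple instead of renaming it to a fresh $z$ as the paper does, is a harmless simplification.

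One correction to your closing remark, in an edge case the paper also ignores. Renaming $\bar z$, $\bar w$ and the bound variables does not stop $\formula$ from containing a constant of $\voc{\structure}$ that coincides with some $x_i$ or $y_j$. For example, $\formula \doteq (z = c)$ with $t_1 \doteq c$ and $x_1 \doteq c$ yields $D$ instead of $\{c^{\structure}\}$. To fix this, also replace such free constants of $\formula$ by fresh parameters whose values are their interpretations in $\structure$.
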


\begin{proof}
	See Appendix \ref{app:proofs-section-alternative-semantics}.
\end{proof}

Henceforth, we fix an \foid-definition $\defn$ and a $\defn$-context $\context$.
We define the structure space $\structurespace$, the relation $\sqsubseteq$ and the operator $\cons$ as in Section \ref{sec:stable-semantics}.
Let $\structure \in \structurespace$, and let $\henkinclass = \{ H_k \mid k \in \N \}$ be a Henkin class for $\structure$.
We say that a structure $\structuretwo \in \structurespace$ is an $\henkinclass$-\emph{point} if $P^{\structuretwo} \in H_{\arity{P}}$ for every $P \in \defi{\defn}$.
We say that $\henkinstructure$ is a \emph{Henkin model} of $\defn$ in $\context$ if $\structure$ is the \emph{least pre-fixed $\henkinclass$-point} of $\cons(\cdot, \structure)$ (given that it exists).
This means that $\structure = \cons(\structure, \structure)$, and that $\structuretwo = \cons(\structuretwo, \structure)$ implies $\structure \sqsubseteq \structuretwo$ for every $\henkinclass$-point $\structuretwo$.
The \emph{Henkin satisfaction relation} $\modelsh$ in \foid is a relation between Henkin structures and \foid-formulas, defined by the following rules:
\begin{itemize}
	\item $(\structure, \henkinclass) \modelsh t=s$ if $t^{\structure} = s^{\structure}$;
	\item $(\structure, \henkinclass) \modelsh P(\bar{t})$ if $\bar{t}^{\structure} \in P^{\structure}$;
	\item $(\structure, \henkinclass) \modelsh \formula \land \formulatwo$ if $(\structure, \henkinclass) \modelsh \formula$ and $(\structure, \henkinclass) \modelsh \formulatwo$;
	\item $(\structure, \henkinclass) \modelsh \lnot \formula$ if not $(\structure, \henkinclass) \modelsh \formula$;
	\item $(\structure, \henkinclass) \modelsh \forall x: \formula$ if $(\structure[x:a], \henkinclass) \modelsh \formula$ for all $a \in \dom{\structure}$;
	\item $(\structure, \henkinclass) \modelsh \defn$ if $(\structure|_{\sym{\defn}}, \henkinclass)$ is a Henkin model of $\defn$ in $\structure|_{\pars{\defn}}$.
\end{itemize}
As before, the rules for $\lor$, $\Rightarrow$, $\Leftrightarrow$ and $\exists$ can de derived from the above rules through reformulation in terms of $\lnot$, $\land$ and $\forall$.
Note that if $\formula$ is an \fo-formula, then $(\structure, \henkinclass) \modelsh \formula$ comes down to $\structure \models \formula$.
Since the definition of Henkin model does not refer to individual definition rules, any \foid-definition is equivalent to its normalization under the Henkin semantics.

Any stable model $\structure$ of $\defn$ in $\context$ corresponds to a Henkin model $\henkinstructure$ of $\defn$ in $\context$, where $\henkinclass$ is the \emph{trivial Henkin class} for $\structure$, i.e., the Henkin class $\{ H_k \mid k \in \N \}$ where $H_k = \Pow{D^k}$ for all $k \in \N$.
Indeed, if $\henkinclass$ is the trivial Henkin class for $\structure$, then the least pre-fixed $\henkinclass$-point of $\cons(\cdot,\structure)$ comes down to the least (pre-)fixpoint of $\cons(\cdot,\structure)$.
Conversely, there generally exist Henkin models $\henkinstructure$ of $\defn$ in $\context$ such that $\structure$ is not a stable model of $\defn$ in $\context$.
Brotherston derived the existence of such Henkin models for the natural number definition, using the existence of non-standard models of arithmetic \cite{Brotherston2006PhD}.

\begin{proposition} \label{prop:completion-henkin}
	Let $\defn$ be a definition, $P$ a defined predicate of $\defn$, and $\bar{y}$ an $\arity{P}$-tuple of object symbols not occurring freely in the body $\formula$ of any definitional rule of $\defn$ of the form $\defrul$.
	Let $\henkinstructure$ be a Henkin structure such that $\henkinstructure \modelsh \defn$. 
	Then $\structure \models \forall \bar{y} : P(\bar{y}) \Leftrightarrow \formula_P(\bar{y})$.
\end{proposition}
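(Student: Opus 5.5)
The plan is to unfold the definition of Henkin model and use the fixpoint condition. Since $\henkinstructure \modelsh \defn$, the pair $(\structure|_{\sym{\defn}}, \henkinclass)$ is a Henkin model of $\defn$ in $\context \coloneq \structure|_{\pars{\defn}}$. By definition, $\structure|_{\sym{\defn}}$ is then the least pre-fixed $\henkinclass$-point of $\cons(\cdot, \structure|_{\sym{\defn}})$. In particular, it satisfies the fixpoint equation $\structure|_{\sym{\defn}} = \cons(\structure|_{\sym{\defn}}, \structure|_{\sym{\defn}})$. The minimality part of the definition is not needed at all; only this equation is used.

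Next, I evaluate the fixpoint equation at $P$. By definition of $\cons$,
\[ P^{\structure} = P^{\cons(\structure|_{\sym{\defn}}, \structure|_{\sym{\defn}})} = \{ \bar{a} \mid (\structure|_{\sym{\defn}})_{\structure|_{\sym{\defn}}}[\bar{y}:\bar{a}] \models \formula_P(\bar{y})' \}. \]
The key observation is that in the structure $\structuretwo_{\structuretwo}$ (taking $\structuretwo = \structure|_{\sym{\defn}}$), each primed symbol $Q'$ receives the same interpretation $Q^{\structuretwo}$ as $Q$. Hence, by a routine structural induction on formulas, $\structuretwo_{\structuretwo}[\bar{y}:\bar{a}] \models \formula'$ iff $\structuretwo[\bar{y}:\bar{a}] \models \formula$ for every \fo-formula $\formula$ over $\sym{\defn}$. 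Here $\formula'$ only renames negative occurrences of defined predicates to symbols with identical interpretation.

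Combining these, $\bar{a} \in P^{\structure}$ iff $\structure|_{\sym{\defn}}[\bar{y}:\bar{a}] \models \formula_P(\bar{y})$. Since $\formula_P(\bar{y})$ only contains symbols in $\sym{\defn}$ together with the free variables $\bar{y}$ (and possibly object symbols occurring freely in $\defn$, which are parameters), restricting the vocabulary does not affect truth. So $\structure[\bar{y}:\bar{a}] \models P(\bar{y}) \Leftrightarrow \formula_P(\bar{y})$ for all $\bar{a}$, which yields $\structure \models \forall \bar{y} : P(\bar{y}) \Leftrightarrow \formula_P(\bar{y})$.

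The main obstacle is bookkeeping rather than substance. One point is the condition on $\bar{y}$, which guarantees that the merged body is well-formed and that $\cons$ is defined with respect to the same $\bar{y}$. If the $\bar{y}$ used to define $\cons$ differs from the one in the statement, I would note that choosing a different fresh tuple is semantically insignificant, in the same way as the normalization footnote. The other point is the justification of the priming lemma, which I would state and prove by a short induction on formulas.
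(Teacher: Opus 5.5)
Your proposal is correct and follows the same route as the paper: both arguments use only the fixpoint equation $\structure = \cons(\structure, \structure)$, which the paper's definition of Henkin model explicitly includes, and read it off at $P$. The only difference is that you spell out two points the paper covers with ``without loss of generality $\voc{\structure} = \sym{\defn}$'' and ``by definition of $\cons$''. These are the restriction of $\structure$ to $\sym{\defn}$, and the fact that priming is harmless because $\structure_{\structure}$ interprets each $Q'$ as $Q^{\structure}$.
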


\begin{proof}
	We can assume without loss of generality that $\voc{\structure} = \sym{\defn}$.
	Since $(\structure, \henkinclass)$ is a Henkin model of $\defn$, $\structure = \cons(\structure, \structure)$.
	This implies that $P^\structure = P^{\cons(\structure, \structure)}$.
	By definition of $\cons$, this means that $\structure[\bar{x} : \bar{a}] \models P(\bar{x})$ iff $\structure[\bar{y} : \bar{a}] \models \formula_P(\bar{y})$ for all $\bar{a}$.
	Thus, $(\structure, \henkinclass) \modelsh \forall \bar{y} : P(\bar{y}) \Leftrightarrow \formula_P(\bar{y})$.
\end{proof}

\begin{corollary} \label{cor:completion-st}
	Let $\defn$ be a definition, $P$ a defined predicate of $\defn$, and $\bar{y}$ an $\arity{P}$-tuple of object symbols not occurring freely in the body $\formula$ of any definitional rule of $\defn$ of the form $\defrul$.
	Let $\structure$ be an \fo-structure such that $\structure \modelsst \defn$. 
	Then $\structure \models \forall \bar{y} : P(\bar{y}) \Leftrightarrow \formula_P(\bar{y})$.
\end{corollary}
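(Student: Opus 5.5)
The plan is to deduce Corollary~\ref{cor:completion-st} from Proposition~\ref{prop:completion-henkin}, by passing from the stable model to a Henkin model with the trivial Henkin class. Let $\structure$ satisfy $\structure \modelsst \defn$. Then $\structure|_{\sym{\defn}}$ is a stable model of $\defn$ in the context $\context \coloneq \structure|_{\pars{\defn}}$. Let $D = \dom{\structure}$, and let $\henkinclass = \{ H_k \mid k \in \N\}$ be the trivial Henkin class, with $H_k = \Pow{D^k}$.

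First, I will check that $\henkinclass$ is indeed a Henkin class in the sense of Definition~\ref{def:henkin-class}. This is immediate: every condition \ref{it:henkin-class-equality}--\ref{it:henkin-class-forall} only asks that a certain subset of some $D^k$ belongs to $H_k$, and every subset of $D^k$ does.

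Second, I will show that $(\structure|_{\sym{\defn}}, \henkinclass)$ is a Henkin model of $\defn$ in $\context$. Since every $\sym{\defn}$-structure expanding $\context$ is an $\henkinclass$-point, the least pre-fixed $\henkinclass$-point of $\cons(\cdot, \structure|_{\sym{\defn}})$ is simply its least pre-fixpoint. By the Knaster--Tarski remark in Section~\ref{sec:stable-semantics}, this coincides with $\lfp(\cons(\cdot, \structure|_{\sym{\defn}})) = \ST(\structure|_{\sym{\defn}})$. That value equals $\structure|_{\sym{\defn}}$ because $\structure|_{\sym{\defn}}$ is a fixpoint of $\ST$.

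This is the only step needing care, and it is mainly bookkeeping. One must match the paper's phrasing of a Henkin model: $\structure = \cons(\structure,\structure)$, and $\structure \sqsubseteq \structuretwo$ for every $\henkinclass$-point $\structuretwo$ that is a fixpoint (equivalently, a pre-fixpoint) of $\cons(\cdot,\structure)$. This phrasing follows directly from $\structure|_{\sym{\defn}}$ being the least fixpoint of $\cons(\cdot, \structure|_{\sym{\defn}})$.

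Finally, I will lift this to the full structure. The Henkin class $\henkinclass$ is still the trivial, hence valid, Henkin class for $\structure$ itself. By the clause for definitions in $\modelsh$, we get $(\structure, \henkinclass) \modelsh \defn$. Proposition~\ref{prop:completion-henkin} then gives $\structure \models \forall \bar{y} : P(\bar{y}) \Leftrightarrow \formula_P(\bar{y})$, as required.
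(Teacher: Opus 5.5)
Your proposal is correct and takes the same route as the paper, which proves the corollary in one line by applying Proposition~\ref{prop:completion-henkin} with the trivial Henkin class for $\structure$. You additionally spell out the routine checks that the paper covers only by its remark in Section~\ref{sec:henkin-semantics}: that the trivial class meets Definition~\ref{def:henkin-class}, and that a stable model paired with this class is a Henkin model.
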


\begin{proof}
	This follows from Proposition \ref{prop:completion-henkin} by taking $\henkinclass$ to be the trivial Henkin class for $\structure$.
\end{proof}

\begin{restatable}{proposition}{subdefinitionhenkin}\label{prop:subdefinition-Henkin-model}
	Let $\henkinstructure$ be an Henkin structure, $\defn$ a definition and $\defn'$ a subdefinition of $\defn$.
	If $\henkinstructure \modelsh \defn$, then $\henkinstructure \modelsh \defn'$.
\end{restatable}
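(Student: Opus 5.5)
We unfold the definition of Henkin model for both definitions and compare the relevant operators. Write $\cons$ for the operator of $\defn$ and $\cons'$ for that of $\defn'$. We may assume $\voc{\structure}$ contains $\sym{\defn}$. Parameters of $\defn'$ include some defined predicates of $\defn$ (those not in $\defi{\defn'}$), which are now fixed by the context $\structure|_{\pars{\defn'}}$. Since $\defn'$ contains, for every $P\in\defi{\defn'}$, all rules of $\defn$ with head $P$, the merged bodies agree: $\formula_{P,\defn'}(\bar y)\doteq\formula_{P,\defn}(\bar y)$ for every $P\in\defi{\defn'}$, up to renaming of $\bar y$.

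\textbf{Fixpoint condition.} Since $\structure=\cons(\structure,\structure)$, we have $P^\structure=\{\bar a\mid \structure[\bar y:\bar a]\models\formula_P(\bar y)\}$ for all $P\in\defi{\defn}$. In particular this holds for $P\in\defi{\defn'}$, so $\structure|_{\sym{\defn'}}=\cons'(\structure|_{\sym{\defn'}},\structure|_{\sym{\defn'}})$. Note that evaluating $\formula_P(\bar y)'$ in $\structure_\structure$ equals evaluating $\formula_P(\bar y)$ in $\structure$.

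\textbf{Leastness.} Let $\structuretwo'$ be an $\henkinclass$-point of the structure space of $\defn'$ with $\cons'(\structuretwo',\structure)\sqsubseteq\structuretwo'$. We must show $P^\structure\subseteq P^{\structuretwo'}$ for $P\in\defi{\defn'}$. The plan is to extend $\structuretwo'$ to a $\defn$-structure $\structuretwo$ by interpreting each $R\in\defi{\defn}\setminus\defi{\defn'}$ as $R^\structure$. Then $\structuretwo$ is an $\henkinclass$-point by item \ref{it:henkin-class-predicate} of Definition \ref{def:henkin-class}, because $R^\structure\in H_{\arity R}$.

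We check that $\structuretwo$ is pre-fixed for $\cons(\cdot,\structure)$, splitting by predicate.
\begin{itemize}
\item For $P\in\defi{\defn'}$: in $\defn'$ the symbols $R\notin\defi{\defn'}$ are parameters and are not primed. Under $\structuretwo_\structure$, positive occurrences of $R$ read $R^{\structuretwo}=R^\structure$ and negative ones read $R'=R^\structure$. Hence $P^{\cons(\structuretwo,\structure)}=P^{\cons'(\structuretwo',\structure)}\subseteq P^{\structuretwo}$.
\item For $R\notin\defi{\defn'}$: we need $R^{\cons(\structuretwo,\structure)}\subseteq R^\structure$, and this is the obstacle, since it is not immediate. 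We have $\structuretwo\sqsubseteq\structure$? Not known a priori.
\end{itemize}

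To get around this, I would instead use $\structuretwo\sqcap\structure$. Define $\structuretwo''$ with $P^{\structuretwo''}=P^{\structuretwo'}\cap P^\structure$ for $P\in\defi{\defn'}$ and $R^{\structuretwo''}=R^\structure$ otherwise. This is still an $\henkinclass$-point by item \ref{it:henkin-class-intersection}.

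By monotonicity of $\cons(\cdot,\structure)$ and $\structuretwo''\sqsubseteq\structure$, we get $\cons(\structuretwo'',\structure)\sqsubseteq\cons(\structure,\structure)=\structure$. This settles the $R$ components, since $R^{\structuretwo''}=R^\structure$.

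For the $P\in\defi{\defn'}$ components, we also need $P^{\cons(\structuretwo'',\structure)}\subseteq P^{\structuretwo'}$. This follows from monotonicity: $\structuretwo''\sqsubseteq\structuretwo$, so $P^{\cons(\structuretwo'',\structure)}\subseteq P^{\cons(\structuretwo,\structure)}=P^{\cons'(\structuretwo',\structure)}\subseteq P^{\structuretwo'}$, using the first bullet above. Combined with $\subseteq P^\structure$, this gives $\subseteq P^{\structuretwo''}$.

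So $\structuretwo''$ is a pre-fixed $\henkinclass$-point. Leastness of $\structure$ for $\defn$ then gives $\structure\sqsubseteq\structuretwo''$, hence $P^\structure\subseteq P^{\structuretwo'}$.

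Two points need care. First, the paper phrases leastness via fixpoints ($\structuretwo=\cons(\structuretwo,\structure)$) rather than pre-fixpoints. I would read it as the least pre-fixed $\henkinclass$-point, matching the stated intent, and apply the argument accordingly. Second, one must make sure that the definitions of $\cons$ for $\defn$ and for $\defn'$ agree on the bodies of $\defi{\defn'}$, given the different priming conventions. The first bullet above handles this.
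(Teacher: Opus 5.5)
Your proof is correct and, once you drop the first attempt with the plain extension, it is the paper's proof. The paper uses exactly your structure $\structuretwo''$ (interpreting each $P\in\defi{\defn'}$ as $P^{\structuretwo'}\cap P^{\structure}$ and everything else as in $\structure$), gets the $\henkinclass$-point property from closure under intersection, and checks pre-fixedness for both kinds of predicates via monotonicity of $\cons(\cdot,\structure)$. Your fixpoint step, deriving $\structure|_{\sym{\defn'}}=\cons'(\structure|_{\sym{\defn'}},\structure|_{\sym{\defn'}})$ from the equality of merged bodies, is if anything stated more carefully than in the paper.
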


\begin{proof}
	See Appendix \ref{app:proofs-section-alternative-semantics}.
\end{proof}

\begin{proposition} \label{prop:induction-scheme-henkin}
	Let $\defn$ be a definition and $\Pi$ a subset of $\defi{\defn}$.
	Let $\structure$ be an \fo-structure and $\henkinclass = \{ H_k \mid k \in \N \}$ a Henkin class for $\structure$ such that $\henkinstructure \modelsh \defn$. 
	Let $A_Q \in H_{\arity{Q}}$ for every $Q \in \Pi$.
	Then 
	\begin{equation} \label{eq:h-point-satisfies-indscheme}
		\structure[G_\Pi : A_\Pi] \models \Big(  \bigwedge_\Pi \forall \bar{y} : \formulatwo[G_{\Pi} / \Pi^+] \Rightarrow G_Q(\bar{s}) \Big) \Rightarrow \Big( \bigwedge_{P\in\Pi} \forall \bar{x} : P(\bar{x}) \Rightarrow G_P(\bar{x}) \Big) 
	\end{equation}
	where the first conjunction ranges over all definitional rules $\forall \bar{y} : G_Q(\bar{s}) \rul \formulatwo$ of $\defn$ with $Q\in \Pi$. 
\end{proposition}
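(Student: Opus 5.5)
First I would use Proposition \ref{prop:subdefinition-Henkin-model} to reduce to the case $\Pi = \defi{\defn}$. Let $\defn'$ be the subdefinition of $\defn$ consisting of all definitional rules whose head predicate lies in $\Pi$. Then $\defi{\defn'} = \Pi$, and the predicates in $\defi{\defn}\setminus\Pi$ become parameters of $\defn'$. The conjunction in (\ref{eq:h-point-satisfies-indscheme}) ranges over exactly the rules of $\defn'$. In those rule bodies, $\formulatwo[G_\Pi/\Pi^+]$ is the same whether the defined predicates are taken w.r.t.\ $\defn$ or $\defn'$, because only occurrences of predicates in $\Pi$ are replaced. Since $\henkinstructure \modelsh \defn'$, it therefore suffices to prove the statement for $\defn'$ with $\Pi = \defi{\defn'}$. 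Renaming, I assume $\Pi = \defi{\defn}$ and, as in the proof of Proposition \ref{prop:completion-henkin}, $\voc{\structure} = \sym{\defn}$.

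Next I would suppose the antecedent holds in $\structure[G_\Pi : A_\Pi]$ and build the structure $\structuretwo \in \structurespace$ that expands $\structure|_{\pars{\defn}}$ and sets $Q^{\structuretwo} := A_Q$ for each $Q \in \Pi$. Because each $A_Q \in H_{\arity{Q}}$, $\structuretwo$ is an $\henkinclass$-point. The key claim is $\cons(\structuretwo, \structure) \sqsubseteq \structuretwo$. Fix $Q$ and $\bar a$ with $\structuretwo_{\structure}[\bar y:\bar a] \models \formula_Q(\bar y)'$. Recall that $\formula_Q(\bar y)$ is the disjunction of the formulas $\exists \bar x: \bar y = \bar s \land \formulatwo$. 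So some rule $\forall \bar x: Q(\bar s)\rul\formulatwo$ and some witnesses $\bar c$ satisfy $\bar a = \bar s$ and make $\formulatwo'$ true. In $\formulatwo'$, positive occurrences of defined predicates are read in $\structuretwo$, i.e.\ as $A_R$, and negative occurrences are read in $\structure$. This is exactly the truth value of $\formulatwo[G_\Pi/\Pi^+]$ in $\structure[G_\Pi:A_\Pi]$ under the same assignment. A short structural induction on formulas makes this precise: positivity and negativity are preserved under the connectives, and the two substitutions act on the same occurrences. The antecedent then gives $\bar a \in A_Q = Q^{\structuretwo}$.

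To finish, I would apply the Henkin model property. $\structure$ is the least pre-fixed $\henkinclass$-point of $\cons(\cdot,\structure)$, so $\structure \sqsubseteq \structuretwo$, i.e.\ $P^{\structure} \subseteq A_P$ for every $P\in\Pi$, which is the consequent. One caveat concerns the wording of the definition of Henkin model. It phrases leastness over fixpoints ($\structuretwo = \cons(\structuretwo,\structure)$), but the intended reading, signalled by ``least pre-fixed $\henkinclass$-point'', is leastness over pre-fixpoints, and I would use that reading.

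I expect the main obstacle to be the reduction to the subdefinition, where I have to check carefully that the $\cons$ operator for $\defn'$ treats the non-$\Pi$ predicates correctly as parameters fixed by $\structure$. The correspondence between $\formulatwo'$ and $\formulatwo[G_\Pi/\Pi^+]$ is the other delicate point, but it is routine.
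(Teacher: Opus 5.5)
Your proposal is correct and is essentially the paper's proof. The paper also shows that $\structure[\Pi : A_\Pi]$ is a pre-fixed $\henkinclass$-point of $\cons(\cdot, \structure)$ when $\Pi = \defi{\defn}$, reads leastness over pre-fixpoints as you do, and obtains general $\Pi$ via Proposition~\ref{prop:subdefinition-Henkin-model}; the only differences are that it does this reduction last rather than first, and that it rewrites the antecedent in terms of the merged bodies $\formula_Q(\bar{y})$ instead of unpacking their disjuncts directly.
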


\begin{proof}
	We first show that (\ref{eq:h-point-satisfies-indscheme}) holds for $\Pi = \defi{\defn}$.
	We can assume without loss of generality that $\voc{\structure} = \sym{\defn}$, so that $\structure$ is the least pre-fixed $\henkinclass$-point of $\cons(\cdot, \structure)$.
	Suppose that $\structure[G_\Pi : A_\Pi] \models \forall \bar{y} : \formulatwo[G_\Pi / \Pi^+] \Rightarrow G_Q(\bar{s})$ for all definitional rules $\forall \bar{y} : G_Q(\bar{s}) \rul \formulatwo$ of $\defn$.
	By Definition \ref{def:merged-body}, this implies that $\structure[G_\Pi : A_\Pi] \models \forall \bar{y} : \formula_Q(\bar{y})[G_\Pi / \Pi^+] \Rightarrow G_Q(\bar{y})$ for all defined predicates $Q$ of $\defn$.
	Let $\structuretwo \coloneq \structure[\Pi : A_\Pi]$.
	Since $\structure$ is an $\henkinclass$-point, and since $A_Q \in H_{\arity{Q}}$ for every $Q \in \Pi$, $\structuretwo$ is an $\henkinclass$-point as well.
	Furthermore, $\structuretwo$ is a pre-fixpoint of $\cons(\cdot, \structure)$.
	To show this, take $P \in \defi{\defn}$, and let $\bar{a} \in P^{\cons(\structuretwo, \structure)}$.
	Then $\structuretwo_{\structure}[\bar{x}:\bar{a}] \models \formula'_P(\bar{x})$.
	Since $\structuretwo_{\structure}[\bar{x}:\bar{a}]$ interprets the positive occurrences of defined predicates of $\defn$ in $\formula'_P(\bar{x})$ as $\structuretwo[\bar{x}:\bar{a}]$, and other occurrences of non-logical symbols as $\structure[\bar{x}:\bar{a}]$, this comes down to $\structure[G_\Pi : A_\Pi, \bar{x} : \bar{a}] \models \formula_P(\bar{x})[G_\Pi/\Pi^+]$ (keeping in mind that $\structure$ and $\structuretwo$ differ only on $\defi{\defn}$).
	By assumption, this implies that $\structure[G_\Pi : A_\Pi, \bar{x} : \bar{a}] \models P(\bar{x})$.
	Thus, $\bar{a} \in P^{\structure[G_\Pi : A_\Pi]} = A_P = P^{\structuretwo}$.
	This shows that $P^{\cons(\structuretwo, \structure)} \subseteq P^{\structuretwo}$, and hence, that $\cons(\structuretwo, \structure) \sqsubseteq \structuretwo$.
	Since $\structure$ is the least pre-fixed $\henkinclass$-point of $\cons(\cdot, \structure)$, we obtain that $\structure \sqsubseteq \structuretwo$.
	This means that $P^{\structure} \subseteq P^{\structuretwo}$ or, stated differently, that $\structure[G_\Pi : A_\Pi] \models \forall \bar{x} : P(\bar{x}) \Rightarrow G_P(\bar{x})$ for all defined predicates $P$ of $\defn$.
	This shows that (\ref{eq:h-point-satisfies-indscheme}) holds for $\Pi = \defi{\defn}$.
	
	The statement for general $\Pi$ now follows by Proposition \ref{prop:subdefinition-Henkin-model}.
	Indeed, let $\Pi$ be an arbitrary subset of $\defi{\defn}$, and consider the subdefinition $\defn'$ of $\defn$ consisting of all definitional rules $\forall \bar{y} : G_Q(\bar{s}) \rul \formulatwo$ of $\defn$ with $Q\in \Pi$.
	By Proposition \ref{prop:subdefinition-Henkin-model}, $\henkinstructure \modelsh \defn'$, and by what we proved earlier, (\ref{eq:h-point-satisfies-indscheme}) holds for $\Pi=\defi{\defn'}$.
\end{proof}

\begin{corollary} \label{cor:induction-scheme-st}
	Let $\defn$ be a definition and $\Pi$ a subset of $\defi{\defn}$.
	Let $\structure$ be an \fo-structure such that $\structure \modelsst \defn$. 
	Then $\structure \models \indax{\defn, \Pi}$. 
\end{corollary}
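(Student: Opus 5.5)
This follows from Proposition \ref{prop:induction-scheme-henkin} by specializing to the trivial Henkin class. Let $\structure$ be an \fo-structure with $\structure \modelsst \defn$, and let $\henkinclass = \{ H_k \mid k \in \N \}$ be the trivial Henkin class for $\structure$, i.e.\ $H_k = \Pow{D^k}$ with $D = \dom{\structure}$. This family trivially satisfies all closure conditions \ref{it:henkin-class-equality}--\ref{it:henkin-class-forall} of Definition \ref{def:henkin-class}, since every relation lies in $H_k$.

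The first step is to check that $\henkinstructure \modelsh \defn$. As noted after the definition of the Henkin satisfaction relation, when $\henkinclass$ is trivial every structure in $\structurespace$ is an $\henkinclass$-point. Hence the least pre-fixed $\henkinclass$-point of $\cons(\cdot, \structure|_{\sym{\defn}})$ is just its least pre-fixpoint, which equals its least fixpoint $\lfp(\cons(\cdot, \structure|_{\sym{\defn}})) = \ST(\structure|_{\sym{\defn}})$. Since $\structure|_{\sym{\defn}}$ is a stable model, it is a fixpoint of $\ST$ and therefore exactly this least pre-fixed $\henkinclass$-point. So $(\structure|_{\sym{\defn}}, \henkinclass)$ is a Henkin model of $\defn$ in $\structure|_{\pars{\defn}}$, as required.

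The second step is to unfold $\indax{\defn, \Pi}$. By the standard second-order semantics, $\structure \models \indax{\defn, \Pi}$ means the following: for every choice of relations $A_Q \subseteq D^{\arity{Q}}$, $Q \in \Pi$, the structure $\structure[G_\Pi : A_\Pi]$ satisfies the implication in (\ref{eq:h-point-satisfies-indscheme}). Every such $A_Q$ belongs to $H_{\arity{Q}} = \Pow{D^{\arity{Q}}}$, so Proposition \ref{prop:induction-scheme-henkin} yields (\ref{eq:h-point-satisfies-indscheme}) for each choice, and hence $\structure \models \indax{\defn, \Pi}$.

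The only point needing care is the first step: the notion of stable model (fixpoint of $\ST$, defined via least fixpoints) must coincide with that of Henkin model (least pre-fixed $\henkinclass$-point) for the trivial class. This rests on the Knaster--Tarski fact, recalled in Section \ref{sec:stable-semantics}, that the least fixpoint of the monotone operator $\cons(\cdot, \structuretwo)$ is also its least pre-fixpoint. A minor secondary point is the vocabulary restriction: $\structure$ may interpret symbols beyond $\sym{\defn}$, but the induction axiom only mentions symbols of $\defn$ and the variables $G_\Pi$, so we may restrict to $\sym{\defn}$, as in the proof of Proposition \ref{prop:induction-scheme-henkin}.
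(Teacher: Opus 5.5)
Your proposal is correct and follows the same route as the paper, which proves the corollary in one line by applying Proposition~\ref{prop:induction-scheme-henkin} to the trivial Henkin class for $\structure$. Your additional checks are sound and simply make explicit what the paper leaves implicit: the trivial class is a Henkin class; for that class, stable models coincide with Henkin models via Knaster--Tarski; and every relation qualifies as an $A_Q$.
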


\begin{proof}
	This follows from Proposition \ref{prop:induction-scheme-henkin} by taking $\henkinclass$ to be the trivial Henkin class for $\structure$.
\end{proof}

%% file: soundness.tex
\section{Soundness} \label{sec:soundness}

In this section, we show that our sequent calculus \scfoid is \emph{sound} w.r.t.~the well-founded, the stable and the Henkin semantics.
This means that every \scfoid-theorem is valid under all three semantics.
Since \scfoid is sound w.r.t.~the well-founded and the stable semantics, it can be used to reason about (fragments of) logic programs under both semantics.

\begin{theorem}[Soundness] \label{thm:soundness}
	\scfoid is sound with respect to the well-founded, the stable and the Henkin semantics. 
\end{theorem}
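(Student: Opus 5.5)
We prove, by induction on the height of an \scfoid-proof, that every inference rule of \scfoid preserves validity under each of the three semantics. That is, if all premises of a rule instance are valid, the conclusion is valid. Validity of a sequent means truth in every (Henkin) structure interpreting its symbols.

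For the rules inherited from \scfo the verification is the classical one. The satisfaction relations $\modelswf$, $\modelsst$ and $\modelsh$ treat connectives, quantifiers and equality in the standard Tarskian way; they differ only in the clause for definitions. Definitions occurring as side formulas behave like atomic truth values in these arguments. The only points needing care are (subst), ($\forall$R) and ($\exists$L). There one uses that the truth of an \foid-formula, including a definition, in $\structure[x:t^{\structure}]$ coincides with the truth of its substitution instance. The side conditions on object symbols guarantee that a countermodel for the conclusion can be modified at $x$ to a countermodel for the premise. Since definitions may contain free object symbols, this substitution lemma must be checked for the definition clause as well. It holds because the well-founded model, the stable models and the Henkin models are determined by the context, and substitution only changes how parameters are named.

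The rule (def R) is handled uniformly. Suppose a structure satisfies $\defn$ and $\formula[\bar{s}/\bar{x}]$ for a rule $\forall\bar{x}: P(\bar{t})\rul\formula$. Then it satisfies the merged body $\formula_P(\bar{t}[\bar{s}/\bar{x}])$, since this body is the disjunction $\exists\bar{x}: \bar{y}=\bar{t}\land\formula \lor \dots$. By Proposition~\ref{prop:completion-wf}, Corollary~\ref{cor:completion-st} or Proposition~\ref{prop:completion-henkin} respectively, it therefore satisfies $P(\bar{t}[\bar{s}/\bar{x}])$.

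The main step is (def L). Fix a structure (respectively a Henkin structure $\henkinstructure$) satisfying $\Gamma$, $\defn$ and $P(\bar{v})$, and suppose it falsifies all of $\Delta$; we derive a contradiction. For each $Q\in\Pi$, let $A_Q$ be the relation defined by $F_Q[\bar{z}]$ in $\structure$, with the other free symbols fixed by $\structure$. By validity of each minor premise, together with the side condition that $\bar{y}$ does not occur freely in $\Gamma,\Delta,\defn$, we may vary $\bar{y}$ freely. It follows that $\structure\models\forall\bar{y}:\formulatwo[F_\Pi/\Pi^+]\Rightarrow F_Q[\bar{s}]$ for every rule of $\defn$ with $Q\in\Pi$.

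Now apply the induction axiom with $G_Q$ interpreted as $A_Q$: Proposition~\ref{prop:induction-scheme-wf} in the well-founded case, Corollary~\ref{cor:induction-scheme-st} in the stable case. The Henkin case needs extra care: here we must ensure $A_Q\in H_{\arity{Q}}$, which is exactly Proposition~\ref{prop:henkin-class-interprets-all-fo-formulas}, and then we apply Proposition~\ref{prop:induction-scheme-henkin}. In each case we obtain $\bar{v}^{\structure}\in A_P$, i.e.\ $\structure\models F_P[\bar{v}]$. Validity of the major premise then forces some formula of $\Delta$ to be true, contradicting our assumption.

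The expected obstacle is bookkeeping rather than ideas. We must make precise that interpreting $G_R$ as the set defined by $F_R$ makes $\formulatwo[G_\Pi/\Pi^+]$ and $\formulatwo[F_\Pi/\Pi^+]$ agree; this requires renaming the bound variables in $F_R$ and handling the free parameters of $F_R$. In the Henkin case we must additionally check membership of $A_Q$ in the Henkin class, using parameters $\bar{b}$ for the extra free symbols.
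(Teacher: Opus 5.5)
Your proposal is correct and follows essentially the same route as the paper: induction on proof height, (def R) via the completion results, and (def L) by interpreting $G_Q$ as the set defined by $F_Q$, using Proposition~\ref{prop:henkin-class-interprets-all-fo-formulas} for Henkin-class membership and then Proposition~\ref{prop:induction-scheme-henkin}. The paper writes out only the Henkin case and calls the other two similar, whereas you cite Proposition~\ref{prop:induction-scheme-wf} and Corollary~\ref{cor:induction-scheme-st} explicitly, and you also flag that the substitution lemma must be checked for definitions with free object symbols, a point the paper leaves to the soundness of \scfo.
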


\begin{proof}
	Let $\sequent$ be an \scfoid-theorem.
	Then $\sequent$ admits an \scfoid-proof $T$.
	We prove by induction on the height of $T$ that $\sequent$ is valid under the Henkin semantics.
	The reasoning for the well-founded and the stable semantics is similar.
	We only discuss the cases where $T$ derives $\sequent$ by (def R) or (def L), as the cases corresponding to the other inference rules are covered by the soundness of \scfo:\footnote{
		We refer to \cite{Brotherston2006PhD} for an explicit proof of these cases.
	}
	
	\textbf{Case}: 
	the final inference in $T$ is of the form
	\[
	\begin{prooftree}
		\hypo{\Gamma, \defn \vdash \formula[\bar{s}/\bar{x}], \Delta}
		\infer1[(def R)]{\Gamma, \defn \vdash P(\bar{t}[\bar{s}/\bar{x}]), \Delta}
	\end{prooftree}
	\]
	By induction hypothesis, $\Gamma, \defn \vdash \formula[\bar{s}/\bar{x}], \Delta$ is valid under the Henkin semantics.
	Let $\henkinstructure$ be a Henkin structure, and assume that $\henkinstructure \modelsh \bigwedge \Gamma \land \defn$.
	Since $\henkinstructure \modelsh \Gamma, \defn \vdash \formula[\bar{s}/\bar{x}], \Delta$, this implies that $\henkinstructure \modelsh \formula[\bar{s}/\bar{x}] \lor \bigvee \Delta$.
	If $\henkinstructure \modelsh\formula[\bar{s}/\bar{x}]$, then $\structure \modelsh P(\bar{t}[\bar{s}/\bar{x}])$, by Proposition \ref{prop:completion-henkin} and Definition \ref{def:merged-body}.
	If $\henkinstructure \not\modelsh\formula[\bar{s}/\bar{x}]$, then $\henkinstructure \modelsh \bigvee \Delta$.
	Thus, in either case, $\henkinstructure \modelsh P(\bar{t}[\bar{s}/\bar{x}]) \lor \bigvee \Delta$.
	This shows that $\bigwedge \Gamma, \defn \vdash P(\bar{t}[\bar{s}/\bar{x}]), \Delta$ is valid under the Henkin semantics.
	
	the final inference in $T$ is of the form
	\[
	\begin{prooftree}
		\hypo{\text{minor premises}}
		\hypo{\Gamma, \defn, F_P[\bar{v}] \vdash \Delta}
		\infer2[(def L)]{\Gamma, \defn, P(\bar{v}) \vdash \Delta}
	\end{prooftree}
	\]
	with minor premises $\Gamma, \defn, \formulatwo[F_{\Pi}/\Pi^+] \vdash F_Q[\bar{s}], \Delta$
	for all definitional rules $\forall \bar{y}: Q(\bar{s}) \rul \formulatwo$ in $\defn$ with $Q \in \Pi$.
	By induction hypothesis, every premise is valid under the Henkin semantics.
	Let $\henkinstructure$ be a Henkin structure, and assume that $\henkinstructure \modelsh \bigwedge \Gamma \land \defn \land P(\bar{v})$.
	Assume by contradiction that $\henkinstructure \not\modelsh \bigvee \Delta$.
	Since the minor premises are valid under the Henkin semantics, and since, for every definitional rule $\forall \bar{y}: Q(\bar{s}) \rul \formulatwo$ in $\defn$ with $Q \in \Pi$, none of the object symbols in $\bar{y}$ occurs freely in $\Gamma$, $\Delta$ or $\defn$, 
	\begin{equation} \label{eq:henkin-struct-minor-premises}
		\structure \models \forall \bar{y}: \formulatwo[F_{\Pi}/\Pi^+] \Rightarrow F_Q[\bar{s}]
	\end{equation}
	for all definitional rules $\forall \bar{y}: Q(\bar{s}) \rul \formulatwo$ in $\defn$ with $Q \in \Pi$.
	For every $Q \in \Pi$, we define 
	\begin{equation} \label{eq:A_Q}
		A_Q \coloneq \{ \bar{b} \in \dom{\structure}^{\arity{Q}} \mid \structure[\bar{y}:\bar{b}] \models \formula_Q(\bar{y}) \} .
	\end{equation}
	By Proposition \ref{prop:henkin-class-interprets-all-fo-formulas}, $A_Q \in H_{\arity{Q}}$ for every $Q \in\Pi$, where $\henkinclass = \{H_k \mid k \in \N \}$.
	By (\ref{eq:A_Q}) we can restate (\ref{eq:henkin-struct-minor-premises}) as $\structure[G_\Pi : A_\Pi] \models \forall \bar{y}: \formulatwo[G_{\Pi}/\Pi^+] \Rightarrow G(\bar{s})$. 
	By Proposition \ref{prop:induction-scheme-henkin}, $\structure[G_\Pi : A_\Pi] \models \forall \bar{x}: P(\bar{x}) \Rightarrow G_P(\bar{x})$.
	In other words, $\henkinstructure \modelsh \forall \bar{x}: P(\bar{x}) \Rightarrow F_P[\bar{x}]$.
	Since $\henkinstructure \modelsh P(\bar{v})$, we find that $\henkinstructure \modelsh F_P[\bar{v}]$.
	Since $\henkinstructure \modelsh \bigwedge \Gamma \land \defn$, and since $\Gamma, \defn, F_P[\bar{v}] \vdash \Delta$ is valid under the Henkin semantics, this entails that $\henkinstructure \modelsh \bigvee \Delta$.
	However, this contradicts our assumption, showing that $\Gamma, \defn, P(\bar{v}) \vdash \Delta$ is valid under the Henkin semantics.
\end{proof}

The fact that \scfoid is sound with respect to all three semantics provides insights into the limitations of \scfoid, as it cannot prove statements on which the semantics disagree. 

\begin{example} \label{ex:def-choice-rule-stable}
	The definition
	\begin{equation*}
		\defn = \defin{
			P \rul \lnot Q\\
			Q \rul \lnot P
		}
	\end{equation*}
	has no two-valued well-founded models, but it has two (two-valued) stable models: one in which $P$ is true and $Q$ is false, and one in which $P$ is false and $Q$ is true.
	Hence, even though $\vdash \lnot \defn$ is valid under the well-founded semantics, it is not \scfoid-provable, as it is not valid under the stable semantics.
\end{example}

%% file: completeness.tex
\section{Completeness} \label{sec:completeness}

The converse property of soundness is \emph{completeness}, which says that any valid formula is provable.
\scfoid is not complete w.r.t.~the well-founded or the stable semantics.
This is the case even when restricting to positive definitions, for which both the well-founded and the stable semantics boil down to the standard least fixpoint semantics.
The incompleteness is a consequence of G\"odel's first incompleteness theorem, as \scfoid admits a canonical theory of the natural numbers, by capturing Peano arithmetic.
More specifically, this theory contains two \fo-formulas expressing the unique names axioms for $\mathit{zero}/0$ and $\mathit{succ}/1$:
\begin{equation*}
	\forall n: \mathit{succ}(n) \neq \mathit{zero}, \hspace{21mm} \forall n, m: \mathit{succ}(n) = \mathit{succ}(m) \Rightarrow s=m ,
\end{equation*}
a definition and an \fo-formula expressing the induction scheme:
\begin{equation*}
	\defin{
		\mathit{Nat}(\mathit{zero})\\
		\forall n: \mathit{Nat}(\mathit{succ}(n)) \rul \mathit{Nat}(n)
	},
	\hspace{21mm}
	\forall n: \mathit{Nat}(n),
	\hspace{33mm}
\end{equation*}
and four \fo-formulas defining the arithmetic operators $\mathit{plus}$ and $\mathit{times}$:
\begin{equation*}
\hspace{9mm}
	\forall n: \mathit{plus}(\mathit{succ}(n), \mathit{zero}) = \mathit{succ}(n),
\hspace{15mm}
	\forall n,m: \mathit{plus}(\mathit{succ}(n),m) = \mathit{succ}(\mathit{plus}(n,m)),
\hspace{6mm}
\end{equation*}
\begin{equation*}
\hspace{12mm}
	\forall n: \mathit{times}(\mathit{succ}(n), \mathit{zero}) = \mathit{zero},
\hspace{18mm}
	\forall n,m: \mathit{times}(\mathit{succ}(n), m) = \mathit{plus}(\mathit{times}(n,m) ,n).
\hspace{3mm}
\end{equation*}

Although completeness w.r.t.~the well-founded or the stable semantics is unachievable, we can still get a theoretical grasp on the deductive strength of \scfoid by proving weaker completeness results.
We do so by proving completeness of a propositional fragment of \foid w.r.t.~the stable semantics, and by proving completeness of a first-order order fragment of \scfoid w.r.t.~the Henkin semantics.

\subsection{Propositional Completeness}

In Section \ref{sec:correspondence}, we saw that any definition $\defn$ logically entails its first-order approximation $\foapp{\defn}$ under the well-founded semantics, but that $\foapp{\defn}$ does not always logically entail $\defn$.
By Propositions \ref{prop:completion-henkin} and \ref{prop:induction-scheme-henkin} and Corollaries \ref{cor:completion-st} and \ref{cor:induction-scheme-st}, the same is true for the stable and the Henkin semantics.
When restricting to \emph{propositional definitions}, i.e., definitions in which all predicates are $0$-ary, $\foapp{\defn}$ logically entails $\defn$ under the stable semantics.\footnote{
	$\foapp{\defn}$ also logically entails $\defn$ under the Henkin semantics. 
	The proof is analogous to the proof of Lemma \ref{lem:foapp-prop-def-entails-def}.
}

\begin{lemma} \label{lem:foapp-prop-def-entails-def}
	Let $\defn$ be a propositional definition.
	Then $\foapp{\defn}$ logically entails $\defn$ under the stable semantics.
\end{lemma}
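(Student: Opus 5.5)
Fix a structure $\structure$ with $\structure \models \foapp{\defn}$; we show $\structure \modelsst \defn$, i.e.\ that $\structure|_{\sym{\defn}}$ is a stable model of $\defn$ in $\context \coloneq \structure|_{\pars{\defn}}$. Since every defined predicate is $0$-ary, the structure space $\structurespace$ is finite (one truth value per defined symbol). Moreover, any relation on defined symbols, i.e.\ any truth assignment to them, is expressible by the \fo-formulas $\top$ and $\bot$. The plan is to use the two parts of the unfolding of stable models given in Section~\ref{sec:stable-semantics}: (a) $\structure = \cons(\structure,\structure)$, and (b) $\structure$ is the least pre-fixpoint of $\cons(\cdot,\structure)$.

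For (a), use $\matimps{\defn}$ for one direction: if a body of a rule for $P$ is true, then $P$ is true, so $\cons(\structure,\structure) \sqsubseteq \structure$. For the converse, apply an instance of $\indscheme{\defn}$ with $\Pi = \{P\}$ and $F_P \doteq \formula_P$ (the merged body, here a closed formula since $P$ is $0$-ary). The minor conditions then read $\formulatwo[\formula_P/P^+] \Rightarrow \formula_P$ for each rule $P \rul \formulatwo$. These need not hold outright, so I would instead take $F_P \doteq \formula_P$ only after first establishing (b). An alternative is to take $F_R \doteq \formula_R$ for all $R$ with $\Pi = \defi{\defn}$. In that case the minor premise asks that $\formulatwo[\formula_\Pi/\Pi^+]$ imply $\formula_Q$. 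This holds in $\structure$ because $\formula_R \Rightarrow R$ (from $\matimps{\defn}$) and monotonicity in positive occurrences give $\formulatwo[\formula_\Pi/\Pi^+] \Rightarrow \formulatwo$, which in turn implies $\formula_Q$ since $\formulatwo$ is a disjunct of $\formula_Q$ up to the equality conjuncts. The conclusion $P \Rightarrow \formula_P$ then yields $\structure \sqsubseteq \cons(\structure,\structure)$.

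For (b), let $\structuretwo \in \structurespace$ satisfy $\cons(\structuretwo,\structure) \sqsubseteq \structuretwo$. For each $R \in \defi{\defn}$, choose $F_R \doteq \top$ if $R^{\structuretwo} = \true$ and $F_R \doteq \bot$ otherwise, and use the instance of $\indscheme{\defn}$ with $\Pi = \defi{\defn}$ and each $P$ in turn. In $\structure$, the formula $\formulatwo[F_\Pi/\Pi^+]$ evaluates exactly as $\formulatwo'$ does in $\structuretwo_\structure$: positive occurrences are read in $\structuretwo$, negative ones in $\structure$, and parameters agree. So the pre-fixpoint property of $\structuretwo$ gives precisely the antecedent of the scheme instance. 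The instance then yields $P^\structure \Rightarrow F_P$, i.e.\ $P^\structure \leqt P^\structuretwo$, so $\structure \sqsubseteq \structuretwo$. Together with (a), this shows that $\structure$ is the least fixpoint of $\cons(\cdot,\structure)$, i.e.\ $\ST(\structure) = \structure$.

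The main obstacle is the bookkeeping in (b): checking that substituting closed formulas $\top/\bot$ for positive occurrences matches the evaluation of $\formulatwo'$ in $\structuretwo_\structure$. This includes the passage between individual rules and merged bodies, which is handled via Definition~\ref{def:merged-body}, with the equality conjuncts trivial for $0$-ary heads. The point that makes the propositional case work is that every candidate $\structuretwo$ is \fo-definable, so the scheme $\indscheme{\defn}$ is as strong as the second-order axiom $\indax{\defn,\defi{\defn}}$.
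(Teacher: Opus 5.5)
Your proof is correct and is essentially the paper's argument. $\matimps{\defn}$ makes $\structure$ a pre-fixpoint of $\cons(\cdot,\structure)$, and the instance of $\indscheme{\defn}$ with $\Pi=\defi{\defn}$ and $F_R \in \{\top,\bot\}$ read off from $\structuretwo$ places $\structure$ below every pre-fixpoint. The second scheme instance in (a), with $F_R \doteq \formula_R$, is valid but redundant: a least pre-fixpoint of a monotone operator is already its least fixpoint by Knaster--Tarski, and your instance is just the syntactic form of that step, with $\cons(\structure,\structure)$ as the candidate pre-fixpoint. Likewise the $\Pi=\{P\}$ version you doubted works by the same monotonicity argument.
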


\begin{proof}
	Let $\mci$ be an \fo-structure such that $\mci \modelsst \foapp{\defn}$.
	Without loss of generality, we can assume that $\voc{\mci} = \sym{\defn}$, and that $\defn$ is in normal form.
	To prove that $\mci \modelsst \defn$, we need to show that $\mci$ is the least pre-fixpoint of $\cons_{\defn}(\cdot, \mci)$.
	Since $\mci \modelsst \matimps{\defn}$, $\mci$ is a pre-fixpoint of $\cons_{\defn}(\cdot, \mci)$.
	Indeed, for any defined predicate $P$ of $\defn$, we have that $P^{\cons_{\defn}(\mci, \mci)} = \formula_P^{\mci} \leqt P^{\mci}$, as $\mci \modelsst \formula_P \Rightarrow P$.
	Since $\mci \modelsst \indscheme{\defn}$, $\mci$ is the \emph{least} pre-fixpoint of $\cons_{\defn}(\cdot, \mci)$.
	Indeed, suppose that $\structuretwo$ is a pre-fixpoint of $\cons_{\defn}(\cdot, \mci)$, i.e., $\cons_{\defn}(\structuretwo, \mci) \sqsubseteq \structuretwo$.
	Let $P$ be a defined predicate of $\defn$.
	For every defined predicate $Q$ of $\defn$, let $G_Q \doteq \top$ if $\structuretwo \models Q$ and $G_Q \doteq \bot$ if $\structuretwo \not\models Q$.
	Then $\mci \modelsst \bigwedge_{Q \in \Pi} (\formula_Q[G_\Pi / \Pi^+] \Rightarrow G_Q)$.
	Since $\mci \modelsst \bigwedge_{Q \in \Pi} (\formula_Q[G_\Pi / \Pi^+] \Rightarrow G_Q) \Rightarrow (P \Rightarrow G_P) $, it follows that $\mci \modelsst P \Rightarrow G_P$.
	This means that $P^{\mci} \leqt P^{\structuretwo}$.
	Since $P$ was chosen arbitrarily, this shows that $\mci \sqsubseteq \structuretwo$.
\end{proof}

\begin{theorem}[Propositional completeness] \label{thm:propositional-completeness}
	Let $\Gamma$ and $\Delta$ be sets of propositional formulas and $\defns$ propositional definitions.
	If\/ $\Gamma, \defn_1, \dots, \defn_n \vdash \Delta$ is valid under the stable semantics, then it is \scfoid-provable. 
\end{theorem}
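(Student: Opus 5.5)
The strategy is to reduce the statement to completeness of \scfo, using Lemma \ref{lem:foapp-prop-def-entails-def} for the semantic step and the Correspondence Theorem (Theorem \ref{thm:correspondence}) for the syntactic step. The sequent $\Gamma, \defn_1, \dots, \defn_n \vdash \Delta$ is regular, since $\Gamma$ and $\Delta$ contain only (propositional) \fo-formulas. Hence by Theorem \ref{thm:correspondence} it suffices to show that $\Gamma, \foapp{\defn_1}, \dots, \foapp{\defn_n} \vdash \Delta$ is \scfo-provable.

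\textbf{Step 1 (semantic transfer).} I would show that this \fo-sequent is classically valid. Let $\structure$ be an \fo-structure with $\structure \models \Gamma$ and $\structure \models \foapp{\defn_i}$ for all $i$. By Lemma \ref{lem:foapp-prop-def-entails-def}, $\structure \modelsst \defn_i$ for every $i$. Validity of $\Gamma, \defn_1, \dots, \defn_n \vdash \Delta$ under the stable semantics then gives $\structure \models \bigvee \Delta$.

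One subtlety is that Lemma \ref{lem:foapp-prop-def-entails-def} is stated for a single definition. It applies to each $\defn_i$ separately, because stable satisfaction of $\defn_i$ depends only on $\structure|_{\sym{\defn_i}}$. The other symbols, including predicates defined by other definitions that occur as parameters, are treated as part of the context.

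\textbf{Step 2 (finitization).} The set $\foapp{\defn_i}$ is infinite, since the induction hypotheses $F_Q$ range over all \fo-formulas. Because everything is propositional, however, each $F_Q$ may be taken in a finite vocabulary. To apply completeness of \scfo, I would invoke compactness of classical \fo (or propositional) logic to obtain a finite subset $X \subseteq \bigcup_i \foapp{\defn_i}$ with $\Gamma, X \models \Delta$. Gentzen-style completeness of \scfo (equivalently of \lk with the equality rules, which is standard) then yields an \scfo-proof of $\Gamma, X \vdash \Delta$. Weakening gives the infinite sequent $\Gamma, \foapp{\defn_1}, \dots, \foapp{\defn_n} \vdash \Delta$, which is provable in the paper's sense since a finite proof exists.

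Alternatively, this step can avoid compactness by picking, in the proof of Lemma \ref{lem:foapp-prop-def-entails-def}, only the instances with $G_Q \in \{\top, \bot\}$. There are finitely many of these, so a finite $X$ suffices directly, and ordinary propositional completeness applies.

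\textbf{Step 3 (conclusion).} Apply Theorem \ref{thm:correspondence} to conclude that $\Gamma, \defn_1, \dots, \defn_n \vdash \Delta$ is \scfoid-provable.

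The main obstacle is Step 1. The subtle points are the reduction of Lemma \ref{lem:foapp-prop-def-entails-def} from normal form (handled by Proposition \ref{prop:normalization-equiderivable} and the remark that stable models depend only on merged bodies) and its application with several definitions sharing symbols. Restricting the lemma's instances to $G_Q \in \{\top,\bot\}$ seems the cleanest way to make Step 2 trivial as well.
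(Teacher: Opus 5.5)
Your proposal is correct and follows the paper's proof. Lemma~\ref{lem:foapp-prop-def-entails-def} turns stable validity into classical validity of $\Gamma, \foapp{\defn_1}, \dots, \foapp{\defn_n} \vdash \Delta$, completeness of \scfo then gives an \scfo-proof, and Theorem~\ref{thm:correspondence} transfers it back to \scfoid. Your extra care, applying the lemma one definition at a time and handling the infinite sequent via compactness or the finitely many $\{\top,\bot\}$-instances, only makes explicit what the paper's one-line appeal to completeness of \scfo leaves implicit.
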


\begin{proof}
	Suppose that $\Gamma, \defn_1, \dots, \defn_n \vdash \Delta$ is valid under the stable semantics.
	By Lemma \ref{lem:foapp-prop-def-entails-def},\\ $\Gamma, \foapp{\defn_1}, \dots, \foapp{\defn_n} \vdash \Delta$ is valid in \fo.
	By completeness of \scfo, we obtain that $\Gamma, \foapp{\defn_1}, \dots, \foapp{\defn_n} \vdash \Delta$ is \scfo-provable.
	By Theorem \ref{thm:correspondence}, $\Gamma, \defn_1, \dots, \defn_n \vdash \Delta$ is \scfoid-provable.
\end{proof}

As shown by Example \ref{ex:def-choice-rule-stable}, propositional completeness does not hold under the well-founded semantics.
Indeed, for the definition $\defn$ of this example, the sequent $\defn \vdash$ is valid under the well-founded semantics, but it is not \scfoid-provable (as it is not valid under the stable semantics).

\subsection{Henkin Completeness}

The Henkin completeness result for \scfoid relies on the following lemma, which provides a relation between the first-order approximations of definitions and Henkin classes of \fo-definable sets.

\begin{lemma} \label{lem:satisfaction-fo-app-def-fo-def-class}
	Let $\defn$ be a definition, $\structure$ an \fo-structure, and $\henkinclass$ the Henkin class of \fo-definable sets for $\structure$.
	If $\structure \models \foapp{\defn}$, then $\henkinstructure \modelsh \defn$.
\end{lemma}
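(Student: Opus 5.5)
We unfold the definition of Henkin model directly, exploiting that every set in the Henkin class of \fo-definable sets is the extension of some \fo-formula (with parameters). First we check that $\henkinclass$ is indeed a Henkin class for $\structure$ (Proposition \ref{prop:fo-definable-sets-form-henkin-class}). We may assume without loss of generality that $\voc{\structure} \supseteq \sym{\defn}$ and work with $\structure|_{\sym{\defn}}$; the structure $\structure$ itself is then an $\henkinclass$-point by item \ref{it:henkin-class-predicate} of Definition \ref{def:henkin-class}. Since $\foapp{\defn}$ does not change under normalization in any essential way (the merged body is a disjunction of the rule bodies), we argue in terms of merged bodies $\formula_P(\bar{y})$ as in the proof of Proposition \ref{prop:induction-scheme-henkin}.

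The first step is to show $\structure = \cons(\structure,\structure)$. The inclusion $\cons(\structure,\structure) \sqsubseteq \structure$ follows from $\structure \models \matimps{\defn}$, since bodies imply heads. For the converse, we use one instance of $\indscheme{\defn}$: take $\Pi = \defi{\defn}$ and $F_Q[\bar{z}] \doteq \formula_Q(\bar{z})$ for every $Q$. The minor conditions $\forall \bar{y}: \formulatwo[F_\Pi/\Pi^+] \Rightarrow \formula_Q(\bar{s})$ hold in $\structure$. Indeed, $\structure \models \forall \bar{x}: \formula_R(\bar{x}) \Rightarrow R(\bar{x})$, so replacing positive occurrences of $R$ by $\formula_R$ only strengthens (by monotonicity of positive occurrences, the semantic counterpart of Lemma \ref{lem:safe-replacement-of-positive-and-negative-occurrences}). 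Hence $\formulatwo[F_\Pi/\Pi^+]$ implies $\formulatwo$, which in turn implies $\formula_Q(\bar{s})$. The scheme then gives $P(\bar{x}) \Rightarrow \formula_P(\bar{x})$, i.e.\ $\structure \sqsubseteq \cons(\structure,\structure)$.

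The second step is minimality. Let $\structuretwo$ be an $\henkinclass$-point with $\cons(\structuretwo,\structure) \sqsubseteq \structuretwo$; proving $\structure \sqsubseteq \structuretwo$ for all pre-fixed points suffices. For each $Q \in \defi{\defn}$, the set $Q^{\structuretwo}$ lies in $H_{\arity{Q}}$, so there are an \fo-formula $\chi_Q$ over $\voc{\structure}$, variables $\bar{z},\bar{w}_Q$ and parameters $\bar{b}_Q$ with $Q^{\structuretwo} = \{\bar{a} \mid \structure[\bar{z}:\bar{a},\bar{w}_Q:\bar{b}_Q] \models \chi_Q\}$. We rename the $\bar{w}_Q$ apart so that they are fresh, distinct across $Q$, and disjoint from the rule variables. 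We set $F_Q \doteq \chi_Q$ and apply the instance of $\indscheme{\defn}$ with $\Pi = \defi{\defn}$. Because the members of $\indscheme{\defn}$ are formulas whose free object symbols include $\bar{w}$, and $\structure$ satisfies them under every assignment of those symbols, we may evaluate in $\structure[\bar{w}:\bar{b}]$; this is the \textbf{main technical point}, namely that parameters are handled by free object symbols, which the paper treats as constants. In that structure, $\formulatwo[F_\Pi/\Pi^+]$ evaluates exactly as $\formulatwo'$ in $\structuretwo_\structure$: positive defined occurrences are read through $\structuretwo$, while negative ones stay as in $\structure$. The pre-fixpoint property of $\structuretwo$ therefore yields the minor conditions, rule by rule, since each rule body is a disjunct of the merged body. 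The scheme then gives $P^{\structure} \subseteq P^{\structuretwo}$ for every $P$.

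Combining both steps, $\structure$ is the least pre-fixed $\henkinclass$-point of $\cons(\cdot,\structure)$, so $\henkinstructure \modelsh \defn$. The expected obstacle is the bookkeeping in the minimality step. We must ensure that the free variables of the $\chi_Q$ do not clash with the bound $\bar{y}$ of rules or with each other, and we must justify that satisfaction of the scheme formulas under arbitrary assignments to $\bar{w}$ is available. We would handle the latter by noting that $\structure \models \varphi$ for a formula with free symbol $w$ is read with $w$ interpreted, so we first expand $\structure$ by constants. Alternatively, we can take the instance with $\forall \bar{w}$ prefixed, which $\structure$ satisfies because $\indscheme{\defn}$ contains it for every choice of the new symbols.
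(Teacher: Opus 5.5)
Your skeleton is the paper's. $\matimps{\defn}$ gives $\cons(\structure,\structure)\sqsubseteq\structure$, and leastness comes from the instance of $\indscheme{\defn}$ with $\Pi=\defi{\defn}$ and $F_Q$ the formula defining $Q^{\structuretwo}$. Your extra instance with $F_Q\doteq\formula_Q$ is correct but redundant, and the paper omits it. $\cons(\structure,\structure)$ is an $\henkinclass$-point by Proposition~\ref{prop:henkin-class-interprets-all-fo-formulas} and is pre-fixed by monotonicity, so leastness already forces $\structure\sqsubseteq\cons(\structure,\structure)$.

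The gap is in the step you call the main technical point. You are right that parameters need care: the paper's proof writes $Q^{\structuretwo}=\{\bar b\mid\structure[\bar z:\bar b]\models F_Q\}$ and silently drops the parameters that the Henkin class of \fo-definable sets allows. However, neither of your justifications works.
\begin{itemize}
\item In this paper free object symbols are constants, and $\structure\models\formula$ is only defined when $\structure$ interprets them. So $\structure\models\foapp{\defn}$ yields the instance with $F_Q\doteq\chi_Q$ only at $\bar w^{\structure}$ (or not at all if $\bar w\notin\voc{\structure}$), not ``under every assignment''.
\item Expanding $\structure$ by constants produces formulas the hypothesis says nothing about.
\item $\indscheme{\defn}$ contains no $\forall\bar w$-prefixed formulas. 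Having an instance for every choice of symbols only reaches elements that are denotations of symbols.
\end{itemize}
Reading the hypothesis as satisfaction of universal closures would rescue you, but that is not what the proof of Theorem~\ref{thm:henkin-completeness} supplies.

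The standard repair is to absorb the parameters into the hypothesis. Let $A(\bar w)$ be the conjunction of your minor conditions $\forall\bar y:\formulatwo[\chi_\Pi/\Pi^+]\Rightarrow\chi_Q[\bar s]$, and use the parameter-free hypotheses $G_Q[\bar z]\doteq\forall\bar w:(A(\bar w)\Rightarrow\chi_Q)$. These satisfy the minor conditions. Given $\formulatwo[G_\Pi/\Pi^+]$ and values $\bar c$ with $A(\bar c)$, every positively substituted $G_R[\bar u]$ implies $\chi_R[\bar u]$ at $\bar c$. So $\formulatwo[\chi_\Pi/\Pi^+]$ holds at $\bar c$ by positive monotonicity, and $A(\bar c)$ then gives $\chi_Q[\bar s]$ at $\bar c$. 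The genuine member of $\indscheme{\defn}$ for $G_\Pi$ therefore gives $\structure\models\forall\bar x:P(\bar x)\Rightarrow G_P[\bar x]$. Instantiating $\bar w$ at $\bar b$, where $A(\bar b)$ holds by your pre-fixpoint computation, yields $P^{\structure}\subseteq P^{\structuretwo}$.
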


\begin{proof}
	Without loss of generality, we can assume that $\voc{\mci} = \sym{\defn}$, and that $\defn$ is in normal form.
	We show that $\structure$ is the least pre-fixed $\henkinclass$-point of $\cons_{\defn}(\cdot, \structure)$.
	Trivially, $\structure$ is an $\henkinclass$-point.
	 The fact that $\structure$ is a pre-fixpoint of $\cons_{\defn}(\cdot, \structure)$ follows from the fact that $\structure \models \matimps{\defn}$.
	 Indeed, for any defined predicate $P$ of $\defn$, $\structure \models \forall \bar{y}: \formula_P(\bar{y}) \Rightarrow P(\bar{y})$, and therefore, $P^{\cons_{\defn}(\structure, \structure)} = \{ \bar{a} \mid \structure[\bar{y}:\bar{a}] \models \formula_P(\bar{y}) \} \subseteq \{ \bar{a} \mid \structure[\bar{y}:\bar{a}] \models P(\bar{y}) \} = P^{\structure}$.
	 Here $\bar{y}$ is an $\arity{P}$-tuple of object symbols not occurring freely in the body of any definitional rule of $\defn$ of the form $\defrul$.
	The fact that $\structure$ is the \emph{least} pre-fixed $\henkinclass$-point of $\cons_{\defn}(\cdot, \structure)$ follows from the fact that $\structure \models \indscheme{\defn}$.
	Indeed, let $\structuretwo$ be a pre-fixed $\henkinclass$-point of $\cons_{\defn}(\cdot, \structure)$. 
	Since $\structuretwo$ is an $\henkinclass$-point, there exists, for every $Q \in \defi{\defn}$, an \fo-formula $F_Q$ and an $\arity{Q}$ tuple of object symbols $\bar{z}$ such that $Q^{\structuretwo} = \{ \bar{b} \mid \mci[\bar{z}:\bar{b}] \models F_Q \}$. 
	For every $Q \in \defi{\defn}$, the fact that $Q^{\cons_{\defn}(\structuretwo, \structure)} \subseteq Q^{\structuretwo}$ can be restated as $\mci \models \forall \bar{y} : \formula_Q(\bar{y})[F_{\defi{\defn}} / \defi{\defn}^+](\bar{y}) \Rightarrow F_Q[\bar{y}]$.
	Since $\mci \models \left( \bigwedge_{Q \in \defi{\defn}} \forall \bar{y} : \formula_Q(\bar{y})[F_{\defi{\defn}} / \defi{\defn}^+] \Rightarrow F_Q[\bar{y}] \right) \Rightarrow \forall \bar{x} : P(\bar{x}) \Rightarrow F_P[\bar{x}]$, this implies that $\mci \models \forall \bar{x}: P(\bar{x}) \Rightarrow F_P[\bar{x}]$ for all $P\in \defi{\defn}$.
	Stated differently, $P^{\structure} \subseteq P^{\structuretwo}$ for all $P\in \defi{\defn}$, which means that $\structure \sqsubseteq \structuretwo$.
	This shows that $\structure$ is indeed the least pre-fixed $\henkinclass$-point of $\cons_{\defn}(\cdot, \structure)$.
\end{proof}

\begin{theorem}[Henkin completeness] \label{thm:henkin-completeness}
	Let $\regularsequent$ be a regular sequent.
	If\/ $\regularsequent$ is valid under the Henkin semantics, then it is \scfoid-provable.
\end{theorem}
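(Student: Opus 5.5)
The proof mirrors the argument for Theorem \ref{thm:propositional-completeness}, with Lemma \ref{lem:satisfaction-fo-app-def-fo-def-class} in the role of Lemma \ref{lem:foapp-prop-def-entails-def}. We first reduce to first-order validity: we show that $\Gamma, \foapp{\defn_1}, \dots, \foapp{\defn_n} \vdash \Delta$ is valid in \fo. Then completeness of \scfo yields an \scfo-proof of this (possibly infinite) sequent. Concretely, by compactness some finite subset of the left-hand side already suffices, and a finite sequent of that form is \scfo-provable; weakening then gives the infinite sequent. Finally, Theorem \ref{thm:correspondence} turns this into an \scfoid-proof of $\regularsequent$.

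For the validity step, let $\structure$ be an \fo-structure interpreting all symbols of the sequent, with $\structure \models \Gamma$ and $\structure \models \foapp{\defn_i}$ for every $i$. Let $\henkinclass$ be the Henkin class of \fo-definable sets for $\structure$, which is a Henkin class by Proposition \ref{prop:fo-definable-sets-form-henkin-class}. By Lemma \ref{lem:satisfaction-fo-app-def-fo-def-class}, $\henkinstructure \modelsh \defn_i$ for each $i$. Since $\Gamma$ and $\Delta$ consist of \fo-formulas, Henkin satisfaction coincides with ordinary satisfaction, so $\henkinstructure \modelsh \bigwedge\Gamma$. Henkin validity of the sequent then gives $\henkinstructure \modelsh \bigvee\Delta$, and hence $\structure \models \bigvee \Delta$, as required.

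The main point to check is the application of Lemma \ref{lem:satisfaction-fo-app-def-fo-def-class}. The lemma is stated for a single definition and a structure possibly interpreting extra symbols. We must make sure that one fixed $\henkinclass$, the class of sets \fo-definable in the full structure $\structure$ with parameters, works simultaneously for all $\defn_i$. We must also check that the lemma's reduction to $\voc{\structure}=\sym{\defn}$ is harmless here. That reduction is harmless because the Henkin model condition only concerns $\structure|_{\sym{\defn_i}}$ together with $\henkinclass$. The sets in $\henkinclass$ may be defined using symbols outside $\sym{\defn_i}$. The induction scheme $\indscheme{\defn_i}$ quantifies over arbitrary \fo-formulas $F_Q$, including formulas that mention these extra symbols and free object symbols standing for parameters $\bar{b}$. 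Parameters can be handled by renaming them to fresh object symbols and then universally instantiating, since the scheme holds under every assignment. Once this is verified, the remaining steps are direct applications of earlier results.
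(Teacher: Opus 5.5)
Your argument is correct and matches the paper's proof: Henkin validity together with Lemma~\ref{lem:satisfaction-fo-app-def-fo-def-class}, applied with the Henkin class of \fo-definable sets of $\structure$, yields \fo-validity of $\Gamma, \foapp{\defn_1}, \dots, \foapp{\defn_n} \vdash \Delta$, and then completeness of \scfo (your compactness-and-weakening remark just makes the infinite-sequent step explicit) and Theorem~\ref{thm:correspondence} finish. One caveat on your side remark about parameters: renaming them to fresh object symbols does not give the scheme ``under every assignment'', because $\structure$ fixes the values of all object symbols; instead, apply the scheme to the parameter-free hypotheses $G_Q[\bar{z}] \doteq \forall \bar{w}: M(\bar{w}) \Rightarrow F_Q[\bar{z},\bar{w}]$, where $M(\bar{w})$ is the conjunction of the minor-premise conditions, whose own minor premises hold because only positive occurrences are replaced.
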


\begin{proof}
	Suppose that $\Gamma, \defn_1, \dots, \defn_n \vdash \Delta$ is valid under the Henkin semantics.
	Then $\Gamma, \foapp{\defn_1}, \dots, \foapp{\defn_n} \vdash \Delta$ is valid in \fo.
	Indeed, let $\structure$ be an \fo-structure such that $\structure \models \bigwedge \Gamma \land \bigwedge \foapp{\defn_1} \land \dots \land \bigwedge \foapp{\defn_n}$. 
	Let $\henkinclass$ be the Henkin class of \fo-definable sets for $\structure$.
	By Lemma \ref{lem:satisfaction-fo-app-def-fo-def-class}, $\henkinstructure \modelsh \bigwedge \Gamma \land \defn_1 \land \dots \land \defn_n$. 
	By assumption, $\henkinstructure \modelsh \bigvee \Delta$, and since $\Delta$ contains only \fo-formulas, $\structure \modelsh \bigvee \Delta$.
	This shows that $\Gamma, \foapp{\defn_1}, \dots, \foapp{\defn_n} \vdash \Delta$ is indeed valid in \fo.
	By completeness of \scfo, we obtain that $\Gamma, \foapp{\defn_1}, \dots, \foapp{\defn_n} \vdash \Delta$ is \scfo-provable.
	By Theorem \ref{thm:correspondence}, $\Gamma, \defn_1, \dots, \defn_n \vdash \Delta$ is \scfoid-provable.
\end{proof}

%% file: cut-elimination.tex
\section{Cut-elimination} \label{sec:cut-elimination}

Brotherston and Simpson showed that cut-elimination holds for \lkid \cite{jlc/BrotherstonS11}, meaning that every \lkid-provable sequent can be proven without the cut rule.
From the perspective of proof search, cut-elimination has limited relevance in induction-based proof systems when compared to first-order systems.
This is because, unlike in first-order proof systems, cut-elimination in induction-based systems does not entail the \emph{subformula property}.
This property states that one may restrict to \emph{analytic} inference rules, in which the formulas in the premises are subformulas of the formulas in the conclusion.
In induction-based systems like \lkid and \scfoid, the subformula property is averted by the induction rule, as the induction hypotheses in the premises do not appear as subformulas in the conclusion.
Therefore, when following a bottom-up approach (i.e., from conclusion to premises), the induction hypotheses are unconstrained by the conclusion, meaning that they can take the form of any \fo-formula.
Cut-elimination is nevertheless meaningful in induction-based systems, as it entails that unconstrained formulas need only occur as induction hypotheses, and not as lemmas \cite{Brotherston2006PhD}.

As demonstrated by Example \ref{ex:proof_non-total_liar}, cut-elimination does not hold in \scfoid.
Indeed, from a simple inspection of all inference rules of \scfoid, one can see that no rule other than (cut) applies to derive the sequent $\defintight{P \rul \neg P} \vdash$.
Three more counterexamples are included below:

\begin{example} \label{ex:counterexample-cut-elimination-P-if-O-or-not-P}
	Let $\defn = \defintight{ P \rul O \lor \neg P }$.
	We can prove the sequent $\defn \vdash O$ as follows:
	\[ 
	\begin{prooftree}
		\hypo{}
		\infer1[(ax)]{\defn, \lnot P \vdash O, \lnot P}
		\infer1[($\lor$R)]{\defn, \lnot P \vdash O \lor \lnot P}
		\infer1[(def R)]{\defn, \lnot P \vdash P}
		\infer1[($\lnot$R)]{\defn \vdash P, \lnot \lnot P}
		\infer1[($*$)]{\defn \vdash P} 
		\infer1[(wk)]{\defn \vdash P, O}
		
		\hypo{}
		\infer1[(ax)]{\defn, O \lor \lnot P \vdash O \lor \lnot P, O}
		
		\hypo{}
		\infer1[(ax)]{\defn, O \vdash O}
		
		\hypo{}
		\infer1[(ax)]{\defn, \lnot P \vdash O, \lnot P}
		\infer1[($\lor$R)]{\defn, \lnot P \vdash O \lor \lnot P}
		\infer1[(def R)]{\defn, \lnot P \vdash P}
		\infer1[($\lnot$R)]{\defn \vdash P, \lnot \lnot P}
		\infer1[($*$)]{\defn \vdash P} 
		\infer1[(wk)]{\defn \vdash P, O}
		\infer1[($\lnot$L)]{\defn, \lnot P \vdash O}
		\infer2[($\lor$L)]{\defn, O \lor \lnot P \vdash O}
		
		\infer2[(def L)]{\defn, P \vdash O}
		
		\infer2[(cut)]{\defn \vdash O}
	\end{prooftree}
	\]
	We used the induction hypothesis $F_P \doteq O \lor \lnot P$ in the application of (def L). 
	The transition indicated by ($*$) is completed as the homonymous transition in Example \ref{ex:proof_non-total_liar}.
	The sequent $\defn \vdash O$ can only be proven via (cut), as can be checked by a simple inspection of the inference rules of \scfoid.
\end{example}

\begin{example} \label{ex:counterexample-cut-elimination-P-if_O-or-not-O}
	Let $\defn = \defin{
		P \rul O\\
		P \rul \lnot O
	}$. 
	Then we can prove the sequent $\defn \vdash P$ as follows:
	\begin{equation*}
		\begin{prooftree}
			\hypo{}
			\infer1[(ax)]{\defn, O \vdash O}
			\infer1[($\neg$R)]{\defn \vdash O, \neg O}
			\infer1[(def R)]{\defn \vdash O, P}
			\hypo{}
			\infer1[(ax)]{\defn, O \vdash O}
			\infer1[(def R)]{\defn, O \vdash P}
			\infer2[(cut)]{\defn \vdash P}
		\end{prooftree}
	\end{equation*}
	The sequent cannot be proven without (cut), as can be checked by going through all inference rules.
	Indeed, the only other rule that can derive $\defn \vdash P$ is (def R), either from the premise $\defn \vdash O$ or from the premise $\defn \vdash \lnot O$.
	However, neither of these sequents is \scfoid-provable, as neither is valid (under any of the three semantics).
	
	As can be easily checked, the sequent $\normalization{\defn} \vdash P$ is provable without (cut).
	This shows that Proposition \ref{prop:normalization-equiderivable} does not extend to provability without cut.
\end{example}

\begin{example} \label{ex:counterexample-cut-elimination-two-defs}
	Consider the definitions $\defn_1 = \defintight{P \rul Q}$ and $\defn_2 = \defintight{P \rul R}$.
	We can prove the sequent $\defn_1, \defn_2, Q \vdash R$ as follows:\footnote{
		This example comes from \cite{lpnmr/HouWD07}.
	}
	\begin{equation*}
		\begin{prooftree}
			\hypo{}
			\infer1[(ax)]{\defn_1, \defn_2, Q \vdash Q, R}
			\infer1[(def R)]{\defn_1, \defn_2, Q \vdash P, R}
			\hypo{}
			\infer1[(ax)]{\defn_1, \defn_2, Q, R \vdash R}
			\hypo{}
			\infer1[(ax)]{\defn_1, \defn_2, Q, R \vdash R}
			\infer2[(def L)]{\defn_1, \defn_2, Q, P \vdash R}
			\infer2[(cut)]{\defn_1, \defn_2, Q \vdash R}
		\end{prooftree}
	\end{equation*}
	We used the induction hypothesis $F_P \doteq R$ in the application of (def L). 
	As before, a simple inspection of all inference rules of \scfoid shows that $\defn_1, \defn_2, Q \vdash R$ can only be proven via (cut).
\end{example}

Even though cut-elimination does not hold for general sequents in \scfoid, we can identify a fragment of sequents for which it holds.
One option is the fragment of sequents $\Gamma, \defn \vdash \Delta$ such that $\Gamma$ and $\Delta$ are finite sets of \fo-formulas, and $\defn$ is a definition in which the body of every definitional rule is a conjunction of atoms.
Cut-elimination holds for this fragment because it corresponds to the logic of inductive definitions considered by Brotherston and Simpson \cite{jlc/BrotherstonS11}.
In Theorem \ref{thm:brotherston-cut-elimination}, we establish cut-elimination for a slightly broader fragment of \foid-sequents $\regularsequent$, containing multiple definitions $\defns$, in which the bodies of the rules may also contain disjunctions.
The definitions $\defns$ must be such that no two definitions $\defn_i$, $\defn_j$ define the same predicate, and the defined predicates of a definition $\defn_i$ do not occur in a preceding definition $\defn_j$.

\begin{definition} \label{def:canonical-sequent}
	Let $\regularsequent$ be a regular sequent.
	We say that $\regularsequent$ is \emph{canonical} if $\Gamma$ and $\Delta$ are finite, and for all $i,j \in \{1, \dots, n\}$:
	\begin{enumerate}[label=(\roman*)]
		\item the body of each definitional rule of $\defn_i$ is composed of literals by conjunctions and disjunctions;
		\item $\defi{\defn_i} \cap \defi{\defn_j} = \emptyset$ if $i \neq j$;
		\item $\defi{\defn_i} \cap \pars{\defn_j} = \emptyset$ if $j < i$. \label{it:def-predicates-do-not-occur-in-previous-defs}
	\end{enumerate}
\end{definition}

To prove Theorem \ref{thm:brotherston-cut-elimination}, we follows the semantic approach by Brotherston and Simpson, who proved cut-elimination for \lkid by proving \emph{cut-free Henkin completeness} for \lkid.
This means that all sequents that are valid under the Henkin semantics are provable without the cut rule.
Combined with Henkin soundness, cut-free Henkin completeness implies cut-elimination.

We say that an \foid-sequent is \emph{cut-free provable} (in \scfoid) if it admits an \scfoid-proof without (cut).

\begin{restatable}[Cut-free Henkin completeness]{theorem}{brotherstonlemma} \label{thm:brotherston}
	Let $\regularsequent$ be a canonical sequent such that no negation occurs in the body of any definitional rule of any definition $\defn_i$.
	If\/ $\regularsequent$ is valid under the Henkin semantics, then it is cut-free provable.
\end{restatable}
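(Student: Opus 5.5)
We follow Brotherston and Simpson's semantic method: we show that if the canonical sequent $\regularsequent$ is not cut-free provable, then it has a Henkin countermodel. Assume $\regularsequent$ has no cut-free proof. We build a \emph{search tree} by applying the cut-free rules backwards along a fair schedule $\sched$. The schedule enumerates every pair consisting of a formula and a term (over a countable vocabulary extended with fresh object symbols), and also every defined atom that could occur. For the \fo-connectives we use the invertible forms of the rules, keeping the principal formula (contraction is implicit, since sequents are sets). For defined atoms on the right we apply (def R) to \emph{every} rule instance, using the terms enumerated so far. Because each body is built from literals by $\land$ and $\lor$, and $\defn_i$ has finitely many rules, unfolding the bodies adds only literals and finitely many branches. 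We never apply (def L) during the search.

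Since the root has no cut-free proof, König's lemma (or a direct choice argument for finitely branching trees) gives an infinite branch $\Gamma_0 \vdash \Delta_0,\ \Gamma_1 \vdash \Delta_1, \dots$ none of whose sequents is provable. Let $\Gamma_\omega = \bigcup_i \Gamma_i$ and $\Delta_\omega = \bigcup_i \Delta_i$, giving the limit sequent $\limitsequent$. We take the term model $\structure_\omega$ with domain $\quotset$, where $\eqrel$ is generated by the equalities in $\Gamma_\omega$ (well-defined thanks to the ($=$L) steps). In $\structure_\omega$ an atom, defined or not, is true iff it lies in $\Gamma_\omega$ up to $\eqrel$. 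Fairness and the usual \lk argument then show that $\structure_\omega$ satisfies every \fo-formula in $\Gamma_\omega$ and falsifies every \fo-formula in $\Delta_\omega$. For the Henkin class $\henkinclass_\omega$ we take the class of \fo-definable sets of $\structure_\omega$ (Proposition \ref{prop:fo-definable-sets-form-henkin-class}). This gives the candidate countermodel $\henkincountermodel$.

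The main obstacle is showing $\henkincountermodel \modelsh \defn_i$ for each $i$, i.e.\ that $\structure_\omega$ is the least pre-fixed $\henkinclass_\omega$-point of $\cons_{\defn_i}(\cdot,\structure_\omega)$. We handle one definition at a time, in the order $i = 1,\dots,n$; condition \ref{it:def-predicates-do-not-occur-in-previous-defs} of Definition \ref{def:canonical-sequent} guarantees that the parameters of $\defn_i$ are fixed independently of later definitions, and condition (ii) that no predicate is defined twice.

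Pre-fixedness follows from the (def R) saturation together with absence of negation. If a body instance holds in $\structure_\omega$, its literals are atoms in $\Gamma_\omega$ (or equalities), so the head cannot lie in $\Delta_\omega$. By fairness, the head is then put in $\Gamma_\omega$, or else the branch would close by (def R) and (ax).

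Leastness is the delicate part, and here I would argue by contradiction using (def L). Suppose $\structuretwo$ is a pre-fixed $\henkinclass_\omega$-point with $P^{\structure_\omega} \not\subseteq P^{\structuretwo}$. Because $\structuretwo$ is an $\henkinclass_\omega$-point, each $Q^{\structuretwo}$ is \fo-definable by some formula $F_Q$, possibly with parameters, which are terms. We then want to show that some sequent on the branch containing $P(\bar{v})$ with $\bar{v} \notin P^{\structuretwo}$ would have a cut-free proof via (def L) with these $F_Q$ as induction hypotheses, contradicting the choice of the branch. This direct route does not work as stated, because the premises are only true in $\structure_\omega$, not provable.

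The fix I would try first, as in Brotherston's thesis, is to order the defined atoms in $\Gamma_\omega$ by their \emph{approximant stage}. The defined atoms of $\Gamma_\omega$ either come from the root or are introduced by the (def R)-saturation. We then show by induction on this stage that each one lies in $\structuretwo$. For this step positivity is essential: since there are no negations in the bodies, $\cons$ depends only monotonically on the defined predicates, so the stages are well defined. The atoms originally present in $\Gamma_0$ need a separate argument, and I expect that to be the hardest point. To handle them, I would include in the schedule the application of (def L) to atoms in $\Gamma$ with the \emph{unfolding} choice $F_Q = \formula_Q$, which replaces $P(\bar{v})$ by its body. Together with fairness, this ensures that every defined atom in $\Gamma_\omega$ is eventually justified by a rule instance whose body holds in $\structure_\omega$, so the stage induction goes through. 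With the countermodel established, $\regularsequent$ is not Henkin valid, proving the theorem by contraposition.
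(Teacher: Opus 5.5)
Your overall strategy (fair schedule, term model on $\quotset$, Henkin class of \fo-definable sets, contraposition) is the paper's, but both steps that concern the definitions fail as stated.

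\textbf{Pre-fixedness.} If a defined atom is true exactly when it lies in $\Gamma_\omega$ up to $\eqrel$, then $\structure_\omega$ need not be a pre-fixed point of $\cons_{\defn_i}$. No cut-free rule ever puts a rule head on the left. In particular (def R) acts only on the right, so defined atoms of $\Gamma_\omega$ are not ``introduced by the (def R)-saturation'' either. Your step ``the head is put in $\Gamma_\omega$, or else the branch closes by (def R) and (ax)'' is a case split on the head, i.e.\ a cut. For example, take $\defn = \{P \rul O\}$ and the unprovable sequent $O, \defn \vdash$. Then $\Gamma_\omega = \{O\}$, and your model makes $O$ true and $P$ false.

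The paper avoids this by interpreting the predicates of $\defn_1, \dots, \defn_n$ in turn as the \emph{least} structure that contains the defined atoms of $\Gamma_\omega$ and is a pre-fixed point of $\cons_{\defn_i}$ (Definition \ref{def:I_omega}). Pre-fixedness then holds by construction. Lemma \ref{lem:satisfied-by-countermodel-implies-provable-from-gamma_omega} shows that every atom true in this closure is cut-free provable from $\Gamma_\omega, \defns$: the set of such provable atoms is itself a pre-fixed point, by (def R) and an induction over the negation-free bodies. This is where positivity is used, and it is what keeps the atoms of $\Delta_\omega$ false. No (def R) steps are needed in the search.

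\textbf{Leastness.} Unfolding with $F_Q = \formula_Q$ only makes each defined atom of $\Gamma_\omega$ \emph{supported}, not well-founded. An atom outside the least fixpoint has no approximant stage to induct on.

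Take $\defn = \{\forall x: P(x) \rul B(x) \lor P(f(x))\}$ and the unprovable sequent $\defn, P(a) \vdash$. Consider the branch that always continues with the major premise of the unfolding and then with the disjunct $P(f^{n+1}(a))$. It consists of unprovable sequents, since each has a countermodel with $B(f^m(a))$ true for some large $m$. Yet its limit makes $P$ true at every $f^n(a)$ and $B$ empty, while $\bot$ defines a smaller pre-fixed point. So $\structure_\omega$ is not a Henkin model of $\defn$.

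The missing idea is the one you discarded. You do not need the premises of (def L) to be provable; (def L) should be a \emph{search step}. There are only countably many tuples of \fo induction hypotheses, so the schedule can contain $\langle P(\bar t), F_\Pi \rangle$ for every defined atom and every tuple $F_\Pi$, and the branch then continues through an unprovable premise.

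Now let $\structuretwo$ be a pre-fixed $\henkinclass_\omega$-point whose predicates are defined by formulas $F_Q$. For each $P(\bar t) \in \Gamma_\omega$, the instance with exactly these $F_Q$ is eventually processed.
\begin{itemize}
\item If the branch took the major premise, then $F_P[\bar t] \in \Gamma_\omega$, so $[\bar t] \in P^{\structuretwo}$ by the truth lemma.
\item If it took a minor premise, then $\formulatwo[F_\Pi/\Pi] \in \Gamma_\omega$ and $F_Q[\bar s] \in \Delta_\omega$. So a rule body holds in $\structuretwo$ while its head fails, contradicting pre-fixedness.
\end{itemize}
Hence $\structuretwo$ contains all defined atoms of $\Gamma_\omega$, and the least-closure definition of $\structure_\omega$ gives $\structure_\omega \sqsubseteq \structuretwo$ (Lemma \ref{lem:henkin-structure-satisfies-defs}).
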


The proof of Theorem \ref{thm:brotherston} is analogous to the proof of cut-free Henkin completeness for \lkid \cite{jlc/BrotherstonS11}, and is included in Appendix \ref{app:brotherston-proof}.
The rough outline of the proof is as follows:

\begin{enumerate}[label=(\roman*)]
	\item Assume that $\regularsequent$ is not cut-free provable.
	\item We construct (infinite) sets of \fo-formulas $\Gamma_\omega$ and $\Delta_\omega$, extending $\Gamma$ and $\Delta$, respectively, such that 
	$\Gamma_\omega, \defns \vdash \Delta_\omega$ is not cut-free provable.
	\item We construct a \emph{Henkin counter-model} for $\limitsequent$, i.e., a Henkin structure $\henkincountermodel$ such that $\henkincountermodel \modelsh \formula$ for all $\formula \in \Gamma_\omega$, $\henkincountermodel \modelsh \defn_i$ for all $i \in \{1, \dots, n\}$ and $\henkincountermodel \not\modelsh \formula$ for all $\formula \in \Delta_\omega$.
	In particular, we have that $\henkincountermodel \modelsh \formula$ for all $\formula \in \Gamma$ and $\henkincountermodel \not\modelsh \formula$ for all $\formula \in \Delta$.
	\item Hence, $\regularsequent$ is not valid under the Henkin semantics.
	By contraposition, Theorem \ref{thm:brotherston} holds.
\end{enumerate}

\begin{theorem}[Cut-elimination] \label{thm:brotherston-cut-elimination}
	Let $\regularsequent$ be a canonical sequent such that no negation occurs in the body of any definitional rule of any definition $\defn_i$.
	If\/ $\regularsequent$ is provable, then it is cut-free provable.
\end{theorem}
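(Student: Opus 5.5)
The plan is to combine Henkin soundness (Theorem \ref{thm:soundness}) with cut-free Henkin completeness (Theorem \ref{thm:brotherston}), following the semantic route of Brotherston and Simpson, rather than giving a syntactic cut-reduction procedure.

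Concretely, I would proceed in two steps. First, assume $\regularsequent$ is provable in \scfoid, possibly with (cut). By Theorem \ref{thm:soundness}, \scfoid is sound w.r.t.\ the Henkin semantics. Hence every Henkin structure $\henkinstructure$ that satisfies all formulas of $\Gamma$ and all of $\defns$ also satisfies some formula of $\Delta$; that is, the sequent is valid under the Henkin semantics. Second, the hypotheses of the present theorem are exactly those of Theorem \ref{thm:brotherston}: the sequent is canonical and no negation occurs in any rule body. Theorem \ref{thm:brotherston} therefore applies directly and yields a cut-free \scfoid-proof of $\regularsequent$.

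The only point needing care is that soundness applies to all \scfoid-proofs. Intermediate sequents in the given proof may be non-canonical: cut formulas may be arbitrary and induction hypotheses may introduce negations. This does not matter, because Theorem \ref{thm:soundness} is stated for arbitrary \scfoid-theorems and we only use the semantic validity of the end-sequent. Canonicity and negation-freeness are used solely for the end-sequent when invoking Theorem \ref{thm:brotherston}.

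The real content therefore lies entirely in Theorem \ref{thm:brotherston}, which is deferred to the appendix. Its main difficulty is constructing the Henkin counter-model from a maximal unprovable limit sequent. In particular, one must verify that $\henkincountermodel$ is the least pre-fixed $\henkinclass_\omega$-point for each $\defn_i$. This step relies on positivity of the definitions, and on ordering condition \ref{it:def-predicates-do-not-occur-in-previous-defs} to treat the definitions one at a time. Given that theorem, the present result is an immediate two-line corollary.
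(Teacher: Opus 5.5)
Your proposal is correct and is exactly the paper's argument: soundness w.r.t.\ the Henkin semantics (Theorem~\ref{thm:soundness}) makes the end-sequent Henkin-valid, and cut-free Henkin completeness (Theorem~\ref{thm:brotherston}) then yields a cut-free proof. You are also right that canonicity and negation-freeness are only needed for the end-sequent, not for the intermediate sequents of the given proof.
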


\begin{proof}
	If\/ $\regularsequent$ is provable, then it is valid under the Henkin semantics by Theorem \ref{thm:soundness}, whence it is cut-free provable by Theorem \ref{thm:brotherston}.
\end{proof}

As shown by Example \ref{ex:counterexample-cut-elimination-P-if_O-or-not-O}, we cannot extend Theorems \ref{thm:brotherston} and \ref{thm:brotherston-cut-elimination} to allow for negation in the bodies of definitional rules, not even in front of parameters.
Nevertheless, we will establish a weakened version of cut-elimination for canonical sequents $\regularsequent$ in which the bodies of the rules of the definitions $\defns$ contain negation, as long as they are stratified (Theorem \ref{thm:cut-restriction}).
In this weakened version, cuts are allowed, but only on formulas of the form $\forall \bar{y}: Q(\bar{y}) \lor \lnot Q(\bar{y})$ such that $Q$ appears negatively in the body of a rule of a definition $\defn_i$, and $\bar{y}$ is a fixed $\arity{Q}$-tuple of object symbols.
We say that a canonical sequent $\regularsequent$ is \emph{provable with elementary cuts} (in \scfoid) if there exists an \scfoid-proof of $\regularsequent$ in which every cut formula is of this form.
This weakened version of cut-elimination, which we call \emph{cut-restriction}, is valuable for proof search, as it narrows the options for a cut formula down to a finite list.
Before we prove the result for stratified definitions, we prove it for definitions in which negation occurs only in front of parameters (Lemma \ref{lem:elementary-cuts-negation-parameters}).
Our strategy is to transform a sequent $\regularsequent$ with definitions of this form to a semantically equivalent sequent $\positiverewriting$ with positive definitions $\compl{\defn}_1, \dots, \compl{\defn}_n$ by means of auxiliary predicates, and to apply Theorem \ref{thm:brotherston} to $\positiverewriting$.

\begin{definition} \label{def:positive-rewriting}
	Let $\regularsequent$ be a canonical sequent.
	Suppose that the bodies of the rules of the definitions $\defns$ are composed from literals by conjunctions and disjunctions, and that only parameter predicates occur in negative literals.
	For every predicate symbol $Q$ that occurs negatively in the body of a rule of a definition $\defn_i$, consider a distinct predicate symbol $\compl{Q}$ that does not occur in $\Gamma$ nor $\Delta$.\footnote{
		Note that such predicate symbol exists, as $\Gamma$ and $\Delta$ are finite.
	}
	Let $\equivs$ denote the set of formulas $\forall \bar{y} : \compl{Q}(\bar{y}) \Leftrightarrow \lnot Q(\bar{y})$ such that $Q$ occurs negatively in the body of a rule of a definition $\defn_i$.
	Here $\bar{y}$ is a fixed $\arity{Q}$-tuple of object symbols.
	For each $i \in \{1, \dots, n\}$, let $\compl{\defn_i}$ denote the definition obtained from $\defn_i$ by replacing all negative literals $\lnot Q(\bar{s})$ in the bodies of its definitional rules by $\compl{Q}(\bar{s})$.
	We call the sequent $\positiverewriting$ the \emph{positive rewriting} of $\regularsequent$.
\end{definition}

\begin{restatable}{lemma}{cfppositiverewriting} \label{lem:cfp-positive-rewriting}
	Let $\regularsequent$ be a canonical sequent.
	If the positive rewriting $\positiverewriting$ of $\regularsequent$ is cut-free provable, then $\regularsequent$ is provable with elementary cuts.
\end{restatable}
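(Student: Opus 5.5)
Given a cut-free proof $T$ of the positive rewriting $\positiverewriting$, I will build a proof of $\regularsequent$ whose only cuts are on the formulas $\forall \bar{y}: Q(\bar{y}) \lor \lnot Q(\bar{y})$. The idea is to substitute $\lnot Q$ for $\compl{Q}$ throughout $T$. The relevant $Q$ are parameters, so they are not defined in any $\defn_i$.

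\textbf{Step 1: the substitution.} Let $\theta$ be the map replacing every atom $\compl{Q}(\bar{s})$ by $\lnot Q(\bar{s})$. Since $\compl{Q}$ does not occur in $\Gamma$ or $\Delta$, we have $\theta(\Gamma)=\Gamma$ and $\theta(\Delta)=\Delta$. Also $\theta(\compl{\defn_i})=\defn_i$, and $\theta$ maps each formula of $\equivs$ to $\forall \bar{y}: \lnot Q(\bar{y}) \Leftrightarrow \lnot Q(\bar{y})$, which is cut-free provable. Applying $\theta$ to every sequent of $T$ gives a tree, and I will check rule by rule that it is still a derivation, up to local repairs.
\begin{itemize}
\item The first-order rules are preserved: $\theta$ commutes with connectives, quantifiers and substitution. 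An axiom instance on $\compl{Q}(\bar{s})$ becomes an axiom instance on $\lnot Q(\bar{s})$.
\item (def R) is preserved because $\theta$ commutes with taking rule bodies.
\item For (def L), positive occurrences in a body of $\compl{\defn_i}$ are only of defined predicates. The new $\compl{Q}$ atoms are never defined, and they map to negative literals $\lnot Q$ with $Q$ a parameter. Hence $\theta(\formulatwo[F_\Pi/\Pi^+]) = \theta(\formulatwo)[\theta(F)_\Pi/\Pi^+]$, and the instance maps to a (def L) instance for $\defn_i$ with induction hypotheses $\theta(F_Q)$.
\item The side conditions on free variables are unaffected.
\end{itemize}
The result is a cut-free proof of $\Gamma, \theta(\equivs), \defn_1,\dots,\defn_n \vdash \Delta$.

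\textbf{Step 2: removing the trivial equivalences.} The formulas in $\theta(\equivs)$ remain on the left. I remove each one by a cut against its proof. That cut is not elementary, so I will instead prove by induction on proof height that such trivially valid hypotheses can be dropped.

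Where the removed hypothesis $\forall \bar{y}: \lnot Q(\bar{y}) \Leftrightarrow \lnot Q(\bar{y})$ was used via ($\forall$L) and ($\Leftrightarrow$L), the premises were
\[\Gamma' \vdash \lnot Q(\bar{t}), \lnot Q(\bar{t}), \Delta' \quad\text{and}\quad \Gamma', \lnot Q(\bar{t}), \lnot Q(\bar{t}) \vdash \Delta',\]
which are sets, so each collapses to a single copy of $\lnot Q(\bar{t})$. From these two premises I obtain $\Gamma' \vdash \Delta'$ by one cut, on $\lnot Q(\bar{t})$.

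To make this cut elementary, I cut on $\forall \bar{y}: Q(\bar{y}) \lor \lnot Q(\bar{y})$ instead:
\begin{itemize}
\item Its right premise $\Gamma' \vdash \forall\bar{y}: Q(\bar{y}) \lor \lnot Q(\bar{y}), \Delta'$ is derivable cut-free.
\item Its left premise is derived by ($\forall$L) with $\bar{t}$ and then ($\lor$L). The branch with $\lnot Q(\bar{t})$ on the left is exactly the second premise. The branch with $Q(\bar{t})$ on the left follows from the first premise by ($\lnot$L) together with the axiom on $Q(\bar{t})$, after moving $\lnot Q(\bar{t})$ to the left: from $\Gamma' \vdash \lnot Q(\bar{t}),\Delta'$, weakening and ($\lnot$L)-style reasoning give $\Gamma', Q(\bar{t}) \vdash \Delta'$.
\end{itemize}
That last move is an inversion of ($\lnot$R). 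I expect this to be the main obstacle: I need an admissibility lemma saying that inversion of ($\lnot$R) is admissible cut-free, proved by induction on height as in \lk.

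\textbf{Alternative for Step 2.} Rather than proving inversion lemmas, I can instead keep $\compl{Q}$ as is and not apply $\theta$ at the end, as follows.
\begin{itemize}
\item Take $T$ as a cut-free proof, apply $\theta$ only to the definitions, and treat $\compl{Q}(\bar{s})$ as an abbreviation whose occurrences in $T$ are manipulated only through (ax), (wk) and the rules for $\equivs$.
\item Atomic axioms on $\compl{Q}$ then become axioms on $\lnot Q$, and the only remaining non-elementary reasoning is the use of $\equivs$, which the elementary cut above handles.
\end{itemize}
I will present the first route, with height-preserving invertibility of ($\lnot$R) and ($\lnot$L) as the key auxiliary fact. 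In that route each use of $\equivs$ in $T$ becomes exactly one elementary cut, so the final proof of $\regularsequent$ has only elementary cuts.
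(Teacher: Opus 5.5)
Your proposal is correct and is essentially the paper's proof. Step~1 is Lemma~\ref{lem:CFP-substitution-negation}, and Step~2 relies on the same inversion of ($\lnot$R) (Lemma~\ref{lem:CFP-inv-neg}) as Lemma~\ref{lem:replace-equivs-by-disjs}. The only difference is that the paper keeps everything cut-free by turning each hypothesis into $\forall \bar{y}: Q(\bar{y}) \lor \lnot Q(\bar{y})$ and cutting it away once at the root, whereas you cut at each use of $\equivs$.

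Two small corrections. First, $Q$ may well be defined by an earlier $\defn_j$, as happens for the decompositions used in Theorem~\ref{thm:henkin-valid-elementary-cuts}. Only $\compl{Q}$ needs to be undefined, and that is all your Step~1 actually uses. Second, (ax) here applies to compound formulas, so inversion of ($\lnot$R) is not height-preserving: an axiom with $\lnot Q(\bar{t})$ on both sides needs an extra ($\lnot$L) step. In your route you should therefore apply the induction hypothesis to both premises first and invert afterwards, which requires inversion for proofs containing elementary cuts (an easy extension). Alternatively, adopt the paper's cut-free rewriting, where this issue does not arise.
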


\begin{proof}
	See Appendix \ref{app:cfp-positive-rewriting}. 
\end{proof}

\begin{lemma} \label{lem:elementary-cuts-negation-parameters}
	Let $\regularsequent$ be a canonical sequent such that, for all $i \in \{1, \dots, n\}$, if a predicate symbol $Q$ occurs negatively in the body of a rule of $\defn$, then $Q \in \pars{\defn_i}$.
	If\/ $\regularsequent$ is valid under the Henkin semantics, then it is provable with elementary cuts.
\end{lemma}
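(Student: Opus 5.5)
The plan is to reduce the statement to Theorem \ref{thm:brotherston} via the positive rewriting and then invoke Lemma \ref{lem:cfp-positive-rewriting}. Assume $\regularsequent$ is canonical, negation only in front of parameters, and valid under the Henkin semantics. Since the bodies are built from literals by $\land$ and $\lor$ and every negated predicate is a parameter, Definition \ref{def:positive-rewriting} applies. This yields the positive rewriting $\positiverewriting$.

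The first step is to check that $\positiverewriting$ is canonical and negation-free in the definition bodies. Negation freeness holds because each $\lnot Q(\bar{s})$ has been replaced by the atom $\compl{Q}(\bar{s})$. Conditions (i) and (ii) of Definition \ref{def:canonical-sequent} are inherited, since the defined predicates are unchanged. For (iii), note that the new parameters $\compl{Q}$ are fresh and never defined. Moreover, $\defi{\compl{\defn_i}} \cap \pars{\compl{\defn_j}} \subseteq \defi{\defn_i} \cap \pars{\defn_j}$, because we only removed occurrences of parameter predicates $Q$ and added fresh symbols. Finally, $\Gamma \cup \equivs$ is a finite set of \fo-formulas, since only finitely many $Q$ occur in the finitely many rules.

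The second step, and the main one, is to show that $\positiverewriting$ is valid under the Henkin semantics. Let $\henkinstructure$ satisfy $\Gamma$, $\equivs$ and every $\compl{\defn_i}$. Restrict $\structure$ to the vocabulary without the symbols $\compl{Q}$, and let $\henkinclass'$ be $\henkinclass$ itself; this is still a Henkin class, since conditions \ref{it:henkin-class-equality}--\ref{it:henkin-class-forall} only get weaker when the vocabulary shrinks. We claim that $(\structure, \henkinclass) \modelsh \defn_i$. The structure space $\structurespace$ for $\defn_i$ and for $\compl{\defn_i}$ is the same up to the fixed interpretation of the parameters. Because $\structure \models \equivs$, the formula $\compl{Q}(\bar{s})$ has the same truth value as $\lnot Q(\bar{s})$ for parameter predicates $Q$. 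Parameters are fixed by the context, so $\cons_{\defn_i}(\structuretwo, \structure) = \cons_{\compl{\defn_i}}(\structuretwo, \structure)$ for every $\structuretwo \in \structurespace$. Note that in $\compl{\defn_i}$ there are no negative occurrences of defined predicates, so the primed substitution is trivial in both cases, as negations apply only to parameters. Hence the least pre-fixed $\henkinclass$-points coincide, and $\structure$ is a Henkin model of $\defn_i$ iff it is one of $\compl{\defn_i}$. Validity of $\regularsequent$ then gives some formula of $\Delta$ true in $\structure$. Since $\Delta$ does not mention the $\compl{Q}$, the same holds in the expanded structure. The care needed here is in matching the two $\cons$ operators exactly, including the merged bodies and the fact that the Henkin class is the same for both.

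By Theorem \ref{thm:brotherston}, $\positiverewriting$ is therefore cut-free provable. By Lemma \ref{lem:cfp-positive-rewriting}, $\regularsequent$ is then provable with elementary cuts.
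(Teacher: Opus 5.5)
Your proof is correct and follows the paper's argument. You pass to the positive rewriting and show it is Henkin-valid: $\equivs$ forces $\compl{Q}$ to be the complement of $Q$, so the operators $\cons_{\defn_i}$ and $\cons_{\compl{\defn_i}}$ coincide. You then apply Theorem~\ref{thm:brotherston} and Lemma~\ref{lem:cfp-positive-rewriting}, and your explicit check that the positive rewriting is again canonical and negation-free, which Theorem~\ref{thm:brotherston} needs, is a step the paper leaves implicit.
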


\begin{proof}
	Suppose that $\regularsequent$ is valid under the Henkin semantics.
	Let $\positiverewriting$ be a positive rewriting of $\regularsequent$.
	Then $\positiverewriting$ is valid under the Henkin semantics.
	Indeed, let $\henkinstructure$ be a Henkin structure such that $\henkinstructure \modelsh \formula$ for all $\formula \in \Gamma \cup \Omega$ and $\henkinstructure \modelsh \compl{\defn_i}$ for all $i \in \{1, \dots, n\}$.
	Since $\henkinstructure \modelsh \forall \bar{y} : \compl{Q}(\bar{y}) \Leftrightarrow \lnot Q(\bar{y})$, $\structure$ interprets $\compl{Q}$ as the complement of $Q$ for every predicate symbol $Q$ occurring negatively in the body of a rule of a definition $\defn_i$.
	It follows that $\cons_{\compl{\defn_i}}(\structuretwo, \structure|_{\sym{\compl{\defn}_i}})|_{\sym{\defn_i}} = \cons_{\defn_i}(\structuretwo|_{\sym{\defn_i}}, \structure|_{\sym{\defn_i}})$ for every $i \in \{1, \dots, n\}$ and every $\sym{\compl{\defn}_i}$-structure $\structuretwo$ expanding $\structure|_{\pars{\compl{\defn}_i}}$.
	Consequently, $\henkinstructure \modelsh \defn_i$ for every $i \in \{1, \dots, n\}$.
	Since $\regularsequent$ is valid under the Henkin semantics, $\henkinstructure \modelsh \formula$ for some $\formula \in \Delta$.
	This shows that $\positiverewriting$ is indeed valid under the Henkin semantics.
	By Theorem \ref{thm:brotherston}, $\positiverewriting$ is cut-free provable, and by Lemma \ref{lem:cfp-positive-rewriting}, $\regularsequent$ is provable with elementary cuts.
\end{proof}

We now generalize the previous lemma to canonical sequents with stratified definitions.
We do so by decomposing the stratified definitions into monotone subdefinition.

\begin{definition} \label{def:decomposition-stratified-def}
	Let $\defn$ be a stratified definition.
	A \emph{decomposition} of $\defn$ is a tuple $(\defn_1, \dots, \defn_k)$ of subdefinitions of $\defn$ such that the definitions $\defn_1, \dots, \defn_k$ form a partitioning of $\defn$, and there exists a stratification $\ell$ of $\defn$ such that for all $i,j \in \{1, \dots, k\}$, $P \in \defi{\defn_i}$ and $Q \in \defi{\defn_j}$: $\ell(P) < \ell(Q)$ iff $i<j$.
	We say that the stratification $\ell$ \emph{corresponds to} the decomposition $(\defn_1, \dots, \defn_k)$.
\end{definition}

\begin{remark} \label{rem:decomposition-strat-def}
	It is straightforward to see that every stratified definition $\defn$ admits a decomposition  $(\defn_1, \dots, \defn_k)$, and that $\ell(P) = \ell(Q)$ for all $i \in \{1, \dots, k\}$ and all $P, Q \in \defi{\defn_i}$, where $\ell$ is a stratification of $\defn$ corresponding to the decomposition $(\defn_1, \dots, \defn_k)$. 
	Furthermore, only parameter predicates occur negatively in the bodies of rules of definitions $\defns$.
	Indeed, if a defined predicate $Q$ of a definition $\defn_{i}$ were to occur negatively in the body $\formula$ of a definitional rule $\defrul$ of $\defn_{i}$, then, using the fact that $P$ and $Q$ are both defined by $\defn_{i}$, we would deduce that $\ell(Q) < \ell(P) = \ell(Q)$; a contradiction.
\end{remark}

\begin{lemma} \label{lem:stratified-def-henkin-entails-decomposition}
	Let $\defn$ be a stratified definition, $(\defn_1, \dots, \defn_k)$ a decomposition of $\defn$, and $\henkinstructure$ a Henkin structure.
	If $\henkinstructure \modelsh\defn_i$ for all $i \in \{1, \dots, k\}$, then $\henkinstructure \modelsh \defn$.
\end{lemma}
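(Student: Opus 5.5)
We work with the characterisation of Henkin models as least pre-fixed $\henkinclass$-points. Without loss of generality $\voc{\structure} = \sym{\defn}$. Since $(\defn_1, \dots, \defn_k)$ partitions $\defn$ into subdefinitions, all rules with head predicate $P$ lie in the single $\defn_i$ with $P \in \defi{\defn_i}$. Hence the merged body $\formula_{P,\defn}(\bar{y})$ coincides with $\formula_{P,\defn_i}(\bar{y})$. The proof has two parts: first $\structure = \cons_{\defn}(\structure,\structure)$, then that $\structure$ is the least pre-fixed $\henkinclass$-point of $\cons_{\defn}(\cdot,\structure)$.

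\emph{Fixpoint.} For $P \in \defi{\defn_i}$, the hypothesis $\henkinstructure \modelsh \defn_i$ together with Proposition \ref{prop:completion-henkin} gives $\structure \models \forall \bar{y} : P(\bar{y}) \Leftrightarrow \formula_{P,\defn_i}(\bar{y})$. Since $\formula_{P,\defn_i}(\bar{y}) \doteq \formula_{P,\defn}(\bar{y})$, this equivalence is exactly $P^{\structure} = P^{\cons_{\defn}(\structure,\structure)}$. This holds for every $P \in \defi{\defn}$, so $\structure = \cons_{\defn}(\structure,\structure)$. Trivially $\structure$ is an $\henkinclass$-point.

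\emph{Leastness.} Let $\structuretwo$ be an $\henkinclass$-point with $\cons_{\defn}(\structuretwo,\structure) \sqsubseteq \structuretwo$. We show $P^{\structure} \subseteq P^{\structuretwo}$ for all $P \in \defi{\defn_i}$, by induction on $i = 1, \dots, k$. So assume $Q^{\structure} \subseteq Q^{\structuretwo}$ for all $Q \in \defi{\defn_j}$ with $j < i$.

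Define $\structuretwo_i$ as the $\sym{\defn_i}$-structure that interprets $\defi{\defn_i}$ as in $\structuretwo$ and all parameters of $\defn_i$ as in $\structure$. It is an $\henkinclass$-point relative to $\defn_i$. We claim $\cons_{\defn_i}(\structuretwo_i, \structure|_{\sym{\defn_i}}) \sqsubseteq \structuretwo_i$. To see this, take $P \in \defi{\defn_i}$ and $\bar{a}$ satisfying $\formula_P(\bar{y})'$ under the $\defn_i$-reading. We compare this with the $\defn$-reading used in $\cons_{\defn}(\structuretwo,\structure)$, occurrence by occurrence:
\begin{enumerate}[label=(\alph*)]
	\item Predicates in $\defi{\defn_i}$ occur only positively, by Remark \ref{rem:decomposition-strat-def}. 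They are interpreted by $\structuretwo$ in both readings.
	\item Defined predicates of $\defn$ from strata above $i$ do not occur, since the stratification requires $\ell(Q) \leq \ell(P)$.
	\item Negative occurrences of lower-stratum predicates $Q$ are interpreted by $\structure$ in both readings.
	\item Positive occurrences of lower-stratum predicates $Q$ are interpreted by $\structure$ in the $\defn_i$-reading, where $Q$ is a parameter, but by $\structuretwo$ in the $\defn$-reading. By the induction hypothesis $Q^{\structure} \subseteq Q^{\structuretwo}$.
\end{enumerate}
Truth of a formula is preserved when the interpretations of predicates occurring only positively are enlarged. Hence $\bar{a} \in P^{\cons_{\defn}(\structuretwo,\structure)} \subseteq P^{\structuretwo} = P^{\structuretwo_i}$, which proves the claim.

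Since $(\structure|_{\sym{\defn_i}},\henkinclass)$ is a Henkin model of $\defn_i$, the claim yields $\structure|_{\sym{\defn_i}} \sqsubseteq \structuretwo_i$. That is, $P^{\structure} \subseteq P^{\structuretwo}$ for $P \in \defi{\defn_i}$, completing the induction and hence $\structure \sqsubseteq \structuretwo$.

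The main obstacle is the bookkeeping in (a)--(d): occurrences of lower-stratum predicates change role from parameter (interpreted by $\structure$) to defined predicate (positive ones interpreted by $\structuretwo$). The monotonicity argument only goes through because the induction hypothesis provides $Q^{\structure} \subseteq Q^{\structuretwo}$, and I would state that positive-occurrence monotonicity fact explicitly as a small lemma.
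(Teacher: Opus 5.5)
Your proof is correct and follows the paper's skeleton. You use $\formula_{P,\defn} \doteq \formula_{P,\defn_i}$ to transfer the (pre-)fixpoint property from each $\defn_i$ to $\defn$, and you reduce leastness to the minimality of each Henkin model $(\structure|_{\sym{\defn_i}},\henkinclass)$.

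The real difference is in the leastness step. The paper asserts it is ``straightforward to check'' that $\structuretwo|_{\sym{\defn_i}}$ is a pre-fixed $\henkinclass$-point of $\cons_{\defn_i}(\cdot,\structure|_{\sym{\defn_i}})$. It then treats each stratum independently, with no induction. Your version is the more careful one, for two reasons:
\begin{itemize}
\item $\structuretwo|_{\sym{\defn_i}}$ interprets the lower-stratum defined predicates, which are parameters of $\defn_i$, as $\structuretwo$ does. So it need not expand the context $\structure|_{\pars{\defn_i}}$, and the minimality of the Henkin model of $\defn_i$ only ranges over such expansions.
\item Once the parameters are reset to $\structure$, as in your $\structuretwo_i$, the positive lower-stratum occurrences switch interpretation from $\structuretwo$ to $\structure$. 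Your inductive hypothesis $Q^{\structure} \subseteq Q^{\structuretwo}$ is exactly what makes the pre-fixpoint comparison go through.
\end{itemize}
The induction over strata is therefore the ingredient the paper's one-line claim leaves implicit, and your argument supplies it.

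Your positive-occurrence monotonicity fact is essentially Proposition \ref{prop:(anti)monotonicity truth assignment} from the appendix. Apply it to $\structuretwo_i \leqt \structuretwo$ on $\sym{\defn_i}$, with $\structure$ as the common second (negative) component.

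In the first part you obtain the full equation $\structure = \cons_{\defn}(\structure,\structure)$ via Proposition \ref{prop:completion-henkin}, where the paper derives only the inclusion $\cons_{\defn}(\structure,\structure) \sqsubseteq \structure$. This is harmless, and it also covers the fixpoint phrasing in the paper's definition of Henkin model.
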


\begin{proof}
	Suppose that $\henkinstructure \modelsh \defn_i$ for all $i \in \{1, \dots, k\}$.
	Assume without loss of generality that $\voc{\structure} = \sym{\defn}$.
	We need to show that $\structure$ is the least pre-fixed $\henkinclass$-point of $\cons_{\defn}(\cdot, \structure)$.
	Since $\henkinclass$ is a Henkin class for $\structure$, $\structure$ is trivially an $\henkinclass$-point.
	To show that $\structure$ is a pre-fixed point of $\cons_{\defn}(\cdot, \structure)$, pick a defined predicate $P$ of $\defn$.
	Let $\defn_i$ be the positive subdefinition of $\defn$ defining $P$.
	Since $\formula_{P, \defn}(\bar{y}) \doteq \formula_{P, \defn_i}(\bar{y})$ for every $\arity{P}$-tuple of object symbols $\bar{y}$, we have that $P^{\cons_{\defn} (\structure, \structure)} = P^{\cons_{\defn_i} (\structure, \structure)}$.
	Since $\henkinstructure \modelsh \defn_i$, it follows that $P^{\cons_{\defn} (\structure, \structure)} = P^{\cons_{\defn_i} (\structure, \structure)} \subseteq P^{\structure}$.
	This shows that indeed $\cons_{\defn}(\structure, \structure) \sqsubseteq \structure$.
	
	We now show that $\structure$ is the \emph{least} pre-fixed $\henkinclass$-point of $\cons_{\defn}(\cdot, \structure)$.
	Let $\structuretwo$ be a pre-fixed $\henkinclass$-point of $\cons_{\defn} (\cdot, \structure)$.
	Let $P$ be a defined predicate of $\defn$, and let $\defn_i$ be the positive subdefinition of $\defn$ that defines $P$.
	It is straightforward to check that $\structuretwo |_{\sym{\defn_i}}$ is a pre-fixed $\henkinclass$-point of $\cons_{\defn_i} (\cdot, \structure |_{\sym{\defn_i}})$.
	Since $\henkinstructure \modelsh\defn_i$, the structure $\structure |_{\sym{\defn_i}}$ is the least pre-fixed $\henkinclass$-point of $\cons_{\defn_i} (\cdot, \structure |_{\sym{\defn_i}})$, which implies that $\structure |_{\sym{\defn_i}} \sqsubseteq \structuretwo |_{\sym{\defn_i}}$.
	Hence, $P^{\structure} = P^{\structure |_{\sym{\defn_i}}} \subseteq P^{\structuretwo |_{\sym{\defn_i}}} = P^{\structuretwo}$.
	This shows that $\structuretwo \sqsubseteq \structure$, and thus, that $\structure$ is the least pre-fixed $\henkinclass$-point of $\cons_{\defn}(\cdot, \structure)$.
\end{proof}

\begin{theorem}
	\label{thm:henkin-valid-elementary-cuts}
	Let $\regularsequent$ be a canonical sequent such that the definitions $\defns$ are stratified.
	If\/ $\regularsequent$ is valid under the Henkin semantics, then it is provable with elementary cuts.
\end{theorem}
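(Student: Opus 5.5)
The plan is to reduce Theorem~\ref{thm:henkin-valid-elementary-cuts} to Lemma~\ref{lem:elementary-cuts-negation-parameters} by replacing each stratified definition by its decomposition into subdefinitions. For every $i$, fix a decomposition $(\defn_{i,1}, \dots, \defn_{i,k_i})$ of $\defn_i$ (Remark~\ref{rem:decomposition-strat-def}). Consider the sequent
\[
\Gamma, \defn_{1,1}, \dots, \defn_{1,k_1}, \dots, \defn_{n,1}, \dots, \defn_{n,k_n} \vdash \Delta ,
\]
listing the definitions in this order.

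First I show that this sequent is canonical. Bodies are unchanged, so condition (i) holds. The subdefinitions partition each $\defn_i$, and the $\defn_i$ have pairwise disjoint defined predicates, so condition (ii) holds. For condition~\ref{it:def-predicates-do-not-occur-in-previous-defs}, take a predicate defined in $\defn_{i,a}$ and suppose it occurs as a parameter in an earlier $\defn_{j,b}$. If $j<i$, this is excluded by canonicity of the original sequent, since $\defn_{j,b}\subseteq\defn_j$. If $j=i$ and $b<a$, then the predicate has stratum $\ell(\cdot)$ strictly larger than the heads of $\defn_{i,b}$, which contradicts clause (i) of the stratification. The parameters of $\defn_{j,b}$ that are not parameters of $\defn_j$ are defined predicates of $\defn_j$, and these are disjoint from $\defi{\defn_i}$ for $i\neq j$. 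By Remark~\ref{rem:decomposition-strat-def}, only parameter predicates occur negatively in the bodies of each $\defn_{i,a}$, so the hypothesis of Lemma~\ref{lem:elementary-cuts-negation-parameters} is met.

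Second, the new sequent is Henkin-valid. If a Henkin structure satisfies $\Gamma$ and every $\defn_{i,a}$, then by Lemma~\ref{lem:stratified-def-henkin-entails-decomposition} it satisfies every $\defn_i$, and hence some formula in $\Delta$. Lemma~\ref{lem:elementary-cuts-negation-parameters} then gives a proof with elementary cuts of the decomposed sequent.

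Third, I convert this proof into a proof of the original sequent. This is the main obstacle. One issue is the form of the elementary cuts. A cut formula $\forall\bar y: Q(\bar y)\lor\lnot Q(\bar y)$ in the decomposed proof has $Q$ occurring negatively in some $\defn_{i,a}$, hence negatively in $\defn_i$, so such cuts are also elementary for the original sequent. The other issue is that the subdefinitions $\defn_{i,a}$ must be replaced by $\defn_i$ throughout. The rules that inspect definitions are (def R) and (def L). A (def R) step with a rule of $\defn_{i,a}$ is equally a (def R) step with $\defn_i$. A (def L) step using $\defn_{i,a}$ with set $\Pi\subseteq\defi{\defn_{i,a}}$ becomes a (def L) step using $\defn_i$ with the same $\Pi$. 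Its minor premises range over the rules of $\defn_i$ whose heads lie in $\Pi$, which are exactly the rules of $\defn_{i,a}$ with heads in $\Pi$, because the $\defn_{i,a}$ partition $\defn_i$ by defined predicate. The replacement $\formulatwo[F_\Pi/\Pi^+]$ is therefore identical.

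So the plan is to replace, in every sequent of the proof, the block $\defn_{i,1},\dots,\defn_{i,k_i}$ by $\defn_i$ and check each rule locally. Side conditions on variables need care: the free variables of $\defn_i$ are the union of those of the $\defn_{i,a}$, and the eigenvariable restriction of (def L) concerns the rules of $\defn_{i,a}$, whose quantified variables do not occur freely in the other rules of $\defn_i$ by Definition~\ref{def:definitions}. Should any mismatch remain, I would fall back on (wk) to adjust contexts. The result is a proof with elementary cuts of $\regularsequent$.
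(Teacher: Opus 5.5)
Your proposal is correct and follows the paper's proof step for step. Both decompose each $\defn_i$, check that the decomposed sequent is canonical with only parameters occurring negatively (Remark~\ref{rem:decomposition-strat-def}), get Henkin validity from Lemma~\ref{lem:stratified-def-henkin-entails-decomposition}, apply Lemma~\ref{lem:elementary-cuts-negation-parameters}, and transfer the proof back using that a (def L) step for $\defn_{i,j}$ is also one for $\defn_i$; your checks that elementary cuts stay elementary and that the minor premises coincide are details the paper leaves implicit.
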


\begin{proof}
	Suppose that $\regularsequent$ is valid under the Henkin semantics.
	For each $i \in \{1, \dots, n\}$, let $(\defn_{i,1}, \dots, \defn_{i,k_i})$ be a decomposition of $\defn_i$.
	Consider $\Gamma, \defn_{1,1}, \dots, \defn_{1,k_1}, \dots, \defn_{n,k_n} \vdash \Delta$.
	By Lemma \ref{lem:stratified-def-henkin-entails-decomposition}, this sequent is valid under the Henkin semantics.
	Furthermore, $\Gamma, \defn_{1,1}, \dots, \defn_{1,k_1}, \dots, \defn_{n,k_n} \vdash \Delta$ satisfies the conditions of Lemma \ref{lem:elementary-cuts-negation-parameters}. 
	Indeed, it is a canonical sequent, as for all $i, i' \in \{ 1, \dots, n\}$, $j \in \{ 1, \dots, k_i\}$ and $j' \in \{ 1, \dots, k_{i'}\}$:
	\begin{enumerate}
		\item $\defi{\defn_{i,j}} \cap \defi{\defn_{i',j'}} = \emptyset$ if $i \neq i'$ or if $j \neq j'$. 
		If, $i \neq i'$, this follows from the fact that $\defi{\defn_{i}} \cap \defi{\defn_{i'}} = \emptyset$, and if $i = i'$ and $j \neq j'$, from the fact that different subdefinitions of a definition define different predicates.
		\item $\defi{\defn_{i,j}} \cap \pars{\defn_{i',j'}} = \emptyset$ if $i' < i$ or if $i'=i$ and $j' < j$.
		If $i'<i$, this follows from the fact that $\defi{\defn_{i}} \cap \pars{\defn_{i'}} = \emptyset$, and if $i'=i$ and $j' < j$, from Definition \ref{def:decomposition-stratified-def}.
		Indeed, suppose that a defined predicate $P$ of $\defn_{i,j}$ occurred as a parameter of $\defn_{i,j'}$ for some $j' < j$.
		Then $P$ would occur in the body $\psi$ of a definitional rule $\forall \bar{y}: Q(\bar{s}) \rul \formulatwo$ of $\defn_{i,j'}$.
		However, letting $\ell$ be a stratification of $\defn_i$ corresponding to the decomposition $(\defn_{i,1}, \dots, \defn_{i,k_i})$, this would imply that $\ell(P) \leq \ell(Q) < \ell(P)$; a contradiction.
	\end{enumerate}
	Furthermore, by Remark \ref{rem:decomposition-strat-def}, only parameter predicates occur negatively in the bodies of rules of definitions $\defn_{i,j}$.
	
	Thus, Lemma \ref{lem:elementary-cuts-negation-parameters} applies, which yields that $\Gamma, \defn_{1,1}, \dots, \defn_{1,k_1}, \dots, \defn_{n,k_n} \vdash \Delta$ is provable with elementary cuts.
	By induction on the height of a cut-free proof of $\Gamma, \defn_{1,1}, \dots, \defn_{1,k_1}, \dots, \defn_{n,k_n} \vdash \Delta$, it follows that $\regularsequent$ is also provable with elementary cuts.
	Here we used that any application of (def L) w.r.t.~a definition $\defn_{i,j}$ is also an application of (def L) w.r.t.~$\defn_i$.
\end{proof}

\begin{theorem}[Cut-restriction] 
	\label{thm:cut-restriction}
	Let $\regularsequent$ be a canonical sequent such that the definitions $\defns$ are stratified.
	If\/ $\regularsequent$ is provable, then it is provable with elementary cuts.
\end{theorem}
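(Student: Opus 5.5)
The plan mirrors the proof of Theorem \ref{thm:brotherston-cut-elimination}: a semantic detour through the Henkin semantics, with no syntactic cut-reduction at all. The two ingredients are already in place. Theorem \ref{thm:soundness} gives soundness with respect to the Henkin semantics for arbitrary \scfoid-proofs, including those with unrestricted cuts. Theorem \ref{thm:henkin-valid-elementary-cuts} gives completeness with elementary cuts for exactly the fragment in the statement: canonical sequents whose definitions are stratified.

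Concretely, I would proceed as follows. Assume $\regularsequent$ is canonical with stratified definitions $\defns$ and is \scfoid-provable, possibly using arbitrary cuts.

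First, I apply Theorem \ref{thm:soundness}. This yields that $\regularsequent$ is valid under the Henkin semantics, meaning every Henkin structure $\henkinstructure$ satisfying $\Gamma$ and each $\defn_i$ satisfies some formula of $\Delta$.

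Second, the hypotheses of Theorem \ref{thm:henkin-valid-elementary-cuts} are now met: the sequent is canonical, its definitions are stratified, and it is Henkin-valid. That theorem then gives an \scfoid-proof in which every cut formula has the form $\forall \bar{y}: Q(\bar{y}) \lor \lnot Q(\bar{y})$ with $Q$ occurring negatively in a rule body. In other words, $\regularsequent$ is provable with elementary cuts.

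There is essentially no obstacle at this level, since all the difficulty has been absorbed into Theorem \ref{thm:henkin-valid-elementary-cuts}. That theorem in turn rests on Lemma \ref{lem:elementary-cuts-negation-parameters}, Lemma \ref{lem:cfp-positive-rewriting}, and the cut-free Henkin completeness of Theorem \ref{thm:brotherston}.

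The only point I would check carefully is the notion of elementary cut. It must refer to the predicates occurring negatively in the original definitions $\defn_i$, not in their decompositions $\defn_{i,j}$. The proof of Theorem \ref{thm:henkin-valid-elementary-cuts} passes through the decomposed sequent, where the negatively occurring predicates are the same symbols: the rule bodies are unchanged, and only the grouping into subdefinitions differs. So the cuts produced there are elementary with respect to the original sequent as well.
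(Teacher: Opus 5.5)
Your proposal is correct and matches the paper's proof exactly: Henkin soundness (Theorem~\ref{thm:soundness}) turns provability into Henkin validity, and Theorem~\ref{thm:henkin-valid-elementary-cuts} then gives a proof with elementary cuts. Your side remark is also right: the decomposition leaves the rule bodies unchanged, so $\defn_i$ and its subdefinitions $\defn_{i,j}$ have the same negatively occurring predicates, and hence the same admissible elementary cut formulas.
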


\begin{proof}
	If\/ $\regularsequent$ is provable, then it is valid under the Henkin semantics by Theorem \ref{thm:soundness}, whence it is provable with elementary cuts by Theorem \ref{thm:henkin-valid-elementary-cuts}.
\end{proof}

Note that Theorems \ref{thm:henkin-valid-elementary-cuts} and \ref{thm:cut-restriction} generalize Theorems \ref{thm:brotherston} and \ref{thm:brotherston-cut-elimination}, respectively, since, if the bodies of the rules of the definitions $\defns$ contain no negation, then there are no elementary cuts.

The counterexamples from the start of this section leave open the possibility that cut-restriction holds for a broader fragment of sequents than the one described in Theorem \ref{thm:cut-restriction}.
Indeed, it is straightforward to check that the sequents in Examples \ref{ex:proof_non-total_liar}, \ref{ex:counterexample-cut-elimination-P-if-O-or-not-P} and \ref{ex:counterexample-cut-elimination-P-if_O-or-not-O} are provable with elementary cuts.\footnote{
	The sequent in Example \ref{ex:counterexample-cut-elimination-two-defs} is not provable with elementary cuts (as it does not admit any elementary cuts), which shows that there are still limitations on cut-restriction in \scfoid.
}
Hence, the stratification condition on the definitions $\defns$ in Theorem \ref{thm:henkin-valid-elementary-cuts} can plausibly be relaxed.
Moreover, it is straightforward to check that the sequent $\defn \vdash P$ from Example \ref{ex:counterexample-cut-elimination-P-if_O-or-not-O} is cut-free provable if we replace (def R) with the following alternative right introduction rule for defined atoms:\footnote{
	It is not difficult to show that (def R) and (def R') are interchangeable, i.e., that replacing (def R) by (def R') does not affect the set of theorems.
}
\begin{equation*}
	\begin{prooftree}
		\hypo{\Gamma, \defn \vdash \formula_P(\bar{v}), \Delta}
		\infer1[(def R')]{\Gamma, \defn \vdash P(\bar{v}), \Delta}
	\end{prooftree}
\end{equation*}
or alternatively, that the sequent $\normalization{\defn} \vdash P$ with the normalization $\normalization{\defn} = \defintight{P \rul O \lor \lnot O}$ of $\defn$ is cut-free provable in \scfoid.
This raises the possibility that cut-restriction can be strengthened to cut-elimination when replacing (def R) with (def R') in \scfoid, or when restricting to normal definitions $\defns$.
Thus, there are still multiple open questions in identifying fragments of \foid-sequents that admit cut-restriction and cut-elimination.

%% file: conclusion.tex
\section{Conclusion} \label{sec:conclusion}

We have introduced a sequent calculus \scfoid for the logic \foid, thus enabling formal proofs about general inductive definitions.
Remarkably, our sequent calculus supports proofs of non-totality, thereby offering formal tools to demonstrate that certain definitions are ill-formed.
\scfoid extends the sequent calculus \lkid by Brotherston and Simpson, formalizing the principle of mathematical induction \cite{jlc/BrotherstonS11}.
While covering a substantially larger class of inductive definitions, our sequent calculus remains syntactically close to \lkid, the most fundamental difference being the asymmetry between positive and negative occurrences of defined atoms in the induction rule.
We motivated our choice for this extended induction rule by linking it to the stable semantics, which is closely related to the well-founded semantics.
To corroborate our calculus, we demonstrated it on various examples and established multiple proof-theoretical results for it.

This paper paves the way for several directions for future research: 
\begin{itemize}
	\item As remarked at the end of Section \ref{sec:cut-elimination}, the examples from the start of the section leave open the possibility that cut-restriction holds for a broader fragment than the one described in Theorem \ref{thm:cut-restriction}.
	Furthermore, the cut-restriction result for the sequents $\regularsequent$ in this fragment may be strengthened to cut-elimination by replacing (def R) by (def R') in \scfoid, or by restricting to normal definitions $\defns$.
	\item Following the intuition from Section \ref{sec:scfoid} of viewing induction hypotheses as upper bounds on defined predicates, we could refine our induction rule by substituting the negative occurrences of defined predicates with formulas describing a lower bound on these predicates.
	Such a rule could be stronger than our current induction rule.
	On the other hand, this refined rule would be more complex, requiring the choice of an additional kind of ``induction hypothesis'',
	it would presumably not be sound w.r.t.~the stable semantics, and it would deviate further from the principle of mathematical induction as used in practice.
	\item Hou Ping and Denecker enhanced the deductive strength of their sequent calculus for \foid by including least fixpoint expressions, thereby enlarging the class of expressible sets (while also introducing an infinitary inference rule) \cite{lpnmr/HouD09}.
	One could investigate whether the same principle applies to \scfoid, as the addition of least fixpoint expressions would enlarge the class of expressible induction hypotheses.
	\item Brotherston and Simpson also devised an infinitary and a cyclic sequent calculus for their logic of inductive definitions, in which proofs can take the form of infinite trees and graphs, respectively \cite{jlc/BrotherstonS11}.
	These calculi can be seen as formalizations of the principle of infinite descent.
	Compared to \lkid, the main advantage of these calculi is that they do not require the use of induction hypotheses.
	Furthermore, both calculi are strictly stronger than \lkid, i.e., their set of theorems is a strict superset of the set of \lkid-theorems \cite{Berardi2019}.
	A straightforward question is whether this infinitary and this cyclic counterpart to \lkid can be extended to \foid as well.
	\item Even though \scfoid admits proofs of non-totality, it does not admit proofs of totality of definitions.
	Such proofs would nevertheless be valuable, as they show why certain definitions construct well-defined relations.
	In fact, enabling proofs of totality would require extending the language of \foid, as it is currently not possible to express that a definition is total in \foid.
	\item Besides sequent calculi, one could also develop natural deduction systems for \foid.
	Natural deduction systems are typically less susceptible to theoretical analysis than sequent calculi, but their proofs correspond more closely to mathematical practice.
	\item One could extend \scfoid to extensions of \foid with additional language constructs, such as types (or sorts), aggregates, partial functions and modal operators.
	\item \scfoid could be implemented in proof assistants and proof logging algorithms, to enable interactive theorem proving and proof logging in \foid, respectively.
\end{itemize}

%% file: paper.bbl

%% file: proof-induction-scheme-wf.tex
\section{Proof of Proposition \ref{prop:induction-scheme-wf}} \label{app:induction-scheme-wf}

\inductionschemewf*

To prove Proposition \ref{prop:induction-scheme-wf}, we associate with each three-valued structure $\tvs$ a pair of two-valued structures $(\tvs^l, \tvs^u)$. 
Specifically, we define $\tvs^l$ and $\tvs^u$ to have the same vocabulary and domain $D$ as $\tvs$, the same interpretation of non-predicate symbols, and for each predicate symbol $P$, $P^{\tvs^l} \coloneq \{ \bar{a} \in D^{\arity{P}} \mid P^{\tvs}(\bar{a}) = \true \}$ and $P^{\tvs^u} \coloneq \{ \bar{a} \in D^{\arity{P}} \mid P^{\tvs}(\bar{a}) = \true \text{ or } P^{\tvs}(\bar{a}) = \unknown \}$. Intuitively, $\tvs^l$ and $\tvs^u$ can be seen as two-valued approximations of $\tvs$ by interpreting all unknown facts as false and as true respectively. 
It is straightforward to see that $\tvs^l \leqt \tvs \leqt \tvs^u$, when viewing $\tvs^l$ and $\tvs^u$ as three-valued structures, and furthermore that $\tvs \leqp \tvstwo$ if and only if $\tvs^l \leqt \tvstwo^l$ and $\tvstwo^u \leqt \tvs^u$ for all three-valued structures $\tvs$ and $\tvstwo$. 
We define a satisfaction relation $\models$ between pairs $(\structure, \structuretwo)$ of two-valued structures with the same domain and the same vocabulary, and \fo-formulas $\formula$.
We do so by structural induction on $\formula$, via the following rules:
\begin{itemize}
	\item $(\structure, \structuretwo) \models t=s$ if $t^{\structure}=s^{\structure}$;
	\item $(\structure, \structuretwo) \models P(\bar{t})$ if $\bar{t}^{\structure} \in P^{\structure}$;
	\item $(\structure, \structuretwo) \models \lnot \formulatwo$ if $(\structuretwo, \structure) \not\models \formulatwo$;
	\item $(\structure, \structuretwo) \models \formulatwo \land \formulathree$ if $(\structure, \structuretwo) \models \formulatwo$ and $(\structure, \structuretwo) \models \formulathree$;
	\item $(\structure, \structuretwo) \models \forall x: \formulatwo$ if $(\structure[x:a], \structuretwo[x:a]) \models \formulatwo$ for all $a \in \dom{\structure}$.
\end{itemize}
The rules for $\lor$, $\Rightarrow$, $\Leftrightarrow$ and $\exists$ can de derived from the above rules through reformulation in terms of $\lnot$, $\land$ and $\forall$.
Under this definition, all positive occurrences of predicate symbols in $\formula$ are evaluated by $\structure$ and all negative occurrences by $\structuretwo$.
Intuitively, one can read $(\structure, \structuretwo) \models \formula$ as saying that $\formula$ is \emph{certainly} true when viewing $\structure$ as a lower bound and $\structuretwo$ as an upper bound on what is known (i.e., what is true in $\structuretwo$ is possibly true and what is false in $\structuretwo$ is certainly false) \cite{iclp06/WittocxVMDB06}. 
We will write $\formula^{(\structure, \structuretwo)} = \true$ if $(\structure, \structuretwo) \models \formula$, and $\formula^{(\structure, \structuretwo)} = \false$ if $(\structure, \structuretwo) \not\models \formula$.
For a merged body $\formula_P(\bar{y})$, we write $\formula_P^{(\structure, \structuretwo)}(\bar{a})$ for $\formula_P^{(\structure[\bar{y} : \bar{a}], \structuretwo[\bar{y} : \bar{a}])}(\bar{y})$. 
In the proof of Proposition \ref{prop:induction-scheme-wf}, we will use the following two properties of this satisfaction relation: 

\begin{proposition} \label{prop:decomposition property}
	Let $\tvs$ be a three-valued structure and $(\tvs^l, \tvs^u)$ the corresponding pair of two-valued structures. Let $\formula$ be an \fo-formula. If $\formula^{\tvs} \in \{\false, \true\}$, then $\formula^{\tvs} = \formula^{(\tvs^l, \tvs^u)} = \formula^{(\tvs^u, \tvs^l)}$. 
\end{proposition}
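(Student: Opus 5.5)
The plan is structural induction on $\formula$, proving a slightly stronger claim. For any three-valued structure $\tvs$ and every \fo-formula $\formula$:
\begin{itemize}
	\item if $\formula^{\tvs} = \true$, then both $(\tvs^l,\tvs^u) \models \formula$ and $(\tvs^u,\tvs^l) \models \formula$;
	\item if $\formula^{\tvs} = \false$, then both $(\tvs^l,\tvs^u) \not\models \formula$ and $(\tvs^u,\tvs^l) \not\models \formula$.
\end{itemize}
Both orders of the pair must be handled simultaneously, because the negation clause of the pair-satisfaction relation swaps the two components. The induction hypothesis therefore has to be available for both orders.

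\textbf{Base cases.} For $t=s$, the value does not depend on predicates, and $\tvs$, $\tvs^l$ and $\tvs^u$ agree on function symbols, so all three truth values coincide. For an atom $P(\bar{t})$, the pair relation evaluates it in the first component only.
\begin{itemize}
	\item If $P(\bar{t})^{\tvs} = \true$, then $\bar{t}^{\tvs} \in P^{\tvs^l} \subseteq P^{\tvs^u}$, so the atom holds in both orders.
	\item If $P(\bar{t})^{\tvs} = \false$, then $\bar{t}^{\tvs} \notin P^{\tvs^u} \supseteq P^{\tvs^l}$, so the atom fails in both orders.
\end{itemize}
Here it is essential that the value is not $\unknown$; that case is the reason the hypothesis $\formula^{\tvs} \in \{\false,\true\}$ is needed.

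\textbf{Negation.} If $(\lnot\formulatwo)^{\tvs} = \true$, then $\formulatwo^{\tvs} = \false$. By the induction hypothesis, $(\tvs^u,\tvs^l) \not\models \formulatwo$, hence $(\tvs^l,\tvs^u) \models \lnot\formulatwo$. Symmetrically, $(\tvs^l,\tvs^u) \not\models \formulatwo$ gives $(\tvs^u,\tvs^l) \models \lnot\formulatwo$. The case $(\lnot\formulatwo)^{\tvs} = \false$ is dual.

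\textbf{Conjunction.} Kleene's minimum being $\true$ forces both conjuncts to be $\true$, and the induction hypothesis applies to each. Being $\false$ forces at least one conjunct to be $\false$; the induction hypothesis applied to that conjunct makes the conjunction fail in both orders. The other conjunct may be $\unknown$, but this is harmless, since we need no information about it.

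\textbf{Universal quantifier.} We use that $\tvs[x:a]^l = \tvs^l[x:a]$ and likewise for $u$, since the two operations commute. Then $(\forall x\,\formulatwo)^{\tvs} = \true$ means $\formulatwo^{\tvs[x:a]} = \true$ for every $a$, and the induction hypothesis applies pointwise. The value $\false$ means some witness $a$ has $\formulatwo^{\tvs[x:a]} = \false$, which suffices.

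The only delicate point is setting up the induction for both orders at once and observing that in the $\false$-conjunction and $\false$-quantifier cases only one subformula or witness is needed, so $\unknown$ subformulas never reach the induction hypothesis. The derived connectives follow because they are abbreviations.
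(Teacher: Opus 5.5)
Your proof is correct and follows the same route as the paper: a structural induction on $\formula$ over all three-valued structures $\tvs$, carrying both orders $(\tvs^l,\tvs^u)$ and $(\tvs^u,\tvs^l)$ in the induction hypothesis, with the same atom, negation, conjunction and quantifier cases and the same commutation $\tvs[x:a]^l = \tvs^l[x:a]$. Your ``slightly stronger claim'' is in fact just the proposition restated in terms of $\models$.
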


\begin{proof}
	We show by structural induction on $\formula$ that for all three-valued structures $\tvs$, if $\formula^{\tvs} \in \{\false, \true\}$, then $\formula^{\tvs} = \formula^{(\tvs^l, \tvs^u)} = \formula^{(\tvs^u, \tvs^l)}$:
	
	\textbf{Case}: $\formula$ is of the form $t=s$.
	Since $\tvs$, $\tvs^l$ and $\tvs^u$ have the same interpretation of terms, $(t=s)^{\tvs} = (t=s)^{\tvs^l} = (t=s)^{(\tvs^l, \tvs^u)} = (t=s)^{\tvs^u} = (t=s)^{(\tvs^u, \tvs^l)}$.
	
	\textbf{Case}: $\formula$ is of the form $P(\bar{t})$.
	If $P(\bar{t})^{\tvs} \in \{\false, \true\}$, then $P(\bar{t})^{(\tvs^l, \tvs^u)} = P(\bar{t})^{\tvs^l} = P(\bar{t})^{\tvs}$ and $P(\bar{t})^{(\tvs^u, \tvs^l)} = P(\bar{t})^{\tvs^u} = P(\bar{t})^{\tvs}$.
	
	\textbf{Case}: $\formula$ is of the form $\lnot \formulatwo$.
	If $(\lnot \formulatwo)^{\tvs} \in \{\false, \true\}$, then also $\formulatwo^{\tvs} \in \{\false, \true\}$.
	By induction hypothesis, $\formulatwo^{\tvs} = \formulatwo^{(\tvs^l, \tvs^u)} = \formulatwo^{(\tvs^u, \tvs^l)}$.
	It follows that $(\lnot \formulatwo)^{\tvs} = \lnot \formulatwo^{\tvs} = \lnot \formulatwo^{(\tvs^l, \tvs^u)} = (\lnot \formulatwo)^{(\tvs^u, \tvs^l)}$ and $(\lnot \formulatwo)^{\tvs} = \lnot \formulatwo^{\tvs} = \lnot \formulatwo^{(\tvs^u, \tvs^l)} = (\lnot \formulatwo)^{(\tvs^l, \tvs^u)}$.
	
	\textbf{Case}: $\formula$ is of the form $\formulatwo \land \formulathree$.
	If $(\formulatwo \land \formulathree)^{\tvs} = \false$, then $\formulatwo^{\tvs} = \false$ or $\formulathree^{\tvs} = \false$.
	Assume without loss of generality that $\formulatwo^{\tvs} = \false$.
	By induction hypothesis, $\formulatwo^{(\tvs^l, \tvs^u)} = \formulatwo^{(\tvs^u, \tvs^l)} = \formulatwo^{\tvs} = \false$.
	It follows that $(\formulatwo \land \formulathree)^{(\tvs^l, \tvs^u)} = \formulatwo^{(\tvs^l, \tvs^u)} \land \formulathree^{(\tvs^l, \tvs^u)} = \false$ and $(\formulatwo \land \formulathree)^{(\tvs^u, \tvs^l)} = \formulatwo^{(\tvs^u, \tvs^l)} \land \formulathree^{(\tvs^u, \tvs^l)} = \false$.
	
	If $(\formulatwo \land \formulathree)^{\tvs} = \true$, then $\formulatwo^{\tvs} = \true$ and $\formulathree^{\tvs} = \true$.
	By induction hypothesis, $\formulatwo^{(\tvs^l, \tvs^u)} = \formulatwo^{(\tvs^u, \tvs^l)} = \formulatwo^{\tvs} = \true$ and $\formulathree^{(\tvs^l, \tvs^u)} = \formulathree^{(\tvs^u, \tvs^l)} = \formulathree^{\tvs} = \true$.
	It follows that $(\formulatwo \land \formulathree)^{(\tvs^l, \tvs^u)} = \formulatwo^{(\tvs^l, \tvs^u)} \land \formulathree^{(\tvs^l, \tvs^u)} = \true$ and $(\formulatwo \land \formulathree)^{(\tvs^u, \tvs^l)} = \formulatwo^{(\tvs^u, \tvs^l)} \land \formulathree^{(\tvs^u, \tvs^l)} = \true$.
	
	\textbf{Cases}: $\formula$ is of the form $\formulatwo \lor \formulathree$, $\formulatwo \Rightarrow \formulathree$, or $\formulatwo \Leftrightarrow \formulathree$.
	These cases are similar to the previous case.
	
	\textbf{Case}: $\formula$ is of the form $\forall x: \formulatwo$.
	Let $D$ be the domain of $\tvs$.
	If $(\forall x: \formulatwo)^{\tvs} = \false$, then $\formulatwo^{\tvs[x:a]} = \false$ for some $a \in D$.
	By induction hypothesis, $\formulatwo^{(\tvs[x:a]^l, \tvs[x:a]^u)} = \formulatwo^{(\tvs[x:a]^u, \tvs[x:a]^l)} = \formulatwo^{\tvs[x:a]} = \false$.
	Note that $\tvs[x:a]^l = \tvs^l[x:a]$ and $\tvs[x:a]^u = \tvs^u[x:a]$, i.e., $(\tvs^l[x:a], \tvs^u[x:a])$ is the pair of two-valued structures corresponding to $\tvs[x:a]$.
	Therefore, $\formulatwo^{(\tvs^l[x:a], \tvs^u[x:a])} = \formulatwo^{(\tvs^u[x:a], \tvs^l[x:a])} = \false$.
	It follows that $(\forall x: \formulatwo)^{(\tvs^l, \tvs^u)} = (\forall x: \formulatwo)^{(\tvs^u, \tvs^l)} = \false$.
	
	If $(\forall x: \formulatwo)^{\tvs} = \true$, then $\formulatwo^{\tvs[x:a]} = \true$ for all $a \in D$.
	By induction hypothesis, $\formulatwo^{(\tvs[x:a]^l, \tvs[x:a]^u)} = \formulatwo^{(\tvs[x:a]^u, \tvs[x:a]^l)} = \formulatwo^{\tvs[x:a]} = \true$ for all $a \in D$.
	Since $\tvs[x:a]^l = \tvs^l[x:a]$ and $\tvs[x:a]^u = \tvs^u[x:a]$, we have that $\formulatwo^{(\tvs^l[x:a], \tvs^u[x:a])} = \formulatwo^{(\tvs^u[x:a], \tvs^l[x:a])} = \true$ for all $a \in D$.
	It follows that $(\forall x: \formulatwo)^{(\tvs^l, \tvs^u)} = (\forall x: \formulatwo)^{(\tvs^u, \tvs^l)} = \true$.
	
	\textbf{Case}: $\formula$ is of the form $\exists x: \formulatwo$.
	This case is similar to the previous case.
\end{proof}

\begin{proposition} \label{prop:(anti)monotonicity truth assignment}
	Let $\structure, \structure', \structuretwo$ and $\structuretwo'$ be two-valued structures with the same domain, vocabulary, and interpretation of non-predicate symbols. 
	Let $\formula$ be an \fo-formula. 
	If $\structure \leqt \structure'$ and $\structuretwo' \leqt \structuretwo$, then $\formula^{(\structure, \structuretwo)} \leqt \formula^{(\structure', \structuretwo')}$ and $\formula^{(\structuretwo', \structure')} \leqt \formula^{(\structuretwo, \structure)}$.
\end{proposition}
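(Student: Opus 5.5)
We prove both inequalities simultaneously by structural induction on $\formula$. The statement is quantified over all four structures, and the two claims are mirror images under swapping the pair components. Concretely, for fixed $\formula$ we establish
\[
\structure \leqt \structure' \text{ and } \structuretwo' \leqt \structuretwo \;\Longrightarrow\; \formula^{(\structure, \structuretwo)} \leqt \formula^{(\structure', \structuretwo')} \text{ and } \formula^{(\structuretwo', \structure')} \leqt \formula^{(\structuretwo, \structure)},
\]
and assume it for all proper subformulas and all quadruples of structures. Note that the second conjunct is really the first one with the roles renamed: $\structuretwo' \leqt \structuretwo$ plays the role of the lower pair and $\structure \leqt \structure'$ that of the upper pair.

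The base cases are immediate. For $t=s$ the value depends only on non-predicate symbols, which all four structures share, so both sides are equal. For $P(\bar{t})$, the value in $(\structure,\structuretwo)$ is that of $P(\bar{t})$ in the first component. Since $\structure \leqt \structure'$ gives $P^{\structure} \subseteq P^{\structure'}$, we get the first inequality; since $\structuretwo' \leqt \structuretwo$, we get the second. The constants $\top,\bot$ are trivial.

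For $\lnot\formulatwo$ we use the swapping clause. We have $(\lnot\formulatwo)^{(\structure,\structuretwo)} = \lnot \formulatwo^{(\structuretwo,\structure)}$, and the induction hypothesis (second conjunct for $\formulatwo$) gives $\formulatwo^{(\structuretwo',\structure')} \leqt \formulatwo^{(\structuretwo,\structure)}$. Since $\lnot$ reverses $\leqt$ on $\{\false,\true\}$, this yields $(\lnot\formulatwo)^{(\structure,\structuretwo)} \leqt (\lnot\formulatwo)^{(\structure',\structuretwo')}$. The second conjunct follows symmetrically from the first conjunct of the hypothesis. This crossing of the two conjuncts is the one point where doing both claims simultaneously is essential, and is the main (mild) obstacle; everything else is routine.

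For $\formulatwo\land\formulathree$, monotonicity of $\min_{\leqt}$ finishes the case from the hypotheses on both conjuncts. For $\forall x:\formulatwo$, observe that $\structure[x:a] \leqt \structure'[x:a]$ and $\structuretwo'[x:a] \leqt \structuretwo[x:a]$ still hold, since they share the interpretation of $x$. Applying the hypothesis for each $a$ and then taking the minimum gives the result. The connectives $\lor,\Rightarrow,\Leftrightarrow,\exists$ are, by the paper's convention, abbreviations in terms of $\lnot,\land,\forall$, so they are covered. If one insists on treating them directly, $\Leftrightarrow$ should first be unfolded as $(\formulatwo\Rightarrow\formulathree)\land(\formulathree\Rightarrow\formulatwo)$; the induction is then applied to the unfolded formula, which is legitimate because the satisfaction relation is defined through that reformulation.
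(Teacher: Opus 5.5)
Your proof is correct and follows essentially the same route as the paper: a simultaneous structural induction over all quadruples of structures, with the negation case obtained by crossing the two conjuncts through the swap clause, and the equality, predicate, conjunction and quantifier cases handled routinely. As you note yourself, the second conjunct is just the first one with $\structuretwo'$, $\structuretwo$, $\structure'$, $\structure$ in the roles of $\structure$, $\structure'$, $\structuretwo$, $\structuretwo'$, so proving both simultaneously is convenient rather than strictly essential.
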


\begin{proof}
	We show by structural induction on $\formula$ that $\formula^{(\structure, \structuretwo)} \leqt \formula^{(\structure', \structuretwo')}$ and $\formula^{(\structuretwo', \structure')} \leqt \formula^{(\structuretwo, \structure)}$ for all two-valued structures $\structure, \structure', \structuretwo$ and $\structuretwo'$ with the same domain, vocabulary, and interpretation of non-predicate symbols, such that $\structure \leqt \structure'$ and $\structuretwo' \leqt \structuretwo$.
	
	\textbf{Case}: $\formula$ is of the form $t=s$.
	Since $\structure, \structure', \structuretwo$ and $\structuretwo'$ have the same interpretation of terms, $(t=s)^{\structure} = (t=s)^{\structure'} = (t=s)^{\structuretwo} = (t=s)^{\structuretwo'}$, and hence, $(t=s)^{(\structure, \structuretwo)} = (t=s)^{(\structure', \structuretwo')} = (t=s)^{(\structuretwo, \structure)} = (t=s)^{(\structuretwo', \structure')}$.
	In particular, $(t=s)^{(\structure, \structuretwo)} \leqt (t=s)^{(\structure', \structuretwo')}$ and $(t=s)^{(\structuretwo', \structure')} \leqt (t=s)^{(\structuretwo, \structure)}$.
	
	\textbf{Case} $\formula$ is of the form $P(\bar{t})$.
	If $\structure \leqt \structure'$ and $\structuretwo' \leqt \structuretwo$, then $P(\bar{t})^{(\structure, \structuretwo)} = P(\bar{t})^{\structure} \leqt P(\bar{t})^{\structure'} = P(\bar{t})^{(\structure', \structuretwo')}$ and $P(\bar{t})^{(\structuretwo', \structure')} = P(\bar{t})^{\structuretwo'} \leqt P(\bar{t})^{\structuretwo} = P(\bar{t})^{(\structuretwo, \structure)}$.
	
	\textbf{Case}: $\formula$ is of the form $\lnot \formulatwo$.
	Suppose that $\structure \leqt \structure'$ and $\structuretwo' \leqt \structuretwo$.
	By induction hypothesis, $\formulatwo^{(\structure, \structuretwo)} \leqt \formulatwo^{(\structure', \structuretwo')}$ and $\formulatwo^{(\structuretwo', \structure')} \leqt \formulatwo^{(\structuretwo, \structure)}$.
	It follows that $\lnot \formulatwo^{(\structure', \structuretwo')} \leqt \lnot \formulatwo^{(\structure, \structuretwo)}$ and $\lnot \formulatwo^{(\structuretwo, \structure)} \leqt \lnot \formulatwo^{(\structuretwo', \structure')}$.
	Using the fact that $(\lnot \formulathree)^{(\mathcal{K}, \mathcal{L})} = \lnot \formulathree^{(\mathcal{L}, \mathcal{K})}$ for all \fo-formulas $\formulathree$ and all pairs of two-valued structures $(\mathcal{K}, \mathcal{L})$, we deduce that $(\lnot \formulatwo)^{(\structuretwo', \structure')} = \lnot \formulatwo^{(\structure', \structuretwo')} \leqt \lnot \formulatwo^{(\structure, \structuretwo)} = (\lnot \formulatwo)^{(\structuretwo, \structure)}$ and $(\lnot \formulatwo)^{(\structure, \structuretwo)} = \lnot \formulatwo^{(\structuretwo, \structure)} \leqt \lnot \formulatwo^{(\structuretwo', \structure')} = (\lnot \formulatwo)^{(\structure', \structuretwo')}$.
	
	\textbf{Case}: $\formula$ is of the form $\formulatwo \land \formulathree$.
	Suppose that $\structure \leqt \structure'$ and $\structuretwo' \leqt \structuretwo$.
	By induction hypothesis, $\formulatwo^{(\structure, \structuretwo)} \leqt \formulatwo^{(\structure', \structuretwo')}$, $\formulathree^{(\structure, \structuretwo)} \leqt \formulathree^{(\structure', \structuretwo')}$, $\formulatwo^{(\structuretwo', \structure')} \leqt \formulatwo^{(\structuretwo, \structure)}$ and $\formulathree^{(\structuretwo', \structure')} \leqt \formulathree^{(\structuretwo, \structure)}$.
	Then $(\formulatwo \land \formulathree)^{(\structure, \structuretwo)} = \min_{\leqt} \{\formulatwo^{(\structure, \structuretwo)}, \formulathree^{(\structure, \structuretwo)}\} \leqt \min_{\leqt} \{\formulatwo^{(\structure', \structuretwo')}, \formulathree^{(\structure', \structuretwo')}\} = (\formulatwo \land \formulathree)^{(\structure', \structuretwo')}$ and $(\formulatwo \land \formulathree)^{(\structuretwo', \structure')} = \min_{\leqt} \{\formulatwo^{(\structuretwo', \structure')}, \formulathree^{(\structuretwo', \structure')}\}$\\ 
	$\leqt \min_{\leqt} \{\formulatwo^{(\structuretwo, \structure)}, \formulathree^{(\structuretwo, \structure)}\} = (\formulatwo \land \formulathree)^{(\structuretwo, \structure)}$.
	
	\textbf{Cases}: $\formula$ is of the form $\formulatwo \lor \formulathree$, $\formulatwo \Rightarrow \formulathree$, or $\formulatwo \Leftrightarrow \formulathree$.
	These cases are similar to the previous case.
	
	\textbf{Case}: $\formula$ is of the form $\forall x: \formulatwo$.
	Suppose that $\structure \leqt \structure'$ and $\structuretwo' \leqt \structuretwo$.
	Let $D$ be the domain of $\structure, \structure', \structuretwo$ and $\structuretwo'$.
	Then $\structure[x:a] \leqt \structure'[x:a]$ and $\structuretwo'[x:a] \leqt \structuretwo[x:a]$ for all $a \in D$.
	By induction hypothesis, $\formulatwo^{(\structure[x:a], \structuretwo[x:a])} \leqt \formulatwo^{(\structure'[x:a], \structuretwo'[x:a])}$ and $\formulatwo^{(\structuretwo'[x:a], \structure'[x:a])} \leqt \formulatwo^{(\structuretwo[x:a], \structure[x:a])}$ for all $a \in D$.
	It follows that $(\forall x: \formulatwo)^{(\structure, \structuretwo)} = \min_{\leqt} \{\formulatwo^{(\structure[x:a], \structuretwo[x:a])} \mid a \in D \} \leqt \min_{\leqt} \{\formulatwo^{(\structure'[x:a], \structuretwo'[x:a])} \mid a \in D \} = (\forall x: \formulatwo)^{(\structure', \structuretwo')}$ and $(\forall x: \formulatwo)^{(\structuretwo', \structure')} = \min_{\leqt} \{\formulatwo^{(\structuretwo'[x:a], \structure'[x:a])} \mid a \in D \} \leqt \min_{\leqt} \{\formulatwo^{(\structuretwo[x:a], \structure[x:a])} \mid a \in D \} = (\forall x: \formulatwo)^{(\structuretwo, \structure)}$.
	
	\textbf{Case}: $\formula$ is of the form $\exists x: \formulatwo$.
	This case is similar to the previous case.
\end{proof}

\begin{proof}[Proof of Proposition \ref{prop:induction-scheme-wf}]
	We can assume without loss of generality that $\voc{\structure} = \sym{\defn}$, so that $\structure$ is the well-founded model of $\defn$ in $\structure|_{\pars{\formula}}$, and that $\defn$ is normal.
	Let $A_\Pi$ be a tuple of sets $A_Q \subseteq D^{\arity{Q}}$ for every $Q \in \Pi$, and assume that
	\begin{equation} \label{eq:assumption minor premise}
		\structure[G_\Pi : A_\Pi] \models \forall \bar{z}: \formula_{Q}(\bar{z})[G_\Pi/\Pi^+] \Rightarrow G_Q(\bar{z})
	\end{equation}
	for every $Q \in \Pi$, where $\bar{z}$ is an $\arity{Q}$-tuple of object symbols. 
	We need to show that $\structure[G_\Pi : A_\Pi] \models \forall \bar{x}: P(\bar{x}) \Rightarrow G_P(\bar{x})$, which comes down to $P^{\structure} \subseteq A_P$. 
	Suppose by contradiction that $P^{\structure} \not\subseteq A_P$. 
	For every $Q \in \Pi$, let $V_Q$ be the set $Q^{\structure} \setminus A_Q$. 
	Then $V_P \neq \emptyset$. 
	Let $\wellfoundedinduction$ be a well-founded induction for $\defn$ in $\structure|_{\pars{\defn}}$.
	Take $i$ such that $Q^{\tvs_i}(\bar{b})=\unknown$ for all $Q \in \Pi$ and all $\bar{b} \in V_Q$, but $Q^{\tvs_{i+1}}(\bar{b})=\true$ for some $Q \in \Pi$ and some $\bar{b} \in V_Q$. 
	Note that such $i$ must exist, since all domain atoms $Q(\bar{b})$ with $Q \in \Pi$ and $\bar{b} \in V_Q$ are unknown in $\tvs_0$, but true in $\tvs_\beta$, by definition of $V_Q$. 
	Hence, their truth must be derived at a certain step $i+1$. 
	Furthermore, the fact that $V_P$ is non-empty guarantees that there is at least one domain atom of this form. 
	Fix a predicate symbol $Q \in \Pi$ and a tuple $\bar{b} \in V_Q$ for which $Q^{\tvs_{i+1}}(\bar{b})=\true$. 
	In words, the truth of $Q(\bar{b})$ gets derived at step $i+1$ of the well-founded induction. 
	This is only possible if $\formula_{Q(\bar{b})}^{\tvs_{i}}=\true$. 
	By Proposition \ref{prop:decomposition property}, it follows that $\formula_Q^{(\tvs_{i}^l, \tvs_{i}^u)}(\bar{b})=\true$. 
	Furthermore, we have that $\formula_Q^{(\structure[\Pi : A_\Pi], \structure)}(\bar{b}) = \false$. 
	Indeed, $\formula_Q^{(\structure[\Pi : A_\Pi], \structure)}(\bar{b}) = \true$ would mean that $\structure[G_\Pi : A_\Pi, \bar{z}:\bar{b}] \models \formula_{Q}(\bar{z})[G_\Pi/\Pi^+]$, 
	and by (\ref{eq:assumption minor premise}), this entails $\structure[G_\Pi : A_\Pi, \bar{z}: \bar{b}] \models G_Q(\bar{z})$. 
	However, the latter comes down to $\bar{b} \in A_Q$, which contradicts $\bar{b} \in V_Q$. 
	Consequently,
	\begin{equation} \label{eq:t>f}
		\formula_Q^{(\structure[\Pi : A_\Pi], \structure)}(\bar{b}) = \false \lesst \true = \formula_Q^{(\tvs_i^l, \tvs_i^u)}(\bar{b}).
	\end{equation}
	
	We proceed by observing the inequalities
	\begin{equation} \label{eq:inequalities}
		\tvs_i^l \leqt \structure[\Pi : A_\Pi] \qquad \text{ and } \qquad \structure \leqt \tvs_i^u.
	\end{equation}
	The right inequality follows from the fact that $\tvs_i \leqp \tvs_\beta$, and hence $\structure = \tvs_\beta \leqt \tvs_\beta^u \leqt \tvs_i^u$. 
	Similarly, we find that $\tvs_i^l \leq \structure$. 
	Thus, to prove the left inequality, it suffices to show that $R^{\tvs_i^l} \subseteq R^{\structure[\Pi : A_\Pi]} = A_R$ for all $R \in \Pi$. 
	Take $\bar{c} \in R^{\tvs_i^l}$. 
	Then $R^{\tvs_i}(\bar{c}) = \true$. 
	This means that the truth of $R(\bar{c})$ has been derived by step $i$ of the well-founded induction. 
	Hence, $\bar{c} \in R^{\structure}$. 
	This entails that $\bar{c} \in A_R$. 
	Indeed, $\bar{c} \notin A_R$ would imply that $\bar{c} \in R^{\mcw} \setminus A_R= V_R$, but then $R^{\tvs_i}(\bar{c}) = \unknown \neq \true$ by the choice of $i$.
	This proves the left equality of (\ref{eq:inequalities}).
	By Proposition \ref{prop:(anti)monotonicity truth assignment}, the inequalities of (\ref{eq:inequalities}) imply that $\formula_Q^{(\tvs_i^l, \tvs_i^u)}(\bar{b}) \leqt \formula_Q^{(\structure[\Pi : A_\Pi], \structure)}(\bar{b})$. 
	However, this contradicts (\ref{eq:t>f}). 
	We can thus conclude that $P^{\structure} \subseteq A_P$, which finishes the proof.
\end{proof}

%% file: proofs-section-scfoid.tex
\section{Proofs of Section \ref{sec:scfoid}} \label{app:proofs-section-scfoid}

\normalizationequiderivable*

\begin{proof}
	Suppose that $\Gamma, \defn \vdash \Delta$ is \scfoid-provable.
	We show that $\Gamma, \normalization{\defn}\vdash \Delta$ is \scfoid-provable by induction on the height of an \scfoid-proof $T$ of $\Gamma, \defn \vdash \Delta$.
	The base cases are trivial, and so are most cases in the inductive step.
	We only discuss the interesting cases, which are those where $\Gamma, \defn \vdash \Delta$ is derived by (def R) or (def L) in $T$.
	
	\textbf{Case}: $\Gamma, \defn \vdash \Delta$ is derived by (def R):
	\begin{equation*}
		\begin{prooftree}
			\hypo{\Gamma, \defn \vdash \formula[\bar{s}/\bar{x}], \Delta}
			\infer1[(def R)]{\Gamma, \defn \vdash P(\bar{t}[\bar{s}/\bar{x}]), \Delta}
		\end{prooftree}
	\end{equation*}
	Then $\Gamma, \defn \vdash \formula[\bar{s}/\bar{x}], \Delta$ admits an \scfoid-proof of height smaller than $T$.
	By induction hypothesis, $\Gamma, \normalization{\defn} \vdash \formula[\bar{s}/\bar{x}], \Delta$ is \scfoid-provable.
	By application of ($=$R) and ($\land$R), the sequent $\Gamma, \normalization{\defn} \vdash \bar{t}[\bar{s}/\bar{x}] = \bar{t}[\bar{s}/\bar{x}] \land \formula[\bar{s}/\bar{x}], \Delta$ is \scfoid-provable as well, and by application of ($\exists$R), so is $\Gamma, \normalization{\defn} \vdash \exists \bar{x}: \bar{t}[\bar{s}/\bar{x}] = \bar{t} \land \formula, \Delta$.
	(Here we assumed that none of the object symbols in $\bar{x}$ occur in $\bar{s}$. 
	We can rename the object symbols $\bar{x}$ if this is not the case.)
	Since $\exists \bar{x}: \bar{t}[\bar{s}/\bar{x}] = \bar{t} \land \formula$ is one of the disjuncts in $\formula_P(\bar{t}[\bar{s}/\bar{x}])$, repeated application of (wk) and ($\lor$R) yields that $\Gamma, \normalization{\defn} \vdash \formula_P(\bar{t}[\bar{s}/\bar{x}]), \Delta$ is \scfoid-provable.
	Finally, since $\normalization{\defn}$ contains a rule $\forall \bar{y} : P(\bar{y}) \rul \formula_P(\bar{y})$, the sequent $\Gamma, \normalization{\defn} \vdash P(\bar{t}[\bar{s}/\bar{x}]), \Delta$ is \scfoid-provable with (def R).
	
	\textbf{Case}: $\Gamma, \defn \vdash \Delta$ is derived by (def L):
	\begin{equation*}
		\begin{prooftree}
			\hypo{\text{minor premises}}
			\hypo{\Gamma, \defn, F_P[\bar{v}] \vdash \Delta}
			\infer2[(def L)]{\Gamma, \defn, P(\bar{v}) \vdash \Delta}
		\end{prooftree}
	\end{equation*}
	The minor premises $\Gamma, \defn, \formulatwo[F_\Pi / \Pi^+] \vdash F_Q[\bar{s}], \Delta$ correspond to the definitional rules $\forall \bar{y}: Q(\bar{s}) \rul \formulatwo$ of $\defn$ with $Q \in \Pi$.
	The premises admit \scfoid-proofs of height smaller than $T$.
	By induction hypothesis, the sequents $\Gamma, \normalization{\defn}, \formulatwo[F_\Pi / \Pi^+] \vdash F_Q[\bar{s}], \Delta$ corresponding to the definitional rules $\forall \bar{y}: Q(\bar{s}) \rul \formulatwo$ of $\defn$ with $Q \in \Pi$ are \scfoid-provable, and so is $\Gamma, \normalization{\defn}, F_P[\bar{v}] \vdash \Delta$.
	Applying ($=$L), ($\land$L), ($\exists$L) and ($\lor$L) to the former sequents, we find that $\Gamma, \normalization{\defn}, \formula_{Q}(\bar{s})[F_\Pi / \Pi^+] \vdash F_Q[\bar{s}], \Delta$ is \scfoid-provable for every $Q \in \Pi$.
	Applying (def L) to these sequents and to $\Gamma, \normalization{\defn}, F_P[\bar{v}] \vdash \Delta$, we find that $\Gamma, \normalization{\defn}, P(\bar{v}) \vdash \Delta$ is \scfoid-provable as well.
	
	Suppose conversely that $\Gamma, \normalization{\defn} \vdash \Delta$ is \scfoid-provable.
	We show that $\Gamma, \defn \vdash \Delta$ is \scfoid-provable by induction on the height of an \scfoid-proof $T$ of $\Gamma, \normalization{\defn} \vdash \Delta$.
	Again, we restrict to the non-trivial cases, which are those where $\Gamma, \normalization{\defn}\vdash \Delta$ is derived by (def R) or (def L) in $T$.
	
	\textbf{Case}: $\Gamma, \normalization{\defn} \vdash \Delta$ is derived by (def R). 
	Since the definitional rules of $\normalization{\defn}$ are of the form $\forall \bar{y}: P(\bar{y}) \rul \formula_P(\bar{y})$, the final inference in $T$ is of the form:
	\begin{equation*}
		\begin{prooftree}
			\hypo{\Gamma, \normalization{\defn}\vdash \formula_{P}(\bar{s}), \Delta}
			\infer1[(def R)]{\Gamma, \normalization{\defn}\vdash P(\bar{s}), \Delta}
		\end{prooftree}
	\end{equation*}
	Then $\Gamma, \defn \vdash \formula_{P}(\bar{s}), \Delta$ admits an \scfoid-proof of height smaller than $T$.
	By induction hypothesis, 
	\begin{equation} \label{eq:proof-Ps}
		\Gamma, \defn \vdash \formula_{P}(\bar{s}), \Delta
	\end{equation}
	is \scfoid-provable.
	It suffices to show that 
	\begin{equation} \label{eq:proof-phiPs-Ps}
		\Gamma, \defn, \formula_{P}(\bar{s}) \vdash P(\bar{s}), \Delta
	\end{equation}
	is \scfoid-provable.
	Indeed, if this is the case, then applying (wk) and (cut) to (\ref{eq:proof-Ps}) and (\ref{eq:proof-phiPs-Ps}) yields that $\Gamma, \defn \vdash P(\bar{s}), \Delta$ is \scfoid-provable.
	To show that (\ref{eq:proof-phiPs-Ps}) is \scfoid-provable, it suffices to show that $\Gamma, \defn, \exists \bar{x} : \bar{s} = \bar{t} \land \formula \vdash P(\bar{s}), \Delta$ is \scfoid-provable for every rule $\defrul$ in $\defn$, using Definition \ref{def:merged-body} and ($\lor$L).
	This can be done as follows:
	\begin{equation*}
		\begin{prooftree}
			\infer0[(ax)]{\Gamma, \defn, \formula[\bar{z}/\bar{x}] \vdash \formula[\bar{z}/\bar{x}], \Delta}
			\infer1[(def R)]{\Gamma, \defn, \formula[\bar{z}/\bar{x}] \vdash P(\bar{t}[\bar{z}/\bar{x}]), \Delta}
			\infer1[($=$L)]{\Gamma, \defn, \bar{s} = \bar{t}[\bar{z}/\bar{x}], \formula[\bar{z}/\bar{x}] \vdash P(\bar{s}), \Delta}
			\infer1[($\land$L)]{\Gamma, \defn, \bar{s} = \bar{t}[\bar{z}/\bar{x}] \land \formula[\bar{z}/\bar{x}] \vdash P(\bar{s}), \Delta}
			\infer1[($\exists$L)]{\Gamma, \defn, \exists \bar{x} : \bar{s} = \bar{t} \land \formula \vdash P(\bar{s}), \Delta}
		\end{prooftree}
	\end{equation*}
	Here $\bar{z}$ is a tuple of object symbols that do not occur freely in $\Gamma$, $\Delta$ or $\defn$, with the same length as $\bar{x}$.
	(We again assumed that none of the object symbols in $\bar{x}$ occur in $\bar{s}$. 
	We can rename $\bar{x}$ if this is not the case.)
	
	\textbf{Case}: $\Gamma, \normalization{\defn} \vdash \Delta$ is derived by (def L):
	\begin{equation*}
		\begin{prooftree}
			\hypo{\text{minor premises}}
			\hypo{\Gamma, \normalization{\defn}, F_P[\bar{v}] \vdash \Delta}
			\infer2[(def L)]{\Gamma, \normalization{\defn}, P(\bar{v}) \vdash \Delta}
		\end{prooftree}
	\end{equation*}
	The minor premises $\Gamma, \normalization{\defn}, \formula_Q(\bar{y})[F_\Pi / \Pi^+] \vdash F_Q[\bar{y}], \Delta$ correspond to the defined predicates $Q \in \Pi$.
	The premises admit \scfoid-proofs of height smaller than $T$.
	By induction hypothesis, the sequents 
	\begin{equation} \label{eq:proof-normalization-0}
		\Gamma, \defn, \formula_Q(\bar{y})[F_\Pi / \Pi^+] \vdash F_Q[\bar{y}], \Delta
	\end{equation}
	corresponding to the predicates $Q \in \Pi$ are \scfoid-provable, and so is
	\begin{equation} \label{eq:proof-normalization-05}
		\Gamma, \defn, F_P[\bar{v}] \vdash \Delta.
	\end{equation}
	Let $\forall \bar{y}: Q(\bar{s}) \rul \formulatwo$ be a definitional rule of $\defn$ with $Q \in \Pi$.
	Consider the proof:
	\begin{equation*}
		\begin{prooftree}
			\hypo{}
			\infer1[($=$R)]{\formulatwo[F_\Pi / \Pi^+] \vdash \bar{s} = \bar{s}}
			\hypo{}
			\infer1[(ax)]{\formulatwo[F_\Pi / \Pi^+] \vdash \formulatwo[F_\Pi / \Pi^+]}
			\infer2[($\land$R)]{\formulatwo[F_\Pi / \Pi^+] \vdash \bar{s} = \bar{s} \land \formulatwo[F_\Pi / \Pi^+]}
			\infer1[($\exists$R)]{\formulatwo[F_\Pi / \Pi^+] \vdash \exists \bar{y} : \bar{s} = \bar{s} \land \formulatwo[F_\Pi / \Pi^+]}
		\end{prooftree}
	\end{equation*}
	Repeatedly applying (wk) and ($\lor$R) to $\formulatwo[F_\Pi / \Pi^+] \vdash \exists \bar{y} : \bar{s} = \bar{s} \land \formulatwo[F_\Pi / \Pi^+]$, we find that
	\begin{equation} \label{eq:proof-normalization-1}
		\Gamma, \defn, \formulatwo[F_\Pi / \Pi^+] \vdash \formula_Q(\bar{s})[F_\Pi / \Pi^+], F_Q[\bar{s}], \Delta
	\end{equation}
	is \scfoid-provable.
	Applying (subst) and (wk) to (\ref{eq:proof-normalization-0}), we also find that 
	\begin{equation} \label{eq:proof-normalization-2}
		\Gamma, \defn, \formulatwo[F_\Pi / \Pi^+], \formula_Q(\bar{s})[F_\Pi / \Pi^+] \vdash F_Q[\bar{s}], \Delta
	\end{equation}
	is \scfoid-provable.
	Applying (cut) to (\ref{eq:proof-normalization-1}) and (\ref{eq:proof-normalization-2}), we find that 
	\begin{equation} \label{eq:proof-normalization-3}
		\Gamma, \defn, \formulatwo[F_\Pi / \Pi^+] \vdash F_Q[\bar{s}], \Delta
	\end{equation}
	is \scfoid-provable.
	Finally, applying (def L) to (\ref{eq:proof-normalization-3}) for all $Q \in \Pi$ and to (\ref{eq:proof-normalization-05}), we find that $\Gamma, \defn, P(\bar{v}) \vdash \Delta$ is \scfoid-provable.
\end{proof}

\safereplacementofpositiveandnegativeoccurrences*

\begin{proof}
	We show the statement by structural induction on $\chi$.
	
	\textbf{Case}:
	$\formulathree$ is an atom.
	Then $\formulathree[\formulatwo \sslash \formula^-] \doteq \formulathree$, and hence $\Gamma, \formulathree[\formulatwo \sslash \formula^-] \vdash \formulathree, \Delta$ can be derived by (ax).
	If $\formulathree \doteq \formula$, then $\Gamma, \formulathree \vdash \formulathree[\formulatwo \sslash \formula^+], \Delta$ comes down to $\Gamma, \formula \vdash \formulatwo, \Delta$, which is provable by assumption.
	Otherwise, $\formulathree[\formulatwo \sslash \formula^+] \doteq \formulathree$, and $\Gamma, \formulathree \vdash \formulathree[\formulatwo \sslash \formula^+], \Delta$ can be derived by (ax).
	
	\textbf{Case}:
	$\formulathree$ is of the form $\lnot \formulathree'$.
	Then $\formulathree[\formulatwo \sslash \formula^+]$ is of the form $\lnot \formulathree'[\formulatwo \sslash \formula^-]$ for an \fo-formula $\formulathree'[\formulatwo \sslash \formula^-]$ obtained from $\formulathree'$ by replacing negative occurrences of $\formula$ by $\formulatwo$.
	By induction hypothesis, $\Gamma, \formulathree'[\formulatwo \sslash \formula^-] \vdash \formulathree', \Delta$ is provable.
	We can then derive $\Gamma, \formulathree \vdash \formulathree[\formulatwo \sslash \formula^+], \Delta$ as follows:
	\begin{equation*}
		\begin{prooftree}
			\hypo{\Gamma, \formulathree'[\formulatwo \sslash \formula^-] \vdash \formulathree', \Delta}
			\infer1[($\lnot$R)]{\Gamma \vdash \lnot \formulathree'[\formulatwo \sslash \formula^-], \formulathree', \Delta}
			\infer1[($\lnot$L)]{\Gamma, \lnot \formulathree' \vdash \lnot \formulathree'[\formulatwo \sslash \formula^-], \Delta}
		\end{prooftree}
	\end{equation*}
	
	The formula $\formulathree[\formulatwo \sslash \formula^-]$ is of the form $\lnot \formulathree'[\formulatwo \sslash \formula^+]$ for an \fo-formula $\formulathree'[\formulatwo \sslash \formula^+]$ obtained from $\formulathree'$ by replacing positive occurrences of $\formula$ by $\formulatwo$.
	By induction hypothesis, $\Gamma, \formulathree' \vdash \formulathree'[\formulatwo \sslash \formula^+], \Delta$ is provable.
	We can then derive $\Gamma, \formulathree[\formulatwo \sslash \formula^-] \vdash \formulathree, \Delta$ as follows:
	\begin{equation*}
		\begin{prooftree}
			\hypo{\Gamma, \formulathree' \vdash \formulathree'[\formulatwo \sslash \formula^+], \Delta}
			\infer1[($\lnot$R)]{\Gamma \vdash \lnot \formulathree', \formulathree'[\formulatwo \sslash \formula^+], \Delta}
			\infer1[($\lnot$L)]{\Gamma, \lnot \formulathree'[\formulatwo \sslash \formula^+] \vdash \lnot \formulathree', \Delta}
		\end{prooftree}
	\end{equation*}
	
	\textbf{Case}:
	$\formulathree$ is of the form $\formulathree_1 \land \formulathree_2$.
	Then $\formulathree[\formulatwo \sslash \formula^+]$ is of the form $\formulathree_1[\formulatwo \sslash \formula^+] \land \formulathree_2[\formulatwo \sslash \formula^+]$ for \fo-formulas $\formulathree_1[\formulatwo \sslash \formula^+]$ and $\formulathree_2[\formulatwo \sslash \formula^+]$ obtained from $\formulathree_1$ and $\formulathree_2$, respectively, by replacing positive occurrences of $\formula$ by $\formulatwo$.
	By induction hypothesis, $\Gamma, \formulathree_1 \vdash \formulathree_1[\formulatwo \sslash \formula^+], \Delta$ and $\Gamma, \formulathree_2 \vdash \formulathree_2[\formulatwo \sslash \formula^+], \Delta$ are provable.
	We can then derive $\Gamma, \formulathree \vdash \formulathree[\formulatwo \sslash \formula^+], \Delta$ as follows:
	\begin{equation*}
		\begin{prooftree}
			\hypo{\Gamma, \formulathree_1 \vdash \formulathree_1[\formulatwo \sslash \formula^+], \Delta}
			\infer1[(wk)]{\Gamma, \formulathree_1, \formulathree_2 \vdash \formulathree_1[\formulatwo \sslash \formula^+], \Delta}
			\hypo{\Gamma, \formulathree_2 \vdash \formulathree_2[\formulatwo \sslash \formula^+], \Delta}
			\infer1[(wk)]{\Gamma, \formulathree_1, \formulathree_2 \vdash \formulathree_2[\formulatwo \sslash \formula^+], \Delta}
			\infer2[($\land$R)]{\Gamma, \formulathree_1, \formulathree_2 \vdash \formulathree_1[\formulatwo \sslash \formula^+] \land \formulathree_2[\formulatwo \sslash \formula^+], \Delta}
			\infer1[($\land$L)]{\Gamma, \formulathree_1 \land \formulathree_2 \vdash \formulathree_1[\formulatwo \sslash \formula^+] \land \formulathree_2[\formulatwo \sslash \formula^+], \Delta}
		\end{prooftree}
	\end{equation*}
	
	The formula $\formulathree[\formulatwo \sslash \formula^-]$ is of the form $\formulathree_1[\formulatwo \sslash \formula^-] \land \formulathree_2[\formulatwo \sslash \formula^-]$ for \fo-formulas $\formulathree_1[\formulatwo \sslash \formula^-]$ and $\formulathree_2[\formulatwo \sslash \formula^-]$ obtained from $\formulathree_1$ and $\formulathree_2$, respectively, by replacing negative occurrences of $\formula$ by $\formulatwo$.
	By induction hypothesis, $\Gamma, \formulathree_1[\formulatwo \sslash \formula^-] \vdash \formulathree_1, \Delta$ and $\Gamma, \formulathree_2[\formulatwo \sslash \formula^-] \vdash \formulathree_2, \Delta$ are provable.
	We can then derive $\Gamma, \formulathree[\formulatwo \sslash \formula^-] \vdash \formulathree, \Delta$ as follows:
	\begin{equation*}
		\begin{prooftree}
			\hypo{\Gamma, \formulathree_1[\formulatwo \sslash \formula^-] \vdash \formulathree_1, \Delta}
			\infer1[(wk)]{\Gamma, \formulathree_1[\formulatwo \sslash \formula^-], \formulathree_2[\formulatwo \sslash \formula^-] \vdash \formulathree_1, \Delta}
			\hypo{\Gamma, \formulathree_2[\formulatwo \sslash \formula^-] \vdash \formulathree_2, \Delta}
			\infer1[(wk)]{\Gamma, \formulathree_1[\formulatwo \sslash \formula^-], \formulathree_2[\formulatwo \sslash \formula^-] \vdash \formulathree_2, \Delta}
			\infer2[($\land$R)]{\Gamma, \formulathree_1[\formulatwo \sslash \formula^-], \formulathree_2[\formulatwo \sslash \formula^-] \vdash \formulathree_1 \land \formulathree_2, \Delta}
			\infer1[($\land$L)]{\Gamma, \formulathree_1[\formulatwo \sslash \formula^-] \land \formulathree_2[\formulatwo \sslash \formula^-] \vdash \formulathree_1 \land \formulathree_2, \Delta}
		\end{prooftree}
	\end{equation*}
	
	\textbf{Cases}: 
	$\formulathree$ is of the form $\formulathree_1 \lor \formulathree_2$, $\formulathree_1 \Rightarrow \formulathree_2$ or $\formulathree_1 \Leftrightarrow \formulathree_2$.
	These cases are similar to the previous case.
	
	\textbf{Case}:
	$\formulathree$ is of the form $\forall x: \formulathree'$.
	Then $\formulathree[\formulatwo \sslash \formula^+]$ is of the form $\forall x: \formulathree'[\formulatwo \sslash \formula^+]$ for an \fo-formula $\formulathree'[\formulatwo \sslash \formula^+]$ obtained from $\formulathree'$ by replacing positive occurrences of $\formula$ by $\formulatwo$.
	By induction hypothesis, $\Gamma, \formulathree' \vdash \formulathree'[\formulatwo \sslash \formula^+], \Delta$ is provable.
	We can then derive $\Gamma, \formulathree \vdash \formulathree[\formulatwo \sslash \formula^+], \Delta$ as follows:
	\begin{equation*}
		\begin{prooftree}
			\hypo{\Gamma, \formulathree' \vdash \formulathree'[\formulatwo \sslash \formula^+], \Delta}
			\infer1[($\forall$L)]{\Gamma, \forall x: \formulathree' \vdash \formulathree'[\formulatwo \sslash \formula^+], \Delta}
			\infer1[($\forall$R)]{\Gamma, \forall x: \formulathree' \vdash \forall x: \formulathree'[\formulatwo \sslash \formula^+], \Delta}
		\end{prooftree}
	\end{equation*}
	In the application of ($\forall$R), we assumed that $x$ does not occur freely in $\Gamma$ or $\Delta$, which is justified by Lemma \ref{lem:safe-replacement-bound-vars}.
	Indeed, if necessary, we can replace $x$ in $\forall x: \formulathree'[\formulatwo \sslash \formula^+]$ by an object symbol $y$ that does not occur freely in $\formula$, $\Gamma$ or $\Delta$ (and such object symbol exists, as $\Gamma$ and $\Delta$ are finite).
	
	The formula $\formulathree[\formulatwo \sslash \formula^-]$ is of the form $\forall x: \formulathree'[\formulatwo \sslash \formula^-]$ for an \fo-formula $\formulathree'[\formulatwo \sslash \formula^-]$ obtained from $\formulathree'$ by replacing negative occurrences of $\formula$ by $\formulatwo$.
	By induction hypothesis, $\Gamma, \formulathree'[\formulatwo \sslash \formula^-] \vdash \formulathree', \Delta$ is provable.
	We can then derive $\Gamma, \formulathree[\formulatwo \sslash \formula^-] \vdash \formulathree, \Delta$ as follows:
	\begin{equation*}
		\begin{prooftree}
			\hypo{\Gamma, \formulathree'[\formulatwo \sslash \formula^-] \vdash \formulathree', \Delta}
			\infer1[($\forall$L)]{\Gamma, \forall x: \formulathree'[\formulatwo \sslash \formula^-] \vdash \formulathree', \Delta}
			\infer1[($\forall$R)]{\Gamma, \forall x: \formulathree'[\formulatwo \sslash \formula^-] \vdash \forall x: \formulathree', \Delta}
		\end{prooftree}
	\end{equation*}
	Again, we assumed that $x$ does not occur freely in $\formula$, $\Gamma$ or $\Delta$, which is justified by Lemma \ref{lem:safe-replacement-bound-vars}.
	
	\textbf{Case}: 
	$\formulathree$ is of the form $\exists x: \formulathree'$.
	This case is similar to the previous case.
\end{proof}

%% file: proofs-section-alternative-semantics.tex
\section{Proofs of Section \ref{sec:alternative-semantics}} \label{app:proofs-section-alternative-semantics}

\henkinclassinterpretsallfoformulas*

\begin{proof}
	We prove by structural induction on $\formula$ that $\{ \bar{a} \in D^k \mid \structure[\bar{x}: \bar{a}, \, \bar{y}: \bar{b}] \models \formula \} \in H_k$ for all tuples $\bar{x} = (x_1, \dots, x_k)$ and $\bar{y} = (y_1, \dots, y_n)$ of distinct object symbols (not sharing any object symbol) and all $\bar{b} = (b_1, \dots, b_n) \in D^n$:
	
	\textbf{Case}: $\formula$ is of the form $t=s$.
	By condition \ref{it:henkin-class-equality} of Definition \ref{def:henkin-class}, $R \coloneq \{(a,a) \mid a \in D\} \in H_2$.
	By condition \ref{it:henkin-class-terms}, $\{ \bar{a} \in D^k \mid \structure[\bar{x}:\bar{a}, \, \bar{y}:\bar{b}] \models t=s \} = \{ \bar{a} \in D^k \mid (t^{\structure[\bar{x}:\bar{a}, \, \bar{y}:\bar{b}]},s^{\structure[\bar{x}:\bar{a}, \, \bar{y}:\bar{b}]}) \in R \} \in H_k$.
	
	\textbf{Case}: $\formula$ is of the form $Q(\bar{t})$.
	Let $k$ denote the arity of $Q$.
	By condition \ref{it:henkin-class-predicate}, $Q^{\structure} \in H_k$.
	By condition \ref{it:henkin-class-terms}, $\{ \bar{a} \in D^k \mid \structure[\bar{x}:\bar{a}, \, \bar{y}:\bar{b}] \models Q(\bar{t}) \} = \{ \bar{a} \in D^k \mid \bar{t}^{\structure[\bar{x}:\bar{a}, \, \bar{y}:\bar{b}]} \in Q^{\structure} \} \in H_k$.
	
	\textbf{Case}: $\formula$ is of the form $\neg \formulatwo$.
	By induction hypothesis, $\{ \bar{a} \in D^k \mid \structure[\bar{x}: \bar{a}, \, \bar{y}: \bar{b}] \models \formulatwo \} \in H_k$.
	By condition \ref{it:henkin-class-complement}, $\{ \bar{a} \in D^k \mid \structure[\bar{x}: \bar{a}, \, \bar{y}: \bar{b}] \models \neg \formulatwo \} = D^k \setminus \{ \bar{a} \in D^k \mid \structure[\bar{x}: \bar{a}, \, \bar{y}: \bar{b}] \models \formulatwo \} \in H_k$.
	
	\textbf{Case}: $\formula$ is of the form $\formulatwo \land \formulathree$.
	By induction hypothesis, $\{ \bar{a} \in D^k \mid \structure[\bar{x}: \bar{a}, \, \bar{y}: \bar{b}] \models \formulatwo \} \in H_k$ and $\{ \bar{a} \in D^k \mid \structure[\bar{x}: \bar{a}, \, \bar{y}: \bar{b}] \models \formulathree \} \in H_k$.
	By condition \ref{it:henkin-class-intersection}, $\{ \bar{a} \in D^k \mid \structure[\bar{x}: \bar{a}, \, \bar{y}: \bar{b}] \models \formulatwo \land \formulathree \} = \{ \bar{a} \in D^k \mid \structure[\bar{x}: \bar{a}, \, \bar{y}: \bar{b}] \models \formulatwo \} \cap \{ \bar{a} \in D^k \mid \structure[\bar{x}: \bar{a}, \, \bar{y}: \bar{b}] \models \formulatwo \} \in H_k$.
	
	\textbf{Cases}: $\formula$ is of the form $\formulatwo \lor \formulathree$, $\formulatwo \Rightarrow \formulathree$ or $\formulatwo \Leftrightarrow \formulathree$.
	Using the fact that $\formulatwo \lor \formulathree$ is \fo-equivalent to $\neg (\neg \formulatwo \land \neg \formulathree)$, $\formulatwo \Rightarrow \formulathree$ to $\neg \formulatwo \lor \formulathree$, and $\formulatwo \Leftrightarrow \formulathree$ to $(\formulatwo \Rightarrow \formulathree) \land (\formulathree \Rightarrow \formulatwo)$, these cases follow from the previous cases.
	
	\textbf{Case}: $\formula$ is of the form $\forall x: \formulatwo$.
	Let $z$ be an object symbol that does not occur in $\bar{x}$ or $\bar{y}$.
	By induction hypothesis, $R\coloneq \{ (a_1, \dots, a_k, a) \in D^{k+1} \mid \structure[x_1: a_1, \dots, x_k: a_k, \, z: a, \, \bar{y}: \bar{b}] \models \formulatwo[z/x] \} \in H_{k+1}$.
	By condition \ref{it:henkin-class-forall}, $\{ \bar{a} \in D^{k} \mid \structure[\bar{x}: \bar{a}, \, \bar{y}: \bar{b}] \models \forall x: \formulatwo \} = \{ \bar{a} \in D^{k} \mid \structure[\bar{x}: \bar{a}, \, \bar{y}: \bar{b}] \models \forall z: \formulatwo[z/x] \} = \{ (a_1, \dots, a_k) \in D^{k} \mid (a_1, \dots, a_k, a) \in R \text{ for all } a \in D \} \in H_k$.
	
	\textbf{Case}: $\formula$ is of the form $\exists x: \formulatwo$.
	Using the fact that $\exists x: \formulatwo$ is \fo-equivalent to $\neg \forall x: \neg \formulatwo$, this follows from the previous cases.
\end{proof}

\fodefinablesetsformhenkinclass*

\begin{proof}
	We need to check that $\{ H_k \mid k \in \N \}$ fulfills all the conditions of Definition \ref{def:henkin-class}:
	\begin{enumerate}[label=(\roman*)]
		\item $\{(a,a) \mid a \in D\} = \{(a_1,a_2) \in D^2 \mid \structure[x_1:a_1, x_2:a_2] \models x_1 = x_2 \} \in H_2$.
		\item Let $Q$ be a predicate symbol in $\voc{\structure}$ with arity $k$.
		Then $Q^{\structure} = \{ \bar{a} \in D^k \mid \structure[\bar{x}:\bar{a}] \models Q(\bar{x}) \} \in H_{k}$.
		\item Let $R \in H_{k+1}$ and $a \in D$.
		By definition of $H_{k+1}$, the set $R$ is of the form $\{(a_1, \dots, a_{k+1}) \in D^{k+1} \mid \structure[x_1:a_1, \dots, x_{k+1}:a_{k+1}, \, \bar{y}:\bar{b}] \models \formula \}$ for an \fo-formula $\formula$ over $\voc{\structure}$, distinct object symbols $x_1, \dots, x_{k+1}, \bar{y}$, and a tuple $\bar{b} \in D^n$.
		Let $z$ be an object symbol not in $x_1, \dots, x_{k+1}, \bar{y}$.
		Since $\formula[z/x_{k+1}]$ is an \fo-formula, $\{(a_1, \dots, a_k) \in D^k \mid (a_1, \dots, a_k, a) \in R\} = \{(a_1, \dots, a_{k}) \in D^{k} \mid \structure[x_1:a_1, \dots, x_{k}:a_{k}, \, z:a, \, \bar{y}: \bar{b}] \models \formula[z/x_{k+1}] \} \in H_k$.
		\item Let $R \in H_k$, $t_1, \dots, t_k$ terms, $x_1, \dots, x_n, y_1, \dots, y_m$ distinct object symbols, and $b_1, \dots, b_m \in D$.
		Then $R$ is of the form $\{\bar{c} \in D^{k} \mid \structure[\bar{z}:\bar{c}, \, \bar{w}:\bar{d}] \models \formula \}$ for an \fo-formula $\formula$ over $\voc{\structure}$, distinct object symbols $\bar{z}, \bar{w}$, and a tuple $\bar{d} \in D^p$.
		We can assume without loss of generality that none of the object symbols in $\bar{w}$ occurs in $x_1, \dots, x_n, y_1, \dots, y_m$ or in one of the terms $t_1, \dots, t_k$.
		(Otherwise, we can rename object symbols.)
		Then $\{ \bar{a} \in D^n \mid \bar{t}^{\structure[\bar{x}: \bar{a}, \bar{y}: \bar{b}]} \in R\} = \{ \bar{a} \in D^n \mid \structure[\bar{z}:\bar{t}^{\structure[\bar{x}:\bar{a}, \, \bar{y}: \bar{b}]}, \bar{w}:\bar{d}] \models \formula \}
		= \{ \bar{a} \in D^n \mid \structure[\bar{x}:\bar{a}, \, \bar{y}: \bar{b}, \, \bar{w}:\bar{d}] \models \formula[\bar{t}/\bar{z}] \}
		\in H_n$.
		\item Let $R \in H_k$.
		Then $R$ is of the form $\{\bar{a} \in D^{k} \mid \structure[\bar{x}:\bar{a}, \, \bar{y}:\bar{b}] \models \formula \}$ for an \fo-formula $\formula$ over $\voc{\structure}$, distinct object symbols $\bar{x}, \bar{y}$, and a tuple $\bar{b} \in D^n$.
		Then $D^k \setminus R = \{\bar{a} \in D^{k} \mid \structure[\bar{x}:\bar{a}, \, \bar{y}:\bar{b}] \models \neg \formula \} \in H_k$.
		\item Let $R \in H_k$ and $S \in H_k$.
		Then $R$ is of the form $\{\bar{a} \in D^{k} \mid \structure[\bar{x}: \bar{a}, \bar{y}: \bar{b}] \models \formula \}$ for an \fo-formula $\formula$ over $\voc{\structure}$, distinct object symbols $\bar{x}, \bar{y}$, and a tuple $\bar{b} \in D^n$, and $S$ is of the form $\{\bar{a} \in D^{k} \mid \structure[\bar{x}:\bar{a}, \bar{z}: \bar{c}] \models \formulatwo \}$ for an \fo-formula $\formulatwo$ over $\voc{\structure}$, distinct object symbols $\bar{x}, \bar{z}$, and a tuple $\bar{c} \in D^m$.
		We can assume without loss of generality that none of the object symbols in $\bar{y}$ occurs in $\bar{z}$.
		(Otherwise, we can rename object symbols.)
		Then $R \cap S = \{ \bar{a} \mid \structure[\bar{x}: \bar{a}, \bar{y}:\bar{b}, \bar{z}:\bar{c}] \models \formula \land \formulatwo \} \in H_k$.
		\item Let $R \in H_{k+1}$.
		Then $R$ is of the form $\{(a_1, \dots, a_{k+1}) \in D^{k+1} \mid \structure[x_1:a_1, \dots, x_{k+1}:a_{k+1}, \bar{y}: \bar{b}] \models \formula \}$ for an \fo-formula $\formula$ over $\voc{\structure}$, distinct object symbols $x_1, \dots, x_{k+1}, \bar{y}$ and a tuple $\bar{b} \in D$.
		Then $\{(a_1, \dots, a_k) \mid (a_1, \dots, a_k, a) \in R \text{ for all } a \in D \} =  \{(a_1, \dots, a_k) \mid \structure[x_1:a_1, \dots, x_{k}:a_{k}, \, \bar{y}: \bar{b} ] \models \forall x_{k+1}: \formula \}  \in H_k$.
	\end{enumerate}
\end{proof}

\subdefinitionhenkin*

\begin{proof}
	We can assume without loss of generality that $\voc{\structure} = \sym{\defn}$.
	Suppose that $\henkinstructure \modelsh \defn$.
	Then $\structure$ is the least pre-fixed $\henkinclass$-point of $\cons(\cdot, \structure)$.
	Let $\structure' \coloneq \structure|_{\sym{\defn'}}$.
	We need to show that $\structure'$ is the least pre-fixed $\henkinclass$-point of $\cons(\cdot, \structure')$.
	Since $\sym{\defn'} \subseteq \sym{\defn}$, $\structure'$ is an $\henkinclass$-point.
	Since $\structure$ is a pre-fixpoint of $\cons(\cdot, \structure)$, $P^{\structure} \subseteq \{ \bar{a} \mid \structure[\bar{x}:\bar{a}] \models \formula_P(\bar{x}) \}$ for all $P \in \defi{\defn}$.
	It follows that $P^{\structure'} \subseteq \{ \bar{a} \mid \structure'[\bar{x}:\bar{a}] \models \formula_P(\bar{x}) \}$ for all $P \in \defi{\defn'}$, which means that $\structure'$ is a pre-fixpoint of $\cons(\cdot, \structure')$.\footnote{
		Strictly speaking, we use two different operators $\cons_{\defn}$ and $\cons_{\defn'}$, defined on two different structure spaces $\structurespace_{\defn}$ and $\structurespace_{\defn'}$, corresponding to two different definitions $\defn$ and $\defn'$.
		However, since $\cons_{\defn'}(\structure_1|_{\sym{\defn'}}, \structure_2|_{\sym{\defn'}}) = \cons_{\defn}(\structure_1, \structure_2)|_{\sym{\defn'}}$ for all $\structure_1, \structure_2 \in \structurespace_{\defn}$, we simplify both $\cons_{\defn}$ and $\cons_{\defn'}$ to $\cons$, by abuse of notation. 
	}
	
	Let $\structuretwo'$ be a pre-fixed $\henkinclass$-point of $\cons(\cdot, \structure')$.
	Consider the $\sym{\defn}$-structure $\structuretwo$ that interprets any $P \in \defi{\defn'}$ as $P^{\structuretwo'} \cap P^{\structure}$, and any $\sigma \in \sym{\defn} \setminus \defi{\defn'}$ as $\sigma^{\structure}$.
	Note that $\structuretwo \sqsubseteq \structure$ and $\structuretwo|_{\sym{\defn'}} \sqsubseteq \structuretwo'$.
	We will show that $\structuretwo$ is a pre-fixed $\henkinclass$-point of $\cons(\cdot, \structure)$.
	Since $\structure$ is the least pre-fixed $\henkinclass$-point of $\cons(\cdot, \structure)$, this will imply that $\structure \sqsubseteq \structuretwo$, and hence, that $\structure' = \structure|_{\sym{\defn'}} \sqsubseteq \structuretwo|_{\sym{\defn'}} \sqsubseteq \structuretwo'$, showing that $\structure'$ is the least pre-fixed $\henkinclass$-point of $\cons(\cdot, \structure')$.
	The fact that $\structuretwo$ is an $\henkinclass$-point follows from the fact that $\structuretwo'$ and $\structure$ are $\henkinclass$-points, using condition \ref{it:henkin-class-intersection} of Definition \ref{def:henkin-class}.
	To show that $\structuretwo$ is a pre-fixpoint of $\cons(\cdot, \structure)$, take $P \in \defi{\defn}$.
	
	\textbf{Case}: $P \notin \defi{\defn'}$.
	Then $P^{\cons(\structuretwo, \structure)} \subseteq P^{\cons(\structure, \structure)} \subseteq P^{\structure} = P^{\structuretwo}$, using the fact that $\cons(\cdot, \structure)$ is monotone, and that $\structure$ is a pre-fixpoint of $\cons(\cdot, \structure)$.
	
	\textbf{Case}: $P \in \defi{\defn'}$.
	Similarly to the previous case, we find that $P^{\cons(\structuretwo, \structure)} \subseteq P^{\cons(\structure, \structure)} \subseteq P^{\structure}$.
	Since $\structuretwo|_{\sym{\defn'}} \sqsubseteq \structuretwo'$, we have that $P^{\cons(\structuretwo, \structure)} \subseteq P^{\cons(\structuretwo', \structure')} \subseteq P^{\structuretwo'}$.
	Combining these facts, we find that $P^{\cons(\structuretwo, \structure)} \subseteq P^{\structure} \cap P^{\structuretwo'} = P^{\structuretwo}$.
\end{proof}

%% file: proofs-section-cut-elimination.tex
\section{Proofs of Section \ref{sec:cut-elimination}} \label{app:proofs-section-cut-elimination}

\subsection{Proof of Theorem \ref{thm:brotherston}} \label{app:brotherston-proof}

We prove Theorem \ref{thm:brotherston}:

\brotherstonlemma*

Let $\regularsequent$ be as above.
We will prove Theorem \ref{thm:brotherston} by contraposition, i.e., we will show that if $\regularsequent$ is not cut-free provable, then it is not valid under the Henkin semantics.
Assume that $\regularsequent$ is not cut-free provable.
We will construct a \emph{limit sequent} $\limitsequent$, where $\Gamma_\omega$ and $\Delta_\omega$ are infinite sets of \fo-formulas, containing $\Gamma$ and $\Delta$, respectively, such that $\limitsequent$ is not cut-free provable.
Next, we will construct a \emph{Henkin counter-model} for $\limitsequent$, i.e., a Henkin structure $\henkincountermodel$ such that $\henkincountermodel \modelsh \formula$ for all $\formula \in \Gamma_\omega$, $\henkincountermodel \modelsh \defn_i$ for all $i \in \{ 1, \dots, n \}$, but $\henkincountermodel \not\modelsh \formula$ for all $\formula \in \Delta_\omega$.

Let $\vocab$ be the vocabulary consisting of all non-logical symbols in $\regularsequent$, and let $\terms$ be the set of all terms built from function symbols in $\vocab$.
We construct the limit sequent $\limitsequent$ via a \emph{schedule}.

\begin{definition}
	A \emph{schedule element} is defined by the following rules:
	\begin{itemize}
		\item If $\formula$ and $\formulatwo$ are \fo-formulas over $\vocab$, then $\lnot \formula$, $\formula \land \formulatwo$, $\formula \lor \formulatwo$, $\formula \Rightarrow \formulatwo$ and $\formula \Leftrightarrow \formulatwo$ are schedule elements.
		\item If $\formula$ is an \fo-formula over $\vocab$, $x$ an object symbol in $\vocab$, and $t \in \terms$, then $\langle \forall x: \formula,\, t \rangle$ and $\langle \exists x: \formula,\, t \rangle$ are schedule elements.
		\item If $P(\bar{t})$ is an atom over $\vocab$ such that $P \in \defi{\defn_i}$ for some $i \in \{1, \dots, n\}$, and $F_\Pi$ a tuple of induction hypotheses over $\vocab$ for the predicates in $\Pi = \defi{\defn_i}$, then $\langle P(\bar{t}), F_\Pi \rangle$ is a schedule element.
	\end{itemize}
	A \emph{schedule} $\sched$ is a sequence of schedule elements in which every schedule element appears infinitely often.
\end{definition}

Henceforth, we fix a schedule $\sched$, whose existence follows from the fact that there are countably many schedule elements.

	For each $i \in \N$, we construct finite sets $\Gamma_i$ and $\Delta_i$ of \fo-formulas over $\vocab$ such that:
	\begin{itemize}
		\item $\Gamma_i \subseteq \Gamma_{i+1}$ and $\Delta_i \subseteq \Delta_{i+1}$;
		\item the sequent $\Gamma_i, \defns \vdash \Delta_i$ is not cut-free provable.
	\end{itemize}
	We do so by induction on $i$:
	
	For $i=0$, we can simply pick $\Gamma = \Gamma_0$ and $\Delta = \Delta_0$.
	
	Suppose we have constructed the sets $\Gamma_i$ and $\Delta_i$.
	We construct $\Gamma_{i+1}$ and $\Delta_{i+1}$ based on the $i$-th schedule element $e_i$ of $\sched$.
	Note that $\Gamma_i$ and $\Delta_i$ cannot have any formula in common, for otherwise $\Gamma_i, \defns \vdash \Delta_i$ would be cut-free provable via (ax).
	\begin{itemize}
		\item $e_i$ is of the form $\lnot \formula$. 
		If $\lnot \formula \not \in \Gamma_i \cup \Delta_i$, then let $\Gamma_{i+1} \coloneq \Gamma_i$ and $\Delta_{i+1} \coloneq \Delta_i$.
		
		If $\lnot \formula \in \Gamma_i$, then consider the \scfoid-derivation
		\begin{equation*}
			\begin{prooftree}
				\hypo{\Gamma_i, \defns \vdash \formula, \Delta_i}
				\infer1[($\lnot$L)]{\Gamma_i, \defns \vdash \Delta_i}
			\end{prooftree}
		\end{equation*}
		It shows that $\Gamma_i, \defns \vdash \lnot \formula, \Delta_i$ is not cut-free provable, for otherwise, $\Gamma_i, \defns \vdash \Delta_i$ would be cut-free provable.
		In this case, we pick $\Gamma_{i+1} \coloneq \Gamma_i$ and $\Delta_{i+1} \coloneq \Delta_i \cup \{\formula\}$.
		
		If $\lnot \formula \in \Delta_i$, then consider the derivation
		\begin{equation*}
			\begin{prooftree}
				\hypo{\Gamma_i, \defns, \formula \vdash \Delta_i}
				\infer1[($\lnot$R)]
				{\Gamma_i, \defns \vdash \Delta_i}
			\end{prooftree}
		\end{equation*}
		It shows that $\Gamma_i, \defns, \formula \vdash \Delta_i$ is not cut-free provable, for otherwise, $\Gamma_i, \defns \vdash \Delta_i$ would be cut-free provable.
		In this case, we pick $\Gamma_{i+1} \coloneq \Gamma_i \cup \{\formula\}$ and $\Delta_{i+1} \coloneq \Delta_i$.
		
		\item $e_i$ is of the form $\formula \land \formulatwo$. 
		If $\formula \land \formulatwo \not \in \Gamma_i \cup \Delta_i$, then let $\Gamma_{i+1} \coloneq \Gamma_i$ and $\Delta_{i+1} \coloneq \Delta_i$.
		
		If $\formula \land \formulatwo \in \Gamma_i$, then consider the derivation
		\begin{equation*}
			\begin{prooftree}
				\hypo{\Gamma_i, \defns, \formula, \formulatwo \vdash \Delta_i}
				\infer1[($\land$L)]
				{\Gamma_i, \defns \vdash \Delta_i}
			\end{prooftree}
		\end{equation*}
		It shows that $\Gamma_i, \defns, \formula, \formulatwo \vdash \lnot \formula, \Delta_i$ is not cut-free provable, for otherwise, $\Gamma_i, \defns \vdash \Delta_i$ would be cut-free provable.
		In this case, we pick $\Gamma_{i+1} \coloneq \Gamma_i \cup \{\formula, \formulatwo\}$ and $\Delta_{i+1} \coloneq \Delta_i$.
		
		If $\formula \land \formulatwo \in \Delta_i$, then consider the derivation
		\begin{equation*}
			\begin{prooftree}
				\hypo{\Gamma_i, \defns \vdash \formula, \Delta_i}
				\hypo{\Gamma_i, \defns \vdash \formulatwo, \Delta_i}
				\infer2[($\land$R)]
				{\Gamma_i, \defns \vdash \Delta_i}
			\end{prooftree}
		\end{equation*}
		It shows that $\Gamma_i, \defns \vdash \formula, \Delta_i$ or $\Gamma_i, \defns \vdash \formulatwo, \Delta_i$ is not cut-free provable, for otherwise, $\Gamma_i, \defns \vdash \Delta_i$ would be cut-free provable.
		We pick $\Gamma_{i+1} \coloneq \Gamma_i$ and $\Delta_{i+1} \coloneq \Delta_i \cup \{\formula\}$ in the first case, and $\Delta_{i+1} \coloneq \Delta_i \cup \{\formulatwo\}$ in the second case. 
		
		\item $e_i$ is of the form $\formula \lor \formulatwo$.
		If $\formula \lor \formulatwo \not \in \Gamma_i \cup \Delta_i$, then let $\Gamma_{i+1} \coloneq \Gamma_i$ and $\Delta_{i+1} \coloneq \Delta_i$.
		
		If $\formula \lor \formulatwo \in \Gamma_i$, then consider the derivation
		\begin{equation*}
			\begin{prooftree}
				\hypo{\Gamma_i, \defns, \formula \vdash \Delta_i}
				\hypo{\Gamma_i, \defns, \formulatwo \vdash \Delta_i}
				\infer2[($\lor$L)]
				{\Gamma_i, \defns \vdash \Delta_i}
			\end{prooftree}
		\end{equation*}
		It shows that $\Gamma_i, \defns, \formula \vdash \Delta_i$ or $\Gamma_i, \defns, \formulatwo \vdash \Delta_i$ is not cut-free provable, for otherwise, $\Gamma_i, \defns \vdash \Delta_i$ would be cut-free provable.
		We pick $\Delta_{i+1} \coloneq \Delta_i$ and $\Gamma_{i+1} \coloneq \Gamma_i \cup \{\formula\}$ in the first case and $\Gamma_{i+1} \coloneq \Gamma_i \cup \{\formulatwo\}$ in the second case.
		
		If $\formula \lor \formulatwo \in \Delta_i$, then consider the derivation
		\begin{equation*}
			\begin{prooftree}
				\hypo{\Gamma_i, \defns \vdash \formula, \formulatwo, \Delta_i}
				\infer1[($\lor$R)]
				{\Gamma_i, \defns \vdash \Delta_i}
			\end{prooftree}
		\end{equation*}
		It shows that $\Gamma_i, \defns \vdash \formula, \formulatwo, \Delta_i$ is not cut-free provable, for otherwise, $\Gamma_i, \defns \vdash \Delta_i$ would be cut-free provable.
		In this case, we pick $\Gamma_{i+1} \coloneq \Gamma_i$ and $\Delta_{i+1} \coloneq \Delta_i \cup \{\formula, \formulatwo\}$.

		\item $e_i$ is of the form $\formula \Rightarrow \formulatwo$.
		If $\formula \Rightarrow \formulatwo \not \in \Gamma_i \cup \Delta_i$, then let $\Gamma_{i+1} \coloneq \Gamma_i$ and $\Delta_{i+1} \coloneq \Delta_i$.
		
		If $\formula \Rightarrow \formulatwo \in \Gamma_i$, then consider the derivation
		\begin{equation*}
			\begin{prooftree}
				\hypo{\Gamma_i, \defns \vdash \formula, \Delta_i}
				\hypo{\Gamma_i, \defns, \formulatwo \vdash \Delta_i}
				\infer2[($\Rightarrow$L)]
				{\Gamma_i, \defns \vdash \Delta_i}
			\end{prooftree}
		\end{equation*}
		It shows that $\Gamma_i, \defns \vdash \formula, \Delta_i$ or $\Gamma_i, \defns, \formulatwo \vdash \Delta_i$ is not cut-free provable, for otherwise, $\Gamma_i, \defns \vdash \Delta_i$ would be cut-free provable.
		We pick $\Gamma_{i+1} \coloneq \Gamma_i$ and $\Delta_{i+1} \coloneq \Delta_i \cup \{\formula\}$ in the first case, and $\Gamma_{i+1} \coloneq \Gamma_i \cup \{\formulatwo\}$ and $\Delta_{i+1} \coloneq \Delta_i$ in the second case.
		
		If $\formula \Rightarrow \formulatwo \in \Delta_i$, then consider the derivation
		\begin{equation*}
			\begin{prooftree}
				\hypo{\Gamma_i, \defns, \formula \vdash \formulatwo, \Delta_i}
				\infer1[($\Rightarrow$R)]
				{\Gamma_i, \defns \vdash \Delta_i}
			\end{prooftree}
		\end{equation*}
		It shows that $\Gamma_i, \defns, \formula \vdash \formulatwo, \Delta_i$ is not cut-free provable, for otherwise, $\Gamma_i, \defns \vdash \Delta_i$ would be cut-free provable.
		We pick $\Gamma_{i+1} \coloneq \Gamma_i \cup \{\formula\}$ and $\Delta_{i+1} \coloneq \Delta_i \cup \{\formulatwo\}$.
		
		\item $e_i$ is of the form $\formula \Leftrightarrow \formulatwo$. 
		
		If $\formula \Leftrightarrow \formulatwo \not \in \Gamma_i \cup \Delta_i$, then let $\Gamma_{i+1} \coloneq \Gamma_i$ and $\Delta_{i+1} \coloneq \Delta_i$.
		
		If $\formula \Leftrightarrow \formulatwo \in \Gamma_i$, then consider the derivation
		\begin{equation*}
			\begin{prooftree}
				\hypo{\Gamma_i, \defns \vdash \formula, \formulatwo, \Delta_i}
				\hypo{\Gamma_i, \defns, \formula, \formulatwo \vdash \Delta_i}
				\infer2[($\Leftrightarrow$L)]{\Gamma_i, \defns \vdash \Delta_i}
			\end{prooftree}
		\end{equation*}
		It shows that $\Gamma_i, \defns \vdash \formula, \formulatwo, \Delta_i$ or $\Gamma_i, \defns, \formula, \formulatwo \vdash \Delta_i$ is not cut-free provable, for otherwise, $\Gamma_i, \defns \vdash \Delta_i$ would be cut-free provable.
		We pick $\Gamma_{i+1} \coloneq \Gamma_i$ and $\Delta_{i+1} \coloneq \Delta_i \cup \{\formula, \formulatwo\}$ in the first case, and $\Gamma_{i+1} \coloneq \Gamma_i \cup \{\formula, \formulatwo\}$ and $\Delta_{i+1} \coloneq \Delta_i$ in the second case.
		
		If $\formula \Leftrightarrow \formulatwo \in \Delta_i$, then consider the derivation
		\begin{equation*}
			\begin{prooftree}
				\hypo{\Gamma_i, \defns, \formula \vdash \formulatwo, \Delta_i}
				\hypo{\Gamma_i, \defns, \formulatwo \vdash \formula, \Delta_i}
				\infer2[($\Leftrightarrow$L)]{\Gamma_i, \defns \vdash \Delta_i}
			\end{prooftree}
		\end{equation*}
		It shows that $\Gamma_i, \defns, \formula \vdash \formulatwo, \Delta_i$ or $\Gamma_i, \defns, \formulatwo \vdash \formula, \Delta_i$ is not cut-free provable, for otherwise, $\Gamma_i, \defns \vdash \Delta_i$ would be cut-free provable.
		We pick $\Gamma_{i+1} \coloneq \Gamma_i \cup \{\formula\}$ and $\Delta_{i+1} \coloneq \Delta_i \cup \{\formulatwo\}$ in the first case, and $\Gamma_{i+1} \coloneq \Gamma_i \cup \{\formulatwo\}$ and $\Delta_{i+1} \coloneq \Delta_i \cup \{\formula\}$ in the second case.
		
		\item $e_i$ is of the form $\langle \forall x : \formula, t \rangle$.
		If $\forall x : \formula \not \in \Gamma_i \cup \Delta_i$, then let $\Gamma_{i+1} \coloneq \Gamma_i$ and $\Delta_{i+1} \coloneq \Delta_i$.
		
		If $\forall x : \formula \in \Gamma_i$, then consider the derivation
		\begin{equation*}
			\begin{prooftree}
				\hypo{\Gamma_i, \defns, \formula[t/x] \vdash \Delta_i}
				\infer1[($\forall$L)]
				{\Gamma_i, \defns \vdash \Delta_i}
			\end{prooftree}
		\end{equation*}
		It shows that $\Gamma_i, \defns, \formula[t/x] \vdash \Delta_i$ is not cut-free provable, for otherwise, $\Gamma_i, \defns \vdash \Delta_i$ would be cut-free provable.
		In this case, we pick $\Gamma_{i+1} \coloneq \Gamma_i \cup \{\formula[t/x]\}$ and $\Delta_{i+1} \coloneq \Delta_i$.
		
		If $\forall x : \formula \in \Delta_i$, then consider the derivation
		\begin{equation*}
			\begin{prooftree}
				\hypo{\Gamma_i, \defns \vdash \formula[z/x], \Delta_i}
				\infer1[($\forall$R)]
				{\Gamma_i, \defns \vdash \Delta_i}
			\end{prooftree}
		\end{equation*}
		for some $z \not\in \free(\Gamma_i \cup \Delta_i)$.
		It shows that $\Gamma_i, \defns \vdash \formula[z/x], \Delta_i$ is not cut-free provable, for otherwise, $\Gamma_i, \defns \vdash \Delta_i$ would be cut-free provable.
		In this case, we pick $\Gamma_{i+1} \coloneq \Gamma_i$ and $\Delta_{i+1} \coloneq \Delta_i \cup \{\formula[z/x]\}$.

		\item $e_i$ is of the form $\langle \exists x : \formula, t \rangle$.
		If $\exists x : \formula \not \in \Gamma_i \cup \Delta_i$, then let $\Gamma_{i+1} \coloneq \Gamma_i$ and $\Delta_{i+1} \coloneq \Delta_i$.
		
		If $\exists x : \formula \in \Gamma_i$, then consider the derivation
		\begin{equation*}
			\begin{prooftree}
				\hypo{\Gamma_i, \defns, \formula[z/x] \vdash \Delta_i}
				\infer1[($\exists$L)]
				{\Gamma_i, \defns \vdash \Delta_i}
			\end{prooftree}
		\end{equation*}
		for some $z \not\in \free(\Gamma_i \cup \Delta_i)$.
		It shows that $\Gamma_i, \defns, \formula[z/x] \vdash \Delta_i$ is not cut-free provable, for otherwise, $\Gamma_i, \defns \vdash \Delta_i$ would be cut-free provable.
		In this case, we pick $\Gamma_{i+1} \coloneq \Gamma_i \cup \{\formula[z/x]\}$ and $\Delta_{i+1} \coloneq \Delta_i$.
		
		If $\exists x : \formula \in \Delta_i$, then consider the derivation
		\begin{equation*}
			\begin{prooftree}
				\hypo{\Gamma_i, \defns \vdash \formula[t/x], \Delta_i}
				\infer1[($\exists$R)]
				{\Gamma_i, \defns \vdash \Delta_i}
			\end{prooftree}
		\end{equation*}
		It shows that $\Gamma_i, \defns \vdash \formula[t/x], \Delta_i$ is not cut-free provable, for otherwise, $\Gamma_i, \defns \vdash \Delta_i$ would be cut-free provable.
		We pick $\Gamma_{i+1} \coloneq \Gamma_i$ and $\Delta_{i+1} \coloneq \Delta_i \cup \{\formula[t/x]\}$.

		\item $e_i$ is of the form $\langle P(\bar{t}), F_\Pi \rangle$.
		If $P(\bar{t}) \not \in \Gamma_i$, then let $\Gamma_{i+1} \coloneq \Gamma_i$ and $\Delta_{i+1} \coloneq \Delta_i$.
		
		If $P(\bar{t}) \in \Gamma_i$, then consider the derivation
		\begin{equation*}
			\begin{prooftree}
				\hypo{\text{minor premises}}
				\hypo{\Gamma_i, \defns, F_P[\bar{t}] \vdash \Delta_i}
				\infer2[(def L)]{\Gamma_i, \defns \vdash \Delta_i}
			\end{prooftree}
		\end{equation*}
		where the minor premises $\Gamma_i, \defns, \formulatwo[F_\Pi/\Pi] \vdash F_Q[\bar{s}], \Delta_i$ correspond to the definitional rules $\forall \bar{y}: Q(\bar{s}) \rul \formulatwo$ in $\defn_i$, where $\defn_i$ is the definition defining $P$.\footnote{
		Since there are no negative occurrences of defined predicates in $\psi$, we can simplify $\formulatwo[F_\Pi/\Pi^+]$ to $\formulatwo[F_\Pi/\Pi]$.
		} 
		
		It shows that the major premise $\Gamma_i, \defns, F_P[\bar{t}] \vdash \Delta_i$ or one of the minor premises is not cut-free provable, for otherwise, $\Gamma_i, \defns \vdash \Delta_i$ would be cut-free provable.
		In the first case, we pick $\Gamma_{i+1} \coloneq \Gamma_i \cup \{F_P[\bar{t}]\}$ and $\Delta_{i+1} \coloneq \Delta_i$.
		In the second case, we choose a minor premise $\Gamma_i, \defns, \formulatwo[F_\Pi/\Pi] \vdash F_Q[\bar{s}], \Delta_i$ that is not cut-free provable (any choice works), and pick $\Gamma_{i+1} \coloneq \Gamma_i \cup \{\formulatwo[F_\Pi/\Pi]\}$ and $\Delta_{i+1} \coloneq \Delta_i \cup \{ F_Q[\bar{s}] \}$.
	\end{itemize}
	
	Note that, by construction, we ensured that for all $i \in \N$:
	\begin{itemize}
		\item $\Gamma_i$ and $\Delta_i$ are finite;
		\item $\Gamma_i \subseteq \Gamma_{i+1}$ and $\Delta_i \subseteq \Delta_{i+1}$;
		\item the sequent $\Gamma_i, \defns \vdash \Delta_i$ is not cut-free provable.
	\end{itemize}
	
	Finally, we let $\Gamma_\omega \coloneq \cup_{i=0}^\infty \Gamma_i$ and $\Delta_\omega \coloneq \cup_{i=0}^\infty \Delta_i$, and we call $\limitsequent$ the \emph{limit sequent}.
	This sequent is not cut-free provable.
	Indeed, if it were, then there would exist finite sets $\Gamma' \subseteq \Gamma \cup \{\defn_1, \dots, \defn_n\}$ and $\Delta' \subseteq \Delta$ such that $\Gamma' \vdash \Delta'$ is cut-free provable.
	This follows by a straightforward argument by induction on the height of a cut-free proof of $\limitsequent$.
	However, since $\Gamma' \subseteq \Gamma_i \cup \{\defn_1, \dots, \defn_n\}$ and $\Delta' \subseteq \Delta_i$ for some $i \in \N$, this would entail that $\Gamma_i, \defns \vdash \Delta_i$ is cut-free provable, which is a contradiction.

We proceed by constructing the \emph{Henkin counter-model} $\henkincountermodel$ for $\limitsequent$. 
The domain of $\structure_\omega$ consists of all equivalence classes $[t]$ of terms $t \in \terms$ under an equivalence relation $\eqrel$ over $\terms$:

\begin{definition} \label{def:eqrel}
	We define a binary relation $\eqrel$ on $\terms$ by the rules:
	\begin{enumerate}[label=(\roman*)]
		\item if $t=s \in \Gamma_\omega$, then $t \eqrel s$; \label{it:base} 
		\item $t \eqrel t$; \label{it:refl} 
		\item if $t \eqrel s$, then $s \eqrel t$; \label{it:symm} 
		\item if $t \eqrel s$ and $s \eqrel u$, then $t \eqrel u$; \label{it:trans} 
		\item if $t_1 \eqrel s_1, \dots, t_k \eqrel s_k$ and $f$ is a $k$-ary function symbol, then $f(t_1, \dots, t_k) \eqrel f(s_1, \dots, s_k)$. \label{it:cong} 
	\end{enumerate}
	By \ref{it:refl}-\ref{it:trans}, $\eqrel$ is an equivalence relation.
	We denote the equivalence class of a term $t \in \terms$ under $\eqrel$ by $[t]$, and the set of all equivalence classes under $\eqrel$ by $\quotset$.
	We will sometimes write $(t_1, \dots, t_k) \eqrel (s_1, \dots, s_k)$ for $t_1 \eqrel s_1, \dots, t_k \eqrel s_k$, 
	and $[(t_1, \dots, t_k)]$ for $([t_1], \dots, [t_k])$.
\end{definition}

We construct $\structure_\omega$ as the last structure $\structure_n$ of a finite sequence of structures $\structure_0, \structure_1, \dots, \structure_n$.
In defining this sequence, we use the operators $\cons_{\defn_i}$, for $i \in \{1, \dots, n\}$, as defined in Section \ref{sec:stable-semantics}.
Since the definitions $\defn_i$ are positive, $\cons_\defn$ depends only on the first argument.
Hence, by abuse of notation, we simply write $\cons_{\defn_i}(\structure)$ instead of $\cons_{\defn_i}(\structure, \structuretwo)$. 

\begin{definition} \label{def:I_omega}
	We define a sequence of \fo-structures $\structure_0, \structure_1, \dots, \structure_n$ by induction on $i$:
	
	First, we define $\structure_0$ as the \fo-structure with $\voc{\structure_0} = \vocab \setminus \bigcup_{i=1}^n \defi{\defn_i}$ and $\dom{\structure_0} = \quotset$, such that:
	\begin{itemize}
		\item for every $k$-ary function symbol $f$ in $\vocab$: $f^{\structure_0}([t_1], \dots, [t_k]) = [f^{\structure_0}(t_1, \dots, t_k)]$ for all $t_1, \dots, t_k \in \terms$;
		\item for every $k$-ary predicate symbol $P$ in $\vocab$: $P^{\structure_0} = \{ ([t_1], \dots, [t_k]) \mid P(t_1, \dots, t_k) \in \Gamma_\omega \}$.
	\end{itemize}
	Note that $f^{\structure_0}$ is well-defined by rule \ref{it:cong} of Definition \ref{def:eqrel}.
	
	Suppose that we have defined the structure $\structure_{i-1}$ for some $i \in \{1, \dots, n\}$.
	Since $\regularsequent$ is a canonical sequent, no parameter is a defined predicate of $\defn_j$ for any $j>i$.
	Hence, every parameter of $\defn_i$ is in $\voc{\structure_{i-1}} = \vocab \setminus \bigcup_{j=i}^n \defi{\defn_j}$.
	Therefore, $\context_{i} \coloneq \structure_{i-1}|_{\pars{\defn_{i}}}$ is a $\defn_{i}$-context.
	Let $\structuretwo_{i}$ be the least $\sym{\defn_{i}}$-structure expanding $\context_{i}$ such that: 
	\begin{enumerate}[label=(\roman*)]
		\item for all $P \in \defi{\defn_i}$: if $P(\bar{t}) \in \Gamma_\omega$, then $[\bar{t}] \in P^{\structuretwo_i}$;
		\item $\cons_{\defn_i}(\structuretwo_i) \sqsubseteq \structuretwo_i$.
	\end{enumerate}
	The existence of $\structuretwo_{i}$ follows from the Knaster-Tarski theorem \cite{Tarski}.
	We define $\structure_i$ as the expansion of $\structure_{i-1}$ with $\voc{\structure_i} = \voc{\structure_{i-1}} \cup \defi{\defn_i}$, such that $P^{\structure_i} = P^{\structuretwo_i}$ for all $P \in \defi{\defn_i}$.
	
	In sum, we have defined a sequence of structures $\structure_0, \structure_1, \dots, \structure_n$.
	We define $\structure_\omega$ as $\structure_n$.
\end{definition}

\begin{proposition} \label{prop:t^I=[t]}
	$t^{\structure_\omega} = [t]$ for all $t \in \terms$.
\end{proposition}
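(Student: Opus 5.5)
The plan is a structural induction on the term $t \in \terms$. First note that $\structure_\omega$ agrees with $\structure_0$ on all function symbols: each $\structure_i$ expands $\structure_{i-1}$ by adding only the defined predicates of $\defn_i$. Hence for every function symbol $f$ in $\vocab$,
\[
f^{\structure_\omega}([t_1], \dots, [t_k]) = [f(t_1, \dots, t_k)].
\]
This is the intended reading of the defining clause for $f^{\structure_0}$ in Definition \ref{def:I_omega}. Well-definedness of this clause, i.e.\ independence of the chosen representatives $t_j$, is exactly rule \ref{it:cong} of Definition \ref{def:eqrel}. I would record this explicitly first, since the rest relies on it.

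For the base case, $t$ is an object symbol $c$ ($0$-ary function symbol) of $\vocab$. Recall that in this paper free object symbols act as constants, so terms never contain unassigned variables. The clause above with $k=0$ gives $c^{\structure_\omega} = [c]$.

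For the inductive step, $t \doteq f(t_1, \dots, t_k)$ and by induction hypothesis $t_j^{\structure_\omega} = [t_j]$ for each $j$. The standard evaluation of terms gives
\[
t^{\structure_\omega} = f^{\structure_\omega}(t_1^{\structure_\omega}, \dots, t_k^{\structure_\omega}) = f^{\structure_\omega}([t_1], \dots, [t_k]) = [f(t_1, \dots, t_k)] = [t].
\]

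I expect no real obstacle here; the only point needing care is the well-definedness of $f^{\structure_0}$ on equivalence classes. That amounts to checking that if $t_j \eqrel s_j$ for all $j$ then $f(\bar{t}) \eqrel f(\bar{s})$, which is literally rule \ref{it:cong}. A minor bookkeeping point is to confirm that every symbol occurring in $t$ lies in $\vocab$ and is interpreted in $\structure_0$. This holds because $\terms$ is built only from function symbols in $\vocab$, and none of these is a defined predicate.
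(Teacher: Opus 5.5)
Your proposal is correct and follows the paper's proof: a structural induction on $t$, with the step $t^{\structure_\omega} = f^{\structure_\omega}([t_1], \dots, [t_k]) = [f(t_1, \dots, t_k)]$ justified by the defining clause for function symbols in $\structure_0$, which $\structure_\omega$ inherits. Your separate base case for object symbols is just the $k=0$ instance of the paper's single case, and your reading of that clause as $[f(t_1,\dots,t_k)]$, well-defined by the congruence rule of $\eqrel$, is the intended one.
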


\begin{proof}
	We show that $t^{\structure_\omega} = [t]$ for all $t \in \terms$, by structural induction on $t$:\footnote{
		Note that we do not need a separate case for object symbols, as we defined them to be $0$-ary function symbols.
	}
	
	\textbf{Case}: $t$ is of the form $f(t_1, \dots, t_k)$ for a $k$-ary function symbol $f$ and terms $t_1, \dots, t_k$.
	By induction hypothesis, $t_i^{\structure_\omega} = [t_i]$ for each $i \in \{ 1, \dots, k \}$.
	By definition of $\structure_\omega$, $f(t_1, \dots, t_k)^{\structure_\omega} = f^{\structure_\omega}(t_1^{\structure_\omega}, \dots, t_k^{\structure_\omega}) = f^{\structure_\omega}([t_1], \dots, [t_k]) = [f(t_1, \dots, t_k)]$.
\end{proof}

\begin{lemma} \label{lem:CFP-preserved-under-equivalence-of-terms}
	Let $x$ be an object symbol and $t$ and $s$ terms.
	If $t \eqrel s$, then $\Gamma_\omega \vdash \formula[t / x]$ is cut-free provable iff\/  $\Gamma_\omega \vdash \formula[s / x]$ is cut-free provable for every \fo-formula $\formula$.
\end{lemma}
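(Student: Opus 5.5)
We argue by induction on the derivation of $t \eqrel s$ according to the rules of Definition \ref{def:eqrel}. By symmetry of the statement in $t$ and $s$, we show the combined claim: if $t \eqrel s$, then for every \fo-formula $\formula$ and object symbol $x$, cut-free provability of $\Gamma_\omega \vdash \formula[t/x]$ implies cut-free provability of $\Gamma_\omega \vdash \formula[s/x]$, and conversely. Here $\Gamma_\omega \vdash \formula$ is read as the (possibly infinite) sequent in which cut-free provability means a finite cut-free proof. As in the argument for the limit sequent, any such proof uses only finitely many formulas of $\Gamma_\omega$. The cases are handled as follows.
\begin{itemize}
\item \emph{Rule (ii).} This is trivial.
\item \emph{Rule (iii).} The induction hypothesis already gives both directions.
\item \emph{Rule (iv).} Compose the two equivalences supplied by the induction hypothesis.
\item \emph{Rule (v).} For $f(t_1,\dots,t_k) \eqrel f(s_1,\dots,s_k)$, replace one argument at a time. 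Pick fresh object symbols $z_1,\dots,z_k$ not occurring in $\formula$, $x$ or any $t_j,s_j$. Set $\formulatwo_j \coloneq \formula[f(s_1,\dots,s_{j-1},z_j,t_{j+1},\dots,t_k)/x]$, so that $\formulatwo_j[t_j/z_j]$ and $\formulatwo_j[s_j/z_j]$ are consecutive formulas in the chain from $\formula[f(\bar t)/x]$ to $\formula[f(\bar s)/x]$. Applying the induction hypothesis for $t_j \eqrel s_j$ to $\formulatwo_j$ and $z_j$ links consecutive steps. This is exactly why the statement is quantified over all $\formula$ and $x$.
\end{itemize}

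The substantive case is rule (i), where $t=s \in \Gamma_\omega$. Suppose $\Gamma_\omega \vdash \formula[t/x]$ has a cut-free proof. We obtain a cut-free proof of $\Gamma_\omega \vdash \formula[s/x]$ using ($=$L). Choose a fresh object symbol $y$. Since $t=s\in\Gamma_\omega$, the conclusion sequent is $\Gamma'[t/x',s/y'],\, t=s \vdash \formula[s/x]$ for the relevant set $\Gamma'$. Written in the ($=$L) format, take the premise to be $\Gamma_\omega[s/x',t/y'] \vdash \formula[\,\cdot\,]$ with the substitutions arranged so that its succedent becomes $\formula[t/x]$ and the remaining formulas match $\Gamma_\omega$. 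This requires generalizing the antecedent to contain variable patterns that instantiate back to $\Gamma_\omega$. The most direct route is the derived rule
\[\frac{\Gamma, t=s \vdash \formula[t/x],\Delta}{\Gamma, t=s\vdash \formula[s/x],\Delta}\,,\]
which is cut-free admissible. To prove it, apply ($=$L) with the antecedent abstracted trivially: the formulas of $\Gamma$ not involving the substituted variables are left unchanged, using a fresh variable $x$ and taking $\Gamma$ free of it. The premise then contains $\formula[t/x]$ after instantiating, and a symmetric use gives the converse direction. The only real care needed is the variable hygiene in the ($=$L) schema, handled by renaming $x$ to be fresh using Lemma \ref{lem:safe-replacement-bound-vars}, and the restriction of (subst) on quantified symbols, handled by renaming bound variables.

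I expect the main obstacle to be exactly this rule (i) step. The ($=$L) rule as stated substitutes simultaneously into the whole antecedent, so we must either show that the rule can be applied with $\Gamma_\omega$ containing no occurrence of the fresh pattern variables, or fall back to replacing ($=$L) by a cut-free admissible ``rewrite in one formula'' rule. If the direct instantiation fails, I would prove that admissibility separately, by induction on the height of cut-free proofs, following the standard LK-with-equality argument in \cite{jlc/BrotherstonS11}.
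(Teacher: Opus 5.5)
Your proof is correct and follows the same route as the paper. Both argue by induction on the derivation of $t \eqrel s$, using ($=$L) for rule (i) and an argument-by-argument chain for rule (v); your fresh $z_j$ make that chain more explicit than the paper does. The obstacle you anticipate in case (i) does not arise: ($=$L) applies directly with the antecedent left unchanged, taking the pattern variables fresh for the finitely many formulas of $\Gamma_\omega$ the proof actually uses and then weakening back, and with succedent pattern $\formula[y'/x]$ (resp.\ $\formula[x'/x]$), which gives both directions from the single equation $t=s$, so no separate admissibility argument is needed.
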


\begin{proof}
	Let $t$ and $s$ be terms such that $t \eqrel s$.
	Then there exists a tree-shaped derivation $T$ for the fact that $t \eqrel s$, according to the rules \ref{it:base}-\ref{it:cong} of Definition \ref{def:eqrel}.
	We show by induction on the height of $T$ that, for every \fo-formula $\formula$ and object symbol $x$, $\Gamma_\omega \vdash \formula[t / x]$  is cut-free provable iff $\Gamma_{\omega} \vdash \formula[s / x]$ is cut-free provable:
	\textbf{Case}: $t \eqrel s$ is derived by \ref{it:base}. 
	Then $t=s \in \Gamma_\omega$.
	By application of (=L), a cut-free proof of $\Gamma_{\omega} \vdash \formula[t / x]$ can be extended to a cut-free proof of $\Gamma_{\omega} \vdash \formula[s / x]$ and vice versa.
	
	\textbf{Case}: $t \eqrel s$ is derived by \ref{it:refl}. 
	Then $s \doteq t$.
	Trivially, $\Gamma_{\omega} \vdash \formula[t / x]$ is cut-free provable iff $\Gamma_{\omega} \vdash \formula[t / x]$ is cut-free provable.
	
	\textbf{Case}: $t \eqrel s$ is derived by \ref{it:symm} from $s \eqrel t$. 
	By induction hypothesis, $\Gamma_{\omega} \vdash \formula[s / x]$ is cut-free provable iff $\Gamma_{\omega} \vdash \formula[t / x]$ is cut-free provable.
	Then, of course, $\Gamma_{\omega} \vdash \formula[t / x]$ is cut-free provable iff $\Gamma_{\omega} \vdash \formula[s / x]$ is cut-free provable.
	
	\textbf{Case}: $t \eqrel s$ is derived by \ref{it:trans} from $t \eqrel u$ and $u \eqrel s$ for a term $u$. 
	By induction hypothesis, $\Gamma_{\omega} \vdash \formula[t / x]$ is cut-free provable iff $\Gamma_{\omega} \vdash \formula[u / x]$ is cut-free provable, and $\Gamma_{\omega} \vdash \formula[u / x]$ is cut-free provable iff $\Gamma_{\omega} \vdash \formula[s / x]$ is cut-free provable. 
	Then $\Gamma_{\omega} \vdash \formula[t / x]$ is cut-free provable 
	iff $\Gamma_{\omega} \vdash \formula[s / x]$ is cut-free provable. 
	
	\textbf{Case}: $t \eqrel s$ is derived by \ref{it:cong} from $t_1 \eqrel s_1$, \dots, $t_k, \eqrel s_k$ for terms $t_1, \dots, t_k, s_1, \dots, s_k$.
	Then $t \doteq f(t_1, \dots, t_k)$ and  $t \doteq f(s_1, \dots, s_k)$.
	By induction hypothesis, we know that for every $i \in \{1, \dots, k\}$ and every \fo-formula $\formula$, the sequent $\Gamma_{\omega} \vdash \formula[t_i / x]$ is cut-free provable iff $\Gamma_{\omega} \vdash \formula[s_i / x]$ is cut-free provable.
	We derive that
	\begin{align*}
		& \Gamma_\omega \vdash \formula[f(t_1, \dots, t_k) / x] \text{ is cut-free provable} \\
		\text{iff } & \Gamma_\omega \vdash \formula[f(s_1, t_2 \dots, t_k) / x] \text{ is cut-free provable} \\
		\vdots & \\
		\text{iff } & \Gamma_\omega \vdash \formula[f(s_1, \dots, s_k) / x] \text{ is cut-free provable.} 
	\end{align*}
\end{proof}

\begin{lemma} \label{lem:satisfied-by-countermodel-implies-provable-from-gamma_omega}
	Let $P$ be a defined predicate of one of the definitions\/ $\defns$. 
	If\/ $[\bar{t}] \in P^{\structure_\omega}$, then $\Gamma_\omega, \defns \vdash P(\bar{t})$ is cut-free provable.
\end{lemma}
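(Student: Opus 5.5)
We argue by induction along the definitions $\defn_1, \dots, \defn_n$ in the order in which $\structure_\omega$ was built, and within each definition by a least-fixpoint argument. Fix $i$ and suppose the statement holds for all defined predicates of $\defn_j$ with $j<i$. Consider the $\sym{\defn_i}$-structure $\structuretwo'_i$ expanding $\context_i$ given by
\[ P^{\structuretwo'_i} \coloneq \{ [\bar{t}] \mid \Gamma_\omega, \defns \vdash P(\bar{t}) \text{ is cut-free provable} \} \quad \text{for } P \in \defi{\defn_i}. \]
This set is well defined on equivalence classes: by Lemma \ref{lem:CFP-preserved-under-equivalence-of-terms}, or rather its straightforward variant with $\defns$ added to the left, cut-free provability of $P(\bar{t})$ depends only on $[\bar{t}]$. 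Since $\structuretwo_i$ is the \emph{least} structure satisfying conditions (i) and (ii) of Definition \ref{def:I_omega}, it suffices to show that $\structuretwo'_i$ satisfies both. Then $P^{\structure_\omega} = P^{\structuretwo_i} \subseteq P^{\structuretwo'_i}$, which is the claim for $P$.

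Condition (i) is immediate. If $P(\bar{t}) \in \Gamma_\omega$, then $\Gamma_\omega, \defns \vdash P(\bar{t})$ is an instance of (ax).

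Condition (ii) is the pre-fixpoint property $\cons_{\defn_i}(\structuretwo'_i) \sqsubseteq \structuretwo'_i$, and this is the main step. Take $[\bar{u}] \in P^{\cons_{\defn_i}(\structuretwo'_i)}$. By the definition of the merged body, some rule $\forall \bar{y}: P(\bar{s}) \rul \formulatwo$ of $\defn_i$ and some domain elements $[\bar{r}]$ for $\bar{y}$ satisfy $[\bar{u}] = [\bar{s}[\bar{r}/\bar{y}]]$ and $\structuretwo'_i[\bar{y}:[\bar{r}]] \models \formulatwo$. We then prove, by structural induction on subformulas $\chi$ of $\formulatwo$ (built from literals with $\land$ and $\lor$, since the sequent is canonical and negation-free in bodies), that $\structuretwo'_i \models \chi[\bar{r}/\bar{y}]$ implies $\Gamma_\omega, \defns \vdash \chi[\bar{r}/\bar{y}]$ is cut-free provable. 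The cases are as follows.
\begin{itemize}
	\item Defined atoms of $\defn_i$ hold by the definition of $\structuretwo'_i$, transported along $\eqrel$ via Proposition \ref{prop:t^I=[t]}.
	\item Defined atoms of $\defn_j$ with $j<i$ follow from the outer induction hypothesis.
	\item Parameter atoms $Q(\bar{v})$ with $Q$ not defined: by the definition of $\structure_0$, $[\bar{v}]=[\bar{v}']$ with $Q(\bar{v}') \in \Gamma_\omega$, so (ax) plus Lemma \ref{lem:CFP-preserved-under-equivalence-of-terms} apply.
	\item Equality atoms $v=w$: $v^{\structure_\omega} = w^{\structure_\omega}$ gives $v \eqrel w$, and we transport the instance $\vdash v=v$ of ($=$R).
	\item Conjunctions use ($\land$R); disjunctions use ($\lor$R) and (wk).
\end{itemize}
Applied to $\chi = \formulatwo$, one application of (def R) with $\bar{s}$ instantiated by $\bar{r}$ gives $\Gamma_\omega, \defns \vdash P(\bar{s}[\bar{r}/\bar{y}])$ cut-free. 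Transporting along $\bar{u} \eqrel \bar{s}[\bar{r}/\bar{y}]$ then gives $[\bar{u}] \in P^{\structuretwo'_i}$.

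The delicate points are all about term transport. Lemma \ref{lem:CFP-preserved-under-equivalence-of-terms} is stated for $\Gamma_\omega \vdash \formula$ without the definitions, but its proof only uses ($=$L) and extends verbatim to sequents that also contain $\defns$. It must also be applied componentwise to tuples. In addition, (def R) requires the instantiating tuple to consist of object symbols rather than arbitrary terms. We handle this by choosing fresh object symbols $\bar{z}$, deriving the sequent for $\bar{z}$, and then applying (subst), which is admissible and, being a structural rule, creates no cut, to replace $\bar{z}$ by $\bar{r}$.
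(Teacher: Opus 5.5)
Your argument is the paper's proof. The paper interprets each $P\in\defi{\defn_i}$ by the classes of cut-free provable atoms, uses the minimality of the structure built in Definition~\ref{def:I_omega}, and runs a structural induction on bodies with the same atom cases (defined by $\defn_i$, defined earlier, undefined, equality). Your version is slightly more careful on two points. First, you unpack the merged body into one rule with witnesses $[\bar r]$; the paper inducts on $\formula_P(\bar t)$, which contains $\exists$ and the normalisation equalities, and then applies (def R) to the merged body. Second, you make explicit that Lemma~\ref{lem:CFP-preserved-under-equivalence-of-terms} is needed with $\defns$ in the antecedent.

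The one step that does not work as written is your closing workaround for (def R). A cut-free proof of $\Gamma_\omega,\defns\vdash\formulatwo[\bar z/\bar y]$ for fresh $\bar z$ does not exist in general, because $\Gamma_\omega$ says nothing about $\bar z$. Moreover, (subst) rewrites the whole sequent, so it cannot specialise only the instance chosen in (def R). The workaround is also unnecessary, and you should drop it. The paper applies (def R) with arbitrary terms: in its own proof of this lemma it concludes $P(\bar t)$ for arbitrary terms $\bar t$, and in the proof of Lemma~\ref{lem:CFP-substitution-negation} it speaks explicitly of ``a tuple of terms''. This matches the rule in \lkid. Under that reading your proof goes through as it stands. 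Under a strictly literal object-symbol reading, the paper's proof would face the same problem, and your (subst) step would not repair it.
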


\begin{proof}
	We prove by induction on $i$ that, for every $i \in \{1, \dots, n\}$ and every $P \in \defi{\defn_i}$, if $[\bar{t}] \in P^{\structure_\omega}$, then $\Gamma_\omega, \defns \vdash P(\bar{t})$ is cut-free provable.
	
	For each $P \in \defi{\defn_i}$, we denote by $X_P$ the set $\{ [\bar{t}] \mid \Gamma_\omega, \defns \vdash P(\bar{t}) \text{ is cut-free provable} \}$.
	
	Note that if $[\bar{t}] \in X_P$ for some $P \in \defi{\defn}$, then there exists a tuple of terms $\bar{s}$ such that $\bar{s} \eqrel \bar{t}$ and $\Gamma_\omega \vdash P(\bar{s})$ is cut-free provable.
	By repeated application of Lemma \ref{lem:CFP-preserved-under-equivalence-of-terms}, it follows that $\Gamma_\omega \vdash P(\bar{t})$ is cut-free provable as well.
	We will use this fact multiple times in the remainder of the proof.
	
	Let $\structuretwo \coloneq \structure_\omega |_{\pars{\defn_i}} [\defi{\defn_i} : X_{\defi{\defn_i}}]$.
	In words, $\structuretwo$ is the restriction of $\structure_\omega$ to the parameters of $\defn_i$, expanded to interpret the defined predicates $P$ of $\defn$ by the sets $X_P$.
	We will show that $\structuretwo$ satisfies:
	\begin{enumerate}[label=(\roman*)]
		\item for all $P \in \defi{\defn_i}$: if $P(\bar{t}) \in \Gamma_\omega$, then $[\bar{t}] \in P^{\structuretwo}$; \label{it:J-base-1}
		\item $\cons_{\defn_i}(\structuretwo) \sqsubseteq \structuretwo$. \label{it:J-prefixpoint-1}
	\end{enumerate}
	Since, by Definition \ref{def:I_omega}, $\structure_\omega|_{\sym{\defn_i}}$ is the least such structure, this will entail that $\structure_\omega|_{\sym{\defn_i}} \sqsubseteq \structuretwo$, and hence that
	\begin{equation*}
		P^{\structure_\omega} = P^{\structure_\omega|_{\sym{\defn_i}}} \subseteq P^\structuretwo = X_P 
	\end{equation*}
	for all $P \in \defi{\defn}$.
	By definition of $X_P$, this means that if $[\bar{t}] \in P^{\structure_\omega}$, then $\Gamma_\omega \vdash P(\bar{t})$ is cut-free provable, thereby finishing the proof.
	
	For \ref{it:J-base-1}, let $P \in \defi{\defn_i}$, and assume that $P(\bar{t}) \in \Gamma_\omega$.
	Then $\Gamma_\omega, \defns \vdash P(\bar{t})$ is cut-free provable via (ax). 
	It thus remains to show \ref{it:J-prefixpoint-1}, i.e., that $\cons_{\defn_i}(\structuretwo) \sqsubseteq \structuretwo$.
	Let $P\in \defi{\defn_i}$.
	We need to show that $P^{\cons_{\defn_i}(\structuretwo)} \subseteq P^\structuretwo$.
	Take $[\bar{t}] \in P^{\cons_{\defn_i}(\structuretwo)}$.
	By definition of $\cons_{\defn_i}$, this means that $\structuretwo[\bar{y}:[\bar{t}]] \models \formula_P(\bar{y})$, where $\bar{y}$ is an $\arity{P}$-tuple of object symbols not occurring freely in the body $\formula$ of any definitional rule of $\defn$ of the form $\defrul$.
	Since $\bar{t}^{\structuretwo} = \bar{t}^{\structure_\omega} = [\bar{t}]$ by Proposition \ref{prop:t^I=[t]}, this implies $\structuretwo \models \formula_P(\bar{t})$.
	We claim that $\Gamma_\omega, \defns \vdash \formula_P(\bar{t})$ is cut-free provable.
	Via (def R), this claim entails that $\Gamma_\omega, \defns \vdash P(\bar{t})$ is cut-free provable, and hence, that $[\bar{t}] \in X_P = P^\structuretwo$, finishing the proof.
	We prove the claim by showing that if $\structuretwo \models \formula_P(\bar{t})$, then $\Gamma_\omega, \defns \vdash \formula_P(\bar{t})$ is cut-free provable.
	We do so by structural induction on $\formula_P(\bar{t})$:
	
	\textbf{Case}: $\formula_P(\bar{t})$ is of the form $t=s$. 
	Then $t \eqrel s$.
	Via ($=$R),  $\Gamma_\omega, \defns \vdash t=t$ is cut-free provable.
	By Lemma \ref{lem:CFP-preserved-under-equivalence-of-terms}, $\Gamma_\omega, \defns \vdash t=s$ is cut-free provable as well.
	
	\textbf{Case}: $\formula_P(\bar{t})$ is of the form $Q(\bar{s})$.
	The predicate symbol $Q$ is either defined by $\defn_i$, by $\defn_j$ for some $j<i$, or not defined by any $\defn_j$ for $j \in \{1, \dots, n\}$.
	(Note that, since $\regularsequent$ is a canonical sequent, $Q$ cannot be defined by $\defn_j$ for any $j>i$.)
	
	\textbf{Subcase}: $Q$ is defined by $\defn_i$.
	Since $\structuretwo \models Q(\bar{s})$, we have that $[\bar{s}] \in X_Q$, which implies that $\Gamma_\omega, \defns \vdash Q(\bar{s})$ by the above observation.
	
	\textbf{Subcase}: $Q$ is defined by $\defn_j$ for some $j<i$.
	Since $\structuretwo$ and $\structure_\omega$ coincide on $Q$, we have that $[\bar{s}] \in Q^{\structure_\omega}$. 
	By (the first) induction hypothesis, $\Gamma_\omega, \defns \vdash Q(\bar{s})$ is cut-free provable.
	
	\textbf{Subcase}: $Q$ is not defined by any $\defn_j$ for $j \in \{1, \dots, n\}$.
	Since $\structuretwo$ and $\structure_\omega$ coincide on $Q$, we have that $[\bar{s}] \in Q^{\structure_\omega}$. 
	By definition of $\structure_\omega$, there exists a tuple of terms $\bar{u}$ such that $\bar{u} \eqrel \bar{s}$ and $Q(\bar{u}) \in \Gamma_\omega$.
	With a single application of (ax), $\Gamma_\omega, \defns \vdash Q(\bar{u})$ is cut-free provable 
	By repeated application of Lemma \ref{lem:CFP-preserved-under-equivalence-of-terms}, it follows that $\Gamma_\omega, \defns \vdash Q(\bar{s})$ is cut-free provable as well.
	
	\textbf{Case}: $\formula_P(\bar{t})$ is of the form $\formula_1 \land \formula_2$.
	Then $\structuretwo \models \formula_1$ and $\structuretwo \models \formula_2$.
	By (the second) induction hypothesis, $\Gamma_\omega, \defns \vdash \formula_1$ and $\Gamma_\omega, \defns \vdash \formula_2$ are cut-free provable, and through application of ($\land$R), so is $\Gamma_\omega, \defns \vdash \formula_1 \land \formula_2$.
	
	\textbf{Case}: $\formula_P(\bar{t})$ is of the form $\formula_1 \lor \formula_2$.
	Then $\structuretwo \models \formula_1$ or $\structuretwo \models \formula_2$.
	By (the second) induction hypothesis, $\Gamma_\omega, \defns \vdash \formula_1$ or $\Gamma_\omega, \defns \vdash \formula_2$ is cut-free provable, and through application of (wk) and ($\lor$R), so is $\Gamma_\omega, \defns \vdash \formula_1 \lor \formula_2$.
\end{proof}

\begin{lemma} \label{lem:counter-model-satisfies-Gamma_omega-but-not-Delta_Omega}
	$\structure_\omega \models \formula$ for all $\formula \in \Gamma_\omega$ and $\structure_\omega \not\models \formula$ for all $\formula \in \Delta_\omega$.
\end{lemma}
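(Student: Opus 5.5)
We prove both claims simultaneously by structural induction on the \fo-formula $\formula$ (over $\vocab$), showing: if $\formula \in \Gamma_\omega$ then $\structure_\omega \models \formula$, and if $\formula \in \Delta_\omega$ then $\structure_\omega \not\models \formula$. Throughout we use two facts. First, $\Gamma_\omega \cap \Delta_\omega = \emptyset$, since otherwise some $\Gamma_i, \defns \vdash \Delta_i$ would be provable by (ax). Second, by Proposition \ref{prop:t^I=[t]}, every term $t$ is interpreted as $[t]$, so the domain consists exactly of the values of ground terms.

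\textbf{Logical connectives and quantifiers.} These cases follow the usual Hintikka-set pattern. Every schedule element occurs infinitely often, and each formula lands in some $\Gamma_i$ or $\Delta_i$ at a finite stage. Hence, for instance, $\formula \land \formulatwo \in \Gamma_\omega$ gives $\formula, \formulatwo \in \Gamma_\omega$, and $\formula \lor \formulatwo \in \Gamma_\omega$ gives $\formula \in \Gamma_\omega$ or $\formulatwo \in \Gamma_\omega$. The same holds on the $\Delta$ side and for $\lnot$, $\Rightarrow$ and $\Leftrightarrow$. We then apply the induction hypothesis.

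For $\forall x: \formula \in \Gamma_\omega$, the elements $\langle \forall x:\formula, t\rangle$ for all $t \in \terms$ give $\formula[t/x] \in \Gamma_\omega$. Since every domain element is some $[t]$, we get $\structure_\omega[x:[t]] \models \formula$ from $\structure_\omega \models \formula[t/x]$. For $\forall x:\formula \in \Delta_\omega$ we obtain a witness $\formula[z/x] \in \Delta_\omega$ with $z$ an object symbol, and $z$ denotes $[z]$. The $\exists$ case is dual. To keep the induction well-founded, we induct on the number of connectives, so that substitution instances count as smaller.

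\textbf{Atoms.} For $t=s \in \Gamma_\omega$, rule \ref{it:base} gives $[t]=[s]$. For $\bot \in \Gamma_\omega$ or $\top \in \Delta_\omega$, the sequent would be provable by (ax). For $P(\bar t) \in \Gamma_\omega$ with $P$ a parameter, $[\bar t] \in P^{\structure_\omega}$ by construction of $\structure_0$. For $P(\bar t) \in \Gamma_\omega$ with $P$ defined, clause (i) of Definition \ref{def:I_omega} gives $[\bar t] \in P^{\structure_\omega}$.

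For $\Delta_\omega$-atoms we argue by contradiction. Suppose $t=s \in \Delta_\omega$ but $t\eqrel s$. Since $\Gamma_\omega \vdash t=t$ holds by ($=$R), Lemma \ref{lem:CFP-preserved-under-equivalence-of-terms} yields a cut-free proof of $\Gamma_\omega \vdash t=s$; weakening then makes the limit sequent cut-free provable, a contradiction.

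Suppose $P(\bar t) \in \Delta_\omega$ with $P$ a parameter and $[\bar t] \in P^{\structure_\omega}$. Then some $P(\bar u) \in \Gamma_\omega$ with $\bar u \eqrel \bar t$. By (ax) and Lemma \ref{lem:CFP-preserved-under-equivalence-of-terms}, $\Gamma_\omega \vdash P(\bar t)$ is cut-free provable, again a contradiction.

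Suppose $P$ is defined. Then Lemma \ref{lem:satisfied-by-countermodel-implies-provable-from-gamma_omega} gives a cut-free proof of $\Gamma_\omega, \defns \vdash P(\bar t)$, which weakens to the limit sequent. This is a contradiction.

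The main subtlety is the handling of the limit sequent. Lemma \ref{lem:CFP-preserved-under-equivalence-of-terms} is stated for $\Gamma_\omega \vdash \cdot$ without the definitions. We either carry $\defns$ through by weakening or observe that its proof is unaffected by them. We then note that any cut-free proof of $\Gamma_\omega, \defns \vdash \formula$ with $\formula \in \Delta_\omega$ weakens to one of $\limitsequent$, contradicting its established non-provability.
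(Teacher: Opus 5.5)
Your proposal is correct and follows essentially the same route as the paper. Both use the schedule-based Hintikka argument for connectives and quantifiers. For atoms in $\Gamma_\omega$, both rely on rule \ref{it:base} of $\eqrel$, the construction of $\structure_0$, and clause (i) of Definition \ref{def:I_omega}. For atoms in $\Delta_\omega$, both argue by contradiction using ($=$R) or (ax) together with Lemmas \ref{lem:CFP-preserved-under-equivalence-of-terms} and \ref{lem:satisfied-by-countermodel-implies-provable-from-gamma_omega}. Inducting on the number of connectives rather than literally on subformulas is a small improvement: the $\forall$-case in $\Gamma_\omega$ appeals to instances $\formula[t/x]$, which are not subformulas.
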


\begin{proof}
	We prove by structural induction on $\formula$ that if $\formula \in \Gamma_\omega$, then $\structure_\omega \models \formula$, and if $\formula \in \Delta_\omega$, then $\structure_\omega \not\models \formula$ for all \fo-formulas $\formula$.
	
	\textbf{Case}: $\formula$ is of the form $t=s$.
	If $t=s \in \Gamma_\omega$, then $t \eqrel s$ by rule \ref{it:base} of Definition \ref{def:eqrel}.
	Consequently, $t^{\structure_\omega} = [t] = [s] = s^{\structure_\omega}$, and hence $\structure_\omega \models t=s$.
	
	Now assume that $\formula \in \Delta_\omega$ and suppose for contradiction that $\structure_\omega \models t=s$.
	Then $[t] = t^{\structure_\omega} = s^{\structure_\omega} = [s]$, whence $t \eqrel s$.
	Since $\Gamma_\omega, \defns \vdash t=t$ is obviously cut-free provable, so is $\Gamma_\omega, \defns \vdash t=s$, by Lemma \ref{lem:CFP-preserved-under-equivalence-of-terms}.
	However, this implies that $\limitsequent$ is cut-free provable, which is a contradiction.
	Thus, $\structure_\omega \not\models t=s$.
	
	\textbf{Case}: $\formula$ is of the form $P(\bar{t})$.
	If $P(\bar{t}) \in \Gamma_\omega$, then $[\bar{t}] \in P^{\structure_\omega}$ by definition of $\structure_\omega$, whence $\structure_\omega \models P(\bar{t})$.
	
	Now assume that $P(\bar{t}) \in \Delta_\omega$, and suppose for contradiction that $\structure_\omega \models P(\bar{t})$.
	Then $[\bar{t}] \in P^{\structure_\omega}$.
	Consequently, $\Gamma_\omega, \defns \vdash P(\bar{t})$ is cut-free provable.
	Indeed, if $P$ is defined by one of the definitions $\defns$, this follows from Lemma \ref{lem:satisfied-by-countermodel-implies-provable-from-gamma_omega}.
	Otherwise, there exists a tuple of terms $\bar{s}$ such that $\bar{s} \eqrel \bar{t}$ such that $P(\bar{s}) \in \Gamma_\omega$, by definition of $\structure_\omega$.
	Through (ax), $\Gamma_\omega, \defns \vdash P(\bar{s})$ is cut-free provable, and by repeated application of Lemma \ref{lem:CFP-preserved-under-equivalence-of-terms}, so is $\Gamma_\omega, \defns \vdash P(\bar{t})$.
	Thus, in either case, $\Gamma_\omega, \defns \vdash P(\bar{t})$ is cut-free provable.
	However, since $P(\bar{t}) \in \Delta_\omega$, we obtain that $\limitsequent$ is cut-free provable, which is a contradiction.
	
	\textbf{Case}: $\formula$ is of the form $\lnot \formulatwo$.
	If $\lnot \formulatwo \in \Gamma_\omega$, then $\formulatwo \in \Delta_\omega$ by construction of $\limitsequent$.
	Indeed, suppose that $\lnot \formulatwo \in \Gamma_\omega$.
	Then $\lnot \formulatwo \in \Gamma_j$ for some $j \in \N$.
	Since $\lnot \formulatwo$ occurs infinitely often on the schedule $\sched$, it occurs as the $i$-th schedule element $e_i$ for some $i > j$.
	By construction of $\limitsequent$, the set $\Delta_{j+1}$ contains $\formulatwo$, and hence, so does $\Delta_\omega$.
	By induction hypothesis, $\structure_\omega \not\models \formulatwo$.
	Consequently, $\structure_\omega \models \lnot \formulatwo$.
	
	If $\lnot \formulatwo \in \Delta_\omega$, then $\formulatwo \in \Gamma_\omega$ by a similar reasoning as before.
	By induction hypothesis, $\structure_\omega \models \formulatwo$.
	Hence, $\structure_\omega \not\models \lnot \formulatwo$.
	
	\textbf{Case}: $\formula$ is of the form $\formulatwo \land \formulathree$.
	If $\formulatwo \land \formulathree \in \Gamma_\omega$, then $\formulatwo,\formulathree \in \Gamma_\omega$ by construction of $\limitsequent$. 
	By induction hypothesis, $\structure_\omega \models \formulatwo$ and $\structure_\omega \models \formulathree$.
	Hence, $\structure_\omega \models \formulatwo \land \formulathree$.
	
	If $\formulatwo \land \formulathree \in \Delta_\omega$, then $\formulatwo \in \Delta_\omega$ or $\formulathree \in \Delta_\omega$ by construction of $\limitsequent$. 
	By induction hypothesis, $\structure_\omega \not\models \formulatwo$ or $\structure_\omega \not\models \formulathree$.
	Hence, $\structure_\omega \not\models \formulatwo \land \formulathree$.
	
	\textbf{Cases}: $\formula$ is of the form $\formulatwo \lor \formulathree$, $\formulatwo \Rightarrow \formulathree$ or $\formulatwo \Leftrightarrow \formulathree$. 
	These cases are similar to the previous case.
	
	\textbf{Case}: $\formula$ is of the form $\forall x: \formulatwo$.
	If $\forall x: \formulatwo \in \Gamma_\omega$, then $\formulatwo[t/x] \in \Gamma_\omega$ for any term $t$, by construction of $\limitsequent$.
	Indeed, suppose that $\forall x: \formulatwo \in \Gamma_\omega$.
	Then $\forall x: \formulatwo \in \Gamma_j$ for some $j \in \N$.
	Let $t$ be a term.
	Since $\langle \forall x: \formulatwo, t \rangle$ occurs infinitely often on the schedule $\sched$, it occurs as the $i$-th schedule element of $\sched$ for some $i > j$.
	By construction of $\limitsequent$, the set $\Gamma_{i+1}$ contains $\formulatwo[t/x]$.
	Hence, $\formulatwo[t/x] \in \Gamma_\omega$.
	By induction hypothesis, $\structure_\omega \models \formulatwo[t/x]$. 
	Since $t^{\structure_\omega} = [t]$ by Proposition \ref{prop:t^I=[t]}, this implies $\structure_\omega[x:[t]] \not\models \formulatwo$.
	Since any domain element of $\structure_\omega$ is of the form $[t]$ for some term $t$, this shows that $\structure_\omega \models \forall x: \formulatwo$.
	
	If $\forall x: \formulatwo \in \Delta_\omega$, then, by construction of $\limitsequent$, the set $\Delta_\omega$ contains $\formulatwo[z/x]$ for some object symbol $z$. 
	By induction hypothesis, $\structure_\omega \not\models \formulatwo[z/x]$.
	Since $z^{\structure_\omega} = [z]$ by definition of $\structure_\omega$, this implies $\structure_\omega[x:[z]] \not\models \formulatwo$.
	Consequently, $\structure_\omega \not\models \forall x: \formulatwo$.
	
	\textbf{Case}: $\formula$ is of the form $\exists x: \formulatwo$. 
	This case is similar to the previous case.
\end{proof}

\begin{definition}
	The \emph{Henkin counter-class} $\henkinclass_\omega = \{H_k \mid k \in \N \}$ is defined by 
	\begin{equation*}
		H_k = \{\{ [\bar{t}] \in \dom{\structure_\omega}^k \mid \structure_\omega \models \formula[\bar{t} / \bar{x}] \} \mid \formula \text{ is an \fo-formula, $\bar{x}$ is a $k$-tuple of distinct object symbols}\} 
	\end{equation*}
	for all $k \in \N$.
\end{definition}

\begin{lemma} \label{lem:h-omega-henkin-class}
	The Henkin counter-class $\henkinclass_\omega$ is a Henkin class for $\structure_\omega$.
\end{lemma}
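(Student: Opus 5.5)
The plan is to reduce the statement to Proposition \ref{prop:fo-definable-sets-form-henkin-class}. The key observation is that every element of $\dom{\structure_\omega}$ is of the form $[t] = t^{\structure_\omega}$ by Proposition \ref{prop:t^I=[t]}. Hence, for an \fo-formula $\formula$ and a $k$-tuple $\bar{x}$ of distinct object symbols, we have
\[
\{ [\bar{t}] \mid \structure_\omega \models \formula[\bar{t}/\bar{x}] \} = \{ \bar{a} \in \dom{\structure_\omega}^k \mid \structure_\omega[\bar{x}:\bar{a}] \models \formula \}.
\]
This follows from the substitution lemma $\structure_\omega \models \formula[\bar{t}/\bar{x}]$ iff $\structure_\omega[\bar{x}:\bar{t}^{\structure_\omega}] \models \formula$, after renaming bound variables so the substitution is safe. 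It also shows that the set on the left is well defined, i.e.\ independent of the chosen representatives $\bar{t}$. Therefore each $H_k$ is contained in the corresponding class of \fo-definable sets from Proposition \ref{prop:fo-definable-sets-form-henkin-class}, restricted to definitions without parameter tuples $\bar{y}:\bar{b}$.

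The next step is the converse inclusion: parameters add nothing, because every parameter value $b_j$ equals $[s_j] = s_j^{\structure_\omega}$ for some term $s_j$. So $\{ \bar{a} \mid \structure_\omega[\bar{x}:\bar{a},\bar{y}:\bar{b}] \models \formula\}$ equals $\{\bar{a} \mid \structure_\omega[\bar{x}:\bar{a}] \models \formula[\bar{s}/\bar{y}]\}$. For safety we first rename so that no variable in $\bar{s}$ is bound in $\formula$ and none coincides with a variable in $\bar{x}$; the latter can be ensured by renaming $\bar{x}$ to fresh symbols, using that $\formula$ is finite. Hence $H_k$ coincides exactly with the Henkin class of \fo-definable sets for $\structure_\omega$, and Proposition \ref{prop:fo-definable-sets-form-henkin-class} gives the result.

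The only delicate step is the variable hygiene when substituting terms $\bar{s}$ that may contain object symbols that also occur among $\bar{x}$. The remedy is to rename $\bar{x}$ apart from everything occurring in $\bar{s}$ and $\formula$, which preserves the defined set. If one prefers not to rely on this identification, the alternative is to verify conditions \ref{it:henkin-class-equality}--\ref{it:henkin-class-forall} of Definition \ref{def:henkin-class} directly, mimicking the appendix proof of Proposition \ref{prop:fo-definable-sets-form-henkin-class}. Condition \ref{it:henkin-class-projection}, for instance, is handled by substituting a term $s$ with $[s]=a$ for the last variable. The two routes are essentially the same computation.
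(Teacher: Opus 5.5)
Your proposal is correct and follows the paper's proof. Both show that $\henkinclass_\omega$ coincides with the Henkin class of \fo-definable sets for $\structure_\omega$, using that every domain element is $[t] = t^{\structure_\omega}$ and absorbing parameters $\bar{b} = [\bar{s}]$ by substituting $\bar{s}$ for $\bar{y}$, and then conclude by Proposition~\ref{prop:fo-definable-sets-form-henkin-class}; your explicit renaming of $\bar{x}$ apart from the object symbols in $\bar{s}$ settles a variable-hygiene point the paper leaves implicit.
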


\begin{proof}
	Let $\{H_k' \mid k \in \N \}$ be the Henkin class of \fo-definable sets for $\structure_\omega$ from Proposition \ref{prop:fo-definable-sets-form-henkin-class}.
	We show that the Henkin counter-class $\henkinclass_\omega$ is equal to the class of \fo-definable sets for $\structure_\omega$, i.e., that $H_k = H_k'$ for every $k \in \N$.
	The result then follows from Proposition \ref{prop:fo-definable-sets-form-henkin-class}.
	
	To show that $H_k \subseteq H_k'$, pick a set $R \coloneq \{ [\bar{t}] \in \dom{\structure_\omega}^k \mid \structure_\omega \models \formula[\bar{t} / \bar{x}] \}$ from $H_k$.
	By Proposition \ref{prop:t^I=[t]}, $R = \{ ([\bar{t}] \in \dom{\structure_\omega}^k \mid \structure_\omega[\bar{x} : [\bar{t}]] \models \formula \}$, and since each element of $\dom{\structure_\omega}$ is of the form $[t]$, $R \in H_k'$.
	
	To show that $H_k' \subseteq H_k$, pick a set $R \coloneq \{ \bar{a} \in \dom{\structure_\omega}^k \mid \structure[\bar{x}:\bar{a}, \, \bar{y}:\bar{b}] \models \formula \}$ from $H_k'$.
	Since $\bar{b}$ is a tuple over $\dom{\structure_\omega}$, it is of the form $[\bar{s}]$ for a tuple of terms $\bar{s}$.
	Hence, we can rewrite $R = \{ [\bar{t}] \in \dom{\structure_\omega}^k \mid \structure[\bar{x}:[\bar{t}], \, \bar{y}:[\bar{s}]] \models \formula \}$.
	By Proposition \ref{prop:t^I=[t]}, $R = \{ [\bar{t}] \in \dom{\structure_\omega}^k \mid \structure \models \formula[\bar{s} / \bar{y}, \bar{t} / \bar{x}] \} \in H_k$.
\end{proof}

\begin{lemma} \label{lem:henkin-structure-satisfies-defs}
	$(\structure_\omega, \henkinclass_\omega) \modelsh \defn_i$ for all $i \in \{1, \dots, n\}$.
\end{lemma}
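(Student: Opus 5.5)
We need to show that $\structure_\omega$, restricted to $\sym{\defn_i}$, is the least pre-fixed $\henkinclass_\omega$-point of $\cons_{\defn_i}(\cdot,\structure_\omega)$. Since $\defn_i$ is positive, this operator does not depend on its second argument, so we simply write $\cons_{\defn_i}(\cdot)$. We fix $i$ and may assume $\defn_i$ is normal, since normalization preserves Henkin models. The argument has three parts.

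First, $\structure_\omega|_{\sym{\defn_i}}$ is an $\henkinclass_\omega$-point. By Lemma \ref{lem:h-omega-henkin-class}, $\henkinclass_\omega$ is a Henkin class for $\structure_\omega$, so condition \ref{it:henkin-class-predicate} of Definition \ref{def:henkin-class} gives $P^{\structure_\omega}\in H_{\arity{P}}$ for every predicate $P$.

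Second, $\structure_\omega|_{\sym{\defn_i}}$ is a pre-fixpoint. By Definition \ref{def:I_omega}, $P^{\structure_\omega}=P^{\structuretwo_i}$ for $P\in\defi{\defn_i}$, and $\structuretwo_i$ satisfies $\cons_{\defn_i}(\structuretwo_i)\sqsubseteq\structuretwo_i$. The parameters of $\defn_i$ are interpreted identically in $\structuretwo_i$ and $\structure_\omega$: later stages only expand $\structure_{i-1}$ with predicates defined by $\defn_j$, $j>i$, and these do not occur in $\defn_i$ by condition \ref{it:def-predicates-do-not-occur-in-previous-defs} of Definition \ref{def:canonical-sequent}. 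Hence $\structure_\omega|_{\sym{\defn_i}}=\structuretwo_i$, and the pre-fixpoint property is immediate.

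Third, minimality, which is the main obstacle. Let $\structuretwo$ be a pre-fixed $\henkinclass_\omega$-point. For each $Q\in\defi{\defn_i}$ we have $Q^{\structuretwo}\in H_{\arity{Q}}$, so there is an \fo-formula $F_Q$ with $Q^{\structuretwo}=\{[\bar{t}]\mid \structure_\omega\models F_Q[\bar{t}]\}$. We need $P^{\structure_\omega}\subseteq P^{\structuretwo}$. The plan is to use the schedule elements $\langle P(\bar{t}),F_\Pi\rangle$ with $\Pi=\defi{\defn_i}$.

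Take $[\bar{t}]\in P^{\structure_\omega}$. I would not try to argue from $P(\bar{t})\in\Gamma_\omega$, since $\structuretwo_i$ is built as the least set containing the $\Gamma_\omega$-atoms and closed under the rules. Instead, let $\mathcal{K}$ be the structure that interprets each $Q\in\defi{\defn_i}$ as $Q^{\structure_\omega}\cap Q^{\structuretwo}$. I then show that $\mathcal{K}$ satisfies clauses (i) and (ii) of Definition \ref{def:I_omega}; since $\structuretwo_i$ is the least structure satisfying them, this gives $\structuretwo_i\sqsubseteq\mathcal{K}$, and hence the required inclusion.

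Clause (ii) holds by monotonicity of the positive operator together with the pre-fixpoint properties of both $\structure_\omega$ and $\structuretwo$.

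Clause (i) is the real content: if $Q(\bar{s})\in\Gamma_\omega$ then $\structure_\omega\models F_Q[\bar{s}]$. Since $Q(\bar{s})\in\Gamma_j$ for some $j$, the element $\langle Q(\bar{s}),F_\Pi\rangle$ occurs later in the schedule. At that step, either $F_Q[\bar{s}]$ was added to $\Gamma_\omega$, or some minor premise was chosen, adding $\formulatwo[F_\Pi/\Pi]$ to $\Gamma_\omega$ and $F_R[\bar{u}]$ to $\Delta_\omega$ for a rule $\forall\bar{y}:R(\bar{u})\rul\formulatwo$. In the first case, Lemma \ref{lem:counter-model-satisfies-Gamma_omega-but-not-Delta_Omega} gives $\structure_\omega\models F_Q[\bar{s}]$. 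In the second case, the same lemma gives $\structure_\omega\models\formulatwo[F_\Pi/\Pi]$ and $\structure_\omega\not\models F_R[\bar{u}]$. Because $\formulatwo$ is positive, $\structure_\omega\models\formulatwo[F_\Pi/\Pi]$ means exactly that the body holds in $\structuretwo$ at the instance $\bar{u}$, so $[\bar{u}]\in R^{\cons_{\defn_i}(\structuretwo)}\subseteq R^{\structuretwo}$, i.e.\ $\structure_\omega\models F_R[\bar{u}]$. This is a contradiction, so the second case cannot occur.

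The delicate points in this step are the variable bookkeeping (the $\bar{y}$ of the chosen rule become fresh terms, all of which are interpreted as their classes by Proposition \ref{prop:t^I=[t]}) and the fact that the parameters of $\formulatwo$ are read in $\structure_\omega$, which $\structuretwo$ expands on those symbols.
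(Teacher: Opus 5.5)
Your proof is correct and follows essentially the same route as the paper. You first establish the $\henkinclass_\omega$-point and pre-fixpoint properties directly, and then prove minimality by showing that a pre-fixed $\henkinclass_\omega$-point satisfies the closure conditions (i) and (ii) that define $\structuretwo_i$, with (i) obtained from the schedule elements $\langle Q(\bar{s}), F_\Pi\rangle$ and a contradiction in the minor-premise case.

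The differences are cosmetic. The intersection $\mathcal{K}$ is unnecessary: the paper applies the leastness of $\structuretwo_i$ to $\structuretwo$ itself, which already satisfies (ii) by assumption and (i) by your claim. In the minor-premise case, you apply Lemma \ref{lem:counter-model-satisfies-Gamma_omega-but-not-Delta_Omega} to $\formulatwo[F_\Pi/\Pi]$ as a whole and then use a substitution argument, whereas the paper does a structural induction on $\formulatwo$ using the decomposition closure of $\Gamma_\omega$.
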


\begin{proof}
	Take $i \in \{1, \dots, n\}$.
	We need to show that $\structure_\omega |_{\sym{\defn_i}}$ is the least pre-fixed $\henkinclass_\omega$-point of $\cons_{\defn_i}$ expanding $\structure_\omega |_{\pars{\defn_i}}$.
	Let $\henkinclass_\omega = \{ H_k \mid k \in \N \}$.
	First, $\structure_\omega |_{\sym{\defn_i}}$ is an $\henkinclass_\omega$-point.
	This is because for every $P \in \defi{\defn_i}$, we can write 
	$P^{\structure_\omega |_{\sym{\defn_i}}}$ as $\{ [\bar{t}] \in \dom{\structure_\omega} \mid \structure_\omega \models P(\bar{t}) \}$, 
	and since $P(\bar{t})$ is an \fo-formula, $P^{\structure_\omega |_{\sym{\defn_i}}} \in H_{\arity{P}}$.
	Furthermore, $\structure_\omega |_{\sym{\defn_i}}$ is a pre-fixpoint of $\cons_{\defn_i}$ by definition of $\structure_\omega$.
	It thus remains to show that $\structure_\omega |_{\sym{\defn_i}}$ is the \emph{least} pre-fixed $\henkinclass_\omega$-point of $\cons_{\defn_i}$ expanding $\structure_\omega |_{\pars{\defn_i}}$.
	Let $\structuretwo$ be a pre-fixed $\henkinclass_\omega$-point of $\cons_{\defn_i}$ expanding $\structure_\omega |_{\pars{\defn_i}}$.
	We will show that:
	\begin{enumerate}[label=(\roman*)]
		\item $[\bar{t}] \in P^{\structuretwo}$ if $P(\bar{t}) \in \Gamma_\omega$ for all $P \in \defi{\defn_i}$; \label{it:J-base}
		\item $\cons_{\defn_i}(\structuretwo) \sqsubseteq \structuretwo$. \label{it:J-prefixpoint}
	\end{enumerate}
	Since, by definition, $\structure_\omega |_{\sym{\defn_i}}$ is the least $\sym{\defn_i}$-structure expanding $\structure_\omega |_{\pars{\defn_i}}$ that satisfies these properties, this implies that $\structure_\omega |_{\sym{\defn_i}} \sqsubseteq \structuretwo$, showing that $\structure_\omega |_{\sym{\defn_i}}$ is indeed the least pre-fixed $\henkinclass_\omega$-point of $\cons_{\defn_i}$ expanding $\structure_\omega |_{\pars{\defn_i}}$.
	Note that \ref{it:J-prefixpoint} is immediately satisfied, as $\structuretwo$ is a pre-fixed $\henkinclass_\omega$-point of $\cons_{\defn_i}$.
	It thus remains to show \ref{it:J-base}.
	
	Since $\structuretwo$ is an $\henkinclass_\omega$-point, there exists, for every $P \in \defi{\defn_i}$, an \fo-formula $F_P$ and an $\arity{P}$-tuple $\bar{z}_P$ of distinct object symbols such that 
	\begin{equation} \label{eq:FO-description-of-P^J}
		P^{\structuretwo} = \{ [\bar{t}] \in \dom{\structuretwo}^{\arity{P}} \mid \structure_\omega \models F_P[\bar{t} / \bar{z}_P]\} .
	\end{equation}
	Suppose that $P(\bar{t}) \in \Gamma_\omega$.
	Then $P(\bar{t}) \in \Gamma_j$ for some $j \in \N$.
	Consider the schedule element $\langle P(\bar{t}), F_\Pi \rangle$, where $\Pi = \defi{\defn_i}$. 
	It occurs infinitely often in the schedule $\sched$.
	By construction of $\limitsequent$, the sequent $\Gamma_{j+1}, \defns \vdash \Delta_{j+1}$ is one of the premises of the derivation
	\begin{equation*}
		\begin{prooftree}
			\hypo{\text{minor premises}}
			\hypo{\Gamma_i, \defns, F_P[\bar{t}] \vdash \Delta_i}
			\infer2[(def L)]
			{\Gamma_i, \defns \vdash \Delta_i}
		\end{prooftree}
	\end{equation*}
	where the minor premises $\Gamma_i, \defns, \formulatwo[F_\Pi/\Pi] \vdash F_Q[\bar{s}], \Delta_i$ correspond to the definitional rules $\forall \bar{y}: Q(\bar{s}) \rul \formulatwo$ in $\defn_i$. 
	We claim that $\Gamma_{j+1}, \defns \vdash \Delta_{j+1}$  is the major premise $\Gamma_{j}, \defns, F_P[\bar{t}] \vdash \Delta_{j}$.
	This would mean that $F_P[\bar{t}] \in \Gamma_\omega$, which would imply that $\structure_\omega \models F_P[\bar{t}]$ by Lemma \ref{lem:counter-model-satisfies-Gamma_omega-but-not-Delta_Omega}, and that $[\bar{t}] \in P^\structuretwo$ by (\ref{eq:FO-description-of-P^J}), thus proving \ref{it:J-base}.
	
	We show the claim by contradiction.
	Suppose that $\Gamma_{j+1}, \defns \vdash \Delta_{j+1}$ were one of the minor premises.
	Then it would be of the form $\Gamma_{j}, \defns \formulatwo[F_\Pi / \Pi] \vdash F_Q[\bar{s}], \Delta_{j}$ for
	a definitional rule $\forall \bar{y} : Q(\bar{s}) \rul \formulatwo$ in $\defn_i$.
	Then $\formulatwo[F_\Pi / \Pi] \in \Gamma_\omega$ and $F_Q[\bar{s}] \in \Delta_\omega$.
	By Lemma \ref{lem:counter-model-satisfies-Gamma_omega-but-not-Delta_Omega}, the latter entails that $\structure_\omega \not\models F_Q[\bar{s}]$, and by (\ref{eq:FO-description-of-P^J}), $[\bar{s}] \not\in Q^\structuretwo$.
	We now show that $\structuretwo \models \formulatwo$.
	It now suffices to show that $\structuretwo \models \formulatwo$.
	Indeed, by \ref{it:J-prefixpoint}, $\structuretwo \models \formulatwo$ implies that $[\bar{s}] \in Q^\structuretwo$, yielding the desired contradiction.
	We prove that $\structuretwo \models \formulatwo$ by structural induction on $\formulatwo$, using the fact that $\formulatwo[F_\Pi / \Pi] \in \Gamma_\omega$.
	
	\textbf{Case}: $\formulatwo$ is of the form $t=s$.
	Then $(t=s)\doteq \formulatwo[F_\Pi / \Pi] \in \Gamma_\omega$.
	By rule \ref{it:base} of Definition \ref{def:eqrel}, this means that $t \eqrel s$.
	Thus, $\structuretwo \models t=s$.
	
	\textbf{Case}: $\formulatwo$ is of the form $R(\bar{v})$.
	
	\textbf{Subcase}: $R$ is defined by $\defn_i$.
	Then $F_R[\bar{v}] \doteq \formulatwo[F_\Pi / \Pi] \in \Gamma_\omega$.
	By Lemma \ref{lem:counter-model-satisfies-Gamma_omega-but-not-Delta_Omega}, $\structure_\omega \models F_R[\bar{v}]$.
	By (\ref{eq:FO-description-of-P^J}), $[\bar{v}] \in R^\structuretwo$.
	Therefore, $\structuretwo \models R(\bar{v})$.
	
	\textbf{Subcase}: $R$ is not defined by $\defn_i$.
	Then $R(\bar{v}) \doteq \formulatwo[F_\Pi / \Pi] \in \Gamma_\omega$.
	By Lemma \ref{lem:counter-model-satisfies-Gamma_omega-but-not-Delta_Omega}, $\structure_\omega \models R(\bar{v})$, and since $\structuretwo$ and $\structure_\omega$ differ only on $\defi{\defn_i}$, $\structuretwo \models R(\bar{v})$.

	\textbf{Case}: $\formulatwo$ is of the form $\formulatwo_1 \land \formulatwo_2$.
	Then $\formulatwo_1[F_\Pi / \Pi] \land \formulatwo_2[F_\Pi / \Pi] \doteq \formulatwo[F_\Pi / \Pi] \in \Gamma_\omega$.
	By construction of $\limitsequent$, the set $\Gamma_\omega$ contains $\formulatwo_1[F_\Pi / \Pi]$ and $\formulatwo_2[F_\Pi / \Pi]$.
	By (the last) induction hypothesis, $\structuretwo \models \formulatwo_1$ and $\structuretwo \models \formulatwo_1$.
	Hence, $\structuretwo \models \formulatwo_1 \land \formulatwo_2$.
	
	\textbf{Case}: $\formulatwo$ is of the form $\formulatwo_1 \lor \formulatwo_2$.
	Then $\formulatwo_1[F_\Pi / \Pi] \lor \formulatwo_2[F_\Pi / \Pi] \doteq \formulatwo[F_\Pi / \Pi] \in \Gamma_\omega$.
	By construction of $\limitsequent$ the set $\Gamma_\omega$ contains $\formulatwo_1[F_\Pi / \Pi]$ or $\formulatwo_2[F_\Pi / \Pi]$.
	By (the last) induction hypothesis, $\structuretwo \models \formulatwo_1$ or $\structuretwo \models \formulatwo_1$.
	Hence, $\structuretwo \models \formulatwo_1 \lor \formulatwo_2$.
\end{proof}

In conclusion, we have constructed a Henkin structure $(\structure_\omega, \henkinclass_\omega)$ (Lemma \ref{lem:h-omega-henkin-class}) such that $(\structure_\omega, \henkinclass_\omega) \modelsh \defn_i$ for all $i \in \{1, \dots, n\}$ (Lemma \ref{lem:henkin-structure-satisfies-defs}), $(\structure_\omega, \henkinclass_\omega) \modelsh \formula$ for all $\formula \in \Gamma_\omega$ and $(\structure_\omega, \henkinclass_\omega) \not\modelsh \formula$ for all $\formula \in \Delta_\omega$ (Lemma \ref{lem:counter-model-satisfies-Gamma_omega-but-not-Delta_Omega}).
Since $\Gamma \subseteq \Gamma_\omega$ and $\Delta \subseteq \Delta_\omega$, $(\structure_\omega, \henkinclass_\omega)$ is a counter-model of the original sequent $\regularsequent$.
This finishes the proof of Theorem \ref{thm:brotherston}.

\subsection{Proof of Lemma \ref{lem:cfp-positive-rewriting}} \label{app:cfp-positive-rewriting}

We prove Lemma \ref{lem:cfp-positive-rewriting}:

\cfppositiverewriting*

We do so by first proving a few auxiliary lemmas:

\begin{lemma} \label{lem:CFP-substitution-negation}
	Let $\regularsequent$ be a regular sequent.
	Let $R$ and $S$ be two predicate symbols with the same arity, such that $R$ is not defined by any definition $\defn_i$.
	Let $\Gamma^{\mathrm{s}}, \defn_1^{\mathrm{s}}, \dots, \defn_n^{\mathrm{s}} \vdash \Delta^{\mathrm{s}}$ denote the sequent obtained from $\regularsequent$ by replacing every atom of the form $R(\bar{t})$ by $\lnot S(\bar{t})$.
	If\/ $\regularsequent$ is cut-free provable, then so is $\Gamma^{\mathrm{s}}, \defn_1^{\mathrm{s}}, \dots, \defn_n^{\mathrm{s}} \vdash \Delta^{\mathrm{s}}$.
\end{lemma}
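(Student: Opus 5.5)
We argue by induction on the height of a cut-free proof $T$ of $\regularsequent$, showing that the substituted sequent $\Gamma^{\mathrm{s}}, \defn_1^{\mathrm{s}}, \dots, \defn_n^{\mathrm{s}} \vdash \Delta^{\mathrm{s}}$ admits a cut-free proof. Write $\formula^{\mathrm{s}}$ for the result of replacing every atom $R(\bar{t})$ in $\formula$ by $\lnot S(\bar{t})$. The substitution commutes with connectives, quantifiers, and term substitution: $(\formula[t/x])^{\mathrm{s}} \doteq \formula^{\mathrm{s}}[t/x]$, after renaming bound variables. Since $R$ is not defined by any $\defn_i$, the substitution only touches bodies and leaves heads of rules alone. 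Each $\defn_i^{\mathrm{s}}$ is therefore a definition with the same defined predicates and the same rule structure as $\defn_i$. We keep the scope of the induction over arbitrary sequents of this shape (with $\Gamma,\Delta$ possibly containing \foid-formulas if needed), since the premises of rules remain of the same form.

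For the axiom cases, three situations arise. If $\Gamma \cap \Delta \neq \emptyset$, the same formula, after substitution, lies in $\Gamma^{\mathrm{s}} \cap \Delta^{\mathrm{s}}$. The constants $\bot$ and $\top$ are unchanged, and ($=$R) is unaffected because equality atoms are untouched. For the logical and structural rules other than cut, we apply the induction hypothesis to the premises and then the same rule. The side conditions on fresh variables carry over because $\free(\formula^{\mathrm{s}}) = \free(\formula)$. For ($=$L), the substitution commutes with replacing $x,y$ by $s,t$. For (def R), the premise $\Gamma, \defn \vdash \formula[\bar{s}/\bar{x}], \Delta$ becomes $\Gamma^{\mathrm{s}}, \defn^{\mathrm{s}} \vdash \formula^{\mathrm{s}}[\bar{s}/\bar{x}], \Delta^{\mathrm{s}}$, and $\forall \bar{x}: P(\bar{t}) \rul \formula^{\mathrm{s}}$ is a rule of $\defn^{\mathrm{s}}$, so (def R) applies.

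The key case is (def L), where we take the induction hypotheses $F_Q^{\mathrm{s}}$. The major premise transforms directly. A minor premise reads $\Gamma, \defn, \formulatwo[F_\Pi/\Pi^+] \vdash F_Q[\bar{s}], \Delta$, and its image is $\Gamma^{\mathrm{s}}, \defn^{\mathrm{s}}, (\formulatwo[F_\Pi/\Pi^+])^{\mathrm{s}} \vdash F_Q^{\mathrm{s}}[\bar{s}], \Delta^{\mathrm{s}}$. We must check that $(\formulatwo[F_\Pi/\Pi^+])^{\mathrm{s}} \doteq \formulatwo^{\mathrm{s}}[F^{\mathrm{s}}_\Pi/\Pi^+]$, which is the main point to verify. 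Replacing $R(\bar{u})$ by $\lnot S(\bar{u})$ introduces negations only in front of $S$, and $S$ occurrences are never replaced by induction hypotheses unless $S \in \Pi$. Thus an occurrence of a predicate in $\Pi$ other than these new $S$-atoms is positive in $\formulatwo^{\mathrm{s}}$ iff it is positive in $\formulatwo$, because the polarity of all other subformulas is preserved.

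The delicate subcase is $S \in \Pi$, i.e.\ $S$ is defined. The newly created negative $S$-occurrences are then not replaced in $\formulatwo^{\mathrm{s}}[F^{\mathrm{s}}_\Pi/\Pi^+]$. They are also not replaced in $(\formulatwo[F_\Pi/\Pi^+])^{\mathrm{s}}$, since there they arise from $R$-atoms that were never replaced. The two formulas therefore agree syntactically, and the rule instance is a genuine (def L) application. Freshness of $\bar{y}$ is preserved. We expect this polarity bookkeeping to be the only nontrivial step. The remaining argument is the routine observation that cut never occurs, so no cut is introduced.
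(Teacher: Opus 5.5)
Your overall route is the same as the paper's: induction on the height of a cut-free proof, reusing the same rule in each case, and taking $F^{\mathrm{s}}_Q$ as the induction hypotheses in (def L). However, your justification of the key identity $(\psi[F_\Pi/\Pi^+])^{\mathrm{s}} \doteq \psi^{\mathrm{s}}[F^{\mathrm{s}}_\Pi/\Pi^+]$ fails in exactly the subcase you flag, $S\in\Pi$.

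Replacing $R(\bar u)$ by $\lnot S(\bar u)$ does not make the new $S$-occurrence negative. It gives it the \emph{opposite} polarity of the replaced $R$-occurrence. If $R$ occurs negatively in the body $\psi$, the new $S(\bar u)$ is positive in $\psi^{\mathrm{s}}$. It is then replaced by $F^{\mathrm{s}}_S[\bar u]$ on the right-hand side but not on the left-hand side, where it comes from an unreplaced $R$-atom. Concretely, take $\defn=\{S\rul\lnot R\}$ and $\Pi=\{S\}$. Then $(\psi[F_\Pi/\Pi^+])^{\mathrm{s}}\doteq\lnot\lnot S$, but $\psi^{\mathrm{s}}[F^{\mathrm{s}}_\Pi/\Pi^+]\doteq\lnot\lnot F^{\mathrm{s}}_S$.

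So in general the induction hypothesis gives minor premises with $S(\bar u)$ where (def L) for $\defn^{\mathrm{s}}$ requires $F^{\mathrm{s}}_S[\bar u]$. Bridging the two would need $F^{\mathrm{s}}_S[\bar u]\vdash S(\bar u)$, which is the wrong direction for an induction hypothesis and is not available in general. It would also need a cut. Semantically, $\defn^{\mathrm{s}}$ is no longer ``$\defn$ with the parameter $R$ instantiated'': a negative parameter occurrence has become a positive, inductive occurrence of a defined predicate.

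Your argument does go through verbatim under an extra hypothesis. One option is that $R$ occurs only positively in the bodies of the rules of those $\defn_i$ that define $S$. Another is that $R$ does not occur at all in a definition defining $S$. Under either hypothesis, every new $S$-atom that could fall under $[F^{\mathrm{s}}_\Pi/\Pi^+]$ is negative, and both sides agree.

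Both conditions hold where the lemma is actually used, namely the proof of Lemma~\ref{lem:cfp-positive-rewriting} with $R=\compl{Q}$ and $S=Q$. There $\compl{Q}$ occurs only positively in the bodies of the $\compl{\defn_i}$. Moreover, it occurs only in definitions of which $Q$ is a parameter, so $Q\notin\Pi$.

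The paper's own proof asserts the same identity, justified only by ``$R$ is not defined by any definition $\defn_i$'', so it glosses over the same point. Your more careful case analysis exposes the issue but then resolves it incorrectly; you should either add the hypothesis above or restrict the lemma to its intended application.
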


\begin{proof}
	Suppose that $\regularsequent$ admits a cut-free proof $T$.
	We show by induction on the height of $T$ that $\Gamma^{\mathrm{s}}, \defn_1^{\mathrm{s}}, \dots, \defn_n^{\mathrm{s}} \vdash \Delta^{\mathrm{s}}$ is cut-free provable.
	The most interesting cases are those corresponding to the rules (def R) and (def L).
	As an illustration, we also discuss the cases corresponding to  (ax), ($=$R) and ($\lnot$L).
	
	\textbf{Case}: $\regularsequent$ is derived by (ax).
	Then $\bot \in \Gamma$, $\top \in \Delta$, or there exists a formula $\formula$ that is both in $\Gamma$ and in $\Delta$.
	In the latter case, the formula $\formula^{\mathrm{s}}$, obtained from $\formula$ by replacing every atom of the form $R(\bar{t})$ by $\lnot S(\bar{t})$, is both in $\Gamma^{\mathrm{s}}$ and $\Delta^{\mathrm{s}}$.
	In any case, $\Gamma^{\mathrm{s}}, \defn_1^{\mathrm{s}}, \dots, \defn_n^{\mathrm{s}} \vdash \Delta^{\mathrm{s}}$ is derivable by (ax).
	
	\textbf{Case}: $\regularsequent$ is derived by ($=$R).
	Then $\Delta$ contains a formula of the form $t=t$.
	Then so does $\Delta^{\mathrm{s}}$.
	Hence, $\Gamma^{\mathrm{s}}, \defn_1^{\mathrm{s}}, \dots, \defn_n^{\mathrm{s}} \vdash \Delta^{\mathrm{s}}$ is derivable by ($=$R).
	
	\textbf{Case}: $\regularsequent$ is derived by ($\lnot$L).
	Then the final step of $T$ is of the form 
	\begin{equation*}
		\begin{prooftree}
			\hypo{\Gamma', {\defn_1}, \dots, {\defn_n} \vdash \formula, \Delta}
			\infer1[($\lnot$L)]{\Gamma', {\defn_1}, \dots, {\defn_n}, \lnot \formula \vdash \Delta}
		\end{prooftree}
	\end{equation*}
	where $\Gamma' \coloneq \Gamma \setminus \{ \neg \formula \}$.
	By induction hypothesis, $\Gamma'^{\mathrm{s}}, \defn_1^{\mathrm{s}}, \dots, \defn_n^{\mathrm{s}} \vdash \formula^{\mathrm{s}}, \Delta^{\mathrm{s}}$ is cut-free provable.
	By application of ($\lnot$L), so is $\Gamma'^{\mathrm{s}}, \defn_1^{\mathrm{s}}, \dots, \defn_n^{\mathrm{s}}, \neg \formula^{\mathrm{s}} \vdash \Delta^{\mathrm{s}}$.
	Since $\Gamma'^{\mathrm{s}} \cup \{ \neg \formula^{\mathrm{s}} \} = \Gamma^{\mathrm{s}}$, this means that $\Gamma^{\mathrm{s}}, \defn_1^{\mathrm{s}}, \dots, \defn_n^{\mathrm{s}} \vdash \Delta^{\mathrm{s}}$ is cut-free provable.
	
	\textbf{Case}: $\regularsequent$ is derived by (def R).
	Then the final step of $T$ is of the form 
	\begin{equation*}
		\begin{prooftree}
			\hypo{\Gamma, {\defn_1}, \dots, {\defn_n} \vdash \formula[\bar{s}/\bar{x}], \Delta'}
			\infer1[(def R)]{\Gamma, {\defn_1}, \dots, {\defn_n} \vdash P(\bar{t}[\bar{s}/\bar{x}]), \Delta'}
		\end{prooftree}
	\end{equation*}
	for a definitional rule $\defrul$ in a definition ${\defn_i}$ and a tuple of terms $\bar{s}$ with the same length as $\bar{x}$.
	Here $\Delta' \coloneq \Delta \setminus \{P(\bar{t}[\bar{s}/\bar{x}])\}$.
	By induction hypothesis, $\Gamma^{\mathrm{s}}, \defn_1^{\mathrm{s}}, \dots, \defn_n^{\mathrm{s}} \vdash \formula[\bar{s}/\bar{x}]^{\mathrm{s}}, \Delta'^{\mathrm{s}}$ is cut-free provable.
	Note that $\forall \bar{x}: P(\bar{t}) \rul \formula^{\mathrm{s}}$ is a definitional rule in $\defn_i^{\mathrm{s}}$, and that $\formula[\bar{s}/\bar{x}]^{\mathrm{s}} \doteq \formula^{\mathrm{s}}[\bar{s}/\bar{x}]$.
	By application of (def R), the sequent $\Gamma^{\mathrm{s}}, \defn_1^{\mathrm{s}}, \dots, \defn_n^{\mathrm{s}} \vdash P(\bar{t}[\bar{s}/\bar{x}]), \Delta'^{\mathrm{s}}$ is cut-free provable.
	Since $\Delta'^{\mathrm{s}} \cup \{P(\bar{t}[\bar{s}/\bar{x}])\} = \Delta^{\mathrm{s}}$,\footnote{
		Here we use the fact that $P$ is distinct from $R$, as $R$ is not defined by any definition $\defn_i$, whence $P(\bar{t}[\bar{s}/\bar{x}])^{\mathrm{s}} \doteq P(\bar{t}[\bar{s}/\bar{x}])$.
	} 
	this means that $\Gamma^{\mathrm{s}}, \defn_1^{\mathrm{s}}, \dots, \defn_n^{\mathrm{s}} \vdash \Delta^{\mathrm{s}}$ is cut-free provable.
	
	\textbf{Case}: $\regularsequent$ is derived by (def L).
	Then the final step of $T$ is of the form 
	\begin{equation*}
		\begin{prooftree}
			\hypo{\text{minor premises}}
			\hypo{\Gamma', {\defn_1}, \dots, {\defn_n}, F_P[\bar{v}] \vdash \Delta}
			\infer2[(def L)]{\Gamma', {\defn_1}, \dots, {\defn_n}, P(\bar{v}) \vdash \Delta}
		\end{prooftree}
	\end{equation*}
	for a defined predicate $P$ of a definition $\defn_i$, a subset $\Pi$ of $\defi{\defn_i}$ containing $P$, and an induction hypothesis $F_Q$ for every $Q \in \Pi$.
	Here $\Gamma' \coloneq \Gamma \setminus \{ P(\bar{v}) \}$.
	The minor premises $\Gamma', {\defn_1}, \dots, {\defn_n}, \formulatwo[F_\Pi / \Pi^+] \vdash F_Q[\bar{s}], \Delta$ correspond to the definitional rules $\forall \bar{y} : Q(\bar{s}) \rul \formulatwo$ of $\defn_i$ with $Q \in \Pi$.
	By induction hypothesis, the sequent $\Gamma'^{\mathrm{s}}, {\defn_1^{\mathrm{s}}}, \dots, {\defn_n^{\mathrm{s}}}, \formulatwo[F_\Pi / \Pi^+]^{\mathrm{s}} \vdash F_Q[\bar{s}]^{\mathrm{s}}, \Delta^{\mathrm{s}}$ is cut-free provable for each definitional rule $\forall \bar{y} : Q(\bar{s}) \rul \formulatwo$ of $\defn_i$ with $Q \in \Pi$, and so is the sequent $\Gamma'^{\mathrm{s}}, {\defn_1^{\mathrm{s}}}, \dots, {\defn_n^{\mathrm{s}}}, F_P[\bar{v}]^{\mathrm{s}} \vdash \Delta^{\mathrm{s}}$.
	By application of (def L) with the set $\Pi$ and induction hypotheses $F_Q^{\mathrm{s}}$ for every $Q \in \Pi$, the sequent $\Gamma'^{\mathrm{s}}, \defn_1^{\mathrm{s}}, \dots, \defn_n^{\mathrm{s}}, P(\bar{v}) \vdash \Delta^{\mathrm{s}}$ is cut-free provable.
	Here we used the fact that $\formulatwo[F_\Pi / \Pi^+]^{\mathrm{s}} \doteq \formulatwo^{\mathrm{s}}[F_\Pi^{\mathrm{s}} / \Pi^+]$ for all definitional rules $\forall \bar{y} : Q(\bar{s}) \rul \formulatwo$ of $\defn_i$ with $Q \in \Pi$, as $R$ is not defined by any definition $\defn_i$.
	Since $\Gamma'^{\mathrm{s}} \cup \{P(\bar{v})\} = \Gamma^{\mathrm{s}}$, this means that $\Gamma^{\mathrm{s}}, \defn_1^{\mathrm{s}}, \dots, \defn_n^{\mathrm{s}} \vdash \Delta^{\mathrm{s}}$ is cut-free provable.
\end{proof}

\begin{lemma} \label{lem:CFP-inv-neg}
	Let $\Gamma$ and $\Delta$ be sets of \foid-formulas and $\formula$ and \foid-formula.
	If $\Gamma \vdash \lnot \formula, \Delta$ is cut-free provable, then so is $\Gamma, \formula \vdash \Delta$.
\end{lemma}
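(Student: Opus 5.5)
We argue by induction on the height of a cut-free proof $T$ of $\Gamma \vdash \lnot \formula, \Delta$. The claim to establish is slightly stronger, to make the induction go through: for every cut-free proof of height $h$ of a sequent of the form $\Gamma \vdash \lnot \formula, \Delta$, the sequent $\Gamma, \formula \vdash \Delta$ has a cut-free proof (of height at most $h$, which is convenient but not essential). We distinguish cases on the last inference of $T$.

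\textbf{Principal case.} If $\lnot \formula$ is the principal formula of the last inference, then it is ($\lnot$R), with premise $\Gamma, \formula \vdash \lnot\formula, \Delta$ or $\Gamma, \formula \vdash \Delta$. This depends on whether $\lnot\formula$ is retained in the premise, since sequents are sets. In the second case we are done. In the first case the premise is a shorter cut-free proof of a sequent of the same shape, so the induction hypothesis yields $\Gamma, \formula, \formula \vdash \Delta$, which is $\Gamma, \formula \vdash \Delta$.

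\textbf{Axiom cases.} If $T$ ends in (ax) or ($=$R), either the axiom condition does not involve $\lnot\formula$, and then $\Gamma, \formula \vdash \Delta$ is an axiom as well, or $\lnot\formula \in \Gamma$. In the latter case we derive $\Gamma, \formula \vdash \Delta$ by ($\lnot$L) from $\Gamma', \formula \vdash \formula, \Delta$, which is an instance of (ax).

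\textbf{Non-principal cases.} For all other rules, $\lnot\formula$ is a side formula. This includes (wk), (subst), the logical rules, (def R) and (def L). We apply the induction hypothesis to each premise containing $\lnot \formula$ on the right, and re-apply the same rule with $\formula$ added on the left. For (wk), if $\lnot\formula$ is not in the premise, we simply weaken by $\formula$ instead.

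The main obstacle is the eigenvariable conditions in ($\forall$R), ($\exists$L) and (def L). Adding $\formula$ to the antecedent may violate them when the eigenvariable occurs freely in $\formula$. This cannot happen, however: $\lnot\formula$ already occurs in the conclusion, so its free variables are those of $\formula$, and the side conditions of the original inference exclude the eigenvariables from the conclusion's context.

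For (subst), where the conclusion is $\Gamma[t/x] \vdash \Delta[t/x]$ and $\lnot\formula = \chi[t/x]$ for some $\lnot \chi$ in the premise, we apply the induction hypothesis to obtain $\Gamma_0, \chi \vdash \Delta_0$. We then substitute to get $\Gamma, \formula \vdash \Delta$, whose side condition is inherited.

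For (def L), the minor and major premises all contain $\lnot\formula$ in their succedent. After the induction hypothesis is applied, they all have $\formula$ in the antecedent, which fits the rule schema with $\Gamma$ extended by $\formula$.
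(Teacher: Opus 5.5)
Your route is the paper's: induction on the height of a cut-free proof, closing the (ax) case with ($\lnot$L) and reading the principal ($\lnot$R) case off its premise. The paper writes out only these two cases and calls the rest straightforward. You are more careful in two places: the subcase where $\lnot\formula$ is retained in the premise of ($\lnot$R), and the eigenvariable conditions.

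However, your (subst) step fails as written. The premise $\Gamma_1 \vdash \Delta_1$ may contain two distinct formulas $\lnot\chi_1, \lnot\chi_2$ with $\chi_1[t/x] \doteq \chi_2[t/x] \doteq \formula$, which merge into the single $\lnot\formula$ of the conclusion; for instance $\lnot P(x)$ and $\lnot P(c)$ with $t \doteq c$. Removing $\lnot\chi_1$ by the induction hypothesis and then substituting gives $\Gamma, \formula \vdash \lnot\formula, \Delta$, not $\Gamma, \formula \vdash \Delta$, whenever $\lnot\formula \notin \Delta$. Removing the leftover $\lnot\formula$ is again an instance of the lemma, now for a derivation not known to be shorter.

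The same merging occurs in ($=$L), which also substitutes into the whole context, so it is not a plain side-formula case either. Take the equation $t=s$ and template formulas $\lnot P(x)$ and $\lnot P(t)$, with $x,y$ not occurring in $t$. The conclusion then contains only $\lnot P(t)$, while the premise contains both $\lnot P(s)$ and $\lnot P(t)$.

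Your parenthetical height bound cannot rescue this, because it is false. In your own axiom case with $\lnot\formula \in \Gamma$, e.g.\ $\lnot P \vdash \lnot P$, you turn a proof of height $0$ into one of height $1$, and $\lnot P, P \vdash$ has no shorter proof.

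The standard repair is to prove the statement for finitely many negations at once. If $\Gamma \vdash \lnot\formula_1, \dots, \lnot\formula_k, \Delta$ is cut-free provable, then so is $\Gamma, \formula_1, \dots, \formula_k \vdash \Delta$. In the (subst) and ($=$L) cases, apply the induction hypothesis once to the whole set of negated premise formulas that are mapped onto some $\lnot\formula_i$. Add their unnegated versions (as templates) to the antecedent and re-apply the rule; the side condition of (subst) is inherited, as you observe. All remaining cases go through as in your proposal.
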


\begin{proof}
	This is shown by induction on the height of a cut-free proof $T$ of $\Gamma \vdash \lnot \formula, \Delta$.
	We discuss the cases corresponding to the rules (ax) and ($\lnot$R).
	The other cases are straightforward.
	
	\textbf{Case}: $\Gamma \vdash \lnot \formula, \Delta$ is derived by (ax).
	If $\bot \in \Gamma$, $\top \in \Delta$, or $\Gamma \cap \Delta \neq \emptyset$, then $\Gamma, \formula \vdash \Delta$ is also derivable by (ax).
	Otherwise, $\lnot \formula \in \Gamma$.
	In that case, $\Gamma, \formula \vdash \Delta$ is derivable by (ax) and ($\lnot$L).
	
	\textbf{Case}: $\Gamma \vdash \lnot \formula, \Delta$ is derived by ($\lnot$R).
	
	\textbf{Subcase}: the last inference in $T$ is of the form
	\begin{equation*}
		\begin{prooftree}
			\hypo{\Gamma, \formula \vdash \Delta}
			\infer1[($\lnot$R)]{\Gamma \vdash \lnot \formula, \Delta}
		\end{prooftree}
	\end{equation*}
	Then the subtree of $T$ with root labeled by $\Gamma, \formula \vdash \Delta$ is a cut-free proof of $\Gamma, \formula \vdash \Delta$.
	
	\textbf{Subcase}: the last inference in $T$ is of the form
	\begin{equation*}
		\begin{prooftree}
			\hypo{\Gamma, \formulatwo \vdash \lnot \formula, \Delta'}
			\infer1[($\lnot$R)]{\Gamma \vdash \lnot \formulatwo, \lnot \formula, \Delta'}
		\end{prooftree}
	\end{equation*}
	where $\Delta' \coloneq \Delta \setminus \{\lnot \formulatwo\}$.
	By induction hypothesis, $\Gamma, \formulatwo, \formula \vdash \Delta'$ is cut-free provable.
	By application of ($\lnot$R), so is $\Gamma, \formula \vdash \lnot \formulatwo, \Delta'$.
\end{proof}

\begin{lemma} \label{lem:replace-equivs-by-disjs}
	Let $\Gamma$ and $\Delta$ be sets of \foid-formulas.
	Let $Q$ be a $k$-ary predicate symbol, $i \in \{0, \dots, k\}$, $\bar{t}$ an $i$-tuple of terms, and $\bar{y}$ a $(k-i)$-tuple of object symbols.
	If\/ $\Gamma, \forall \bar{y}: \lnot Q(\bar{t}, \bar{y}) \Leftrightarrow \lnot Q(\bar{t}, \bar{y}) \vdash \Delta$ is cut-free provable, then so is $\Gamma, \forall \bar{y}: Q(\bar{t}, \bar{y}) \lor \lnot Q(\bar{t}, \bar{y}) \vdash \Delta$.\footnote{
		Writing out $\bar{t} = (t_1, \dots, t_i)$ and $\bar{y} = (y_1, \dots, y_{k-i})$, $Q(\bar{t}, \bar{y})$ stands for $Q(t_1, \dots, t_i, y_1, \dots, y_{k-i})$.
	}
\end{lemma}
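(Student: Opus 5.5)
The statement should be read with the general $i$ fixed in advance. The plan is an induction on the height of a cut-free proof $T$ of $\Gamma, \forall \bar{y}: \lnot Q(\bar{t}, \bar{y}) \Leftrightarrow \lnot Q(\bar{t}, \bar{y}) \vdash \Delta$. The induction is carried out simultaneously for all $i \in \{0, \dots, k\}$ and all tuples $\bar{t}$. This generality is needed because applying ($\forall$L) to the distinguished formula produces a formula of the same shape with a longer prefix $\bar{t}$ of instantiated arguments. I write $E_{\bar{t}}$ for $\forall \bar{y}: \lnot Q(\bar{t}, \bar{y}) \Leftrightarrow \lnot Q(\bar{t}, \bar{y})$ and $D_{\bar{t}}$ for $\forall \bar{y}: Q(\bar{t}, \bar{y}) \lor \lnot Q(\bar{t}, \bar{y})$. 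The induction proves that replacing $E_{\bar{t}}$ on the left by $D_{\bar{t}}$ preserves cut-free provability.

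\textbf{Non-principal case.} Suppose $E_{\bar{t}}$ is not principal in the last inference of $T$. Then $E_{\bar{t}}$ also occurs on the left of the premises, unless the rule is (wk), in which case the conclusion follows directly by (wk) from the premise. I apply the induction hypothesis to the premises and reapply the same rule. The eigenvariable side conditions of ($\forall$R), ($\exists$L) and (def L) remain satisfied, because $E_{\bar{t}}$ and $D_{\bar{t}}$ have the same free object symbols. For (subst) and ($=$L), I note that $E_{\bar{t}}$ and $D_{\bar{t}}$ transform in parallel under substitution, so the same rule applies to the modified premise.

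\textbf{Axiom case.} If the axiom does not involve $E_{\bar{t}}$, it still applies after the replacement. If $E_{\bar{t}} \in \Delta$, I give a direct cut-free proof of $\Gamma, D_{\bar{t}} \vdash \Delta$. I apply ($\forall$R) with fresh variables and then ($\Leftrightarrow$R), which leaves premises of the form $\lnot Q(\bar{t}, \bar{y}) \vdash \lnot Q(\bar{t}, \bar{y})$. These are closed by (ax).

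\textbf{Principal cases.} Suppose $E_{\bar{t}}$ is principal and $i < k$. Then the rule is ($\forall$L), and the premise contains $E_{\bar{t}, s}$ for some term $s$. The induction hypothesis at $i+1$ yields a cut-free proof with $D_{\bar{t}, s}$ on the left, and ($\forall$L) then gives $D_{\bar{t}}$. If $E_{\bar{t}}$ remains in the premise, I apply the induction hypothesis twice, once for each occurrence.

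Suppose $i = k$. Then the distinguished formula is $\lnot Q(\bar{t}) \Leftrightarrow \lnot Q(\bar{t})$ and the rule is ($\Leftrightarrow$L). Since sequents are sets, the premises are $\Gamma \vdash \lnot Q(\bar{t}), \Delta$ and $\Gamma, \lnot Q(\bar{t}) \vdash \Delta$, again possibly still containing the equivalence, in which case I first apply the induction hypothesis to them. Lemma~\ref{lem:CFP-inv-neg} turns the first premise into a cut-free proof of $\Gamma, Q(\bar{t}) \vdash \Delta$. ($\lor$L) then yields $\Gamma, Q(\bar{t}) \lor \lnot Q(\bar{t}) \vdash \Delta$.

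\textbf{Main obstacle.} The delicate point is the bookkeeping caused by sequents being sets rather than multisets. The distinguished formula may or may not persist in the premises, and it may coincide with other formulas. I would handle this by stating the induction hypothesis for an arbitrary finite set of such distinguished formulas in $\Gamma$, each replaced simultaneously. This keeps every case uniform.
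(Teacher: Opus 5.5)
Your skeleton matches the paper's proof. You induct on the height of a cut-free proof, uniformly in $i$ and $\bar t$. You handle the principal ($\forall$L) case by the induction hypothesis at $i+1$, and the principal ($\Leftrightarrow$L) case by Lemma~\ref{lem:CFP-inv-neg} followed by ($\lor$L). The axiom case is a cut-free proof of $\vdash E_{\bar t}$ plus (wk). Build that proof in the empty context: ($\forall$R) cannot rename bound variables, and Lemma~\ref{lem:safe-replacement-bound-vars} uses (cut). Your simultaneous-replacement bookkeeping for a principal formula that persists in the premises is more careful than the paper, which ignores that issue.

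\textbf{The gap is in your (subst) and ($=$L) step.} ``Transforming in parallel'' only shows that an $E$-formula in the premise becomes an $E$-formula in the conclusion. Your induction needs the converse: every occurrence of $E_{\bar t}$ in the conclusion must be the image of some $E_{\bar t'}$ in the premise. That is false. Take $k=i=1$. Then (subst) with $[c/x]$ turns $\lnot Q(x) \Leftrightarrow \lnot Q(c)$ into $\lnot Q(c) \Leftrightarrow \lnot Q(c)$. Likewise, ($=$L) with template formula $\lnot Q(x) \Leftrightarrow \lnot Q(c)$, $t$ taken as $c$ and $s$ as $d$, has $\lnot Q(d) \Leftrightarrow \lnot Q(c)$ in its premise but $c=d$ and $\lnot Q(c) \Leftrightarrow \lnot Q(c)$ in its conclusion. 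You cannot restrict ($=$L) to templates where every occurrence of $t$ and $s$ is a placeholder. For example, $t=s, P(t) \vdash P(s)$ comes from the axiom $P(s) \vdash P(s)$ only via the template $P(x) \vdash P(s)$. In these cases the premise contains an \emph{unbalanced} equivalence. Your induction hypothesis says nothing about it, so there is no modified premise to which the rule could be reapplied. The paper passes over these cases as ``straightforward'', so it has the same hole.

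\textbf{A repair.} Strengthen the statement to formulas $\forall \bar y: \lnot Q(\bar a, \bar y) \Leftrightarrow \lnot Q(\bar b, \bar y)$, with $\bar a, \bar b$ arbitrary $i$-tuples of terms. Replace each by the pair $\forall \bar y: Q(\bar a, \bar y) \lor \lnot Q(\bar b, \bar y)$ and $\forall \bar y: Q(\bar b, \bar y) \lor \lnot Q(\bar a, \bar y)$. Since sequents are sets, the pair collapses to $D_{\bar t}$ when $\bar a = \bar b = \bar t$. The remaining cases then go through:
\begin{itemize}
\item The class is closed under pre-images of (subst) and ($=$L): replace the template formula by the corresponding pair of templates.
\item It is closed under ($\forall$L)-instantiation: apply ($\forall$L) to both members of the pair.
\item The axiom case is a routine cut-free proof of the equivalence from the pair.
\item In the principal ($\Leftrightarrow$L) case the premises are $\Gamma \vdash \lnot Q(\bar a), \lnot Q(\bar b), \Delta$ and $\Gamma, \lnot Q(\bar a), \lnot Q(\bar b) \vdash \Delta$. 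Inverting the first twice with Lemma~\ref{lem:CFP-inv-neg} gives $\Gamma, Q(\bar a), Q(\bar b) \vdash \Delta$. Then ($\lor$L) on both disjunctions finishes. Each of the four branches is either that sequent, the second premise, or a sequent with both $Q(\bar c)$ and $\lnot Q(\bar c)$ on the left, which closes by ($\lnot$L) and (ax).
\end{itemize}
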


\begin{proof}
	We show the statement by induction on the height of a cut-free proof $T$ of a sequent of the form $\Gamma, \forall \bar{y}: \lnot Q(\bar{t}, \bar{y}) \Leftrightarrow \lnot Q(\bar{t}, \bar{y}) \vdash \Delta$. 
	We discuss the cases corresponding to the inference rules (ax), ($\Leftrightarrow$L), ($\forall$L).
	The other cases are straightforward.
	
	\textbf{Case}: $\Gamma, \forall \bar{y}: \lnot Q(\bar{t}, \bar{y}) \Leftrightarrow \lnot Q(\bar{t}, \bar{y}) \vdash \Delta$ is derived by (ax).
	If $\bot \in \Gamma$, $\top \in \Delta$, or $\Gamma \cap \Delta \neq \emptyset$, then $\Gamma, \forall \bar{y}: Q(\bar{t}, \bar{y}) \lor \lnot Q(\bar{t}, \bar{y}) \vdash \Delta$ is also derivable by (ax).
	Otherwise, $\forall \bar{y}: \lnot Q(\bar{t}, \bar{y}) \Leftrightarrow \lnot Q(\bar{t}, \bar{y}) \in \Delta$.
	It is straightforward to show that $\vdash \forall \bar{y}: \lnot Q(\bar{t}, \bar{y}) \Leftrightarrow \lnot Q(\bar{t}, \bar{y})$ is cut-free provable.
	Hence, by application of (wk), so is $\Gamma, \forall \bar{y}: Q(\bar{t}, \bar{y}) \lor \lnot Q(\bar{t}, \bar{y}) \vdash \Delta$ in the latter case.
	
	\textbf{Case}: $\Gamma, \forall \bar{y}: \lnot Q(\bar{t}, \bar{y}) \Leftrightarrow \lnot Q(\bar{t}, \bar{y}) \vdash \Delta$ is derived by ($\Leftrightarrow$L).
	
	\textbf{Subcase}: the last inference in $T$ is of the form
	\begin{equation*}
		\begin{prooftree}
			\hypo{\Gamma, \lnot Q(\bar{t}) \vdash \Delta}
			\hypo{\Gamma \vdash \lnot Q(\bar{t}), \Delta}
			\infer2[($\Leftrightarrow$L)]{\Gamma, \lnot Q(\bar{t}) \Leftrightarrow \lnot Q(\bar{t}), \vdash \Delta}
		\end{prooftree}
	\end{equation*}
	(This is possible only if $i=0$.)
	Then $\Gamma, \lnot Q(\bar{t}) \vdash \Delta$ and $\Gamma \vdash \lnot Q(\bar{t}), \Delta$ are cut-free provable.
	By Lemma \ref{lem:CFP-inv-neg}, $\Gamma, Q(\bar{t}) \vdash \Delta$ is cut-free provable as well, and by application of ($\lor$L), so is $\Gamma, Q(\bar{t}) \lor \lnot Q(\bar{t}), \vdash \Delta$.
	
	\textbf{Subcase}: the last inference in $T$ is of the form
	\begin{equation*}
		\begin{prooftree}
			\hypo{\Gamma', \forall \bar{y}: \lnot Q(\bar{t}, \bar{y}) \Leftrightarrow \lnot Q(\bar{t}, \bar{y}), \formula, \formulatwo \vdash \Delta}
			\hypo{\Gamma', \forall \bar{y}: \lnot Q(\bar{t}, \bar{y}) \Leftrightarrow \lnot Q(\bar{t}, \bar{y}) \vdash \formula, \formulatwo, \Delta}
			\infer2[($\Leftrightarrow$L)]{\Gamma', \forall \bar{y}: \lnot Q(\bar{t}, \bar{y}) \Leftrightarrow \lnot Q(\bar{t}, \bar{y}), \formula \Leftrightarrow \formulatwo \vdash \Delta}
		\end{prooftree}
	\end{equation*}
	where $\Gamma' \coloneq \Gamma \setminus \{\formula \Leftrightarrow \formulatwo\}$.
	By induction hypothesis, the sequents $\Gamma', \forall \bar{y}: Q(\bar{t}, \bar{y}) \lor \lnot Q(\bar{t}, \bar{y}), \formula, \formulatwo \vdash \Delta$ and $\Gamma', \forall \bar{y}: Q(\bar{t}, \bar{y}) \lor \lnot Q(\bar{t}, \bar{y}), \formula, \formulatwo \vdash \Delta$ are cut-free provable.
	Then, by application of ($\Leftrightarrow$L), so is $\Gamma', \forall \bar{y}: Q(\bar{t}, \bar{y}) \lor \lnot Q(\bar{t}, \bar{y}), \formula \Leftrightarrow \formulatwo \vdash \Delta$.

	\textbf{Case}: $\Gamma, \forall \bar{y}: \lnot Q(\bar{t}, \bar{y}) \Leftrightarrow \lnot Q(\bar{t}, \bar{y}) \vdash \Delta$ is derived by ($\forall$L).
	
	\textbf{Subcase}: the last inference in $T$ is of the form
	\begin{equation*}
		\begin{prooftree}
			\hypo{\Gamma, \forall \bar{y}': \lnot Q(\bar{t}, t, \bar{y}') \Leftrightarrow \lnot Q(\bar{t}, t, \bar{y}') \vdash \Delta}
			\infer1[($\forall$L)]{\Gamma, \forall \bar{y}: \lnot Q(\bar{t}, \bar{y}) \Leftrightarrow \lnot Q(\bar{t}, \bar{y}) \vdash \Delta}
		\end{prooftree}
	\end{equation*}
	where, writing out $\bar{y} = (y_1, y_2, \dots, y_i)$, $\bar{y}'$ denotes the tuple $(y_2, \dots, y_i)$.
	By induction hypothesis, $\Gamma, \forall \bar{y}': Q(\bar{t}, t, \bar{y}') \lor \lnot Q(\bar{t}, t, \bar{y}') \vdash \Delta$ is cut-free provable, and by application of ($\forall$L), so is $\Gamma, \forall \bar{y}: Q(\bar{t}, \bar{y}) \lor \lnot Q(\bar{t}, \bar{y}) \vdash \Delta$.
	
	\textbf{Subcase}: the last inference in $T$ is of the form
	\begin{equation*}
		\begin{prooftree}
			\hypo{\Gamma', \forall \bar{y}: \lnot Q(\bar{t}, \bar{y}) \Leftrightarrow \lnot Q(\bar{t}, \bar{y}), \formula[t/x] \vdash \Delta}
			\infer1[($\forall$L)]{\Gamma', \forall \bar{y}: \lnot Q(\bar{t}, \bar{y}) \Leftrightarrow \lnot Q(\bar{t}, \bar{y}), \forall x: \formula \vdash \Delta}
		\end{prooftree}
	\end{equation*}
	where $\Gamma' \coloneq \Gamma \setminus \{\forall x: \formula\}$.
	By induction hypothesis, $\Gamma', \forall \bar{y}: Q(\bar{t}, \bar{y}) \lor \lnot Q(\bar{t}, \bar{y}), \formula[t/x] \vdash \Delta$ is cut-free provable, and by application of ($\forall$L), so is $\Gamma', \forall \bar{y}: Q(\bar{t}, \bar{y}) \lor \lnot Q(\bar{t}, \bar{y}), \forall x: \formula \vdash \Delta$.
	\end{proof}

We are now ready to prove Lemma \ref{lem:cfp-positive-rewriting}:

\begin{proof}[Proof of Lemma \ref{lem:cfp-positive-rewriting}]
	Suppose that $\positiverewriting$ is cut-free provable.
	Then, by repeated application of Lemma \ref{lem:CFP-substitution-negation} (with $R = \compl{Q}$ and $S = Q$ for each predicate symbol $Q$ appearing negatively in the body of a rule of a definition $\defn_i$), so is $\Gamma, \equivs', {\defn_1}, \dots, {\defn_n} \vdash \Delta$, where $\equivs'$ consists of all formulas of the form $\forall \bar{y}: \lnot Q(\bar{y}) \Leftrightarrow \lnot Q(\bar{y})$ such that $Q$ appears negatively in the body of a rule of a definition $\defn_i$.
	Next, by repeated application of Lemma \ref{lem:replace-equivs-by-disjs} (with $i=0$), $\Gamma, \Omega'', {\defn_1}, \dots, {\defn_n} \vdash \Delta$ is cut-free provable, where $\Omega''$ consists of all formulas of the form $\forall \bar{y}: Q(\bar{y}) \lor \lnot Q(\bar{y})$ such that $Q$ appears negatively in the body of a rule of a definition $\defn_i$.
	It is straightforward to check that $\vdash \forall \bar{y}: Q(\bar{y}) \lor \lnot Q(\bar{y})$ is cut-free provable for every such formula.
	By repeated application of (wk) and (cut) on these formulas, we deduce that $\Gamma, {\defn_1}, \dots, {\defn_n} \vdash \Delta$ is provable with elementary cuts.
\end{proof}